\newcommand{\myblue}[1]{\textcolor{black}{#1}}
\DeclareMathOperator\supp{supp}
\def\bpm{\begin{pmatrix}}
\def\epm{\end{pmatrix}}
\def\Real{\mathbb{R}}
\def\Complex{\mathbb{C}}
\def\rA{\mathrm{A}}
\def\A{\mathbf{A}}
\def\B{\mathbf{B}}
\def\conj{\mathbf{B}}
\def\c{\mathrm{c}}
\def\C{\mathbf{C}}
\def\sC{\mathscr{C}}
\def\d{\mathrm{d}}
\def\dsg{\mathrm{DSG}}
\def\E{\mathbf{E}}
\def\bGamma{\boldsymbol{\Gamma}}
\def\I{\mathbb{I}}
\def\M{\mathbf{M}}
\def\N{\mathbf{N}}
\def\O{\mathcal{O}}
\def\P{\mathbf{P}}
\newcommand\sP[1]{\mathop{\mathcal{P}_{#1}}}
\def\Q{\mathbf{Q}}
\def\s{\mathrm{s}}
\def\S{\mathbf{S}}
\def\u{\mathrm{u}}
\def\V{\mathbf{V}}
\def\W{\mathbf{W}}
\def\Y{\mathbf{Y}}
\def\Z{\mathbf{Z}}
\def\bphi{\boldsymbol{\phi}}
\def\brho{\boldsymbol{\rho}}
\def\vbone{\overrightarrow{\boldsymbol{1}}}
\def\diag{\mathop{\rm diag}\nolimits}
\def\Res{\mathop{\rm Res}}
\def\gl{\gtrless}
\def\sech{\mathrm{sech}}
\def\sg{\mathrm{sg}}
\def\conj#1{\overline{#1}}
\newcommand\set[2]{\{\,#1\mid#2\,\}}
\newcommand\Set[2]{\set{#1}{\text{#2}}}
\newcommand{\ee}{\mathrm{e}}
\newcommand{\ii}{\mathrm{i}}
\newcommand{\dd}{\,\mathrm{d}}
\def\mbe{\mathrm{MBE}}
\def\nls{\mathrm{NLS}}
\def\mkdv{\mathrm{mKdV}}
\newtheorem{rhp}{Riemann-Hilbert Problem}
\newtheorem{theorem}{Theorem}
\newtheorem{corollary}{Corollary}
\newtheorem{lemma}{Lemma}
\theoremstyle{definition}
\newtheorem{definition}{Definition}
\newtheorem{remark}{Remark}
\begin{document}

\title[On ZBG solitons of sharp-line MBEs]{\myblue{On zero-background solitons of the sharp-line Maxwell-Bloch equations}}
%%=============================================================%%
%% GivenName	-> \fnm{Joergen W.}
%% Particle	-> \spfx{van der} -> surname prefix
%% FamilyName	-> \sur{Ploeg}
%% Suffix	-> \sfx{IV}
%% \author*[1,2]{\fnm{Joergen W.} \spfx{van der} \sur{Ploeg}
%%  \sfx{IV}}\email{iauthor@gmail.com}
%%=============================================================%%

\author{\fnm{Sitai} \sur{Li}}
\email{sitaili@xmu.edu.cn}
\affil{
\orgdiv{School of Mathematical Sciences},
\orgname{Xiamen University},
\orgaddress{\city{Xiamen}, \postcode{361005}, \state{Fujian}, \\ \country{People's Republic of China}
%\quad\textit{ORCID:} \href{https://orcid.org/0000-0002-6078-8069}{0000-0002-6078-8069}
}
}

%%==================================%%
%% Sample for unstructured abstract %%
%%==================================%%

\abstract{
This work is devoted to systematically study general $N$-soliton solutions
possibly containing multiple degenerate soliton groups (DSGs),
in the context of the sharp-line Maxwell-Bloch equations with a zero background.
We also show that results can be readily migrated to other integrable systems,
with the same non-self-adjoint Zakharov-Shabat scattering problem or alike.
Results for the focusing nonlinear Schr\"{o}dinger equation
and the complex modified Korteweg-De Vries equation
are obtained as explicit examples for demonstrative purposes.
A DSG is a localized coherent nonlinear traveling-wave structure,
comprised of inseparable solitons with identical velocities.
Hence,
DSGs are generalizations of single solitons
(considered as $1$-DSGs),
and form fundamental building blocks of solutions of many integrable systems.
We provide an explicit formula for an $N$-DSG and its center.
With the help of the Deift-Zhou's nonlinear steepest descent method,
we prove the localization of DSGs,
and calculate the long-time asymptotics for an arbitrary $N$-soliton solutions.
It is shown that
the solution becomes a linear combination of multiple DSGs in the distant past and future,
with explicit formul{\ae} for the asymptotic phase shift for each DSG.
Other generalizations of a single soliton are also discussed,
such as $N$th-order solitons and soliton gases.
\myblue{
We prove that
every $N$th-order soliton can be obtained by fusion of eigenvalues of $N$-soliton solutions,
with proper rescalings of norming constants,
and demonstrate that soliton-gas solution can be considered as limits of $N$-soliton solutions as $N\to+\infty$.
%Certain properties of $N$th-order solitons are obtain as well.
}
}

\keywords{Maxwell-Bloch system, Soliton, Asymptotics, Riemann-Hilbert problem}

%%\pacs[JEL Classification]{D8, H51}

%%\pacs[MSC Classification]{35A01, 65L10, 65L12, 65L20, 65L70}

\maketitle

%%%%%%%%%%%%%%%%%%%%%%%%%%%%%%%%%%%%%%%%%%%%%%%%%%%%%%%%%%%%%%%%%%%%%%%%%%%%%%%%
\section{Introduction}

This work studies the general $N$-soliton solutions
of the sharp-line Maxwell-Bloch equations (MBEs) with a zero background (ZBG),
particularly including degenerate solitons,
\myblue{in order to} present a framework of
systematical analysis on arbitrary multi-soliton solutions of integrable systems
with the same non-self-adjoint Zakharov-Shabat scattering problem or alike.
\myblue{
This work also demonstrates the relation
between multi-soliton solutions and high-order solitons
or between multi-soliton solutions and soliton gases.
%subsequently deducing certain properties of high-order solitons.
Examples of generalizations to other integrable systems are presented as well,
which is achieved
by primarily analyzing solutions in the spectral space instead of the physical space,
with the help of the Deift-Zhou's nonlinear steepest descent method
for analyzing oscillatory Riemann-Hilbert problems (RHPs)~\cite{diz1993,dz1993}.
}

The MBEs are completely integrable~\cite{as1981,ae1975},
and have attracted great interest,
because of their importance in nonlinear optics~\cite{jrt2019},
successfully explaining self-induced transparency~\cite{mh1967,mh1969,mh1970}
and the closely-related phenomenon of superfluorescence~\cite{gzm1983,gzm1984,gzm1985,z1980}.
In particular,
the system describes light-matter interactions
in a semi-infinitely long one dimensional two-level optical medium
with $z\ge0$ for all time $t\in\Real$,
where the light pulse is injected at one end $z = 0$.
The general MBEs contain an integral with an arbitrary weight function,
describing the atom distribution of the optical medium according to the atoms' resonant frequencies.
This work is concerned with the ideal medium,
whose atom distribution is a Dirac delta,
corresponding to a sharp-line spectral shape.
%In other words,
%one assumes that all the medium atoms have identical resonant frequency which is normalized to zero.
Therefore,
the Cauchy problem for sharp-line MBEs are written as follows~\cite{akn1974,gzm1985},
\begin{equation}
\label{e:mbe}
\begin{aligned}
q_z(t,z) & = -P(t,z)\,,\quad
P_t(t,z) = - 2q(t,z)D(t,z)\,,\quad
D_t(t,z) = 2\Re(\conj{q(t,z)} P(t,z))\,,\\
q(t,0) & = q_0(t)\,,\quad
t\in\Real\,,\\
\lim_{t\to-\infty}q(t,z) & = 0\,,\quad
z\ge0\,,\\
D_- & \coloneqq \lim_{t\to-\infty}D(t,z) = \pm1\,,\quad
P_-\coloneqq \lim_{t\to-\infty}P(t,z) = 0\,,\quad
z\ge0\,.
\end{aligned}
\end{equation}
The variable $t$ and $z$ are the spatial and temporal variables in the comoving frame.
The quantity $q(t,z)\in\Complex$ describes the optical pulse,
$D(t,z)\in\Real$ denotes the population inversion,
and $P(t,z)\in\Complex$ is the polarization of the optical medium.
The subscript $t$ and $z$ denote partial derivatives and $\conj{q(t,z)}$ is the complex conjugate of $q(t,z)$.
Moreover, $q_0(t)$ is the input pulse,
and $\{D_-, P_-\}$ is the initial state of the medium in the distant past $t\to-\infty$.
It is easy to check that $\frac{\partial}{\partial t}(D^2 + |P|^2) = 0$ from the MBEs~\eqref{e:mbe}.
\myblue{
The values of $\{D_-, P_-\}$ are the results of the normalization $D^2 + |P|^2 = 1$ for $t\in\Real$ and $z\ge0$.
}
The trivial solutions $\{q\equiv0,D\equiv\pm1,P\equiv0\}$ are considered as the ZBGs,
denoting the static states of the system with absence of light.
Consequently,
the medium is in one of two pure states,
the ground state $D=-1$ or the excited state $D=1$.
Therefore,
the boundary condition $D_- = -1$ or $D_- = 1$ means that initially all atoms are in the ground state or excited state,
respectively.

The Lax pair for the system~\eqref{e:mbe} is given by
\begin{equation}
\label{e:laxpair}
\begin{aligned}
\bphi_t & = (\ii \lambda \sigma_3 + \Q)\,\bphi\,,\qquad
\bphi_z = -\frac{\ii}{2\lambda}\brho(t,z)\,\bphi\,,\\
\Q(t,z) & \coloneqq \bpm 0 & q(t,z)\\ -\conj{q(t,z)} &0\epm\,,\quad
\brho(t,z) \coloneqq \bpm D(t,z) & P(t,z)\\ \conj{P(t,z)} & -D(t,z) \epm\,,\quad
\sigma_3 \coloneqq \bpm 1 & 0 \\ 0 & -1 \epm\,,
\end{aligned}
\end{equation}
where $\bphi\coloneqq\bphi(\lambda;t,z)$ is the eigenfunction. Note that the scattering problem $\bphi_t = (\ii\lambda\sigma_3 + \Q)\bphi$ is the celebrated non-self-adjoint Zakharov-Shabat problem,
so MBEs~\eqref{e:mbe} can be considered as a member of negative flows in the AKNS hierarchy~\cite{akns1974},
due to the term proportional to $\lambda^{-1}$ in the second part of the Lax pair~\eqref{e:laxpair}.

It should be pointed out that,
assuming real solutions,
the MBEs~\eqref{e:mbe} with \myblue{the change of variables}
$P = \sin(\Theta)$,
$D = \cos(\Theta)$ and \myblue{$q = -\Theta_t/2$}
are related to the sine-Gordon equation
in characteristic coordinates
\myblue{(also known as light-cone coordinates)}
$\Theta_{tz} = 2\sin(\Theta)$ with $\Theta\in\Real$.
\myblue{
%It is hard to discuss
%general $N$-soliton solutions of
%the sine-Gordon equation in the physical frame
%$\Theta_{tt} - \Theta_{zz} + \sin(\Theta) = 0$
%with fixed boundary conditions at infinities,
%because of its topological solitons.
In some sense, MBEs are complex generalizations of the sine-Gordon equation.
Another two closely related systems of MBEs are the focusing nonlinear Schr\"{o}dinger (NLS) equation,
and the complex modified Korteweg-De Vries (mKdV) equation.
They are the second and the third positive flows in the same hierarchy,
and will be discussed later in this work.
}

\myblue{
It is well-known that all aforementioned systems have soliton solutions~\cite{zs1972,akns1974,akn1974}.
In particular, the MBEs solitons can be constructed on either stable or unstable backgrounds,
%The MBEs exhibit diverse explicit soliton solutions.
%This work considers the soliton solutions riding on either stable or unstable backgrounds,
exhibiting consequential differences from a physical point of view.
}

%It should be pointed out that,
%assuming real solutions,
%the MBEs~\eqref{e:mbe} with \myblue{the change of variables}
%$P = \sin(\Theta)$,
%$D = \cos(\Theta)$ and \myblue{$q = -\Theta_t/2$}
%are related to the sine-Gordon equation
%in characteristic coordinates
%\myblue{(also known as light-cone coordinates)}
%$\Theta_{tz} = 2\sin(\Theta)$ with $\Theta\in\Real$.
%\myblue{
%It is hard to discuss general $N$-soliton solutions of the sine-Gordon equation in the physical frame $\Theta_{tt} - \Theta_{zz} + \sin(\Theta) = 0$ with fixed boundary conditions at infinities,
%because of its topological solitons.
%}
%In some sense,
%MBEs are complex generalizations of the sine-Gordon equation and may be easier to analyze.

Solitons are fascinating phenomena,
and have been extensively studied \myblue{for} years in different fields of nonlinear sciences.
The availability of the inverse scattering transform (IST)
and many other algebraic methods make it possible
to derive and study exact soliton solutions for completely integrable systems~\cite{zs1972}. In particular, the focus on one hand is studying the properties of single solitons, and on the other hand is analyzing the nonlinear interactions among solitons, including the proof of elastic collisions and the calculation of asymptotic phase shifts before and afterwards.

In order to discuss soliton interactions,
it is crucial to distinguish the cases whether solitons have different velocities or share the same \myblue{velocity}.
The latter case is referred to as a \textit{degenerate soliton group} (DSG) in this work, but may have other names in literature, such as a breather~\cite{akns1973}.
Of course,
some integrable systems by nature do not admit DSGs,
such as the Korteweg-De Vries (KdV) equation and the defocusing NLS equation~\cite{cj2016,dkpv2013,as1981,akns1974}.
On the other hand,
a large number of integrable systems do admit such complex structures,
for example,
the focusing NLS-type systems
(including continuous and discrete cases,
scalar and vector forms,
with zero or nonzero backgrounds)~\cite{zs1972,apt2004,m1974,lbs2017,lb2018,f2014},
the mKdV equation~\cite{zy2020,akns1974}, the sine-Gordon equation~\cite{akns1974},
and many more~\cite{xtaqgq2012}.
Therefore,
the soliton analysis on the latter set of systems
is more complex in general because of the existence of DSGs.

%Unfortunately,
%many past works on soliton asymptotics of various systems exclude DSGs
%by assuming distinct soliton velocities inside multi-soliton solutions.
%The common aftermath of collisions is
%therefore well-separated individual solitons with phase shifts~\cite{zs1972,apt2004,m1974,akns1974,km2023,gp2020,apbk2021}.

\myblue{
Assuming distinct soliton velocities inside multi-soliton solutions,
the common aftermath of collisions is
well-separated individual solitons with phase shifts~\cite{zs1972,apt2004,m1974,akns1974,km2023,gp2020,apbk2021}.
However,
it is unavoidable to include DSGs
when discussing the general $N$-soliton solutions for many integrable systems.
Because an $N$-DSG is
an $N$-soliton structure comprising of solitons sharing an identical velocity,
it forms a coherent localized particle-like wave train.
Contrary to the aforementioned non-degenerate multi-soliton solutions,
a DSG does not break apart into smaller components as time passes,
but is able to interact with other DSGs or solitons as a whole.
Due to the complexity,
in-depth analysis and even new tools become necessary,
in order to include DSGs in the discussion of soliton solutions in the most general settings.
}

We recall some existent results on DSGs for different integrable systems.
Though this list is by no means exhaustive.
The general $N$-soliton solutions of the focusing NLS equation with a ZBG
is systematically studied in~\cite{lbs2017},
where the center of a $N$-DSG is derived and soliton asymptotics are calculated.
However,
the authors use a determinant solution formula obtained from the operator formalism.
Thus,
the results cannot be easily generalized to other integrable systems,
because new solution {formul\ae} have to be derived using similar operator formalism.
The $N$-DSGs of the focusing NLS equation with a nonzero background (NZBG)
is considered in~\cite{lb2018},
but the long-time asymptotics for a general $N$-soliton solution is still missing
because of the complexity of the solution formula.
The degenerate breathers of the focusing NLS-type equation with NZBG
are also analyzed in a series of works
when discussing superregular breathers and modulational instability~\cite{gz2014,zwl2017,gazers2019,lca2024}.
\myblue{
Partial results on $N$-DSGs,
such as derivation without systematical analyses or studying with small values of $N$,
for various integrable systems are discussed in many early works,
such as the focusing NLS equation,
the sine-Gordon equation,
the mKdV equation and others~\cite{l1971,akns1973,akns1974,zs1972,hl2018,gtlzl2012,gtw2012,rmhz2023,ks1981}.
}

We should also mention other close topics of DSGs,
such as the high-order solitons and soliton gases.
High-order solitons correspond to eigenvalues of multiplicities more than one,
and can be considered as fusion of solitons~\cite{akns1974}.
There are a great number of results on this type of solutions,
for sine-Gordon equation~\cite{ccf2017},
the NLS equation~\cite{s2017},
Hirota equation~\cite{cf2019,zl2021}
and many more~\cite{lz2024,bm2022,blm2020}.
Other than merging eigenvalues,
one can also discuss the limit as $N\to+\infty$ in a $N$-soliton solution,
resulting in a phenomenon called soliton gas or integrable turbulence.
This class of solutions cannot be explicitly computed in general.
Though special cases are typically described by finite-gap solutions~\cite{z2009,ggjm2021,ggjmm2023,ekpz2011,ea2020,kmm2003}.

%The purpose of this work using the sharp-line MBEs with a ZBG as a model is:
%(i) to present a comprehensive analysis on an arbitrary $N$-DSGs,
%and an arbitrary $N$-soliton solution;
%\myblue{
%(ii) to demonstrate the relation between $N$-soliton solutions and $N$th-order solitons or between $N$-soliton solutions and soliton gases,
%subsequently deducing certain properties of $N$th-order solitons;
%}
%(iii) to generalize results to other integrable systems by primarily analyzing solutions in the spectral space instead of the physical space.
%\myblue{
%Wielding the powerful Deift-Zhou's nonlinear steepest descent method for analyzing oscillatory Riemann-Hilbert problems (RHPs)~\cite{diz1993,dz1993},
%it is time to attack these problems and to build a framework for studying general $N$-soliton solutions in many integrable systems sharing similar spectra.
%}

One of many ways to find exact multi-soliton solutions of MBEs~\eqref{e:mbe} is to formulate the corresponding IST,
which was first done in~\cite{akn1974},
but the authors only considered $D_- = -1$ case.
Ten years later,
the IST for the full Cauchy problem~\eqref{e:mbe} was finally established~\cite{gzm1983,gzm1984,gzm1985}.
However, in all works,
the inverse problems were formulated in terms of the \myblue{Gel'fand-Levitan-Marchenko} integral equations.
It is possible to obtain the soliton solutions from these works,
but the form of the inverse problem are not useful for the asymptotic calculations.
A RHP formulation is necessary for the Deift-Zhou's nonlinear steepest descent approach.
\myblue{One could reformulate the IST for the Cauchy problem~\eqref{e:mbe} using a RHP as the inverse problem.}
However,
as recently shown~\cite{lm2024} that,
the Cauchy problem for a general input light pulse $q_0(t)$ may be ill-posed,
and one of the often used assumption that $q(\cdot,z)\in L_1(\Real)$ may be violated.
Thus,
a considerable effort must be made in order to pursue this direction,
\myblue{such as reformulation of IST only for pure soliton solutions,}
which defeats the purpose of the current work on analyzing soliton solutions,
by making a huge detour.
By similar arguments,
algebraic methods,
such as the Darboux transformation,
to obtain exact soliton solutions are not feasible as well.
\myblue{
As such,
we adopt another approach similarly to the one used in~\cite{lm2024},
by starting with an appropriate exactly solvable RHP,
which is then proved to produce exact $N$-soliton solutions to MBEs~\eqref{e:mbe}.
This simple approach has the benefit providing a RHP that is readily available for the nonlinear steepest descent.
}

%In this work,
%we perform a systematical study of the general $N$-soliton solutions of MBEs~\eqref{e:mbe}.
%We first present two equivalent forms of RHPs describing general soliton solutions,
%both of which are useful in the study.
%We then derive an explicit formula for the $N$-soliton solution.
%Using the Deift-Zhou's nonlinear steepest descent method,
%we prove the localization of $N$-DSGs and compute explicit expression for its center.
%Also using the nonlinear steepest descent method,
%we calculate the long-time asymptotics as $t\to\pm\infty$
%for general $N$-soliton solutions,
%showing that they are linear combinations of DSGs with various sizes asymptotically.
%The phase shifts for each DSG are obtained as well.
%Furthermore,
%we demonstrate that every $N$th-order soliton can be obtained
%by fusion of $N$ simple solitons,
%where the rescaling process is explicit.
%We also formally take the limit of $N$-soliton solutions as $N\to+\infty$
%in order to find soliton-gas solutions.
%Finally,
%we demonstrate how to generalize the results for MBEs~\eqref{e:mbe}
%to the focusing NLS equation and the complex mKdV equation
%with the identical scattering problem.

%%%%%%%%%%%%%%%%%%%%%%%%%%%%%%%%%%%%%%%%%%%%%%%%%%%%%%%%%%%%%%%%%%%%%%%%%%%%%%%%
\section{Main results}
\label{s:results}

\myblue{
We first lay the groundwork and state necessary assumptions,
then we introduce results of various types of soliton solutions.
Finally, results for MBEs are generalized to the focusing NLS equation and the complex mKdV equation.
}

%%%%%%%%%%%%%%%%%%%%%%%%%%%%%%%%%%%%%%%%%%%%%%%%%%%%%%%%%%%%%%%%%%%%%%%%%%%%%%%%
\subsection{Assumptions and notations}
\label{s:assumptions-notations}

Recall that IST works in three major steps:
\begin{enumerate}
\item
\textbf{Direct problem:}
One analyzes the scattering problem with the given initial condition $q_0(t)$ and obtains the scattering data at $z = 0$, including the reflection coefficient $r(\lambda,z=0)$, discrete eigenvalues and norming constants.
\item
\textbf{Propagation}\footnote{This step usually is called ``evolution". Because of the spatial-temporal-variable switch in optical systems, the ``evolution" variable becomes the spatial variable $z$, and we rename this step as ``propagation".}\textbf{:}
One uses the second half of the Lax pair~\eqref{e:laxpair} to calculate the $z$ dependence of the scattering data.
\item
\textbf{Inverse problem:}
One reconstructs the eigenfunction $\bphi(\lambda;t,z)$, by formulating and solving either a \myblue{Gel'fand-Levitan-Marchenko} integral equation or a RHP, and subsequently obtains the solutions of MBEs~\eqref{e:mbe}.
\end{enumerate}

\myblue{
Recall that
the past ISTs~\cite{akn1974,gzm1985} are ill-suited in this work,
because their inverse problems are formulated in terms of Gel'fand-Levitan-Marchenko integral equations.
}
Luckily,
the direct problem only involves the initial data $q_0(t)$ and the scattering problem, the latter of which is shared among the whole AKNS hierarchy. Thus, one could borrow the results from the general theory for the AKNS hierarchy and other well-studied integrable systems~\cite{akns1974,z1989,z1998,bc1984,bc1985}:
\begin{enumerate}
\item \textbf{Existence of scattering data:}
If $q_0(t)$ is sufficiently smooth and decays to zero fast enough as $t\to\pm\infty$,
for example,
if $q_0(t)$ is in the Schwartz class,
then the reflection coefficient $r(\lambda;0)$ and $N$ pairs of discrete eigenvalues
$\{\lambda_j,\conj{\lambda_j}\}_{j=1}^N\subset\Complex\setminus\Real$ exist,
where $\conj{\lambda_j}$ is the complex conjugate of $\lambda_j$.
For each eigenvalue pair,
there are associated norming constants with proper symmetries.
\item \textbf{Reflectionless solutions:}
Pure soliton solutions require that the reflection coefficient $r(\lambda;z)$ vanishes for all $z\ge0$.
\item \textbf{Anomalies:}
\begin{enumerate}
\item
Eigenvalues could be of higher orders.
\item
The number of eigenvalue pairs $N$ could be infinity.
\item
\myblue{
Eigenvalues may appear in the continuous spectrum $\Real$.
}
\end{enumerate}
\end{enumerate}
Because we are studying $N$-soliton solutions,
it is reasonable to assume that the first two results are true.
For anomaly (a),
most of the work addresses simple eigenvalues,
but in Section~\ref{s:N-order-soliton} we show that every high-order eigenvalue can be obtained by fusion of $N$ simple eigenvalues.
Correspondingly,
every high-order soliton can be obtained by merging $N$ simple solitons.
Then,
we present an explicit formula for such high-order solutions.
Similarly to anomaly (b),
we mostly discuss the case with $1\le N<+\infty$,
but in Section~\ref{s:soliton-gas} we show how to derive a RHP for soliton gases by taking limits $N\to+\infty$.
We do not discuss the phenomena related to anomalies (c),
\myblue{usually called embedded eigenvalues,}
because they are out of the scope of this work.
\begin{remark}[On notations]
\label{rmk:notations}
We use boldface letters to denote matrices and vectors, with the exception of the identity matrix $\I$ and the Pauli matrix $\sigma_3$ defined in Equation~\eqref{e:laxpair}.
The imaginary unit is denoted $\ii$.
Complex conjugation and conjugate transpose are indicated with a bar $\conj{\lambda}$ and a dagger $\A^\dagger$,
respectively.
\myblue{
The Schwarz reflection for a scalar function is define by $f^*(\lambda) \coloneqq \conj{f(\conj\lambda)}$,
and for a matrix function is denoted by
$\A(\conj{\lambda})^\dagger$.
Furthermore,
we use the shorthand notation
$\A(\conj\lambda)^{-\dagger} = [\A(\conj\lambda)^\dagger]^{-1}$ to denote the combination of Schwarz reflection and inversion of a matrix.
}
Let the notation $D_a^\epsilon \coloneqq \{\lambda\in\Complex: |\lambda - a| < \epsilon\}$ being the open disk centered at $\lambda = a$ with radius $\epsilon > 0$ in the complex plane, so $\partial D_a^\epsilon\coloneqq \{\lambda\in\Complex:|\lambda - a| = \epsilon\}$ is a circle.
\end{remark}

%%%%%%%%%%%%%%%%%%%%%%%%%%%%%%%%%%%%%%%%%%%%%%%%%%%%%%%%%%%%%%%%%%%%%%%%%%%%%%%%
\subsection{Spectra and Riemann-Hilbert problems}
\label{s:intro-spectra-rhp}

\myblue{
This work deals with different types of soliton solutions.
With the assumptions in the previous section in mind,
we can properly introduce the spectra of multi-soliton solutions,
characterized by the following definition.
}

\begin{definition}[Eigenvalues and norming constants]
\label{def:LambdaOmega}
Let $J\ge 1$ and $N_j\ge1$ with $j = 1,2,\dots,J,$ be given integers. Suppose $0< r_1 < r_2 < \dots <r_J$ are positive real numbers.
We define two sets
\begin{equation}
\begin{aligned}
\Lambda
 & \coloneqq \bigcup_{j = 1}^J\Lambda_j\,,\qquad
 \Lambda_j \coloneqq \Set{\lambda_{j,k}}{$\lambda_{j,k} = r_j\ee^{\ii \alpha_{j,k}}$  with $0 < \alpha_{j,1} < \dots < \alpha_{j,N_j} < \pi$}\,,\\
\Omega
 & \coloneqq \bigcup_{j = 1}^J\Omega_j\,,\qquad
  \Omega_j \coloneqq  \Set{\omega_{j,k}}{$\omega_{j,k}\in\Complex\setminus\{0\}$ with $k = 1,2,\dots, N_j$}\,.
\end{aligned}
\end{equation}
We call $\Lambda$ \textit{the set of eigenvalues (in the upper half plane)}, $\Omega$ \textit{the set of norming constants}. The total number of eigenvalues is simply the cardinality of set $\Lambda$, denoted by $N \coloneqq |\Lambda| = N_1 + \dots + N_J$.
\end{definition}
Before moving on, we present a few remarks about Definition~\ref{def:LambdaOmega}:
\begin{itemize}
\item
From now on, the phrase ``eigenvalues" simply means ``discrete eigenvalues", because: (i) we are only discussing soliton solutions, rising from discrete spectra, and (ii) we assume reflectionless solutions, so that the continuous spectrum $\Real$ does not play any role.
\item
Eigenvalues from the non-self-adjoint Zakharov-Shabat problem come in conjugate pairs
$\{\lambda_{j,k},\conj{\lambda_{j,k}}\}$,
but for simplicity we call the ones in the upper half plane eigenvalues,
or $\Lambda$ the eigenvalue set.
The conjugates are implied.
\item
All eigenvalues from Definition~\ref{def:LambdaOmega} are categorized into $J$ groups $\Lambda_j$, each of which contains eigenvalues with identical moduli $r_j > 0$. The reason for the definition of each group $\Lambda_j$ is discussed in detail after Theorem~\ref{thm:N-DSG}.
\end{itemize}
\begin{figure}[tp]
\centering
\includegraphics[width=0.35\textwidth]{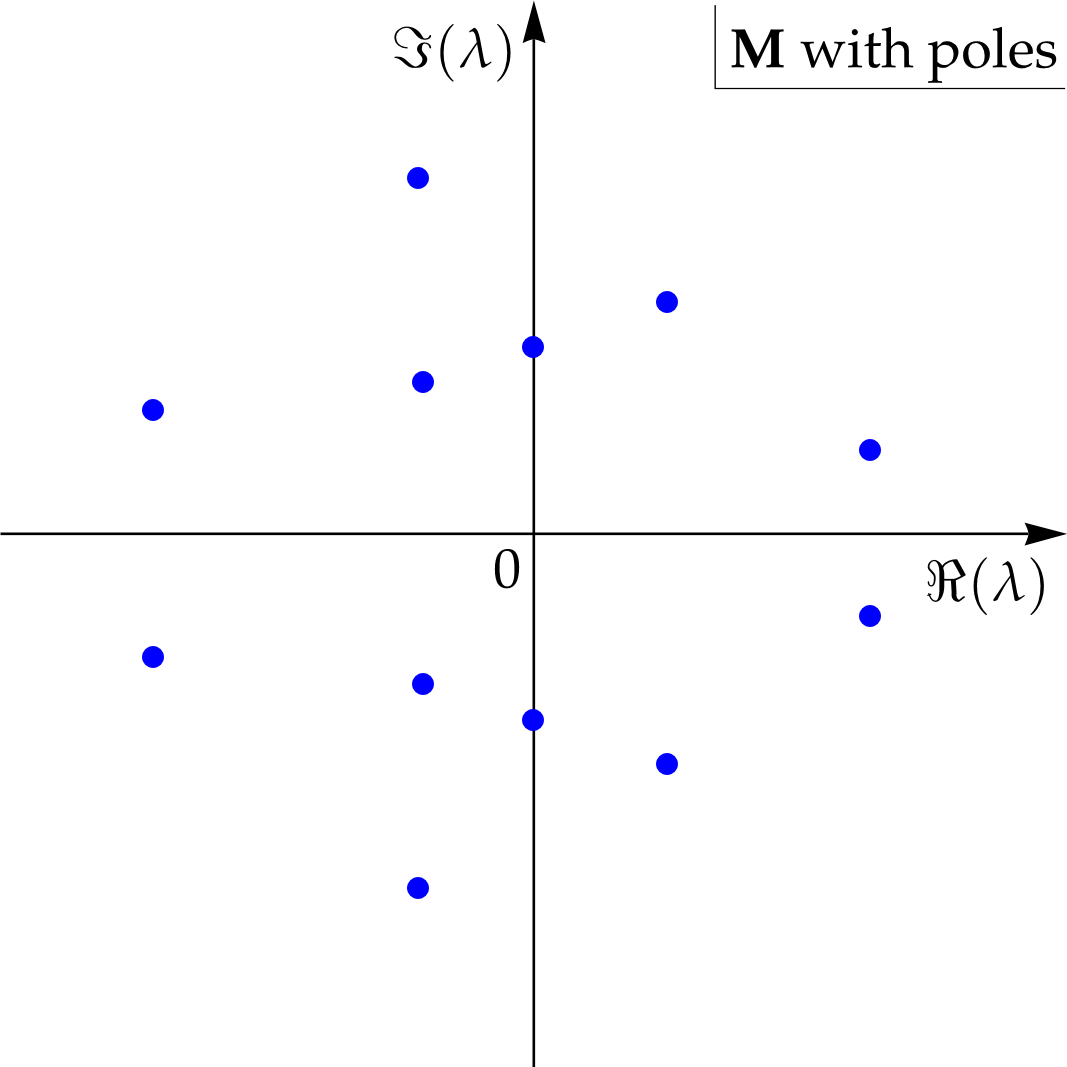}\qquad\qquad
\includegraphics[width=0.35\textwidth]{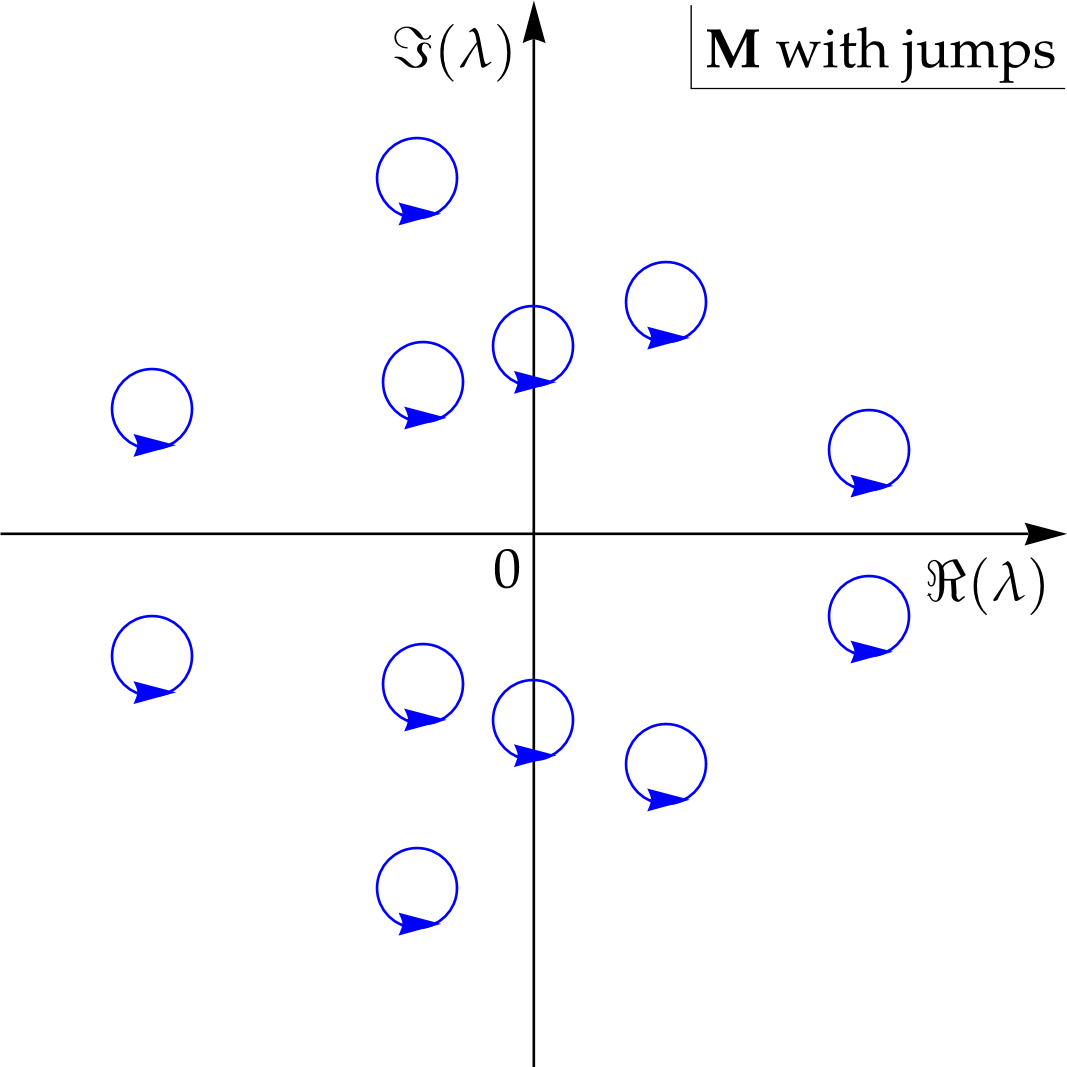}
\caption{
Left: an illustration of the pole distribution in RHP~\ref{rhp:N-soliton-residue-form}.
Right: an illustration of the jump configuration in RHP~\ref{rhp:N-soliton-jump-form} which is equivalent to the left plot.
}
\label{f:M}
\end{figure}
We first present RHPs composed of given eigenvalues and norming constants. The RHPs also satisfy necessary symmetries from the non-self-adjoint Zakharov-Shabat scattering. The relation between the RHPs and the solutions to MBEs~\eqref{e:mbe} will be discussed in Lemma~\ref{thm:reconstruction}.
\begin{rhp}[Residue form]
\label{rhp:N-soliton-residue-form}
Given sets $\Lambda$ and $\Omega$ from Definition~\ref{def:LambdaOmega}.
Seek a matrix meromorphic function $\lambda\mapsto\M(\lambda;t,z)$ on $\Complex$,
such that it has asymptotics $\M(\lambda;t,z)\to\I$ as $\lambda\to\infty$,
\myblue{is analytic for $\lambda\in\Complex\setminus\Lambda$,}
and satisfies the following residue conditions
\begin{equation}
\begin{aligned}
\Res_{\lambda = \lambda_{j,k}}\M(\lambda;t,z)
 &= \lim_{\lambda\to\lambda_{j,k}}\M(\lambda;t,z)\bpm 0 & 0 \\ \omega_{j,k}\ee^{-2\ii\theta(\lambda;t,z)} & 0 \epm\,,\\
\Res_{\lambda = \conj{\lambda_{j,k}}}\M(\lambda;t,z)
 & = \lim_{\lambda\to\conj{\lambda_{j,k}}}\M(\lambda;t,z)\bpm 0 & - \conj{\omega_{j,k}}\ee^{2\ii\theta(\conj{\lambda};t,z)t} \\ 0 & 0 \epm\,,
\end{aligned}
\end{equation}
for all $\lambda_{j,k}\in\Lambda$ and $\omega_{j,k}\in\Omega$,
and
\begin{equation}
\label{e:theta-def}
\theta(\lambda;t,z)\coloneqq \lambda t - \frac{D_-}{2\lambda}z\,.
\end{equation}
\end{rhp}
The quantity $\theta(\lambda;t,z)$ is the (linear) dispersion relation of the MBEs~\eqref{e:mbe} deduced from the Lax pair~\eqref{e:laxpair}.
An illustration of the pole distribution of RHP~\ref{rhp:N-soliton-residue-form} is shown in Figure~\ref{f:M}(left). The existence and uniqueness of solutions of RHP~\ref{rhp:N-soliton-residue-form} will be addressed in Lemma~\ref{thm:reconstruction} later.
Clearly, by \myblue{Liouville's theorem} the solution $\M(\lambda;t,z)$ can be formally written as
\begin{equation}
\M(\lambda;t,z)
 = \I + \sum_{\lambda_{j,k}\in\Lambda}\bigg(\frac{\Res_{\lambda = \lambda_{j,k}}\M(\lambda;t,z)}{\lambda-\lambda_{j,k}} + \frac{\Res_{\lambda = \conj{\lambda_{j,k}}}\M(\lambda;t,z)}{\lambda-\conj{\lambda_{j,k}}}\bigg)\,.
\end{equation}
Note that this is not an explicit formula,
but instead is a linear system,
which is solved in Section~\ref{s:derivation-Nsoliton}.
It is also worth pointing out that
the residue conditions in RHP~\ref{rhp:N-soliton-residue-form} are not \myblue{well-suited}
when applying Deift-Zhou's nonlinear steepest descent method
in order to calculate asymptotics,
which is the main part of this work.
Thus,
it is necessary to convert the residue conditions to jumps.

Recall the notation $D_a^\epsilon$ from Remark~\ref{rmk:notations}.
Obviously,
there always exists a constant $0<\epsilon \ll1$
such that all \myblue{closed} disks
$\{\conj{D_{\lambda_{j,k}}^\epsilon},\, \conj{D_{\conj{\lambda_{j,k}}}^\epsilon}\}_{\lambda_{j,k}\in\Lambda}$
do not intersect for a given set $\Lambda$,
and they do not cross the real line.
Without repeating this argument in the rest of this work,
we always \myblue{assume that} it is true and such $\epsilon$ is used.
\begin{rhp}[Jump form]
\label{rhp:N-soliton-jump-form}
\myblue{
Given sets $\Lambda$ and $\Omega$.
Seek an analytic matrix function $\lambda\mapsto\M(\lambda;t,z)$
on $\Complex\setminus \partial D^\epsilon$,
where $D^\epsilon \coloneqq \bigcup_{\lambda_{j,k}\in\Lambda}\big(D_{\lambda_{j,k}}^\epsilon\bigcup D_{\conj{\lambda_{j,k}}}^\epsilon\big)$,
and that $\M(\lambda; t, z)$ has continuous boundary values on $\partial D^\epsilon$.
such that it has asymptotics $\M(\lambda)\to\I$ as $\lambda\to\infty$,
and satisfies the jumps
}
\begin{equation}
\begin{aligned}
\M^+(\lambda;t,z) & = \M^-(\lambda;t,z)\V_{j,k}(\lambda;t,z)^{-1}\,,\qquad &\lambda\in \partial D_{\lambda_{j,k}}^\epsilon\,,\\
\M^+(\lambda;t,z) & = \M^-(\lambda;t,z)\V_{j,k}(\conj{\lambda};t,z)^\dagger\,,\qquad &\lambda\in \partial D_{\conj{\lambda_{j,k}}}^\epsilon\,,
\end{aligned}
\end{equation}
for all $\lambda_{j,k}\in\Lambda$ and $\omega_{j,k}\in\Omega$,
all circles $\{\partial D_{\lambda_{j,k}}^\epsilon,\, \partial D_{\conj{\lambda_{j,k}}}^\epsilon\}_{\lambda_{j,k}\in\Lambda}$ are oriented counterclockwise and disjoint with a proper $0<\epsilon\ll1$, and the jump matrices are defined by
\begin{equation}
\everymath{\displaystyle}
\V_{j,k}(\lambda;t,z)
    \coloneqq \bpm
        1 & 0 \\ \frac{\omega_{j,k}\ee^{-2\ii\theta(\lambda;t,z)}}{\lambda-\lambda_{j,k}} & 1
    \epm\,,\qquad
\V_{j,k}(\conj\lambda;t,z)^\dagger
    = \bpm
        1 & \frac{\conj{\omega_{j,k}}\ee^{2\ii\theta(\lambda;t,z)}}{\lambda - \conj{\lambda_{j,k}}} \\ 0 & 1
    \epm\,.
\end{equation}
\end{rhp}
An illustration of the jump configuration of RHP~\ref{rhp:N-soliton-jump-form} is shown in Figure~\ref{f:M}(right).
\begin{remark}
\label{thm:M-symmetry}
Clearly,
all jump matrices in RHP~\ref{rhp:N-soliton-jump-form} satisfy the Schwarz symmetry
$\V(\lambda) = \V(\conj\lambda)^\dagger$ for all $t$ and $z$,
\myblue{after the jump contours in the upper-half plane are oriented clockwise,}
so by Zhou's lemma the solution to RHP~\ref{rhp:N-soliton-jump-form} uniquely exists~\cite{z1989}.
%\todo{Check this statement!!!}
Moreover, it is easy to verify that the solution $\M(\lambda;t,z)$ has the symmetry
$\M(\lambda;t,z)^{-1} = \M(\conj\lambda;t,z)^\dagger$
\myblue{and the property $\det\M(\lambda;t,z) = 1$, which imply that}
\begin{equation}
\M_{2,2}(\lambda;t,z) = \M_{1,1}^*(\lambda;t,z)\,,\qquad
\M_{2,1}(\lambda;t,z) = - \M_{1,2}^*(\lambda;t,z)\,.
\end{equation}
Hence, one only needs to solve for the first row in order to reconstruct the whole matrix $\M(\lambda;t,z)$.
\end{remark}
\begin{lemma}[Reconstruction formula]
\label{thm:reconstruction}
For given sets $\Lambda$ and $\Omega$,
and a given number $D_- = \pm1$,
there is a unique solution \myblue{$\{q(t,z),D(t,z),P(t,z)\}$}
to the MBEs~\eqref{e:mbe} with boundary condition $D \to D_-$ as $t\to-\infty$,
which can be reconstructed from the solutions to the RHP~\ref{rhp:N-soliton-residue-form}
or RHP~\ref{rhp:N-soliton-jump-form} via the following formula
\begin{equation}
\label{e:reconstructon-matrix}
\Q(t,z;\Lambda,\Omega)
 = \lim_{\lambda\to\infty}\ii\lambda[\M(\lambda;t,z),\sigma_3]\,,\quad
\brho(t,z;\Lambda,\Omega) = D_-\M(0;t,z)\sigma_3\M(0;t,z)^{-1}\,.
\end{equation}
Recall $N = |\Lambda|$.
The solution is call the $N$-soliton solution of MBEs.
\end{lemma}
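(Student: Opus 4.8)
The plan is to prove that the unique RHP solution $\M$, after the oscillatory dressing $\bphi\coloneqq\M\,\ee^{\ii\theta\sigma_3}$, satisfies the Lax pair~\eqref{e:laxpair}, that the two potentials extracted from $\M$ coincide with the reconstruction formul{\ae}~\eqref{e:reconstructon-matrix}, and that the compatibility of the Lax pair is exactly the MBEs~\eqref{e:mbe}; uniqueness is then inherited from the uniqueness of $\M$ (Remark~\ref{thm:M-symmetry}). Throughout I would use the two facts already recorded in Remark~\ref{thm:M-symmetry}, namely $\det\M\equiv1$ and $\M(\lambda)^{-1}=\M(\conj\lambda)^\dagger$, together with the smooth dependence of $\M$ on $(t,z)$, which is standard since the jumps of RHP~\ref{rhp:N-soliton-jump-form} depend analytically on $(t,z)$ through $\ee^{\pm2\ii\theta}$.

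The first observation is that conjugation by $\ee^{\ii\theta\sigma_3}$ renders every jump matrix of RHP~\ref{rhp:N-soliton-jump-form} independent of $(t,z)$, so $\bphi^+=\bphi^-\tilde\V$ with $\tilde\V$ constant in $(t,z)$. Differentiating in $t$ and using $\theta_t=\lambda$ yields
\begin{equation}
\bphi_t\bphi^{-1}=\M_t\M^{-1}+\ii\lambda\,\M\sigma_3\M^{-1}.
\end{equation}
Since $\tilde\V$ is $(t,z)$-independent, this quantity has no jump across $\partial D^\epsilon$; it therefore extends analytically across the circles and is entire in $\lambda$, while the expansion $\M=\I+\M_1/\lambda+O(\lambda^{-2})$ shows it grows at most linearly at infinity. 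Liouville's theorem then forces $\bphi_t\bphi^{-1}=\ii\lambda\sigma_3+\Q$ with $\Q=\ii[\M_1,\sigma_3]=\lim_{\lambda\to\infty}\ii\lambda[\M,\sigma_3]$, i.e. the first identity in~\eqref{e:reconstructon-matrix}; the symmetry $\M(\lambda)^{-1}=\M(\conj\lambda)^\dagger$ forces $\Q$ to be off-diagonal with entries $q$ and $-\conj q$, as required.

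The analogous computation in $z$, now with $\theta_z=-D_-/(2\lambda)$, gives
\begin{equation}
\bphi_z\bphi^{-1}=\M_z\M^{-1}-\frac{\ii D_-}{2\lambda}\,\M\sigma_3\M^{-1},
\end{equation}
which again has no jump, is analytic apart from a simple pole at $\lambda=0$ (where $\M(0)$ is finite and invertible because $0\notin\Lambda$ and $\det\M\equiv1$), and vanishes at infinity. Liouville's theorem yields $\bphi_z\bphi^{-1}=-\tfrac{\ii}{2\lambda}\brho$ with $\brho=D_-\M(0)\sigma_3\M(0)^{-1}$, the second identity in~\eqref{e:reconstructon-matrix}. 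From $\M(0)^{-1}=\M(0)^\dagger$ one checks that $\brho$ is Hermitian and traceless, so $\brho=\bpm D&P\\ \conj P&-D\epm$ with $D$ real, and $\brho^2=D_-^2\,\I=\I$ reproduces the normalization $D^2+|P|^2=1$. Hence $\bphi$ solves the Lax pair, and splitting the zero-curvature condition $\X_z-\Y_t+[\X,\Y]=0$ (with $\X=\ii\lambda\sigma_3+\Q$, $\Y=-\tfrac{\ii}{2\lambda}\brho$) by powers of $\lambda$ produces $\Q_z+\tfrac12[\sigma_3,\brho]=0$ and $\brho_t=[\Q,\brho]$, which are precisely $q_z=-P$, $P_t=-2qD$ and $D_t=2\Re(\conj q P)$.

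Finally, the boundary condition follows by letting $t\to-\infty$: on the upper (resp. lower) circles the off-diagonal entry of the jump carries $\ee^{-2\ii\theta}$ (resp. $\ee^{2\ii\theta}$), whose modulus decays exponentially as $t\to-\infty$ because $\Im\lambda>0$ (resp. $\Im\lambda<0$) there; every jump tends to $\I$, so the RHP trivializes and $\M\to\I$ uniformly, giving $q\to0$ and $\brho\to D_-\sigma_3$, i.e. $D\to D_-$ and $P\to0$ as in~\eqref{e:mbe}. The equivalent residue form of the argument replaces the no-jump property by the residue structure, which makes the same logarithmic derivatives regular at the eigenvalues. I expect the main obstacle to be precisely this analytic bookkeeping: one must rigorously justify that $\M$ is differentiable in $(t,z)$ and that $\bphi_t\bphi^{-1}$ and $\bphi_z\bphi^{-1}$ have only removable singularities on $\partial D^\epsilon$ (or at the eigenvalues in the residue form), so that Liouville's theorem genuinely applies; the remaining matrix identities and the limit $t\to-\infty$ are routine.
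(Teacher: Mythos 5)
Your proposal is correct and follows essentially the same route as the paper's proof: the dressing of $\M$ by $\ee^{\ii\theta\sigma_3}$, the Liouville arguments for $\bphi_t\bphi^{-1}$ (linear growth) and $\bphi_z\bphi^{-1}$ (simple pole at the origin) yielding exactly the two reconstruction formul{\ae}, the zero-curvature splitting into the MBEs, and the small-norm trivialization of the jumps as $t\to-\infty$ for the boundary conditions, with uniqueness inherited from Zhou's lemma. The only difference is cosmetic: the paper spells out the jump-form/residue-form equivalence as an explicit transformation (its step (2)), which you only sketch in one sentence, but that step is routine and your version contains all the substantive ideas.
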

Lemma~\ref{thm:reconstruction} can be proved by applying the dressing method to
RHP~\ref{rhp:N-soliton-jump-form}, similarly to the one in~\cite{lm2024}.
The proof is shown in Section~\ref{s:proof-reconstruction-formula} for completeness of this work.
\begin{remark}
In this work, the phrase \textit{equivalent forms} or \textit{equivalent RHPs} indicates that multiple RHPs yield identical MBEs solutions. Hence, RHPs~\ref{rhp:N-soliton-jump-form} and~\ref{rhp:N-soliton-residue-form} are equivalent by Lemma~\ref{thm:reconstruction}. Usually there are two equivalent RHPs for soliton solutions. One contains residue conditions, as usually done in ISTs, and the other one contains jumps. The former one is useful when calculating the solution formula, whereas the latter one is useful when analyzing properties or applying the nonlinear steepest descent method. Both forms are used extensively in this work.
\end{remark}
\begin{remark}
We append the spectral data $\Lambda$ and $\Omega$ as explicit parameter dependence in MBEs solutions as $(t,z;\Lambda,\Omega)$ in Lemma~\ref{thm:reconstruction} or later, in order to explicitly show the spectral dependence. This will be helpful when later discussing soliton asymptotics, where $\Lambda$ or $\Omega$ may change.
\end{remark}
%

%%%%%%%%%%%%%%%%%%%%%%%%%%%%%%%%%%%%%%%%%%%%%%%%%%%%%%%%%%%%%%%%%%%%%%%%%%%%%%%%
\subsection{General soliton solution {formul\ae}}
\label{s:intro-solutoin-formula}

\myblue{
We are now ready to derive general $N$-soliton solutions,
which are the foundation of later results.
With the spectrum set $\Lambda$ characterized by the most general parameters
$J \ge 1$ and $N_1,\dots, N_J \ge 1$,
the general $N$-soliton solution is computed in Section~\ref{s:derivation-Nsoliton},
via the reconstruction formula in Lemma~\ref{thm:reconstruction}
from RHP~\ref{rhp:N-soliton-residue-form}.
}

\begin{theorem}[General $N$-soliton solution formula]
\label{thm:N-soliton-formula}
For given sets $\Lambda$ and $\Omega$ from Definition~\ref{def:LambdaOmega},
and $N = |\Lambda|$.
\myblue{The} corresponding $N$-soliton solution to MBEs is given by
\begin{equation}
\label{e:Nsoliton-formula}
\begin{aligned}
q(t,z;\Lambda,\Omega)
 & = 2\ii - 2\ii\frac{\det(\I - \myblue{\vbone\conj{\C_\infty}} + \bGamma\conj{\bGamma})}{\det(\I + \bGamma\conj{\bGamma})}\,,
\end{aligned}
\end{equation}
where $D(t,z;\Lambda,\Omega)$ and $P(t,z;\Lambda,\Omega)$ are reconstructed by Lemma~\ref{thm:reconstruction}, with entries of $\M(\lambda;t,z)$ given by
\begin{equation}
\begin{aligned}
M_{1,1}(\lambda;t,z)
 & = \frac{\det(\I - \myblue{\vbone\C\conj{\bGamma}} + \bGamma\conj{\bGamma})}{\det(\I + \bGamma\conj{\bGamma})}\,, \quad
M_{1,2}(\lambda;t,z)
 = \frac{\det(\I - \myblue{\vbone\conj{\C}} + \bGamma\conj{\bGamma})}{\det(\I + \bGamma\conj{\bGamma})} - 1\,,
\end{aligned}
\end{equation}
where all quantities are defined by
\begin{equation}
\label{e:C-Gamma-definition}
\begin{aligned}
\C(\lambda)
 & \coloneqq \bpm c_{1,1}(\lambda) & \dots & c_{1,N_1}(\lambda) & c_{2,1}(\lambda) & \dots & c_{2,N_2}(\lambda) & \dots & c_{J,1}(\lambda) & \dots & c_{J,N_J}(\lambda) \epm\,,\\
\C_\infty
 & = \lim_{\lambda\to\infty}\lambda \C(\lambda)
 = \bpm \omega_{1,1}\ee^{-2\ii\theta_{1,1}} & \omega_{1,2}\ee^{-2\ii\theta_{1,2}} & \dots & \omega_{J,N_J}\ee^{-2\ii\theta_{J,N_J}} \epm\,,\\
\vbone & \coloneqq \bpm 1 & \dots & 1\epm_{1\times N}^\top\,,\\
\bGamma
 & \coloneqq
  \bpm
   \C(\conj{\lambda_{1,1}})^\top & \dots & \C(\conj{\lambda_{1,N_1}})^\top &
   \dots & \C(\conj{\lambda_{J,1}})^\top & \dots & \C(\conj{\lambda_{J,N_J}})^\top
  \epm_{N\times N}^\top\,,\\
\theta_{j,k}
 & \coloneqq \theta(\lambda_{j,k};t,z)\,,\qquad
c_{j,k}(\lambda)\coloneqq \frac{\omega_{j,k}}{\lambda - \lambda_{j,k}}\ee^{-2\ii\theta_{j,k}}\,,\qquad
\lambda_{j,k}\in\Lambda\,,\qquad \omega_{j,k}\in\Omega\,.
\end{aligned}
\end{equation}
\myblue{
Moreover, $\det(\I+\bGamma\conj{\bGamma}) > 1$ for all $(t,z)\in\Real^2$,
so that the soliton solution is always regular.
}
\end{theorem}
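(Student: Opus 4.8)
The plan is to solve RHP~\ref{rhp:N-soliton-residue-form} explicitly by linear algebra and then feed the result into the reconstruction formula of Lemma~\ref{thm:reconstruction}. First I would start from the Liouville representation already displayed in the excerpt and use the Schwarz symmetry from Remark~\ref{thm:M-symmetry} ($M_{2,2}=M_{1,1}^*$, $M_{2,1}=-M_{1,2}^*$) to reduce everything to the first row $(M_{1,1},M_{1,2})$. Carrying out the matrix multiplication in the residue conditions, the first row takes the form $M_{1,1}(\lambda)=1+\sum_\ell c_\ell(\lambda)\,a_\ell$ and $M_{1,2}(\lambda)=-\sum_\ell \conj{c_\ell(\conj\lambda)}\,b_\ell$, where $\ell$ is a flat index running over all $N$ eigenvalues, $c_\ell(\lambda)$ are the entries of $\C(\lambda)$ from Equation~\eqref{e:C-Gamma-definition}, and the two finite families of unknowns are $a_\ell\coloneqq M_{1,2}(\lambda_\ell)$ and $b_\ell\coloneqq M_{1,1}(\conj{\lambda_\ell})$.

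The next step is to close the system. Evaluating the $M_{1,1}$ expansion at $\lambda=\conj{\lambda_m}$ and the $M_{1,2}$ expansion at $\lambda=\lambda_m$, and using $\theta(\conj\lambda)=\conj{\theta(\lambda)}$ together with $(\bGamma)_{m,\ell}=c_\ell(\conj{\lambda_m})$, produces exactly $\mathbf b=\vbone+\bGamma\mathbf a$ and $\mathbf a=-\conj{\bGamma}\mathbf b$ with $\bGamma$ as in Equation~\eqref{e:C-Gamma-definition} and $\conj{\bGamma}$ its entrywise conjugate. Eliminating $\mathbf a$ gives the single $N\times N$ system $(\I+\bGamma\conj{\bGamma})\mathbf b=\vbone$, whose invertibility I address at the end. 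Substituting $\mathbf b=(\I+\bGamma\conj{\bGamma})^{-1}\vbone$ and $\mathbf a=-\conj{\bGamma}\mathbf b$ back into the row expansions yields $M_{1,1}(\lambda)=1-\C(\lambda)\conj{\bGamma}(\I+\bGamma\conj{\bGamma})^{-1}\vbone$, the analogue $M_{1,2}(\lambda)=-\conj{\C(\conj\lambda)}(\I+\bGamma\conj{\bGamma})^{-1}\vbone$, and, from $q=-2\ii\lim_{\lambda\to\infty}\lambda M_{1,2}$ via Equation~\eqref{e:reconstructon-matrix}, $q=2\ii\,\conj{\C_\infty}(\I+\bGamma\conj{\bGamma})^{-1}\vbone$.

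To reach the stated determinant ratios I would apply the matrix determinant lemma $\det(\I+\bGamma\conj{\bGamma}-\vbone\mathbf w^\top)=\det(\I+\bGamma\conj{\bGamma})\,\bigl(1-\mathbf w^\top(\I+\bGamma\conj{\bGamma})^{-1}\vbone\bigr)$ with $\mathbf w^\top$ taken to be $\C\conj{\bGamma}$, $\conj{\C}$, and $\conj{\C_\infty}$ in turn; this rewrites the three scalar expressions above as the claimed quotients, identifying them with $M_{1,1}$, $1+M_{1,2}$, and $(2\ii-q)/(2\ii)$, which is precisely Equation~\eqref{e:Nsoliton-formula} and the entry formulas. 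Here the bar on the $\lambda$-dependent row vectors is read in the Schwarz sense of Remark~\ref{rmk:notations}, $\conj{\C}\coloneqq\conj{\C(\conj\lambda)}$, so that $M_{1,2}$ stays analytic; with this convention $D$ and $P$ follow directly from $\brho=D_-\M(0)\sigma_3\M(0)^{-1}$ in Lemma~\ref{thm:reconstruction}.

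Finally, for the rider $\det(\I+\bGamma\conj{\bGamma})>1$ I would expose the hidden positivity. Writing $\bGamma=\K\E$ with the Cauchy matrix $(\K)_{m,\ell}=(\conj{\lambda_m}-\lambda_\ell)^{-1}$ and $\E=\diag(\omega_\ell\ee^{-2\ii\theta_\ell})$, one checks $\K^\dagger=-\K$, while the Gram representation $\tfrac{1}{\ii(\lambda_m-\conj{\lambda_\ell})}=\int_0^\infty \ee^{\ii\lambda_m s}\,\conj{\ee^{\ii\lambda_\ell s}}\dd s$ (convergent since $\Im\lambda_\ell>0$) shows $\H\coloneqq\ii\K$ is Hermitian positive definite. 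Substituting $\K=-\ii\H$ gives $\bGamma\conj{\bGamma}=\H\E\,\conj{\H}\,\conj{\E}$; with a Cholesky factor $\H=\mathbf L\mathbf L^\dagger$ this is similar to $\conj{\mathbf W}\mathbf W$ with the complex symmetric matrix $\mathbf W\coloneqq\mathbf L^\top\conj{\E}\mathbf L$, and for complex symmetric $\mathbf W$ the eigenvalues of $\conj{\mathbf W}\mathbf W$ are the squared singular values $\sigma_i(\mathbf W)^2\ge0$ (Takagi factorization). Since $\mathbf L$ is invertible and every $\omega_\ell\ne0$, we have $\mathbf W\ne\mathbf 0$, so at least one $\sigma_i>0$ and $\det(\I+\bGamma\conj{\bGamma})=\prod_i(1+\sigma_i^2)>1$; in particular the linear system is uniquely solvable and the solution is regular. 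I expect this positivity step — recognizing the Takagi/Gram structure that upgrades $\ge1$ to the strict $>1$ and simultaneously guarantees solvability — to be the main obstacle, whereas the formula derivation itself is bookkeeping of the Schwarz symmetry followed by the routine determinant identity.
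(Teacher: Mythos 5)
Your proposal is correct and follows essentially the same route as the paper: the same reduction of the residue-form RHP to the $N\times N$ system $(\I+\bGamma\conj{\bGamma})\mathbf{b}=\vbone$ via the Schwarz symmetry (the paper solves it by Cramer's rule and a bordered-determinant identity, which is the same bookkeeping as your matrix determinant lemma), and the same key positivity insight realizing the Cauchy kernel as a Gram matrix of the exponentials $\ee^{\ii\lambda_\ell s}$ on $[0,+\infty)$ (the paper then conjugates by the positive square root and uses $\det(\A+\B)\ge\det\A+\det\B$, a cosmetic variant of your Cholesky--Takagi step). The one slip is a sign in your Gram identity: $\int_0^\infty \ee^{\ii\lambda_m s}\conj{\ee^{\ii\lambda_\ell s}}\dd s = \tfrac{1}{\ii(\conj{\lambda_\ell}-\lambda_m)}$, so it is $-\ii\K$ (as in the paper's Lemma~\ref{thm:bGamma1-positive-definite}), not $\ii\K$, that is Hermitian positive definite --- but this is immaterial, since with $\K=\pm\ii\H$ one gets $\bGamma\conj{\bGamma}=\H\E\conj{\H}\conj{\E}$ either way and your argument proceeds unchanged.
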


\myblue{
The spectrum set $\Lambda$ and $\Omega$ are arbitrarily given in Theorem~\ref{thm:N-soliton-formula},
producing $N$ solitons.
%Each $\lambda_{j,k}$ produces a soliton \myblue{traveling} with a particular speed.
Because we do not impose additional constraint,
}
some of the solitons may travel with the same speed, i.e.,
form a DSG\footnote{We will make these statements concrete in Theorems~\ref{thm:N-DSG} and~\ref{thm:soliton-asymptotics}.}. Hence, the setup of this work is capable of producing DSGs, and becomes a perfect tool for studying multi-soliton solutions containing such complex structures.

\myblue{
Before diving into the long-time asymptotics for general $N$-soliton solutions,
it is necessary to characterize the fundamental building blocks --- $N$-DSGs,
with a single soliton being a special case as a $1$-DSG.
}

%%%%%%%%%%%%%%%%%%%%%%%%%%%%%%%%%%%%%%%%%%%%%%%%%%%%%%%%%%%%%%%%%%%%%%%%%%%%%%%%
\subsection{Degenerate soliton groups}
\label{s:intro-dsg}

\myblue{
In this section,
we consider a special case of the general $N$-soliton solutions from Theorem~\ref{thm:N-soliton-formula},
by requiring only one eigenvalue group $J = 1$ from Definition~\ref{def:LambdaOmega}.
We show that such solutions form DSGs and explore their properties.
Discussion on similarities between DSGs and single solitons
implies that an $N$-DSG is a generalization of a single soliton with $N\ge2$.
Recall that solitons are localized traveling waves
and maintain their velocities and shapes after nonlinear interactions~\cite{zk1965}.
}
The following theorem establishes the localization and traveling-wave nature of \myblue{a} DSG with an arbitrary size $N\ge1$.
An explicit formula for the DSG position is also provided.
\begin{theorem}[Degenerate $N$-soliton solution]
\label{thm:N-DSG}
Suppose $J = 1$. Let $\Lambda$ and $\Omega$ be given according to Definition~\ref{def:LambdaOmega}, so that $N = N_1 \ge 1$ be a positive integer and $\Lambda = \Lambda_1$. Namely, all eigenvalues have identical moduli $|\lambda_{1,k}| = r_1 > 0$ for $1\le k\le N_1$. Then, regardless of the initial state of the medium $D_- = \pm1$, Lemma~\ref{thm:reconstruction} produces an $N$-DSG of the MBEs~\eqref{e:mbe}, which has the following properties
\begin{enumerate}
\item
All $N$ solitons in the solution travel coherently with an identical speed $V \coloneqq -2D_- r_1^2$.
\item
The $N$-DSG is localized along the direction $z = Vt$.
Along other directions $z = \xi t$ with $\xi\ne V$
\myblue{and a positive constant $\aleph$,
as $t\to\pm\infty$ the solution is}
\begin{equation}
\myblue{
q(t,z;\Lambda,\Omega) = \O(\ee^{-\aleph |t|})\,,\qquad
D(t,z;\Lambda,\Omega) = D_- + \O(\ee^{-\aleph |t|})\,,\qquad
P(t,z;\Lambda,\Omega) = \O(\ee^{-\aleph |t|})\,.
}
\end{equation}
\item
The center of the $N$-DSG denoted by $z_\c(t)$ is given as
\begin{equation}
\label{e:N-soliton-solution:zc-zd}
z_\c(t)
 \coloneqq Vt + z_\d\,,\qquad
z_\d
 \coloneqq -\frac{D_- r_1}{\sum_{k=1}^{N_1}\sin(\alpha_{1,k})}\ln\bigg(\prod_{k=1}^{N_1} |\omega_{1,k}|\cdot \frac{\prod_{k = 2}^{N_1}\prod_{l = 1}^{k-1}|\lambda_{1,k} - \lambda_{1,l}|^2}{\prod_{k=1}^{N_1}\prod_{l=1}^{N_1}|\conj{\lambda_{1,k}} - \lambda_{1,l}|}\bigg)\,,
\end{equation}
where $z_\d$ is the displacement.
\item With $J = 1$, the solution $\M(\lambda;t,z)$ to RHPs~\ref{rhp:N-soliton-residue-form} and~\ref{rhp:N-soliton-jump-form} are bounded as $t\to\pm\infty$ in the direction $z = V t$.
\end{enumerate}
\end{theorem}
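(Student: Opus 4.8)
The plan is to read off all four claims directly from the closed-form $N$-soliton formula of Theorem~\ref{thm:N-soliton-formula} specialized to $J=1$, treating the phases $\theta_{1,k}=\theta(\lambda_{1,k};t,z)$ as the sole driver of the asymptotics. A useful first move is to collapse the ratio in~\eqref{e:Nsoliton-formula}: since $\vbone\conj{\C_\infty}$ is rank one, the matrix determinant lemma $\det(\A-\vbone\conj{\C_\infty})=\det(\A)\,(1-\conj{\C_\infty}\A^{-1}\vbone)$ with $\A=\I+\bGamma\conj{\bGamma}$ reduces the solution to the scalar quadratic form
\begin{equation}
q(t,z)=2\ii\,\conj{\C_\infty}\,(\I+\bGamma\conj{\bGamma})^{-1}\,\vbone\,.
\end{equation}
I would also factor $\bGamma=\B\,\E$, where $\E\coloneqq\diag(\ee^{-2\ii\theta_{1,k}})$ collects the exponentials and $\B_{m,k}\coloneqq\omega_{1,k}/(\conj{\lambda_{1,m}}-\lambda_{1,k})$ is a Cauchy-type matrix independent of $t,z$.

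For parts (1)--(2) the key computation is $\Im\theta_{1,k}=\sin(\alpha_{1,k})\,\tau$, where $\tau\coloneqq r_1 t+\tfrac{D_-}{2r_1}z=\tfrac{D_-}{2r_1}(z-Vt)$ and $V=-2D_-r_1^2$, so that $|\ee^{-2\ii\theta_{1,k}}|=\ee^{2\sin(\alpha_{1,k})\tau}$. Because $J=1$ forces the common modulus $r_1$, every phase is governed by the single variable $\tau$, which vanishes precisely on $z=Vt$; this is the origin of the shared velocity in (1). For (2), along any ray $z=\xi t$ with $\xi\ne V$ one has $\tau=(r_1+\tfrac{D_-\xi}{2r_1})t\to\pm\infty$, so all exponential moduli blow up or vanish simultaneously. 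Inserting the scaling $\S\coloneqq\diag(\ee^{\sin(\alpha_{1,k})\tau})$ into the collapsed formula and balancing orders gives $q\to0$: as $\tau\to-\infty$ both $\conj{\C_\infty}$ and $\bGamma\conj{\bGamma}$ vanish, while as $\tau\to+\infty$ the growth of $\conj{\C_\infty}\sim\S^2$ is overwhelmed by $(\bGamma\conj{\bGamma})^{-1}\sim\S^{-2}(\cdots)\S^{-2}$. The surviving rate is $\aleph\asymp\min_k\sin(\alpha_{1,k})\,|r_1+\tfrac{D_-\xi}{2r_1}|>0$, strictly positive precisely because all $\alpha_{1,k}\in(0,\pi)$. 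The statements for $D$ and $P$ follow from the reconstruction $\brho=D_-\M(0)\sigma_3\M(0)^{-1}$ in~\eqref{e:reconstructon-matrix} once one checks $\M(0;t,z)\to\I$ exponentially, by the same order count applied to $M_{1,1}(0)$ and $M_{1,2}(0)$.

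For (3) I would locate the center as the transition point of the dominant balance inside $\det(\I+\bGamma\conj{\bGamma})$. The Cauchy--Binet expansion writes this determinant as a sum of principal minors carrying weights $\prod_{k\in S}\ee^{4\sin(\alpha_{1,k})\tau}$; the two extremes are the empty set (constant $1$, dominant as $\tau\to-\infty$) and the full set $|\det\bGamma|^2=|\det\B|^2\prod_k\ee^{4\sin(\alpha_{1,k})\tau}$ (dominant as $\tau\to+\infty$). Equating them, $|\det\B|^2\,\ee^{4(\sum_k\sin\alpha_{1,k})\tau}=1$, pins the center at $\tau^*=-\ln|\det\B|/(2\sum_k\sin\alpha_{1,k})$, whence $z_\d=\tfrac{2r_1}{D_-}\tau^*=-D_-r_1\ln|\det\B|/\sum_k\sin\alpha_{1,k}$; evaluating the Cauchy determinant as $|\det\B|=\prod_k|\omega_{1,k}|\cdot\prod_{k<l}|\lambda_{1,k}-\lambda_{1,l}|^2/\prod_{k,l}|\conj{\lambda_{1,k}}-\lambda_{1,l}|$ reproduces exactly~\eqref{e:N-soliton-solution:zc-zd}. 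For (4), on $z=Vt$ we have $\tau\equiv0$, so every $|\ee^{-2\ii\theta_{1,k}}|=1$ and all entries of $\bGamma$, $\C$, and $\C_\infty$ are uniformly bounded (purely oscillatory) in $t$; since $\det(\I+\bGamma\conj{\bGamma})\in(1,C]$ stays bounded away from $0$ by Theorem~\ref{thm:N-soliton-formula}, the first-row entries of $\M$ are uniformly bounded, and the symmetry $\M^{-1}=\M(\conj\lambda)^\dagger$ of Remark~\ref{thm:M-symmetry} propagates the bound to the whole matrix.

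The main obstacle is the uniform bookkeeping of exponential orders in the Cauchy--Binet expansion: upgrading the heuristic ``leading minus subleading'' estimates to genuine $\O(\ee^{-\aleph|t|})$ bounds valid uniformly on full cones of $\xi$ away from $V$, rather than on single rays, while controlling the oscillatory coefficients $\ee^{-2\ii\Re\theta_{1,k}}$ so that no accidental cancellation destroys the balance. The one genuinely computational ingredient is the closed-form evaluation of $|\det\B|$ needed to match $z_\d$; everything else is careful but routine tracking of the scaling $\S$. A secondary check is that the rank-one collapse and the order estimates remain valid at $\lambda=0$, so that the population inversion and polarization inherit the same decay.
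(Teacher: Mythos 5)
Your proposal is essentially correct and, for parts (1)--(2), takes a genuinely different route from the paper. The paper proves localization by Riemann--Hilbert deformation: it introduces the scalar function $\delta_\dsg(\lambda)=p_\dsg^*(\lambda)/p_\dsg(\lambda)$ and local conjugating matrices $\S_k$ to convert the growing circular jumps of RHP~\ref{rhp:N-soliton-jump-form} into decaying ones, then invokes small-norm theory to get $\M_\dsg=\I+\O(\ee^{-\aleph t})$, with the case $\xi=V$ handled by falling back on the explicit formula. You instead work entirely with the closed-form solution of Theorem~\ref{thm:N-soliton-formula}: the rank-one collapse $q=2\ii\,\conj{\C_\infty}(\I+\bGamma\conj{\bGamma})^{-1}\vbone$ is correct, the factorization $\bGamma=\B\E$ with constant invertible Cauchy-type $\B$ is sound (all eigenvalues are distinct), and the single scaling variable $\tau=\tfrac{D_-}{2r_1}(z-Vt)$ with $|\ee^{-2\ii\theta_{1,k}}|=\ee^{2\sin(\alpha_{1,k})\tau}$ is exactly right. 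In the growing regime your order count works because of an exact (not merely modulus-level) cancellation, $\conj{\C_\infty}\,\conj{\E}^{-1}=\bpm\conj{\omega_{1,1}}&\cdots&\conj{\omega_{1,N_1}}\epm$, which you should state explicitly; the lower bound on $\sigma_{\min}(\bGamma\conj{\bGamma})$ needed to invert then comes from invertibility of $\B$. What your route buys is an elementary, RHP-free proof of (2) on each fixed ray; what the paper's route buys is uniform control and, more importantly, the deformed problems (RHP~\ref{rhp:MDSG}) that are reused verbatim in the proof of Theorem~\ref{thm:soliton-asymptotics}, which your determinant bookkeeping would not provide. Parts (3) and (4) coincide with the paper's proof: the same factorization of $\bGamma$ into a Cauchy matrix times a diagonal of exponentials, the same matching of the two extreme balances of $\det(\I+\bGamma\conj{\bGamma})$ giving $|\det\bGamma|^2=1$, the same Cauchy-determinant evaluation, and the same boundedness argument from the oscillatory exponentials plus $\det(\I+\bGamma\conj{\bGamma})>1$ and the Schwarz symmetry.

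One concrete misstep: your claim that $\M(0;t,z)\to\I$ exponentially along rays $\xi\ne V$ is false in the regime where the exponentials grow (e.g.\ $D_-=-1$, $\xi<V$, $t\to+\infty$). There the paper's deformation gives $\M(\lambda;t,z)=(\I+\O(\ee^{-\aleph t}))\,\delta_\dsg(\lambda)^{\sigma_3}$, so $\M(0;t,z)\to\delta_\dsg(0)^{\sigma_3}$ with $\delta_\dsg(0)=\prod_k\conj{\lambda_{1,k}}/\lambda_{1,k}\ne1$ in general (your own order count would show $M_{1,2}(0)\to0$ but $M_{1,1}(0)\to$ a unimodular constant different from $1$). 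The conclusion for the medium variables survives because the limit is diagonal and commutes with $\sigma_3$, so $\brho=D_-\M(0)\sigma_3\M(0)^{-1}\to D_-\sigma_3$ all the same; but as written your verification step would fail, and you should replace ``$\M(0;t,z)\to\I$'' by ``$\M(0;t,z)$ tends exponentially to a diagonal unimodular matrix.'' Beyond this, the obstacle you flag yourself --- promoting the minor-by-minor dominant-balance heuristics to genuine $\O(\ee^{-\aleph|t|})$ bounds --- is real but no worse than the matching argument the paper itself uses for the center in part (3), which is heuristic in exactly the same way.
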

Theorem~\ref{thm:N-DSG} is proved in Section~\ref{s:N-DSG}. The last statement (4) of Theorem~\ref{thm:N-DSG} is used later in the proof of Theorem~\ref{thm:soliton-asymptotics}, in Section~\ref{s:N-soliton-asymptotics-stable-tpos-case4}.
\begin{remark}
The $N$-DSG's velocity $V$ can be derived from the equation $\Im(\theta(\lambda_{1,k};t,V t)) = 0$ and is solely determined by $r_1$. Conversely, we use the equation $\Im(\theta(\lambda_{j,k};t,V t) = const$ to define the eigenvalue groups $\Lambda_j$ in Definition~\ref{def:LambdaOmega}. The velocity is positive in a stable medium ($D_- = -1)$ and is negative in an unstable medium ($D_- = 1$). Because the light cone is the first quadrant of the $(t,z)$ plane, the $N$-DSG travels subluminally in a stable medium, and superluminally in an unstable medium. Consequently, the stable $N$-DSG stays inside the light cone as $t\to+\infty$, whereas the unstable $N$-DSG eventually travels out. Clearly, the unstable case seems unphysical, and deserves further analysis.
In fact,
solitons in unstable \myblue{media} with non-vanishing reflection coefficient \myblue{have been} studied before and shown to be related to superfluorescence~\cite{gzm1983,gzm1984,gzm1985,z1980}.
\end{remark}
Due to the complexity of the solution formula from Theorem~\ref{thm:N-soliton-formula}, it is hard to discuss the shape of $N$-DSGs, and how the shape depends on the eigenvalues and norming constants. Therefore, it is necessary to analyze special cases when $N$ is small. The simplest DSG is of course a single soliton corresponding to $N = 1$, and the breathers correspond to $N = 2$. Let us first introduce a shorthand notation to simplify solution {formul\ae} before we investigate these special cases.
\begin{equation}
\label{e:spm-def}
s_{j,\pm}(t,z) \coloneqq \pm 2r_j t + \frac{D_-}{r_j}z\,.
\end{equation}
One first discusses what happens of Theorem~\ref{thm:N-DSG} when $N = 1$, namely, a single soliton.
\begin{corollary}[One-soliton solution]
\label{thm:1-soliton}
Suppose that $N = J = N_1 = 1$ in Theorem~\ref{thm:N-soliton-formula}, and sets $\Lambda = \{\lambda_{1,1}\}$ and $\Omega = \{\omega_{1,1}\}$ are given. Using the parameterization for eigenvalue and norming constants in Definition~\ref{def:LambdaOmega}. Theorem~\ref{thm:N-soliton-formula} produces a one-soliton solution, which can be written in a compact form
\begin{equation}
\begin{aligned}
q(t,z;\lambda_{1,1},\omega_{1,1})
 & = 2 \ii r_1 \sin(\alpha_{1,1})\ee^{-\ii \tau_{1,1}(t,z)}\sech(\chi_{1,1}(t,z))\,,\\
D(t,z;\lambda_{1,1},\omega_{1,1})
 & = D_- \big[\cos(2\alpha_{1,1}) \sech^2(\chi_{1,1}(t,z)) + \tanh^2(\chi_{1,1}(t,z))\big]\,,\\
P(t,z;\lambda_{1,1},\omega_{1,1})
 & = -2 D_- \ee^{-\ii \tau_{1,1}(t,z)} \sin(\alpha_{1,1}) \sech^2(\chi_{1,1}(t,z)) \cosh (\chi_{1,1}(t,z) - \ii\alpha_{1,1})\,,
\end{aligned}
\end{equation}
with
\begin{equation}
\label{e:1soliton-chis-def}
\begin{aligned}
\chi_{1,1}(t,z) & \coloneqq \sin(\alpha_{1,1})s_{1,+}(t,z) + \xi_{1,1}\,,\qquad
\tau_{1,1}(t,z) \coloneqq \cos(\alpha_{1,1})s_{1,-}(t,z) + \phi_{1,1}\,,\\
\omega_{1,1} & = 2 r_1\sin(\alpha_{1,1})\ee^{\xi_{1,1} + \ii\phi_{1,1}}\,,\qquad
\xi_{1,1}\in\Real\,,\qquad
\phi_{1,1}\in[0,2\pi)\,.
\end{aligned}
\end{equation}
The soliton velocity is $V = -2D_-r_1^2$, the soliton amplitude is $2r_1\sin(\alpha_{1,1})$, and the soliton center is $z_\c(t) = Vt + r_1\xi_{1,1}/\sin(\alpha_{1,1})$.
\end{corollary}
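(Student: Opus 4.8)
The plan is to obtain the Corollary by direct specialization of Theorem~\ref{thm:N-soliton-formula} to the scalar case $N = J = N_1 = 1$, followed by elementary trigonometric and hyperbolic simplifications; no new analytic input is required. With a single eigenvalue $\lambda_{1,1} = r_1\ee^{\ii\alpha_{1,1}}$ and its conjugate, every object in \eqref{e:C-Gamma-definition} collapses to a scalar: $\vbone = 1$, $\C(\lambda) = c_{1,1}(\lambda)$, $\C_\infty = \omega_{1,1}\ee^{-2\ii\theta_{1,1}}$, and $\bGamma = c_{1,1}(\conj{\lambda_{1,1}}) = \omega_{1,1}\ee^{-2\ii\theta_{1,1}}/(\conj{\lambda_{1,1}} - \lambda_{1,1})$, so each determinant in Theorem~\ref{thm:N-soliton-formula} becomes an ordinary expression of the form $1 + (\cdots)$.

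First I would record the two real combinations that drive everything. Splitting $\theta_{1,1}$ into real and imaginary parts and using the shorthand \eqref{e:spm-def} gives $\Im\theta_{1,1} = \tfrac12\sin(\alpha_{1,1})\,s_{1,+}(t,z)$ and $\Re\theta_{1,1} = -\tfrac12\cos(\alpha_{1,1})\,s_{1,-}(t,z)$. Feeding the parameterization $\omega_{1,1} = 2r_1\sin(\alpha_{1,1})\ee^{\xi_{1,1}+\ii\phi_{1,1}}$ and $\conj{\lambda_{1,1}} - \lambda_{1,1} = -2\ii r_1\sin(\alpha_{1,1})$ into $\bGamma$ yields the clean identity $\det(\I + \bGamma\conj{\bGamma}) = 1 + |\bGamma|^2 = 1 + \ee^{2\chi_{1,1}}$, the single denominator appearing throughout. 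Substituting into \eqref{e:Nsoliton-formula} collapses the numerator of $q$ to $\conj{\C_\infty} = \conj{\omega_{1,1}}\ee^{2\ii\conj{\theta_{1,1}}}$, and the identity $\ee^{\chi}/(1+\ee^{2\chi}) = \tfrac12\sech(\chi)$ then produces $q = 2\ii r_1\sin(\alpha_{1,1})\ee^{-\ii\tau_{1,1}}\sech(\chi_{1,1})$.

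For $D$ and $P$ I would use the reconstruction $\brho = D_-\M(0)\sigma_3\M(0)^{-1}$ from Lemma~\ref{thm:reconstruction}. Exploiting $\det\M = 1$ and the symmetries $M_{2,2} = M_{1,1}^*$, $M_{2,1} = -M_{1,2}^*$ recorded in Remark~\ref{thm:M-symmetry}, the relevant entries of $\M(0)\sigma_3\M(0)^{-1}$ reduce to $1 - 2|M_{1,2}(0)|^2$ and $-2M_{1,1}(0)M_{1,2}(0)$, so $D = D_-(1 - 2|M_{1,2}(0)|^2)$ and $P = -2D_-M_{1,1}(0)M_{1,2}(0)$. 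It then remains to evaluate the scalar formulas for $M_{1,1}$ and $M_{1,2}$ at $\lambda = 0$; a short computation gives $M_{1,2}(0) = \sin(\alpha_{1,1})\ee^{\ii(\alpha_{1,1}-\tau_{1,1})}\sech(\chi_{1,1})$ and, using $1 - 2\ii\sin(\alpha_{1,1})\ee^{-\ii\alpha_{1,1}} = \ee^{-2\ii\alpha_{1,1}}$ together with $\ee^{-\chi}\ee^{\ii\alpha}+\ee^{\chi}\ee^{-\ii\alpha}=2\cosh(\chi-\ii\alpha)$, that $M_{1,1}(0) = \ee^{-\ii\alpha_{1,1}}\cosh(\chi_{1,1}-\ii\alpha_{1,1})\sech(\chi_{1,1})$. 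Plugging these in reproduces the stated $P$, while $|M_{1,2}(0)|^2 = \sin^2(\alpha_{1,1})\sech^2(\chi_{1,1})$ combined with the identity $\cos(2\alpha)\sech^2\chi + \tanh^2\chi = 1 - 2\sin^2\alpha\,\sech^2\chi$ reproduces $D$. The velocity $V = -2D_-r_1^2$ and the amplitude $2r_1\sin(\alpha_{1,1})$ (the peak of $|q|$ at $\chi_{1,1}=0$) are read off immediately, and the center follows by locating $\chi_{1,1}=0$, which is the $N_1=1$ specialization of the displacement formula \eqref{e:N-soliton-solution:zc-zd} in Theorem~\ref{thm:N-DSG}.

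I expect no genuine obstacle: the statement is a specialization, so the work is verification rather than discovery. The only places demanding care are the bookkeeping of the conjugation/Schwarz-reflection when reducing the determinant formulas to scalars at $\lambda=0$, and the correct combination of the two reconstruction-matrix entries via $\det\M = 1$ and the symmetry relations to isolate $D$ and $P$; the rest amounts to recognizing the handful of hyperbolic identities above.
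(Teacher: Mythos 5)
Your proposal is correct and follows essentially the same route as the paper, which obtains this corollary precisely by specializing the determinant {formul\ae} of Theorem~\ref{thm:N-soliton-formula} to the scalar case $N=J=N_1=1$ and reconstructing $D$ and $P$ through Lemma~\ref{thm:reconstruction}; your intermediate evaluations ($\Im\theta_{1,1}=\tfrac12\sin(\alpha_{1,1})s_{1,+}$, $\Re\theta_{1,1}=-\tfrac12\cos(\alpha_{1,1})s_{1,-}$, $|\bGamma|^2=\ee^{2\chi_{1,1}}$, $M_{1,2}(0)=\sin(\alpha_{1,1})\ee^{\ii(\alpha_{1,1}-\tau_{1,1})}\sech(\chi_{1,1})$, $M_{1,1}(0)=\ee^{-\ii\alpha_{1,1}}\cosh(\chi_{1,1}-\ii\alpha_{1,1})\sech(\chi_{1,1})$, and the identities used to assemble $q$, $D$, $P$) all check out. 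One small caveat: locating $\chi_{1,1}=0$, equivalently the $N_1=1$ case of Equation~\eqref{e:N-soliton-solution:zc-zd}, actually yields $z_\d=-D_-\,r_1\xi_{1,1}/\sin(\alpha_{1,1})$, which agrees with the corollary's stated center $Vt+r_1\xi_{1,1}/\sin(\alpha_{1,1})$ only when $D_-=-1$; this is a sign slip in the statement itself (it silently assumes the stable medium), not a flaw in your argument.
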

Clearly, the modulus parameter $r_1$ of the eigenvalue $\lambda_{1,1}$ solely determines the soliton velocity $V$, and together with the phase $\alpha_{1,1}$ determines the soliton amplitude. The phase parameter $\alpha_{1,1}$ also dominates the complex oscillation of the soliton, via the functions $\tau_{1,1}(t,z)$ in Equation~\eqref{e:1soliton-chis-def}. In particular, the oscillation frequency is proportional to $\cos(\alpha_{1,1})$,
so smaller value of $\alpha_{1,1}$ means \myblue{faster} oscillations, as shown in Figure~\ref{f:1soliton}. Of course, the oscillatory behavior is hidden when considering the modulus $|q(t,z)|$ in the one-soliton solution, but it becomes more prominent in consideration of $N$-DSG with $N\ge2$.
\begin{remark}
As can be easily seen in Corollary~\ref{thm:1-soliton},
the one-soliton solution decays to the ZBG as $t\to\pm\infty$ with a fixed value of $z$. In particular, $D(t,z)\to D_-$ as $t\to\pm\infty$. Therefore, if the medium is initially in the stable state, it falls back to the stable state after a long time, as one expects physically. However, if the medium is initially in the unstable state, it also returns, which is \textit{not} physical. Contrary to the pure soliton solution discussed here, it was proved that the solution to MBEs without any solitons fell back to the stable state from the unstable one~\cite{lm2024}. Hence, it suggests that for an unstable medium, the nonlinear interactions between solitons and radiation play a crucial role, and deserve further analysis. Again, this case is related to superfluorescence.
\end{remark}
\begin{figure}[tp]
\centering
\begin{minipage}[b]{.68\textwidth}
\includegraphics[width=0.47\textwidth]{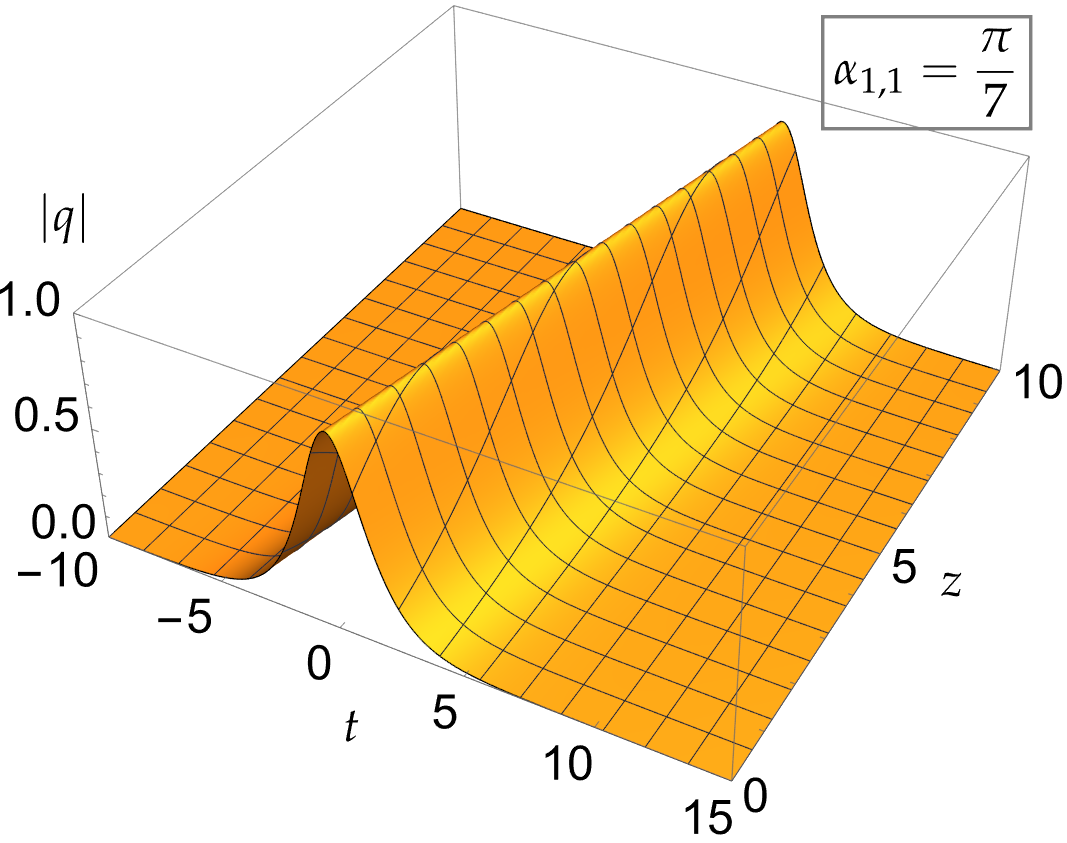}\quad
\includegraphics[width=0.47\textwidth]{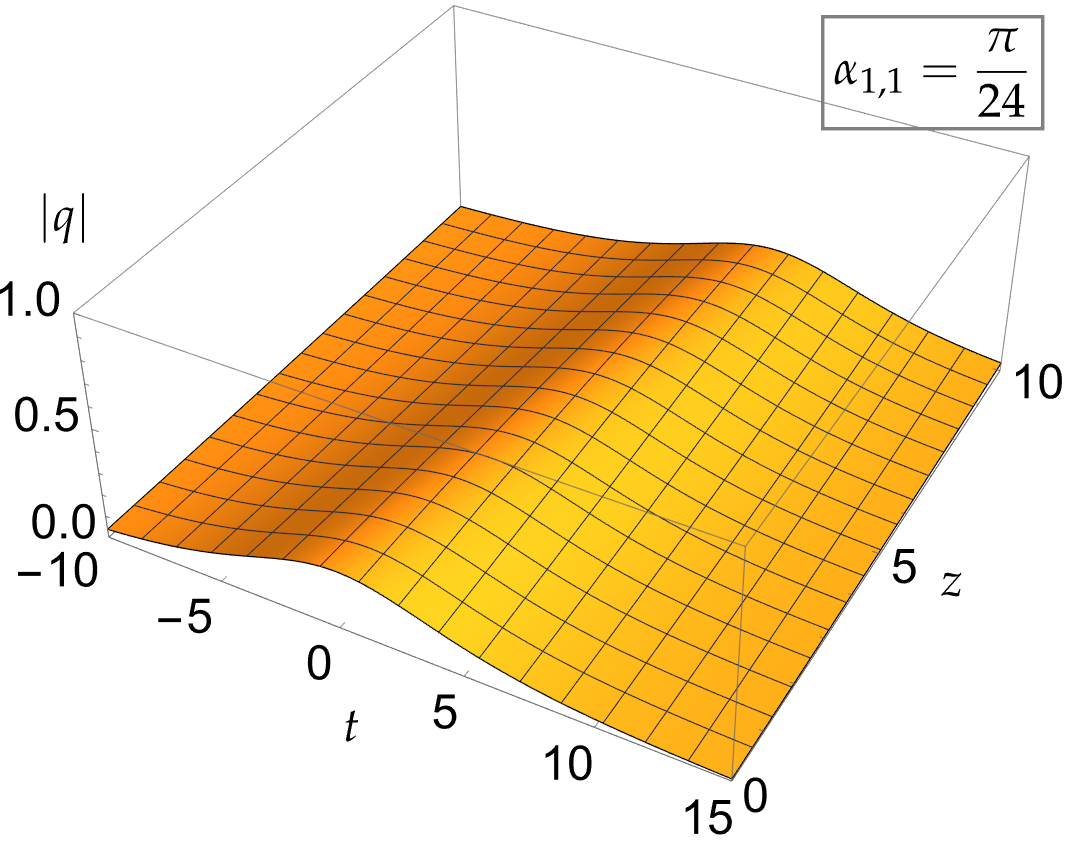}
\end{minipage}
\begin{minipage}[b]{.3\textwidth}
\includegraphics[width=1\textwidth]{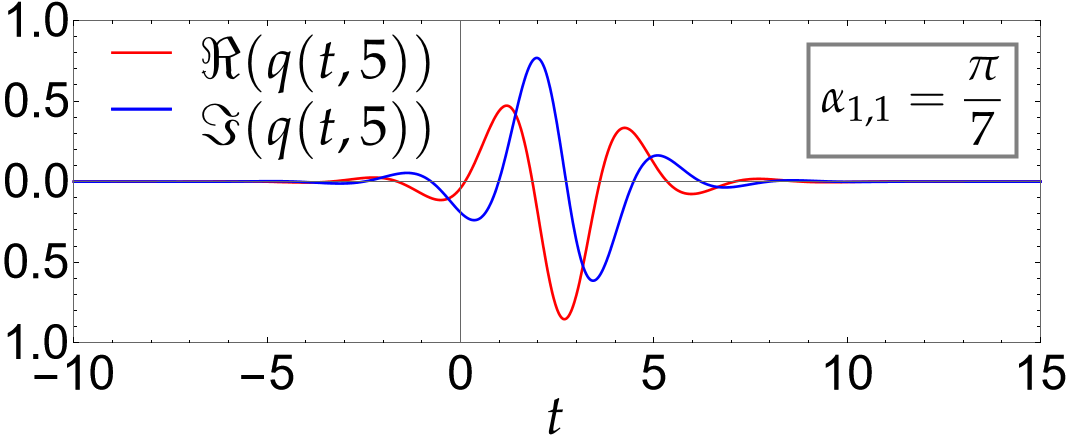}\\[0.5ex]
\includegraphics[width=1\textwidth]{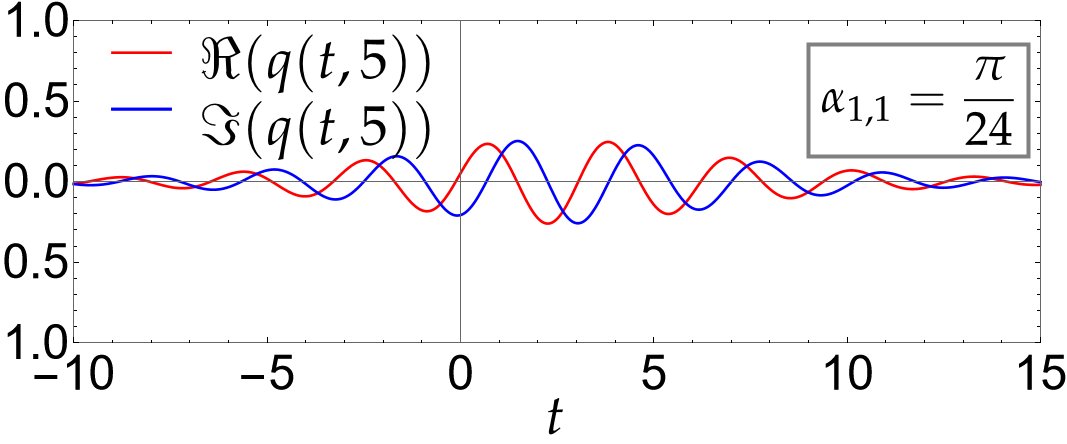}
\end{minipage}
\caption{Plots of exact $1$-soliton solutions in an initially stable medium with $r_1 = 1$, $\xi_{1,1} = 0$ and $\phi_{1,1} = \pi/2$, and $\alpha_{1,1} = \pi/7$ or $\alpha_{1,1} = \pi/24$ from Corollary~\ref{thm:1-soliton}. The two solutions have identical velocities, but different amplitudes determined by $\alpha_{1,1}$.}
\label{f:1soliton}
\end{figure}
Now,
we move on to analyzing a more complex yet interesting coherent structure than a single soliton.
However,
\myblue{the expression for a generic $2$-DSG (a bound state or a breather) is too complex to present.}
We therefore discuss a symmetric version,
by imposing $\lambda_{1,1} = -\conj{\lambda_{1,2}}$. This spectral setup was discussed in~\cite[Equation (5.9)]{akns1974}, but for the sine-Gordon equation.

\begin{corollary}[Symmetric $2$-DSG]
\label{thm:2-DSG-symmetric}
Suppose that $J = 1$ and $N = N_1 = 2$. Theorem~\ref{thm:N-DSG} reduces to a general $2$-DSG, which can be simplified further by taking  $\alpha_{1,1} = \alpha$ and $\alpha_{1,2} = \pi - \alpha$, $\omega_{1,1} = 2 r_1\sin(\alpha)\ee^{\xi + \ii\phi}$ and $\omega_{1,2} = \conj{\omega_{1,1}}$, implying that the eigenvalues are tied with an additional symmetry $\lambda_{1,2} = -\conj{\lambda_{1,1}}$.
\begin{equation}
\begin{aligned}
q(t,z;\Lambda,\Omega)
 & = 4\ii r_1 \tan(\alpha )\frac{\sin(\alpha) \sinh(L) \sin(\tau) + \cos(\alpha) \cosh(L) \cos(\tau)}{\cosh(L)^2 +  \tan(\alpha)^2\sin(\tau)^2}\,,\\
D(t,z;\Lambda,\Omega)
 & = D_-\frac{\cosh(L) ^4 - 6 \tan(\alpha )^2 \cosh(L) ^2 \sin(\tau) ^2 + \tan(\alpha ) ^4 \sin(\tau) ^4}{[\cosh(L)^2 + \tan(\alpha)^2 \sin (\tau)^2]^2}\,,\\
P(t,z;\Lambda,\Omega)
 & = 4\ii D_- \tan(\alpha) \cosh(L) \sin(\tau) \frac{\cosh(L)^2 -  \tan(\alpha)^2 \sin(\tau)^2}{[\cosh(L)^2 + \tan(\alpha)^2 \sin(\tau)^2]^2}\,,
\end{aligned}
\end{equation}
where the two real functions $L(t,z)$ and $\tau(t,z)$ are defined below
\begin{equation}
\begin{aligned}
L(t,z)
 & \coloneqq \sin(\alpha)s_{1,+}(t,z) + \xi + \ln(|\cos(\alpha)|)\,,\\
\tau(t,z)
 & \coloneqq \cos(\alpha)s_{1,-}(t,z) + \phi - \alpha\,.
\end{aligned}
\end{equation}
\end{corollary}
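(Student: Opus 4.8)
The plan is to obtain Corollary~\ref{thm:2-DSG-symmetric} as a direct specialization of the general formula in Theorem~\ref{thm:N-soliton-formula} (equivalently Theorem~\ref{thm:N-DSG}) to $N = N_1 = 2$, followed by imposition of the symmetry $\lambda_{1,2} = -\conj{\lambda_{1,1}}$, $\omega_{1,2} = \conj{\omega_{1,1}}$. First I would fix notation by setting $\gamma_k \coloneqq \omega_{1,k}\ee^{-2\ii\theta_{1,k}}$ for $k=1,2$, so that in Equation~\eqref{e:C-Gamma-definition} the matrix $\bGamma$ becomes the $2\times2$ matrix with entries $\bGamma_{m,n} = \gamma_n/(\conj{\lambda_{1,m}}-\lambda_{1,n})$ and $\C_\infty = (\gamma_1,\gamma_2)$. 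Because all matrices are now $2\times2$, every determinant in Equation~\eqref{e:Nsoliton-formula} collapses via $\det(\I+\A) = 1 + \tr\A + \det\A$; equivalently, the matrix-determinant lemma reduces the reconstruction of $q$ to the bilinear form $q = 2\ii\,\conj{\C_\infty}(\I+\bGamma\conj{\bGamma})^{-1}\vbone$, and turns $M_{1,1}(\lambda)$, $M_{1,2}(\lambda)$ into analogous ratios of explicit $2\times2$ determinants. This reduces the entire corollary to evaluating a small, fixed collection of $2\times2$ determinants.

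Next I would substitute the symmetric data. With $\lambda_{1,1} = r_1\ee^{\ii\alpha}$ and $\lambda_{1,2} = r_1\ee^{\ii(\pi-\alpha)}$ the four denominators are $\conj{\lambda_{1,m}}-\lambda_{1,m} = -2\ii r_1\sin\alpha$ on the diagonal and $\conj{\lambda_{1,1}}-\lambda_{1,2} = 2r_1\ee^{-\ii\alpha}$, $\conj{\lambda_{1,2}}-\lambda_{1,1} = -2r_1\ee^{\ii\alpha}$ off the diagonal. A short computation using the definition of $s_{1,\pm}$ in Equation~\eqref{e:spm-def} gives $2\Im\theta_{1,1} = 2\Im\theta_{1,2} = \sin(\alpha)\,s_{1,+}$ and $2\Re\theta_{1,1} = -2\Re\theta_{1,2} = -\cos(\alpha)\,s_{1,-}$, which together with $\omega_{1,2} = \conj{\omega_{1,1}}$ yields the pivotal identity $\gamma_2 = \conj{\gamma_1}$. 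Writing $\gamma_1 = A\ee^{\ii\psi}$ with amplitude $A = 2r_1\sin(\alpha)\,\ee^{\xi}\ee^{\sin(\alpha)s_{1,+}}$ and phase $\psi = \phi + \cos(\alpha)\,s_{1,-}$ then makes every entry of $\bGamma\conj{\bGamma}$ an explicit function of $A$ and $\psi$ alone.

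The heart of the proof is the denominator $\det(\I+\bGamma\conj{\bGamma}) = 1 + \tr(\bGamma\conj{\bGamma}) + |\det\bGamma|^2$, where two distinct mechanisms generate precisely the shifts hidden inside $L$ and $\tau$. The determinant $\det\bGamma$ carries the factor $\csc^2\alpha - 1 = \cot^2\alpha$, so that $|\det\bGamma|^2 = \cos^4(\alpha)\,\ee^{4\xi}\ee^{4\sin(\alpha)s_{1,+}} = \ee^{4L}$; this is exactly where the additive constant $\ln|\cos\alpha|$ in $L$ is born. Meanwhile the off-diagonal denominators $2r_1\ee^{\mp\ii\alpha}$ inject the phase $\ee^{\pm2\ii\alpha}$ into the cross term of $\tr(\bGamma\conj{\bGamma})$, rewriting it as $2\tan^2(\alpha)\,\ee^{2L}\cos(2\psi-2\alpha) = 2\tan^2(\alpha)\,\ee^{2L}\cos(2\tau)$ and thereby producing the $-\alpha$ shift in $\tau$. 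Assembling the three pieces and applying $\sec^2\alpha = 1 + \tan^2\alpha$ together with $1-\cos(2\tau) = 2\sin^2\tau$ collapses the denominator to $4\ee^{2L}\big(\cosh^2 L + \tan^2(\alpha)\sin^2\tau\big)$, which is the common denominator appearing in the stated formulae.

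Finally I would repeat the identical bookkeeping on the three numerators. The field $q$ comes from the bilinear form above, while $D = D_-\big(1 - 2|M_{1,2}(0)|^2\big)$ and $P = -2D_-\,M_{1,1}(0)M_{1,2}(0)$ follow by evaluating $\brho(t,z) = D_-\M(0)\sigma_3\M(0)^{-1}$ from Equation~\eqref{e:reconstructon-matrix} at $\lambda = 0$ and using the symmetry $\M(\lambda)^{-1} = \M(\conj\lambda)^\dagger$ and $\det\M = 1$ recorded in Remark~\ref{thm:M-symmetry}. Each numerator is a $2\times2$ determinant of the same type, and after extracting the common factor $4\ee^{2L}$ and invoking the same hyperbolic and trigonometric identities one recovers the claimed closed forms. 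I expect the main obstacle to be purely organizational rather than conceptual: one must track the constant amplitude and phase factors carefully enough that they coalesce into the single shifts $\ln|\cos\alpha|$ and $-\alpha$, and choose the groupings of $\cosh L$ and $\sin\tau$ terms that expose the $\sech$/$\sin$ structure instead of leaving an unsimplified sum of exponentials.
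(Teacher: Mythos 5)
Your proposal is correct and takes essentially the same route as the paper, which obtains this corollary precisely by specializing the general $N$-soliton formula of Theorem~\ref{thm:N-soliton-formula} to $N=2$ with the symmetric data and simplifying: your checkpoints all verify, namely $\gamma_2 = \conj{\gamma_1}$, $|\det\bGamma|^2 = \ee^{4L}$, the collapse of the denominator to $4\ee^{2L}\big(\cosh^2 L + \tan^2(\alpha)\sin^2\tau\big)$, and the reconstruction identities $D = D_-\big(1-2|M_{1,2}(0)|^2\big)$, $P = -2D_-\,M_{1,1}(0)M_{1,2}(0)$ following from Remark~\ref{thm:M-symmetry}. The only blemish is a sign in the prose: the cross term of $\tr(\bGamma\conj{\bGamma})$ is $-2\tan^2(\alpha)\,\ee^{2L}\cos(2\tau)$, which is in fact what your final assembly uses via $1-\cos(2\tau)=2\sin^2\tau$, so the stated result is unaffected.
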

In the symmetric $2$-DSG in Corollary~\ref{thm:2-DSG-symmetric}, the optical pulse $q(t,z)$ and polarization $P(t,z)$ are purely imaginary, and $D(t,z)$ is of course still real. Note that this symmetric $2$-DSG is fundamentally different from the one of the sine-Gordon equation~\cite[Equation (5.9)]{akns1974}, the latter of which is a real function. The overall form of $2$-DSG is remarkably similar to the soliton solution with NZBG~\cite[Equation (17)]{lbkg2018}. This can be explained from the spectral point of view, because: (i) both cases are reflectionless; and (ii) the RHP for $2$-DSG with ZBG contains four poles (two pairs of eigenvalues), whereas the RHP for the $1$-soliton solution with NZBG contains a quartet of eigenvalues using a uniformization variable. Thus, the overall structure of RHPs are alike, so are the corresponding solutions. Two such examples are shown in Figure~\ref{f:2soliton-symmetric}. Clearly, the shape of DSGs depends on the phase parameter $\alpha_{j,k}$.
\begin{figure}[tp]
\centering
\begin{minipage}[b]{.68\textwidth}
\includegraphics[width=0.47\textwidth]{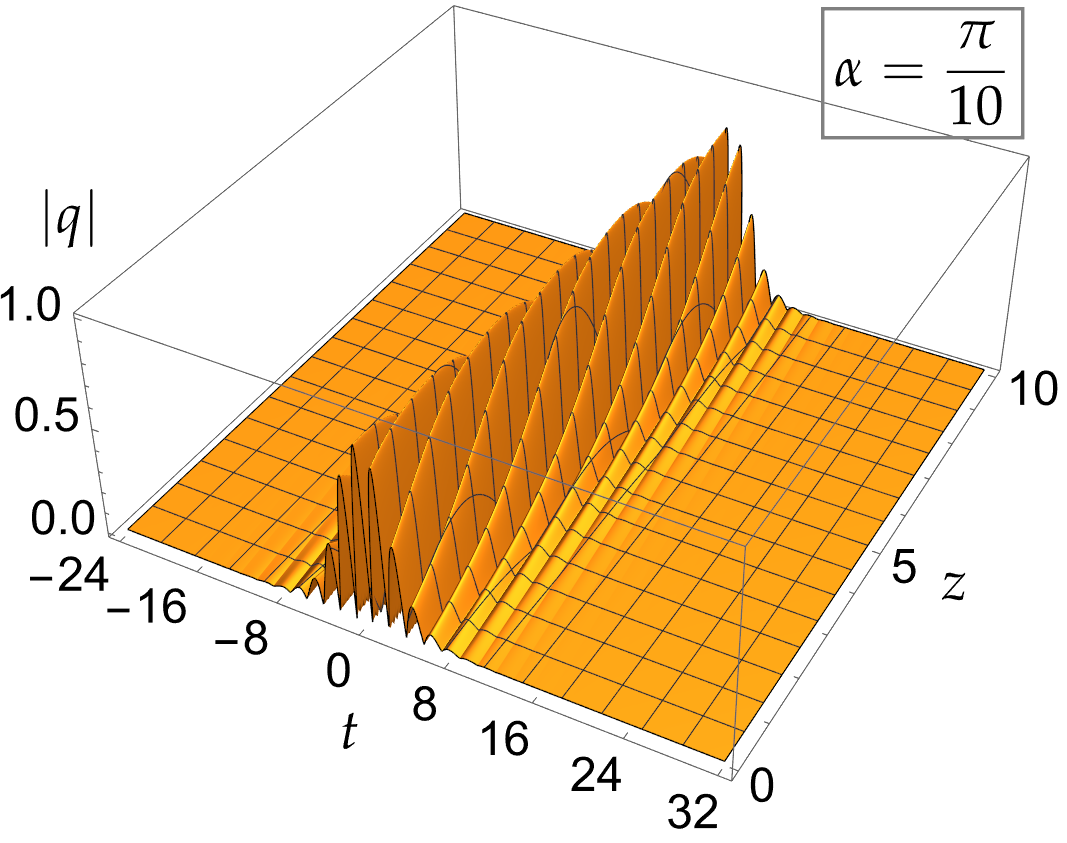}\quad
\includegraphics[width=0.47\textwidth]{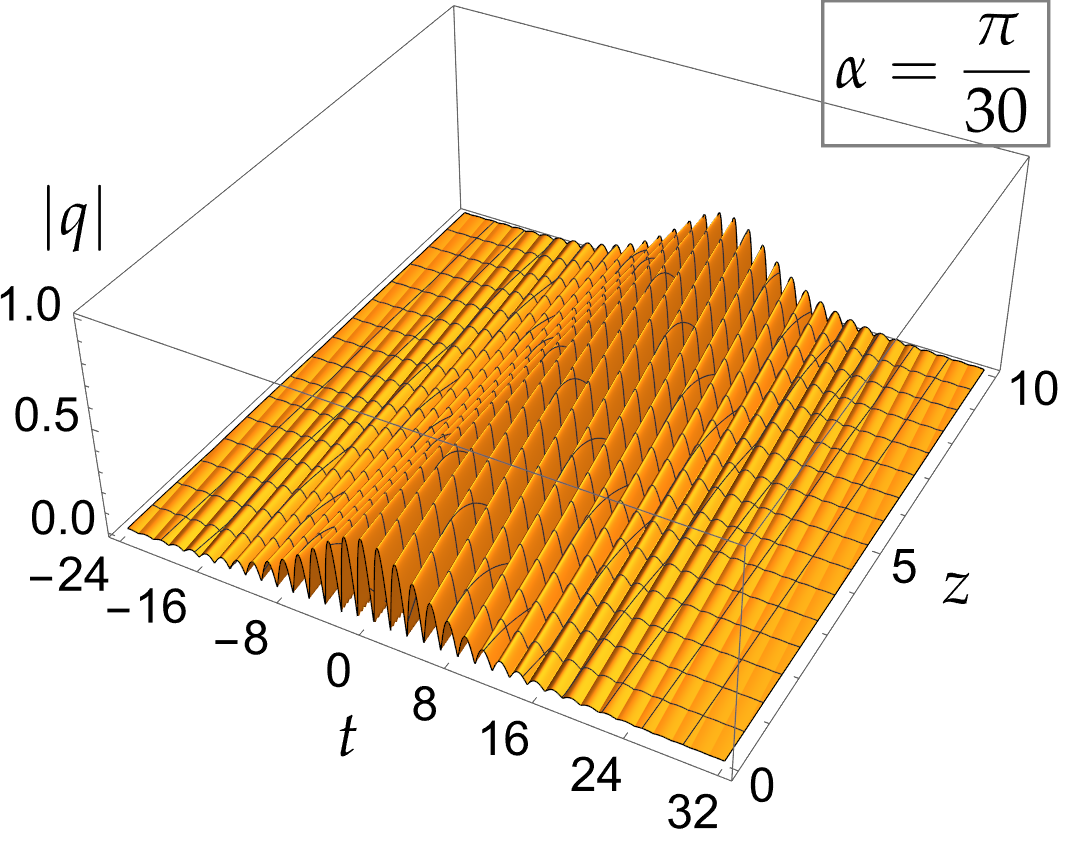}
\end{minipage}
\begin{minipage}[b]{.3\textwidth}
\includegraphics[width=1\textwidth]{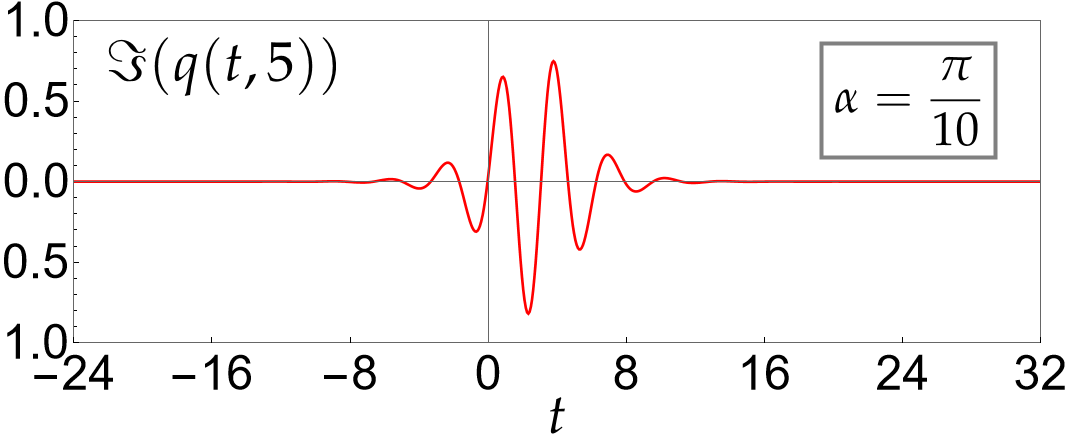}\\[0.5ex]
\includegraphics[width=1\textwidth]{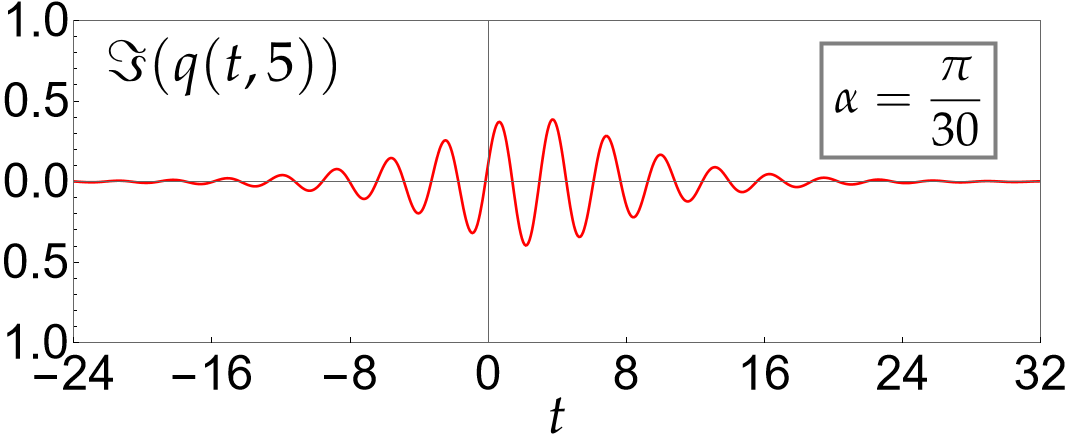}
\end{minipage}
\caption{Plots of exact symmetric $2$-DSG in an initially stable medium with $r_1 = 1$, $\xi_{1,1} = \phi_{1,1} = 0$, and $\alpha = \pi/10$ or $\alpha = \pi/30$ from Corollary~\ref{thm:2-DSG-symmetric}. Since $q(t,z)$ is purely imaginary, the right column only shows the imaginary parts. The two cases have identical velocity. The parameter $\alpha$ affects amplitudes and oscillatory behavior.}
\label{f:2soliton-symmetric}
\end{figure}

A more general case is shown in Figure~\ref{f:MBEs-DSG-soliton-asymptotics}(left), where the density plot of $|q(t,z)|$ of a $4$-DSG in an initially stable medium $D_- = -1$ is presented from Theorem~\ref{thm:N-soliton-formula} with parameters $\lambda_{1,k} = \ee^{\ii\pi/5},\ee^{\ii\pi/3},\ii\,,\ee^{2\ii\pi/3}$ and $\omega_{1,k} = 1$ for $1\le k\le 4$. On top of the exact solution, a red dashed line is drawn determined by the center $z_\c(t)$ from Theorem~\ref{thm:N-DSG}. A perfect agreement can be observed.

%%%%%%%%%%%%%%%%%%%%%%%%%%%%%%%%%%%%%%%%%%%%%%%%%%%%%%%%%%%%%%%%%%%%%%%%%%%%%%%%
\subsection{Soliton asymptotics}
\label{s:intro-soliton-asymptotics}

With the general characterization of $N$-DSGs obtained in Section~\ref{s:intro-dsg},
we are ready to consider the nonlinear interactions among several DSGs with different sizes.
Thus,
in this subsection,
we consider more eigenvalue groups,
meaning $J >1$ from Definition~\ref{def:LambdaOmega}.
We would like to calculate the long-time asymptotics of general $N$-soliton solution as $t\to\pm\infty$,
corresponding to the investigation of the light-matter interactions inside the optical medium from the distant past to far future,
in different directions $z = \xi t$ with $\xi\in\Real$.
Hence,
we apply the Deift-Zhou's nonlinear steepest descent method to the oscillatory RHP~\ref{rhp:N-soliton-jump-form}
and calculate the leading terms and perform error estimates
\myblue{in Section~\ref{s:N-soliton-asymptotics}}.
\begin{theorem}[Soliton asymptotics]
\label{thm:soliton-asymptotics}
Suppose that $J \ge 2$.
Let $\Lambda$ and $\Omega$ be given according to Definition~\ref{def:LambdaOmega}.
\myblue{Let us define
\begin{equation}
\label{e:Vj-def}
V_j\coloneqq -2D_-r_j^2\,,\qquad
r_j = |\lambda_{j,k}|\,,\qquad
1\le j\le J\,.
\end{equation}
}
The general $N$-soliton solution from Theorem~\ref{thm:N-soliton-formula} has the long-time asymptotics with $z = \xi t$:
\begin{enumerate}
\item
\myblue{
If $\xi = V_j$ for each $1\le j\le J$,
then in both stable and unstable \myblue{media} as $t\to\pm\infty$ one obtains the asymptotic expansion
}
\begin{equation}
\myblue{
\begin{aligned}
q(t,z;\Lambda,\Omega)
 & = q(t,z;\Lambda_j,\Omega_{j}^{(\pm)}) + \O(\ee^{-\aleph |t|})\,,\\
D(t,z;\Lambda,\Omega)
 & = D(t,z;\Lambda_j,\Omega_{j}^{(\pm)}) + \O(\ee^{-\aleph |t|})\,,\qquad
P(t,z;\Lambda,\Omega)
 = P(t,z;\Lambda_j,\Omega_{j}^{(\pm)}) + \O(\ee^{-\aleph |t|})\,.
\end{aligned}
}
\end{equation}
\myblue{
with a positive constant $\aleph$.
The leading-order term
$\{q(t,z;\Lambda_j,\Omega_{j}^{(\pm)}), D(t,z;\Lambda_j,\Omega_{j}^{(\pm)}), P(t,z;\Lambda_j,\Omega_{j}^{(\pm)})\}$
denotes an $N_j$-DSG of the MBEs in the corresponding medium with the eigenvalue set $\Lambda_j$ and a modified norming constant set $\Omega_j^{(\pm)} \coloneqq \{\omega_{j,k}^{(\pm)}\}_{k = 1}^{N_j}$ given by
}
\begin{equation}
\omega_{j,k}^{(+)}
 \coloneqq \omega_{j,k}\delta_{j+1}(\lambda_{j,k})^{-2}\,,\qquad
\omega_{j,k}^{(-)}
 \coloneqq \omega_{j,k}\delta_j(\lambda_{j,k})^2\delta_1(\lambda_{j,k})^{-2}\,,
\end{equation}
with
\begin{equation}
\label{e:deltan-def}
\delta_j(\lambda)
 \coloneqq \frac{p_j^*(\lambda)}{p_j(\lambda)}\,,\qquad
p_j(\lambda)
 \coloneqq \prod_{s = j}^J\prod_{l = 1}^{N_s}(\lambda - \lambda_{s,l})\,,\qquad
j = 1,2,\dots,J\,.
\end{equation}
\item If $\xi \ne V_j$ for all $1\le j \le J$, then as $t\to\pm\infty$
\begin{equation}
\myblue{
q(t,z;\Lambda,\Omega) = \O(\ee^{-\aleph |t|})\,,\qquad
D(t,z;\Lambda,\Omega) = D_- + \O(\ee^{-\aleph |t|})\,,\qquad
P(t,z;\Lambda,\Omega) = \O(\ee^{-\aleph |t|})\,,
}
\end{equation}
\myblue{for a positive constant $\aleph$}.
\end{enumerate}
\end{theorem}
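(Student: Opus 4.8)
The plan is to apply the Deift--Zhou nonlinear steepest descent method to the oscillatory RHP~\ref{rhp:N-soliton-jump-form} along each ray $z=\xi t$, exploiting that for a reflectionless (purely discrete) problem the only data are the jumps on the small circles, whose size is controlled entirely by $\Im\theta$. First I would record the elementary computation: writing $\lambda_{i,k}=r_i\ee^{\ii\alpha_{i,k}}$ with $\alpha_{i,k}\in(0,\pi)$ and substituting $z=\xi t$ into~\eqref{e:theta-def},
\begin{equation}
\Im\theta(\lambda_{i,k};t,\xi t)=\Big(r_i+\tfrac{D_-\xi}{2r_i}\Big)\sin(\alpha_{i,k})\,t\,.
\end{equation}
Because $V_i=-2D_-r_i^2$ and $D_-^2=1$, the prefactor vanishes exactly when $\xi=V_i$, and on the resonant ray $\xi=V_j$ it equals $(r_i^2-r_j^2)/r_i$, whose sign is fixed by the modulus ordering $0<r_1<\dots<r_J$. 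Since $\sin(\alpha_{i,k})>0$, this shows that on $z=V_jt$ the group-$j$ circles carry $\Im\theta\equiv0$, the groups $i>j$ carry $\Im\theta$ of the sign of $t$, and the groups $i<j$ the opposite sign. This trichotomy is the mechanism that isolates one surviving DSG, and it is the only place the hypotheses $J\ge2$ and $r_1<\dots<r_J$ enter.

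Next I would carry out the pole-removal reduction. By the symmetry $\M(\lambda)^{-1}=\M(\conj\lambda)^\dagger$ the two circles of each conjugate pair are treated together, so it suffices to track the upper one, whose jump entry scales like $\ee^{2\Im\theta}$. On any circle where $\Im\theta\to-\infty$ the off-diagonal jump entry is $\O(\ee^{-\aleph|t|})$ and that jump may be discarded with exponentially small error; on any circle where $\Im\theta\to+\infty$ the jump is exponentially large and must be conjugated away. For this I would use the Blaschke-type factors of~\eqref{e:deltan-def}: for $t\to+\infty$ the function $\delta_{j+1}(\lambda)=p_{j+1}^*/p_{j+1}$ carries poles exactly at the upper-half-plane eigenvalues of the growing groups $i>j$, is unimodular on $\Real$, and tends to $1$ at infinity. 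Conjugating by $\delta_{j+1}(\lambda)^{\sigma_3}$ (applied on the appropriate side of each circle) is tailored to turn the exponentially large jumps of the groups $i>j$ into jumps $\I+\O(\ee^{-\aleph|t|})$, while at each surviving eigenvalue $\lambda_{j,k}$, where $\delta_{j+1}$ is analytic and nonvanishing, it merely rescales the residue coefficient by $\delta_{j+1}(\lambda_{j,k})^{-2}$ --- precisely the map $\omega_{j,k}\mapsto\omega_{j,k}^{(+)}$. For $t\to-\infty$ the roles of $i>j$ and $i<j$ swap, and the analogous conjugation by the factor $\delta_1/\delta_j$ produces $\omega_{j,k}^{(-)}=\omega_{j,k}\delta_j(\lambda_{j,k})^2\delta_1(\lambda_{j,k})^{-2}$.

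After the reduction, the RHP in Case (1) consists of the exact group-$j$ jumps with the modified constants $\Omega_j^{(\pm)}$ together with circles carrying jumps $\I+\O(\ee^{-\aleph|t|})$; this is nothing but the $N_j$-DSG RHP of Theorem~\ref{thm:N-DSG} perturbed by an exponentially small jump. I would then factor out the DSG parametrix $\M^{\mathrm{DSG}}$ built from $\{\Lambda_j,\Omega_j^{(\pm)}\}$ and set $\E\coloneqq\M(\M^{\mathrm{DSG}})^{-1}$; its jumps are $\I+\O(\ee^{-\aleph|t|})$ provided $\M^{\mathrm{DSG}}$ stays bounded along $z=V_jt$, which is exactly statement~(4) of Theorem~\ref{thm:N-DSG}. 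Small-norm RHP theory then gives $\E=\I+\O(\ee^{-\aleph|t|})$ uniformly, and feeding $\M=\E\,\M^{\mathrm{DSG}}$ into the reconstruction formula~\eqref{e:reconstructon-matrix} of Lemma~\ref{thm:reconstruction} yields the stated leading-order $N_j$-DSG plus $\O(\ee^{-\aleph|t|})$ corrections for $q$, $D$, and $P$. Case (2) is the degenerate instance in which no group is resonant: every pole grows or decays, all of them are removed by the same two devices, the reduced RHP has the unique solution $\M\equiv\I$, and~\eqref{e:reconstructon-matrix} returns $q=\O(\ee^{-\aleph|t|})$, $D=D_-+\O(\ee^{-\aleph|t|})$, $P=\O(\ee^{-\aleph|t|})$, which is the many-group analogue of statement~(2) of Theorem~\ref{thm:N-DSG}.

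The hard part will be the conjugation bookkeeping and the attendant error control. Since $\delta_{j+1}$ has genuine poles inside the circles being removed, I must verify that the conjugation is carried out on the correct side of each contour so that it annihilates the growing jumps without manufacturing new singularities near the surviving group-$j$ eigenvalues, and that the residual jumps are \emph{uniformly} $\I+\O(\ee^{-\aleph|t|})$ with an explicit rate $\aleph$ controlled by $\min_{i\ne j,k}\big(|r_i^2-r_j^2|/r_i\big)\sin(\alpha_{i,k})$. The second delicate point is that the small-norm estimate needs uniform boundedness of the DSG parametrix along the whole ray as $t\to\pm\infty$, not merely its existence; this is precisely why statement~(4) of Theorem~\ref{thm:N-DSG} was isolated in advance, and invoking it is what closes the argument.
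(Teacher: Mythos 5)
Your proposal is correct and follows essentially the same route as the paper's proof in Section~\ref{s:N-soliton-asymptotics}: the same sign analysis of $\Re(\ii\theta)$ on the small circles along rays $z=\xi t$, the same Blaschke-type conjugations built from $\delta_{j+1}^{\sigma_3}$ for $t\to+\infty$ (respectively $(\delta_1/\delta_j)^{\sigma_3}$ for $t\to-\infty$) yielding exactly the modified norming constants $\Omega_j^{(\pm)}$, the same $N_j$-DSG parametrix and error-function decomposition solved by small-norm theory, and the same appeal to statement (4) of Theorem~\ref{thm:N-DSG} for boundedness of the parametrix. The only cosmetic difference is that you handle stable and unstable media in one stroke by noting the resonant-ray prefactor $(r_i^2-r_j^2)/r_i$ is independent of $D_-$, whereas the paper treats the stable case in full and then reduces the unstable case to it via the reflection $\Re(\ii\theta_{\u,j,k}(\xi,t))=\Re(\ii\theta_{\s,j,k}(-\xi,t))$.
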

Theorem~\ref{thm:soliton-asymptotics} immediately implies that, asymptotically, the general $N$-soliton solution is a linear combination of multiple DSGs with various sizes,
\begin{equation}
\myblue{
\begin{aligned}
q(t,z;\Lambda,\Omega)
    & = \sum_{j=1}^J q(t,z;\Lambda_j,\Omega_{j}^{(\pm)}) + \O(\ee^{-\aleph |t|})\,,\\
D(t,z;\Lambda,\Omega)
    & = \sum_{j=1}^J D(t,z;\Lambda_j,\Omega_{j}^{(\pm)}) + \O(\ee^{-\aleph |t|})\,,\qquad
P(t,z;\Lambda,\Omega)
    = \sum_{j=1}^J P(t,z;\Lambda_j,\Omega_{j}^{(\pm)}) + \O(\ee^{-\aleph |t|})\,.
\end{aligned}
}
\end{equation}
Applying the DSG center formula from Theorem~\ref{thm:N-DSG} to Theorem~\ref{thm:soliton-asymptotics} immediately yields the asymptotic phase shifts induced by nonlinear interactions among DSGs.
\begin{corollary}[Asymptotic shifts]
\label{thm:asymptotic-shifts}
The asymptotic centers for each $N_j$-DSG with parameters $\Lambda_j$, $\Omega_j^{(\pm)}$ and $j = 1,2,\dots,J$, as $t\to\pm\infty$ from Theorem~\ref{thm:soliton-asymptotics}, in both stable and unstable \myblue{media}, are given by
\begin{equation}
\begin{aligned}
z_{\c,j}^{+}(t)
 & \coloneqq V_j t + z_{\d,j}^{(+)}\,,\quad
z_{\d,j}^{(+)}
 \coloneqq -\frac{D_- r_j}{\sum_{k=1}^{N_j}\sin(\alpha_{j,k})}\ln\bigg(\prod_{k=1}^{N_j}\big|\omega_{j,k}^{(+)}\big| \,\frac{\prod_{k = 2}^{N_j}\prod_{l = 1}^{k-1}|\lambda_{j,k} - \lambda_{j,l}|^2}{\prod_{k=1}^{N_j}\prod_{l=1}^{N_j}|\conj{\lambda_{j,k}} - \lambda_{j,l}|}\bigg)\,,\quad
t\to+\infty\,,\\
z_{\c,j}^{-}(t)
 & \coloneqq V_j t + z_{\d,j}^{(-)}\,,\quad
z_{\d,j}^{(-)}
 \coloneqq -\frac{D_- r_j}{\sum_{k=1}^{N_j}\sin(\alpha_{j,k})}\ln\bigg(\prod_{k=1}^{N_j}\big|\omega_{j,k}^{(-)}\big| \,\frac{\prod_{k = 2}^{N_j}\prod_{l = 1}^{k-1}|\lambda_{j,k} - \lambda_{j,l}|^2}{\prod_{k=1}^{N_j}\prod_{l=1}^{N_j}|\conj{\lambda_{j,k}} - \lambda_{j,l}|}\bigg)\,,\quad
t\to-\infty\,.
\end{aligned}
\end{equation}
Comparing with the unshifted displacement $z_{\d, j}$ from Theorem~\ref{thm:N-DSG} yields the asymptotic shifts
\begin{equation}
\begin{aligned}
\Delta_j^{(+)}
 & \coloneqq z_{\d,j}^{(+)} - z_{\d,j}
 = \frac{2D_- r_j}{\sum_{k=1}^{N_j}\sin(\alpha_{j,k})}\sum_{k=1}^{N_j}\sum_{s = j+1}^{J}\sum_{l = 1}^{N_s}\ln\bigg|\frac{\lambda_{j,k} - \conj{\lambda_{s,l}}}{\lambda_{j,k} - \lambda_{s,l}}\bigg|\,,\\
\Delta_j^{(-)}
 & \coloneqq z_{\d,j}^{(-)} - z_{\d,j}
 = \frac{2D_- r_j}{\sum_{k = 1}^{N_j}\sin(\alpha_{j,k})}\sum_{k = 1}^{N_j}\sum_{s = 1}^{j-1}\sum_{l = 1}^{N_s}\ln\bigg|\frac{\lambda_{j,k} - \conj{\lambda_{s,l}}}{\lambda_{j,k} - \lambda_{s,l}}\bigg|\,,
\end{aligned}
\end{equation}
The overall asymptotic phase shift is given by
\begin{equation}
\Delta_j
 \coloneqq z_{\d,j}^{(+)} - z_{\d,j}^{(-)}
 = \frac{2D_- r_j}{\sum_{k = 1}^{N_j}\sin(\alpha_{j,k})}\sum_{k = 1}^{N_j}\ln\bigg(\prod_{s = j+1}^J\prod_{l = 1}^{N_s}\bigg|\frac{\lambda_{j,k} - \conj{\lambda_{s,l}}}{\lambda_{j,k} - \lambda_{s,l}}\bigg|\cdot\prod_{s = 1}^{j-1}\prod_{l = 1}^{N_s}\bigg|\frac{\lambda_{j,k} - \lambda_{s,l}}{\lambda_{j,k} - \conj{\lambda_{s,l}}}\bigg|\bigg)\,.
\end{equation}
\end{corollary}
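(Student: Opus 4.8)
The plan is to obtain Corollary~\ref{thm:asymptotic-shifts} as a direct computation resting on the two results it invokes: the closed-form DSG center in Theorem~\ref{thm:N-DSG} and the identification, supplied by Theorem~\ref{thm:soliton-asymptotics}, of the leading asymptotic term along $z = V_j t$ as a pure $N_j$-DSG carrying spectral data $\Lambda_j$ and the modified norming constants $\Omega_j^{(\pm)}$. First I would substitute this asymptotic data into the center formula~\eqref{e:N-soliton-solution:zc-zd}, making the replacements $r_1 \mapsto r_j$, $\lambda_{1,k}\mapsto\lambda_{j,k}$, $N_1\mapsto N_j$ and $\omega_{1,k}\mapsto\omega_{j,k}^{(\pm)}$. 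Since the displacement depends on the norming constants only through their moduli, this immediately yields $z_{\c,j}^{(\pm)}(t) = V_j t + z_{\d,j}^{(\pm)}$ with $z_{\d,j}^{(\pm)}$ in the stated form.

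The observation I would then exploit is that every eigenvalue-dependent factor in the displacement formula is built solely from $\Lambda_j$, which is common to the exact DSG (with $\omega_{j,k}$) and to its asymptotic images (with $\omega_{j,k}^{(\pm)}$). Hence in the differences $\Delta_j^{(\pm)} = z_{\d,j}^{(\pm)} - z_{\d,j}$ these factors cancel and only the norming-constant term survives,
\begin{equation}
\Delta_j^{(\pm)} = -\frac{D_- r_j}{\sum_{k=1}^{N_j}\sin(\alpha_{j,k})}\ln\prod_{k=1}^{N_j}\frac{|\omega_{j,k}^{(\pm)}|}{|\omega_{j,k}|}\,.
\end{equation}
It then remains to evaluate the ratios $|\omega_{j,k}^{(\pm)}/\omega_{j,k}|$ from the definitions in~\eqref{e:deltan-def}. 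Writing the Schwarz reflection as $p_m^*(\lambda) = \prod_{s=m}^J\prod_l(\lambda - \conj{\lambda_{s,l}})$ gives $\delta_{j+1}(\lambda_{j,k}) = \prod_{s=j+1}^J\prod_l (\lambda_{j,k}-\conj{\lambda_{s,l}})/(\lambda_{j,k}-\lambda_{s,l})$, so that $|\omega_{j,k}^{(+)}/\omega_{j,k}| = |\delta_{j+1}(\lambda_{j,k})|^{-2}$ and a short manipulation reproduces the triple product claimed for $\Delta_j^{(+)}$.

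The one step that is not pure bookkeeping, and which I would flag explicitly, is the $t\to-\infty$ factor $\delta_j(\lambda_{j,k})^2\delta_1(\lambda_{j,k})^{-2}$. Taken alone, $\delta_j(\lambda_{j,k})$ is singular, because $p_j(\lambda_{j,k})$ contains the vanishing factor $\lambda_{j,k}-\lambda_{j,k}$; the resolution is that this factor cancels in the ratio $\delta_j/\delta_1$, since
\begin{equation}
\frac{\delta_j(\lambda)}{\delta_1(\lambda)} = \frac{p_j^*(\lambda)\,p_1(\lambda)}{p_j(\lambda)\,p_1^*(\lambda)} = \frac{\prod_{s=1}^{j-1}\prod_l(\lambda-\lambda_{s,l})}{\prod_{s=1}^{j-1}\prod_l(\lambda-\conj{\lambda_{s,l}})}\,,
\end{equation}
whose value at $\lambda = \lambda_{j,k}$ involves only indices $s\le j-1$ and is therefore finite and nonzero --- this is precisely what makes $\omega_{j,k}^{(-)}$ well defined. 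Taking moduli yields $|\omega_{j,k}^{(-)}/\omega_{j,k}| = \prod_{s=1}^{j-1}\prod_l|(\lambda_{j,k}-\lambda_{s,l})/(\lambda_{j,k}-\conj{\lambda_{s,l}})|^2$, which reproduces $\Delta_j^{(-)}$. Finally I would obtain the total shift from $\Delta_j = z_{\d,j}^{(+)} - z_{\d,j}^{(-)} = \Delta_j^{(+)} - \Delta_j^{(-)}$, merging the two triple logarithms into the single product stated in the corollary. Beyond careful index management I anticipate no genuine obstacle; the only conceptual care required is the removable-singularity cancellation just described.
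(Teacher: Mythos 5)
Your proposal is correct and follows exactly the route the paper intends: the corollary is obtained by substituting the modified norming-constant sets $\Omega_j^{(\pm)}$ from Theorem~\ref{thm:soliton-asymptotics} into the $N$-DSG center formula~\eqref{e:N-soliton-solution:zc-zd} of Theorem~\ref{thm:N-DSG}, after which all eigenvalue-dependent factors cancel in the differences and the ratios $|\omega_{j,k}^{(\pm)}/\omega_{j,k}|$ computed from~\eqref{e:deltan-def} yield the stated triple products. Your explicit treatment of the removable singularity in $\delta_j(\lambda_{j,k})^2\delta_1(\lambda_{j,k})^{-2}$ via the ratio $\delta_j(\lambda)/\delta_1(\lambda) = \prod_{s=1}^{j-1}\prod_{l}(\lambda-\lambda_{s,l})/(\lambda-\conj{\lambda_{s,l}})$ is a correct and welcome clarification of a point the paper leaves implicit.
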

Two examples of general soliton asymptotics are shown in Figure~\ref{f:MBEs-DSG-soliton-asymptotics}, where the medium is chosen as initially stable $D_- = -1$. Only the optical pulse $|q(t,z)|$ are shown in the two examples. The medium functions $\{D(t,z)\,, P(t,z)\}$ can also be obtained from Theorem~\ref{thm:N-soliton-formula} and can be compared with Theorem~\ref{thm:soliton-asymptotics}, but are omitted to avoid repetition. The center panel contains the density plot of a $3$-soliton solution with two DSGs, with parameters given in the caption. Four red dashed lines are drawn describing the asymptotic center $z_{\c,j}^\pm(t)$ for each DSG, obtained from Theorem~\ref{thm:soliton-asymptotics}. Similarly, the right panel contains a $5$-soliton solution with three DSGs. In both plots, phase shifts resulting from nonlinear interactions can be observed. Theorem~\ref{thm:soliton-asymptotics} perfectly describes the asymptotic behavior for each DSG, before and after nonlinear collisions.

\begin{figure}[tp]
\centering
\includegraphics[width=0.31\textwidth]{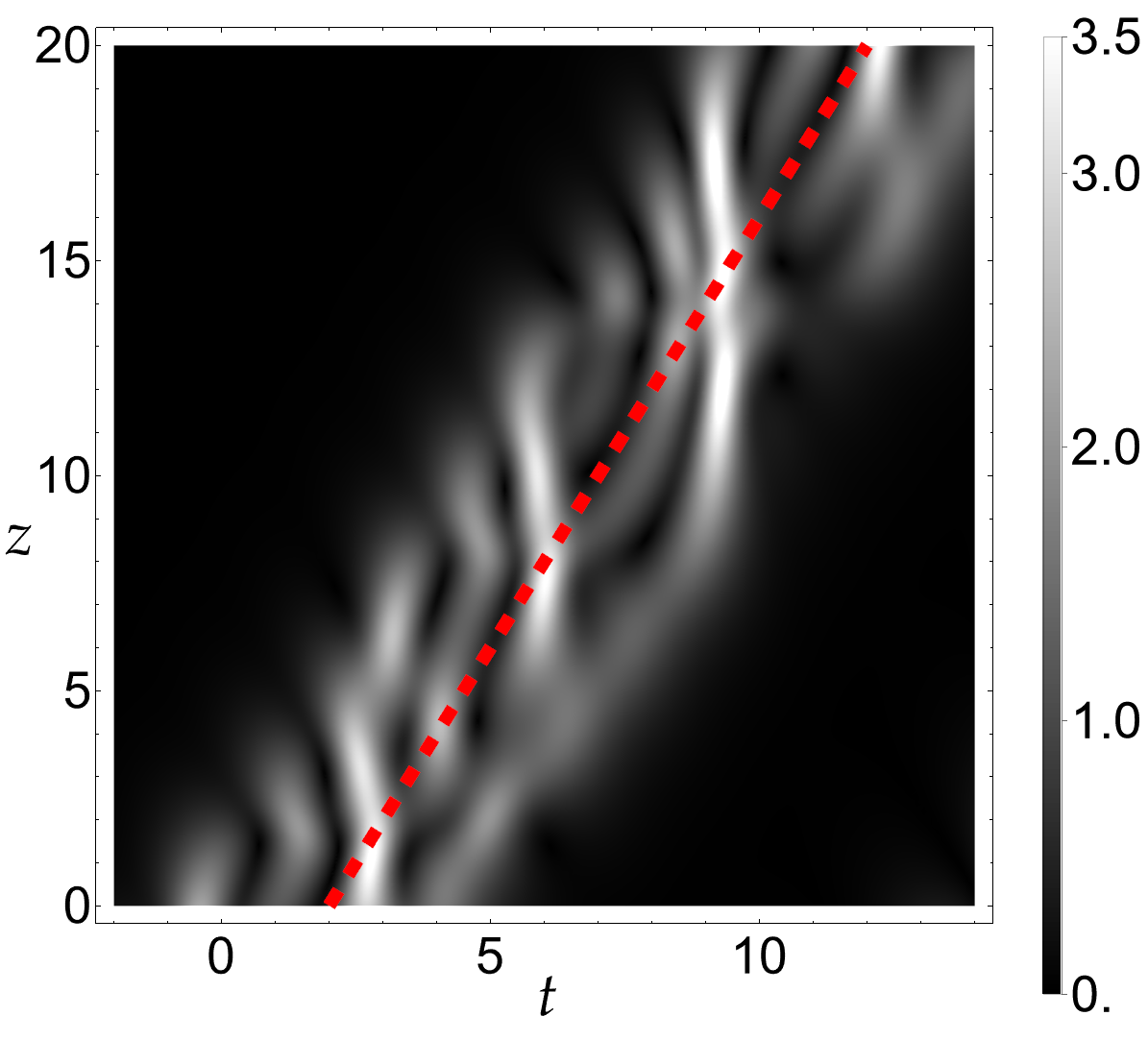}\quad
\includegraphics[width=0.31\textwidth]{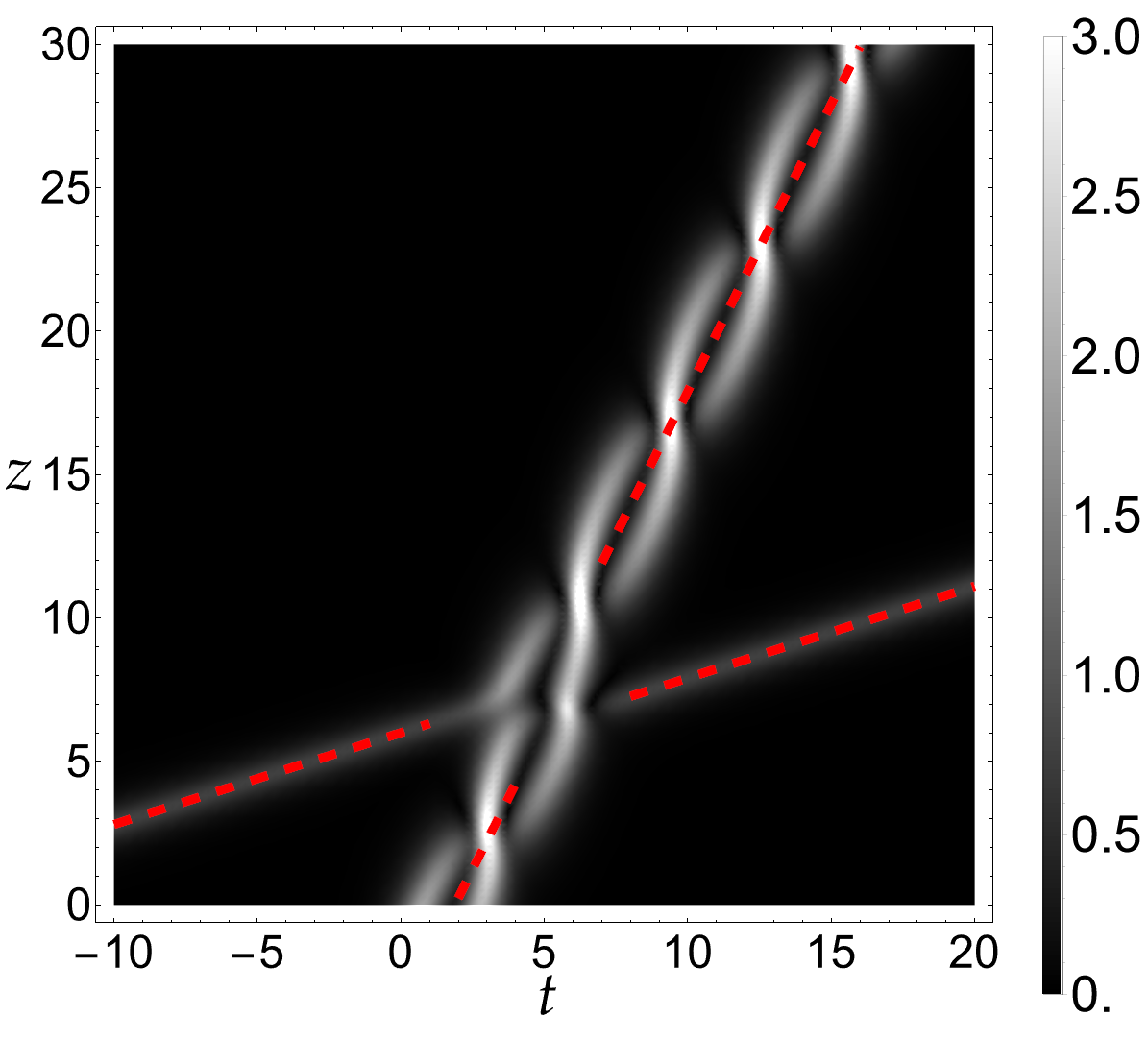}\quad
\includegraphics[width=0.31\textwidth]{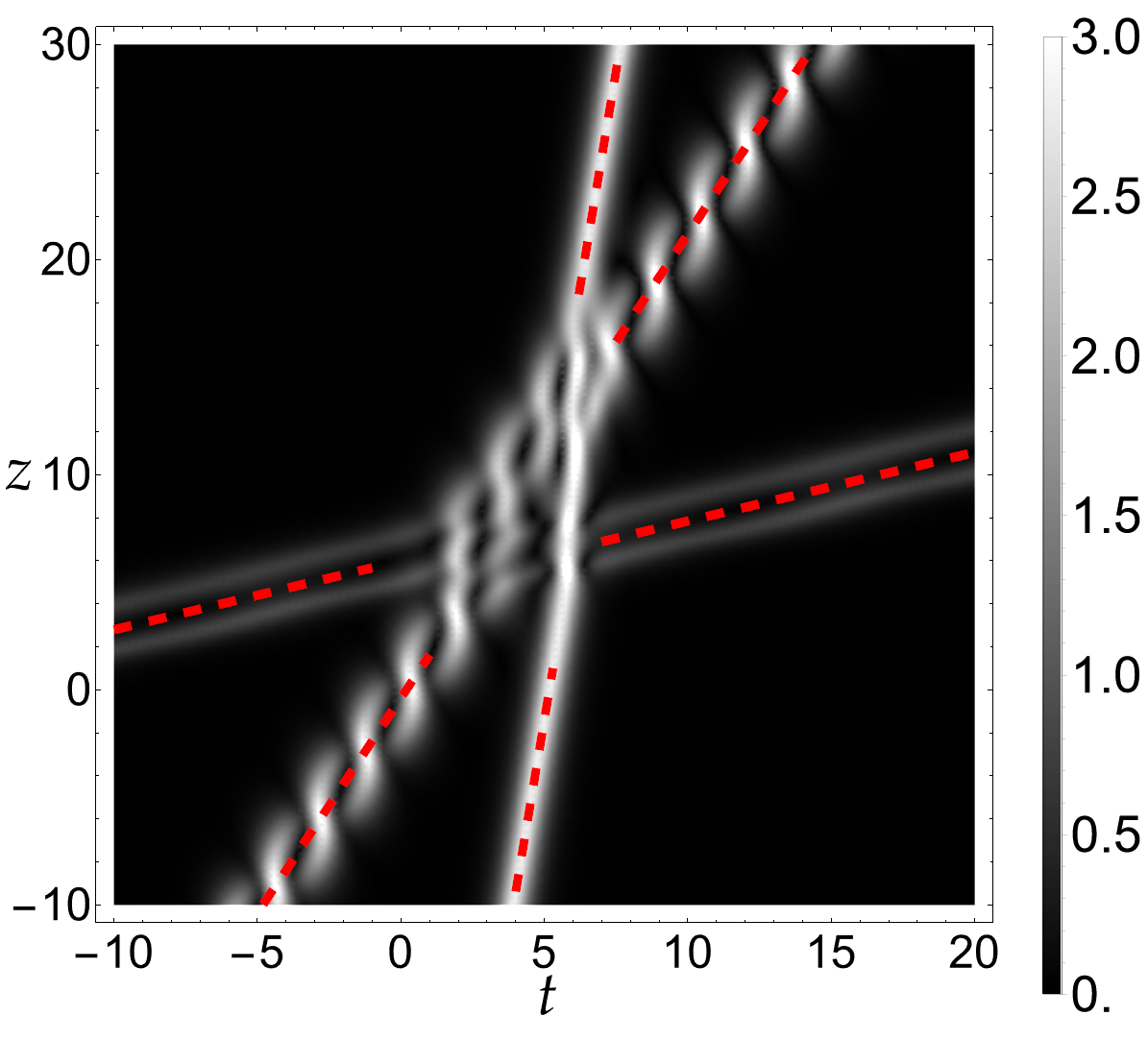}
\caption{
    Density plots of \myblue{$|q(t,z)|$} of exact $N$-soliton solutions in stable \myblue{media} from Theorem~\ref{thm:N-soliton-formula} with different setups.
    Left: A $4$-DSG with parameters $r_{1} = 1$, $\alpha_{1,k}= \pi/5,\pi/3,\pi/2,2\pi/3$, and $\omega_{1,k} = 1$ for all $1\le k \le 4$. The red dashed line corresponds to the center in Theorem~\ref{thm:N-DSG}.
    Center: A $3$-soliton solution with two DSGs. The parameters are $r_1 = 0.4$, $r_2 = 1$, $\alpha_{1,1} = 2\pi/5$, $\alpha_{2,1} = \pi/3$, $\alpha_{2,2} = \pi/2$, $\omega_{1,1} = \ee^{14}$, $\omega_{2,1} = 1$, and $\omega_{2,2} = \ii$. The red dashed lines correspond to the asymptotic centers in Corollary~\ref{thm:asymptotic-shifts}.
    Right: A $5$-soliton solution with three DSGs. The parameters are $r_1 = 0.4$, $r_2 = 1$, $r_3 = 2$, $\alpha_{1,k} = \pi/3,\pi/2$, $\alpha_{2,k} = \pi/3,2\pi/3$, $\alpha_{3,1} = \pi/4$, $\omega_{1,1} = \omega_{1,2} = \ee^{15}$, $\omega_{2,1} = \ee^3$, $\omega_{2,2} = \ee^4$, and $\omega_{3,1} = \ee^{-10}$. Again, the red dashed lines correspond to the asymptotic centers in Corollary~\ref{thm:asymptotic-shifts}.
}
\label{f:MBEs-DSG-soliton-asymptotics}
\end{figure}
%

%%%%%%%%%%%%%%%%%%%%%%%%%%%%%%%%%%%%%%%%%%%%%%%%%%%%%%%%%%%%%%%%%%%%%%%%%%%%%%%%
\subsection{High-order solitons}

Merging two eigenvalues with proper rescaling of the norming constants yields
a double-pole soliton of the NLS equation~\cite{zs1972,akns1974}.
In this paper,
we make the calculation more general
and present the special limiting procedure in a \myblue{systematical} way,
by showing that merging $N>1$ eigenvalues into one,
with rescaling of the norming constants $\{\omega_{j,k}\}$,
produces an $N$th-order soliton, i.e., $N$-pole soliton of the MBEs~\eqref{e:mbe}.

By definition,
the $N$th-order soliton is the solution corresponding to an $N$th-order eigenvalue,
i.e.,
an $N$th pole of the scattering data in the inverse scattering.
So,
let us denote this eigenvalue in the upper half plane by
$\lambda_\circ\in\Complex^+$.
Naturally,
the other eigenvalue $\conj{\lambda_\circ}$ is in the lower half plane.
Then,
the corresponding $N$th pole in the upper half plane can be described by
$\frac{p_\circ(\lambda)}{(\lambda - \lambda_\circ)^N}$ with $N\ge2$ and
\begin{equation}
\label{e:p-circ-def}
p_\circ(\lambda)\coloneqq \sum_{k=0}^{N-1}\omega_{\circ,k}(\lambda - \lambda_\circ)^k\,,\qquad
\omega_{\circ,k}\in\Complex\,,\qquad
\omega_{\circ,0} \ne 0\,.
\end{equation}
By definition,
$\omega_{\circ,0}$ cannot be zero,
otherwise this is not an $N$th pole anymore.
However,
all other constants can be arbitrary,
including zeros.

Then,
the most general $N$th-order soliton can be obtained via the following RHP.
\begin{rhp}[Jump form of the $N$th-order soliton]
\label{rhp:N-order-soliton:jump-form}
\myblue{
Let $\M_\circ(\lambda;t,z)$ be a two-by-two matrix analytic function on
$\Complex\setminus\big(\partial D_{\lambda_\circ}^\epsilon\bigcup \partial D_{\conj{\lambda_\circ}}^\epsilon\big)$
admitting continuous boundary values.
It has the asymptotics $\M_\circ(\lambda;t,z)\to\I$ as $\lambda\to\infty$ and satisfies jumps
}
\begin{equation}
\begin{aligned}
\M_\circ^+(\lambda;t,z) & = \M_\circ^-(\lambda;t,z)\V_\circ(\lambda;t,z)^{-1}\,,\qquad && \lambda\in\partial D_{\lambda_\circ}^\epsilon\,,\\
\M_\circ^+(\lambda;t,z) & = \M_\circ^-(\lambda;t,z)\V_\circ(\conj\lambda;t,z)^\dagger\,,\qquad && \lambda\in\partial D_{\conj{\lambda_\circ}}^\epsilon\,,
\end{aligned}
\end{equation}
where the jump contours are oriented counterclockwise,
and the jump matrices are given by
\begin{equation}
\everymath{\displaystyle}
\V_\circ(\lambda;t,z)
    \coloneqq \bpm
        1 & 0 \\ \frac{p_\circ(\lambda)\ee^{-2\ii\theta(\lambda;t,z)}}{(\lambda - \lambda_\circ)^N} & 1
    \epm\,,\qquad
\V_\circ(\conj\lambda;t,z)^\dagger
    = \bpm
        1 & \frac{p_\circ^*(\lambda)\ee^{2\ii\theta(\lambda;t,z)}}{(\lambda - \conj{\lambda_\circ})^N} \\ 0 & 1
    \epm\,.
\end{equation}
\end{rhp}
Clearly,
both jump matrices in RHP~\ref{rhp:N-order-soliton:jump-form} satisfy the Schwarz symmetry,
\myblue{after the jump contour in the upper-half plane is re-oriented clockwise,}
so by Zhou's lemma the solution to RHP~\ref{rhp:N-order-soliton:jump-form} exists and is unique~\cite{z1989}.
It is easy to apply the dressing method on RHP~\ref{rhp:N-order-soliton:jump-form} in order to show that it, together with the reconstruction formula in Lemma~\ref{thm:reconstruction},
produces a unique solution of the MBEs with boundary conditions $\lim_{t\to-\infty}D = D_-$ for all $z\ge0$. The procedure is identical to the one in Section~\ref{s:proof-reconstruction-formula}, so it is omitted for brevity.

Now, a few questions rise:
\begin{enumerate}
\item What is relation between RHPs~\ref{rhp:N-soliton-jump-form} and~\ref{rhp:N-order-soliton:jump-form}?
\item What is the pole-form of RHP~\ref{rhp:N-order-soliton:jump-form}?
\item How can the $N$th-pole soliton be solved from RHP~\ref{rhp:N-order-soliton:jump-form}?
\end{enumerate}
We answer all three questions below.

First, we present the following theorem in order to show the relation between RHPs~\ref{rhp:N-soliton-jump-form} and~\ref{rhp:N-order-soliton:jump-form}.
\begin{theorem}[Fusion of solitons]
\label{thm:N-order-soliton:fusion}
For every given RHP~\ref{rhp:N-order-soliton:jump-form} with the order $N\ge2$, the eigenvalue $\lambda_\circ\in\Complex^+$ and norming constants $\{\omega_{\circ,k}\}_{k=0}^{N-1}$ and $\omega_{\circ,0}\ne0$, there exists a fusion process of merging the (simple) eigenvalue set $\Lambda$ of RHP~\ref{rhp:N-soliton-jump-form} while rescaling norming constant set $\Omega$.
\end{theorem}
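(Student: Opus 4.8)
The plan is to exhibit an explicit one-parameter family of simple-eigenvalue data $\{\Lambda(h),\Omega(h)\}$, indexed by a small spacing parameter $h>0$, whose $N$-soliton problem RHP~\ref{rhp:N-soliton-jump-form} collapses onto the $N$th-order problem RHP~\ref{rhp:N-order-soliton:jump-form} as $h\to0^+$, and then to transfer this convergence to the reconstructed MBE fields through Lemma~\ref{thm:reconstruction}. First I would choose $N$ distinct points $\lambda_k(h)\in\Complex^+$, $k=1,\dots,N$, with $\lambda_k(h)\to\lambda_\circ$ as $h\to0$ (for instance along the circle $|\lambda|=|\lambda_\circ|$, so that the configuration stays inside a single group of Definition~\ref{def:LambdaOmega}), and define the norming constants by the partial-fraction residues
\[
\omega_k(h)\coloneqq\frac{p_\circ(\lambda_k(h))}{\prod_{l\ne k}(\lambda_k(h)-\lambda_l(h))},\qquad k=1,\dots,N,
\]
with $p_\circ$ as in~\eqref{e:p-circ-def}. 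Since $p_\circ$ has degree $N-1$ and hence finitely many zeros, the approach path can be chosen so that every $\omega_k(h)\ne0$, as required by Definition~\ref{def:LambdaOmega}. The single identity driving the whole argument is that this choice makes the combined off-diagonal symbol exactly rational,
\[
\sum_{k=1}^N\frac{\omega_k(h)}{\lambda-\lambda_k(h)}=\frac{p_\circ(\lambda)}{\prod_{l=1}^N(\lambda-\lambda_l(h))},
\]
which holds because the right-hand side is a proper rational function with simple poles whose residues are precisely the $\omega_k(h)$.

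Next I would consolidate the $N$ shrinking circles of RHP~\ref{rhp:N-soliton-jump-form} into one. Passing first to the equivalent residue form RHP~\ref{rhp:N-soliton-residue-form}, for $h$ small enough all poles $\lambda_1(h),\dots,\lambda_N(h)$ lie inside a single \emph{fixed} disk $D_{\lambda_\circ}^\epsilon$. Because the unipotent triangular factors $\bigl(\begin{smallmatrix}1&0\\ \ast&1\end{smallmatrix}\bigr)$ commute, multiplying $\M$ on the right inside $D_{\lambda_\circ}^\epsilon$ by $\prod_k\bigl(\begin{smallmatrix}1&0\\ -\omega_k(h)\ee^{-2\ii\theta}/(\lambda-\lambda_k(h))&1\end{smallmatrix}\bigr)=\bigl(\begin{smallmatrix}1&0\\ -\ee^{-2\ii\theta}\sum_k\omega_k(h)/(\lambda-\lambda_k(h))&1\end{smallmatrix}\bigr)$ is the standard residue-to-jump conversion performed simultaneously at all poles; it removes every pole at once and produces a matrix analytic throughout $D_{\lambda_\circ}^\epsilon$, and the Schwarz-reflected factor handles the conjugate disk. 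The transformed matrix then solves an RHP of exactly the type of RHP~\ref{rhp:N-order-soliton:jump-form}, posed on the fixed circles $\partial D_{\lambda_\circ}^\epsilon$ and $\partial D_{\conj{\lambda_\circ}}^\epsilon$, whose lower-left jump symbol is $\ee^{-2\ii\theta(\lambda;t,z)}\,p_\circ(\lambda)/\prod_l(\lambda-\lambda_l(h))$. Crucially, the transformation is the identity outside the disks, so the reconstruction data at $\lambda=\infty$ and $\lambda=0$ used in Lemma~\ref{thm:reconstruction} are untouched and the reconstructed MBE fields are unchanged.

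It then remains to take $h\to0$. On the fixed contour $\partial D_{\lambda_\circ}^\epsilon$ the poles stay strictly inside, so $\prod_l(\lambda-\lambda_l(h))\to(\lambda-\lambda_\circ)^N$ uniformly and the jump symbol converges uniformly to $\ee^{-2\ii\theta}p_\circ(\lambda)/(\lambda-\lambda_\circ)^N$; that is, for each fixed $(t,z)$ the jump matrices converge uniformly to $\V_\circ$ of RHP~\ref{rhp:N-order-soliton:jump-form}, with the conjugate contour controlled by Schwarz symmetry. Every problem in the family, and the limiting one, is uniquely solvable by Zhou's lemma because the jumps retain the Schwarz symmetry $\V(\lambda)=\V(\conj\lambda)^\dagger$, so the associated singular integral operators $\I-\mathcal{C}_{\V}$ are invertible; uniform convergence of the jumps on the fixed, bounded contour then yields operator-norm convergence $\mathcal{C}_{\V(h)}\to\mathcal{C}_{\V_\circ}$ and hence convergence of the solutions, from which the reconstructed fields converge to the $N$th-order soliton via Lemma~\ref{thm:reconstruction}. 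The hard part is precisely this last analytic step: one must certify that solvability does not degenerate along the family—uniform invertibility of $\I-\mathcal{C}_{\V(h)}$ for small $h$ and each fixed $(t,z)$—which is exactly what the preserved Schwarz symmetry (vanishing lemma) guarantees, thereby converting the elementary uniform convergence of symbols into convergence of the soliton fields.
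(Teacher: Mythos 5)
Your proposal follows essentially the same route as the paper: your partial-fraction rescaling $\omega_k = p_\circ(\lambda_k)/\prod_{l\ne k}(\lambda_k-\lambda_l)$ is exactly the paper's Equation~\eqref{e:N-order-soliton:omega-rescaling} (derived there via Lemma~\ref{thm:N-order-soliton:p-sol}), your consolidation of the shrinking circles into a single fixed contour is the paper's equivalent single-contour form RHP~\ref{rhp:N-soliton-jump-form-bigcircle}, and both arguments then pass to the limit of the jump symbol $p_\circ(\lambda)/\prod_l(\lambda-\lambda_l)\to p_\circ(\lambda)/(\lambda-\lambda_\circ)^N$. The only difference is that you additionally certify convergence of the RHP solutions themselves (uniform convergence of jumps on a fixed contour, operator-norm convergence, and invertibility of the limiting operator via Zhou's lemma), a step the paper leaves implicit by asserting the limiting RHP directly --- this is a welcome strengthening, not a different method.
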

Theorem~\ref{thm:N-order-soliton:fusion} is proved in Section~\ref{s:N-order-soliton:fusion}, where the explicit rescaling of the norming constants $\Omega$ is also provided in Equation~\eqref{e:N-order-soliton:omega-rescaling}. Moreover, we also show that the rescaling of norming constants is essential when obtaining the $N$th-order soliton from the $N$-soliton solution. Without it, the fusion of $\Lambda$ just produces an one-soliton solution, which is the trivial result.

It is established that it is definitely possible to obtain the $N$th-order soliton from $N$-soliton solutions. The next task is to solve for such solution from RHP~\ref{rhp:N-order-soliton:jump-form}. It turns out that while the jump form a RHP is ideal for the Deift-Zhou's nonlinear steepest descent method, the equivalent pole form becomes useful when solving for the exact soliton solutions. As such, we address the second question, which is to rewrite RHP~\ref{rhp:N-order-soliton:jump-form} into its pole form. Different from the previous simple pole cases, where the residue conditions are used, the high-order poles require more general conditions. Therefore, we first define the pole operator $\sP{n}$ which picks out the coefficient of the $n$th-order term in a Laurent expansion of a function.
\begin{definition}
\label{def:pole-operator}
Let us define a operator $\sP{n}$, such that for every meromorphic function $f(\lambda)$,
\begin{equation}
\sP{n}\limits_{\lambda = \lambda_\circ} f = f_n\,,\qquad\mbox{ where }
f = \sum_{n = -N}^\infty (\lambda - \lambda_\circ)^n f_n\,,
\end{equation}
\end{definition}
Clearly, the operator $\sP{n}$ is a generalization of $\Res$, with the following property
\begin{equation}
\sP{-1}\limits_{\lambda = \lambda_\circ}f
    = \Res\limits_{\lambda = \lambda_\circ}f\,.
\end{equation}
With the help of $\sP{n}$, the pole form of an $N$th-order soliton RHP is given below.
\begin{rhp}[Pole form of the $N$th-order soliton]
\label{rhp:N-order-soliton:pole-form}
Given the $N$th-order eigenvalue $\lambda_\circ$
in the upper half plane and corresponding norming constant polynomial
$p_\circ(\lambda)$.
Let $\M_\circ(\lambda;t,z)$ be a $2\times2$
\myblue{analytic matrix function on $\Complex\setminus\{\lambda_\circ,\conj{\lambda_\circ}\}$},
with asymptotics $\M_\circ(\lambda;t,z)\to\I$ as $\lambda\to\infty$
and pole conditions
\begin{equation}
\begin{aligned}
\sP{-n}_{\lambda = \lambda_\circ}\M_\circ(\lambda;t,z)
    & = \lim_{\lambda\to\lambda_\circ}\sum_{k=0}^{N-n}\sum_{i=0}^{N-n-k}\frac{\omega_{\circ,i}}{(N-n-i-k)!\,k!}\frac{\partial^{N-n-i-k}\ee^{-2\ii\theta}}{\partial\lambda^{N-n-i-k}}\frac{\partial^k\M_\circ}{\partial\lambda^k}\bpm 0 & 0 \\ 1 & 0 \epm\,,\\
\sP{-n}_{\lambda = \conj{\lambda_\circ}}\M_\circ(\lambda;t,z)
    & = -\lim_{\lambda\to\lambda_\circ}\sum_{k=0}^{N-n}\sum_{i=0}^{N-n-k}\frac{\conj{\omega_{\circ,i}}}{(N-n-i-k)!\,k!}\frac{\partial^{N-n-i-k}\ee^{2\ii\theta}}{\partial\lambda^{N-n-i-k}}\frac{\partial^k\M_\circ}{\partial\lambda^k}\bpm 0 & 1 \\ 0 & 0 \epm\,,
\end{aligned}
\end{equation}
where $n = 1,2,\dots, N$.
\end{rhp}
The derivation RHP~\ref{rhp:N-order-soliton:pole-form} from RHP~\ref{rhp:N-order-soliton:jump-form} is shown in Section~\ref{s:N-order-soliton:pole-form}. Similarly to the $N$-soliton solutions. Both RHPs~\ref{rhp:N-order-soliton:jump-form} and~\ref{rhp:N-order-soliton:pole-form} are equivalent and useful in different ways. RHP~\ref{rhp:N-order-soliton:jump-form} is suitable for applying Deift-Zhou's nonlinear steepest descent method when computing asymptotics, whereas RHP~\ref{rhp:N-order-soliton:pole-form} can be used to find the exact solution {formul\ae} for the matrix $\M_\circ(\lambda;t,z)$ and subsequently for the MBEs. As shown in Section~\ref{s:N-order-soliton:solution-formula}, the exact formula for the $N$th-order soliton is given below.
\begin{theorem}[$N$th-order soliton solution formula]
\label{thm:N-order-soliton:solution-formula}
Given an $N$th-order eigenvalue \myblue{$\lambda_\circ\in\Complex^+$}
and norming constants $\{\omega_{\circ,k}\}_{k=0}^{N-1}$,
the corresponding $N$th-order soliton of MBEs is given by
\begin{equation}
\begin{aligned}
q(t,z) = 2\ii - 2\ii\frac{\det(\I - \myblue{\vec{e_1}\conj{\vec\gamma_\infty}} + \bGamma_\circ\conj{\bGamma_\circ})}{\det(\I + \bGamma_\circ\conj{\bGamma_\circ})}\,,
\end{aligned}
\end{equation}
where $D(t,z)$ and $P(t,z)$ are reconstructed by Lemma~\ref{thm:reconstruction}, with entries of $\M(\lambda;t,z)$ given by
\begin{equation}
\begin{aligned}
M_{1,1}(\lambda;t,z)
    = \frac{\det(\I - \myblue{\vec{e_1}\vec{\gamma}\conj{\bGamma_\circ}} + \bGamma_\circ \conj{\bGamma_\circ})}{\det(\I + \bGamma_\circ\conj{\bGamma_\circ})}\,,\qquad
M_{1,2}(\lambda;t,z)
    = \frac{\det(\I - \myblue{\vec{e_1}\conj{\vec\gamma}} + \bGamma_\circ\conj{\bGamma_\circ})}{\det(\I + \bGamma_\circ\conj{\bGamma_\circ})}-1\,,
\end{aligned}
\end{equation}
and other quantities
\begin{equation}
\everymath{\displaystyle}
\begin{aligned}
\vec{\gamma}(\lambda)
    & \coloneqq \bpm \gamma_0(\lambda) & \cdots & \gamma_{N-1}(\lambda)\epm\,,\qquad
\vec{\gamma}_\infty
    \coloneqq \lim_{\lambda\to\infty}\lambda\vec{\gamma}(\lambda)\,,\\
\bGamma_\circ
    & \coloneqq \bpm \vec{\gamma}(\conj{\lambda_\circ})^\top & \cdots & \vec{\gamma}^{(N-1)}(\conj{\lambda_\circ})^\top \epm^\top\,,\qquad
\vec{e_1}\coloneqq \bpm 1 & 0 & \dots & 0 \epm^\top\,,\\
\gamma_k(\lambda)
    & \coloneqq \sum_{n=1}^{N-k}\sum_{i=0}^{N-n-k}\frac{1}{(\lambda - \lambda_\circ)^n}\frac{\omega_{\circ,i}}{(N-n-i-k)!\,k!}\frac{\partial^{N-n-i-k} \ee^{-2\ii\theta}}{\partial\lambda^{N-n-i-k}}(\lambda_\circ)\,,
\end{aligned}
\end{equation}
\myblue{
where $\vec{\gamma}^{(n)}(\lambda)$ denotes the $n$th derivative of $\vec{\gamma}(\lambda)$.
}
\end{theorem}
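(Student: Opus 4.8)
The plan is to solve RHP~\ref{rhp:N-order-soliton:pole-form} directly by converting it into a finite linear system and applying Cramer's rule, mirroring the derivation of Theorem~\ref{thm:N-soliton-formula} but now accommodating the higher-order poles. First I would invoke Liouville's theorem: since $\M_\circ(\lambda;t,z)$ is analytic on $\Complex\setminus\{\lambda_\circ,\conj{\lambda_\circ}\}$, tends to $\I$ at infinity, and has poles of order at most $N$ at $\lambda_\circ$ and $\conj{\lambda_\circ}$, it admits the representation
\[
\M_\circ(\lambda;t,z) = \I + \sum_{n=1}^N\frac{\sP{-n}_{\lambda=\lambda_\circ}\M_\circ}{(\lambda-\lambda_\circ)^n} + \sum_{n=1}^N\frac{\sP{-n}_{\lambda=\conj{\lambda_\circ}}\M_\circ}{(\lambda-\conj{\lambda_\circ})^n}\,.
\]
Because the pole conditions carry the triangular structures $\bpm 0&0\\ 1&0\epm$ and $\bpm 0&1\\ 0&0\epm$, the first column of $\M_\circ$ develops poles only at $\lambda_\circ$ and the second column only at $\conj{\lambda_\circ}$; combined with the symmetry of Remark~\ref{thm:M-symmetry} ($M_{2,2}=M_{1,1}^*$, $M_{2,1}=-M_{1,2}^*$), this reduces the problem to determining the first row $(M_{1,1},M_{1,2})$.

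Next I would impose the pole conditions. The coefficient $\sP{-n}_{\lambda_\circ}\M_\circ$ is prescribed in terms of the Taylor data $\partial_\lambda^k\M_\circ(\lambda_\circ)$ together with the derivatives of $(\lambda-\lambda_\circ)^{-N}p_\circ(\lambda)\ee^{-2\ii\theta}$. Since $M_{1,2}$ is analytic at $\lambda_\circ$ (its poles sit at $\conj{\lambda_\circ}$) and, symmetrically, $M_{1,1}$ is analytic at $\conj{\lambda_\circ}$, these Taylor data are themselves finite linear combinations of the unknown principal-part coefficients. Expanding the prescribed products by the Leibniz rule and collecting contributions by derivative order $k$ is precisely what produces the scalar functions $\gamma_k(\lambda)$ and the vector $\vec\gamma(\lambda)$ of the statement. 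Substituting back closes the system; the Schwarz symmetry lets me express the $\conj{\lambda_\circ}$-data as conjugates of the $\lambda_\circ$-data, so the coefficient matrix assembles into $\I+\bGamma_\circ\conj{\bGamma_\circ}$, whose successive rows $\vec\gamma^{(n)}(\conj{\lambda_\circ})$ encode the $n$th derivative conditions.

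I would then solve this linear system by Cramer's rule, which yields $M_{1,1}$ and $M_{1,2}$ as the stated ratios of determinants; regularity, i.e.\ non-vanishing of $\det(\I+\bGamma_\circ\conj{\bGamma_\circ})$, follows from a positive-definiteness argument identical to the one guaranteeing $\det(\I+\bGamma\conj{\bGamma})>1$ in Theorem~\ref{thm:N-soliton-formula}. Finally I would feed the large-$\lambda$ expansion into the reconstruction formula of Lemma~\ref{thm:reconstruction}: a direct computation of $[\M_\circ,\sigma_3]$ gives $q = -2\ii\lim_{\lambda\to\infty}\lambda M_{1,2}(\lambda)$, and since $\vec\gamma_\infty=\lim_{\lambda\to\infty}\lambda\vec\gamma(\lambda)$, reading off the $\O(\lambda^{-1})$ coefficient of the determinant ratio for $M_{1,2}$ produces the claimed formula for $q$, with $D$ and $P$ following from $\M_\circ(0;t,z)$.

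The main obstacle will be the combinatorial verification that the Leibniz/Taylor bookkeeping collapses exactly onto the triple-indexed definition of $\gamma_k(\lambda)$ and onto the compact matrix $\bGamma_\circ$ built from the derivatives $\vec\gamma^{(n)}(\conj{\lambda_\circ})$; matching the factorials $(N-n-i-k)!\,k!$ and the derivative orders against the pole conditions is delicate. I would control this by first checking the reduction to the one-soliton case and the smallest nontrivial order $N=2$ explicitly, then proceeding by induction on the pole order $n$. As an independent cross-check, I would confirm the result against the confluent limit $\lambda_{1,k}\to\lambda_\circ$ of Theorem~\ref{thm:N-soliton-formula} supplied by the fusion process of Theorem~\ref{thm:N-order-soliton:fusion}, under which the evaluation rows $\C(\conj{\lambda_{1,k}})$ of $\bGamma$ degenerate precisely into the derivative rows $\vec\gamma^{(n)}(\conj{\lambda_\circ})$ of $\bGamma_\circ$.
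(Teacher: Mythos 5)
Your proposal follows essentially the same route as the paper's proof in Section~\ref{s:N-order-soliton:solution-formula}: the Liouville/ansatz representation of the solution of RHP~\ref{rhp:N-order-soliton:pole-form} with principal parts at $\lambda_\circ$ and $\conj{\lambda_\circ}$, reduction to the first row $(M_{1,1},M_{1,2})$ via Remark~\ref{thm:M-symmetry}, substitution of the pole conditions with the Leibniz/Taylor bookkeeping and reordering of sums that yields $\gamma_k(\lambda)$, evaluation of the unknowns and their derivatives at $\conj{\lambda_\circ}$ and $\lambda_\circ$ to close the $2N\times2N$ linear system, and Cramer's rule exactly as in Section~\ref{s:derivation-Nsoliton}, followed by the reconstruction formula of Lemma~\ref{thm:reconstruction}. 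Your added cross-checks (small-$N$ verification, the confluent limit via Theorem~\ref{thm:N-order-soliton:fusion}, and a positive-definiteness argument for $\det(\I+\bGamma_\circ\conj{\bGamma_\circ})$) go slightly beyond what the paper writes out but are consistent with it.
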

Since the $N$th-order soliton is a special limit of the $N$-DSG with rescaled norming constants given in Equation~\eqref{e:N-order-soliton:omega-rescaling} as shown in Section~\ref{s:N-order-soliton:fusion}, one immediately obtains the following result.
\begin{corollary}[Localization for the $N$th-order soliton]
\label{e:thm:N-order-soliton:localization}
As a whole,
the $N$th-order soliton is localized along the line $z = V_\circ t$ with its velocity $V_\circ \coloneqq -2D_- |\lambda_\circ|^2$.
\end{corollary}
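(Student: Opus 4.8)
The plan is to obtain the corollary from the fusion picture of Theorem~\ref{thm:N-order-soliton:fusion} together with the single-DSG localization in statement~(2) of Theorem~\ref{thm:N-DSG}, using the explicit formula of Theorem~\ref{thm:N-order-soliton:solution-formula} to make the limit rigorous. Conceptually, Theorem~\ref{thm:N-order-soliton:fusion} realizes the $N$th-order soliton as a coalescence limit of $N$-soliton solutions whose $N$ simple eigenvalues all share a common modulus $r_1\to|\lambda_\circ|$; each member of this family is an $N$-DSG (the case $J=1$), hence localized along $z = Vt$ with $V = -2D_-r_1^2$. Since $V\to V_\circ = -2D_-|\lambda_\circ|^2$ and the norming-constant rescaling of Equation~\eqref{e:N-order-soliton:omega-rescaling} is precisely what keeps the limiting profile finite and nontrivial, the localization direction should pass to the limit and yield localization along $z = V_\circ t$.

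To pin the velocity and the decay direction I would examine the single spectral phase $\theta(\lambda_\circ;t,z)$ from~\eqref{e:theta-def}, which is the only carrier of $(t,z)$-dependence in Theorem~\ref{thm:N-order-soliton:solution-formula}. Writing $\lambda_\circ = |\lambda_\circ|\ee^{\ii\alpha_\circ}$ with $\alpha_\circ\in(0,\pi)$ and setting $z = \xi t$ gives
\[
\Im\theta(\lambda_\circ;t,\xi t) = \sin(\alpha_\circ)\Bigl(|\lambda_\circ| + \tfrac{D_-\xi}{2|\lambda_\circ|}\Bigr)\,t,
\]
which vanishes identically exactly when $\xi = V_\circ$ and otherwise grows linearly in $|t|$ with nonzero slope. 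Paralleling the one-soliton of Corollary~\ref{thm:1-soliton}, where the $\sech$-profile decays at the rate $2|\Im\theta|$, I expect the general $N$th-order profile to decay at the same rate off the line; on the line $z=V_\circ t$ the phase is bounded and the profile persists.

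Making this rigorous, I would note that every entry in $\vec\gamma$, $\vec\gamma_\infty$ and $\bGamma_\circ$ is built from $\ee^{\pm2\ii\theta(\lambda_\circ)}$ and finitely many $\lambda$-derivatives at $\lambda_\circ$, the latter contributing only polynomial prefactors in $(t,z)$. Rather than track the individual exponentials (which grow in one half-plane of directions), I would argue at the level of the determinant ratio: the denominator $\det(\I + \bGamma_\circ\conj{\bGamma_\circ})$ stays bounded below by a positive constant --- the analogue of $\det(\I + \bGamma\conj{\bGamma}) > 1$ from Theorem~\ref{thm:N-soliton-formula} --- so the same dominant-balance computation that proves statement~(2) of Theorem~\ref{thm:N-DSG} yields $q = \O(\ee^{-\aleph|t|})$ off any cone $\xi\ne V_\circ$, for any fixed $\aleph < 2\sin(\alpha_\circ)\bigl||\lambda_\circ| + \tfrac{D_-\xi}{2|\lambda_\circ|}\bigr|$, and through $\brho = D_-\M(0)\sigma_3\M(0)^{-1}$ of Lemma~\ref{thm:reconstruction} also $D = D_- + \O(\ee^{-\aleph|t|})$ and $P = \O(\ee^{-\aleph|t|})$. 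The main obstacle is precisely this cancellation: the individual factors $\ee^{\pm2\ii\theta}$ blow up in some directions, and the uniform constants in the DSG estimate degenerate as the eigenvalues merge, so I would lean on the ratio structure and the determinant lower bound to convert the net phase growth $2|\Im\theta|$ into genuine exponential decay, the polynomial prefactors from the $\lambda$-derivatives being harmless since $\mathrm{poly}(t)\,\ee^{-c|t|} = \O(\ee^{-\aleph|t|})$ for $0<\aleph<c$.
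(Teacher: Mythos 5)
Your first paragraph is, in essence, the paper's entire proof: the corollary is obtained there in one sentence, by viewing the $N$th-order soliton as a special coalescence limit of $N$-DSGs (fusion with the rescaled norming constants of Equation~\eqref{e:N-order-soliton:omega-rescaling}, with the merging eigenvalues chosen on the circle $|\lambda| = |\lambda_\circ|$ so that each pre-limit solution is a genuine $J=1$ DSG), each localized along $z = Vt$ with $V = -2D_-r_1^2 \to V_\circ$ by Theorem~\ref{thm:N-DSG}(2). Where you go beyond the paper is in your second and third paragraphs: the paper never addresses the obstacle you correctly identify, namely that the small-norm constants in the DSG localization proof degenerate under merging (the disks $D_{\lambda_{1,k}}^\epsilon$ must stay disjoint, and the conjugating matrices $\S_k$ of Equation~\eqref{e:Ps-def} involve $1/p_\dsg$ near coalescing zeros), so localization does not automatically pass to the limit; your proposed repair via the explicit formula of Theorem~\ref{thm:N-order-soliton:solution-formula} is the natural fix, and your phase computation $\Im\theta(\lambda_\circ;t,\xi t) = \sin(\alpha_\circ)\big(|\lambda_\circ| + D_-\xi/(2|\lambda_\circ|)\big)t$, vanishing precisely at $\xi = V_\circ$, is consistent with the paper's Equation~\eqref{e:reitheta-1k}. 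Two caveats remain in your sketch: the lower bound $\det(\I + \bGamma_\circ\conj{\bGamma_\circ}) \ge 1$ is asserted by analogy but not proved anywhere in the paper for the high-order matrix --- it does follow by adapting the Gram-matrix argument of Lemma~\ref{thm:bGamma1-positive-definite} with the derivative family $x^k\ee^{\ii\lambda_\circ x}$, $k = 0,\dots,N-1$, which is linearly independent on $[0,+\infty)$ --- and in the growing direction $\Im\theta \to +\infty$ the dominant-balance step (extracting $\det(\bGamma_\circ\conj{\bGamma_\circ})$ as the common leading factor of numerator and denominator, as in Equations~\eqref{e:NDSG-zd-leading1}--\eqref{e:NDSG-zd-leading2}) still has to be carried out against the polynomial prefactors, though as you note these are harmless against exponential decay. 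Neither caveat is fatal; your proposal is sound and in fact more careful than the paper's one-line derivation.
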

\myblue{
However,
we currently are not able to provide a simple formula for the displacement.
We are also not able to provide the location for each peak in a high-order soliton solution.
Note that the peak locations in other integrable systems have been derived~\cite{s2017,zl2021}.
}

\begin{figure}[tp]
\centering
\includegraphics[width=0.3\textwidth]{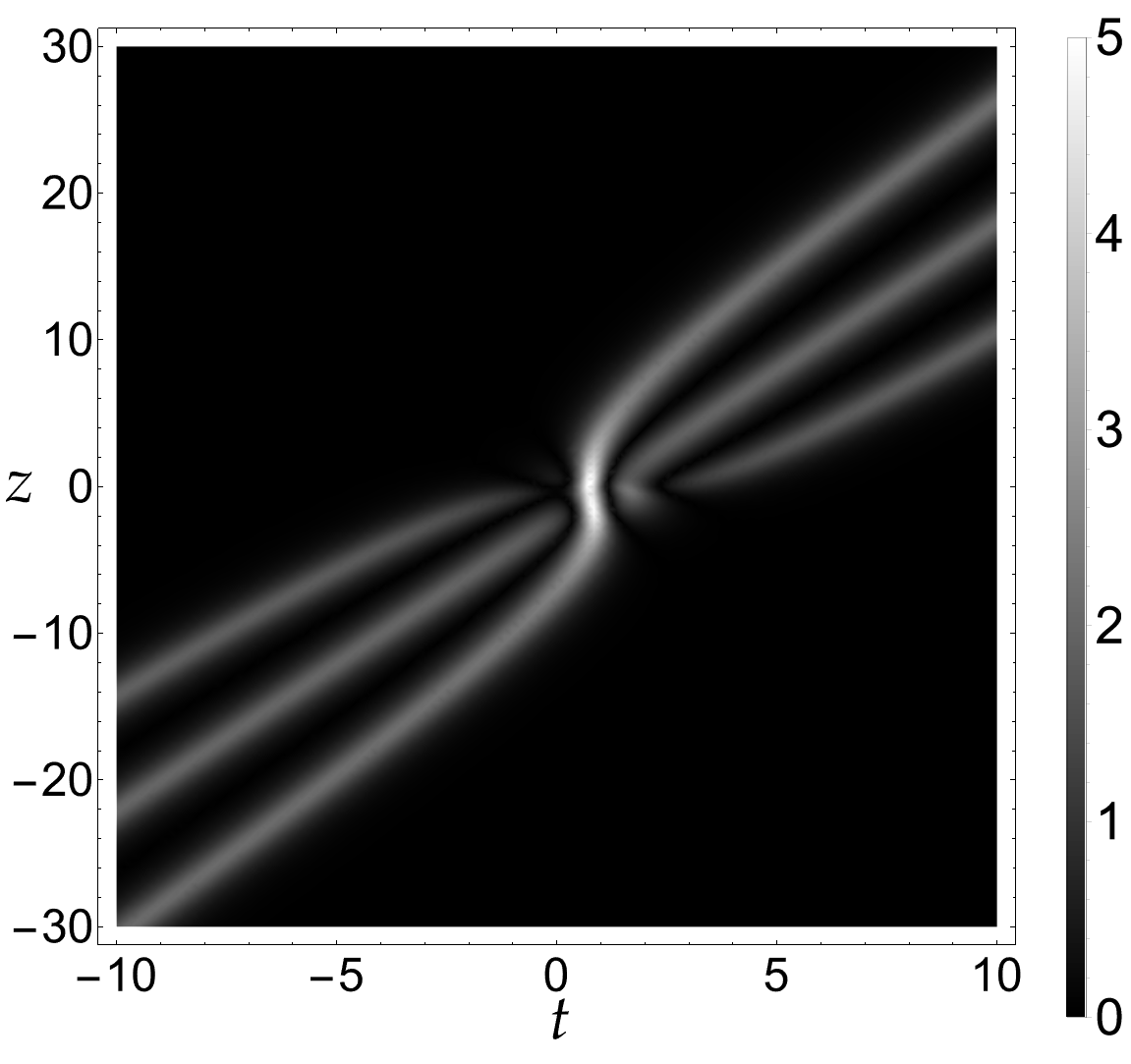}\quad
\includegraphics[width=0.3\textwidth]{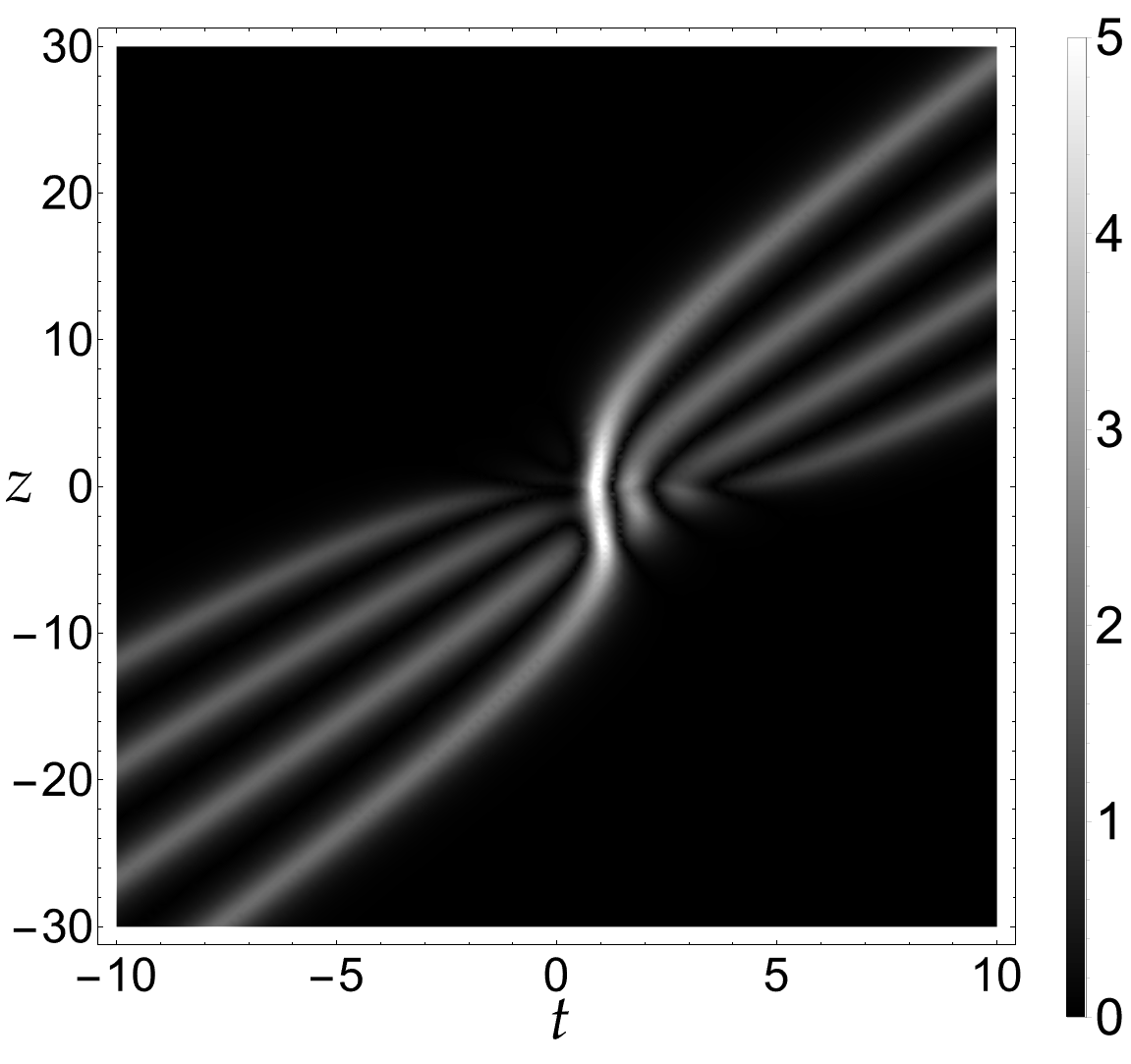}\quad
\includegraphics[width=0.3\textwidth]{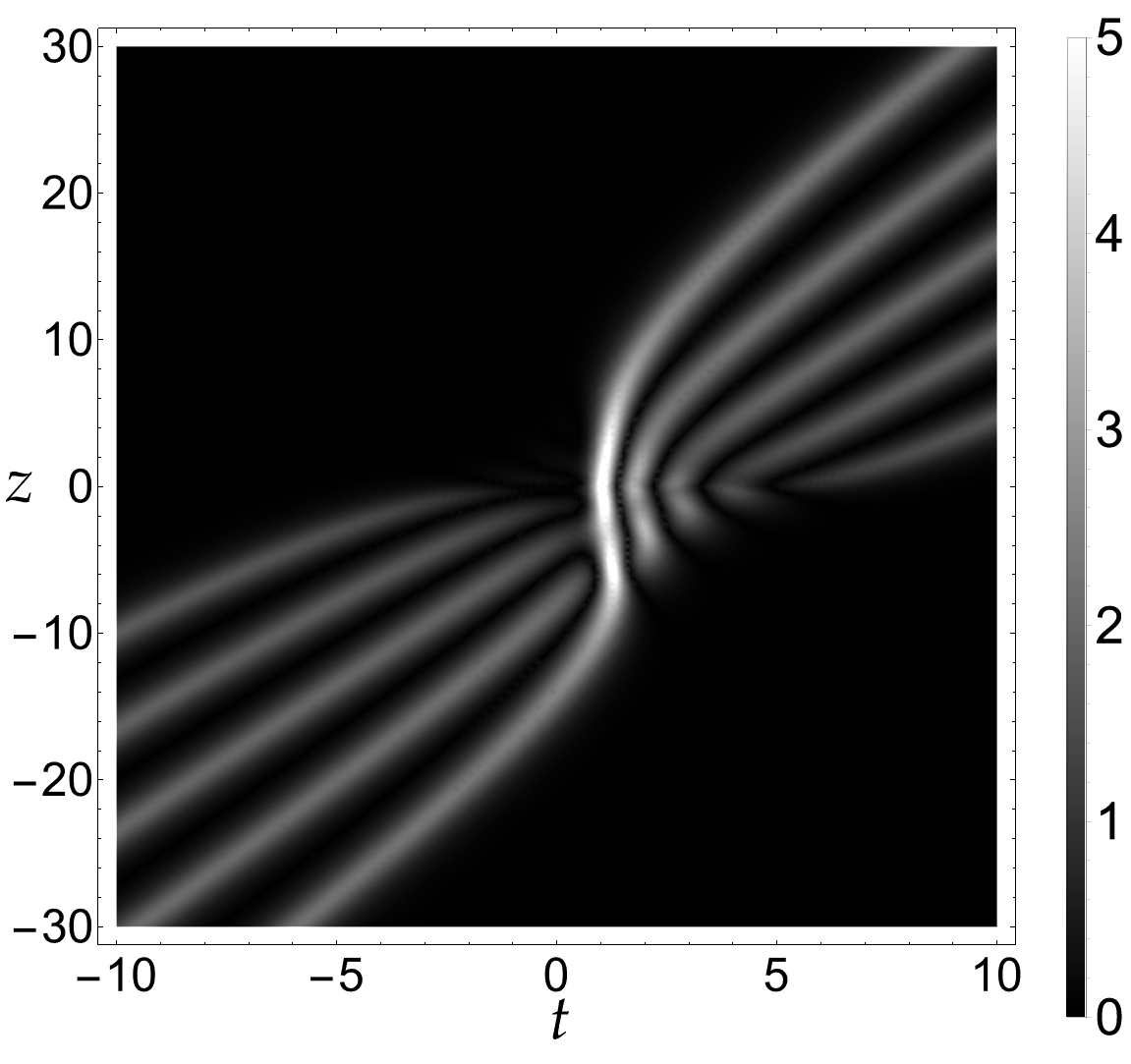}
\caption{
    Density plots of $|q(t,z)|$ of three exact $N$th-order soliton solutions in stable \myblue{media} from Theorem~\ref{thm:N-order-soliton:solution-formula}. All three plots have similar parameters $\lambda_\circ = \ii$, $\omega_{\circ,0} = 1$ and $\omega_{\circ,k} = 0$ for $k\ge1$, but with different values of $N$. Left: $N = 3$. Center: $N = 4$. Right: $N = 5$.
}
\label{f:Nth-order-solions}
\end{figure}

Three high-order solitons from Theorem~\ref{thm:N-order-soliton:solution-formula} are shown in Figure~\ref{f:Nth-order-solions} in an initially stable medium $D_- = -1$, with $N = 3,4,5$, respectively. In all three panels, the eigenvalues and the norming constant polynomials remain unchanged as $\lambda_\circ = \ii$ and $p_\circ(\lambda) = 1$, respectively. Hence, from left to right, the solutions correspond to upper-half-plane poles $(\lambda - \ii)^{-3}$, $(\lambda - \ii)^{-4}$ and $(\lambda - \ii)^{-5}$. They travel inside the light cone with identical velocity $V = 2$.

%%%%%%%%%%%%%%%%%%%%%%%%%%%%%%%%%%%%%%%%%%%%%%%%%%%%%%%%%%%%%%%%%%%%%%%%%%%%%%%%
\subsection{Soliton gas}

Beside the $N$th-order soliton,
another generalization of the general $N$-soliton solutions,
corresponding to anomaly (b) in the spectra of the non-self-adjoint Zakharov-Shabat scattering problem,
produces the so-called soliton gas phenomenon, or integrable turbulence. In particular, one considers the limit of an arbitrary $N$-soliton solution as $N\to+\infty$. In this work, however, we only present results of a special case, in which one considers the limit of an $N$-DSGs as $N\to+\infty$, in order to keep discussions simple.
\begin{definition}[Generalized eigenvalues and norming constants]
\label{def:generalized-eigenvalue-norming-constant}
We define an arc $\rA$ generalizing the eigenvalue set $\Lambda$
\begin{equation}
\label{e:soliton-gas:A_def}
\rA \coloneqq \Set{\lambda}{$|\lambda| = r_1 > 0$ with $\arg(\lambda)\in(\epsilon,\pi - \epsilon)$}\,,\qquad
\mbox{with $0 < \epsilon \ll1$ fixed}\,,
\end{equation}
and a complex function $\omega(\lambda)\in L_2(\rA)$ generalizing the norming constant set $\Omega$.
\end{definition}
\myblue{The bound state soliton gas solution} of MBEs~\eqref{e:mbe}
is described by the following RHP together with the reconstruction formula Theorem~\ref{thm:reconstruction}.
This RHP is derived in Appendix~\ref{s:soliton-gas} as a limit of $N$-DSGs with $N\to+\infty$.
\begin{rhp}[Jump-form of soliton gas]
\label{rhp:soliton-gas:inftyDSG_jump_form}
\myblue{
Given an arc $\rA$ and a function $\omega(\lambda)$ according to Definition~\ref{def:generalized-eigenvalue-norming-constant}.
Seek a $2\times2$ matrix function $\lambda\mapsto\M_\sg(\lambda;t,z)$ analytic on $\Complex\setminus\{\rA\bigcup\conj{\rA}\}$,
admitting continuous boundary values on $\rA\bigcup\conj{\rA}$,
with asymptotics $\M_\sg(\lambda;t,z)\to\I$ as $\lambda\to\infty$ and jumps
}
\begin{equation}
\begin{aligned}
\M_\sg^{+}(\lambda;t,z)
 & = \M_\sg^{-}(\lambda;t,z)\V_\sg(\lambda;t,z)\,,\qquad&& \lambda\in \rA\,,\\
\M_\sg^{+}(\lambda;t,z)
 & = \M_\sg^{-}(\lambda;t,z)\V_\sg(\conj\lambda;t,z)^\dagger\,,\qquad&& \lambda\in \conj{\rA}\,,
\end{aligned}
\end{equation}
where the jump contours are oriented left-to-right,
and the jump matrix $\V_\sg(\lambda;t,z)$ is
\begin{equation}
\V_\sg(\lambda;t,z)
 \coloneqq \bpm1 & 0 \\ \ee^{-2\ii\theta(\lambda;t,z)}\omega(\lambda) & 1 \epm\,.
\end{equation}
\end{rhp}
\begin{remark}
Jump matrices in RHP~\ref{rhp:soliton-gas:inftyDSG_jump_form} satisfy the Schwarz symmetry,
so by Zhou's lemma the solution exists and is unique~\cite{z1989}.
RHP~\ref{rhp:soliton-gas:inftyDSG_jump_form}
can be easily generalized following the \myblue{discussion} in Appendix~\ref{s:soliton-gas}
by modifying the jump contour to more complex configurations.
%Moreover,
%RHP~\ref{rhp:soliton-gas:inftyDSG_jump_form} is an analog to the pole-forms for the $N$-soliton solutions
%and $N$th-order solitons in RHPs~\ref{rhp:N-soliton-residue-form} and~\ref{rhp:N-order-soliton:pole-form}.
%There is an equivalent RHP~\ref{rhp:soliton-gas:inftyDSG_cauchy_form}
%which is the analog to the jump-forms for the $N$-soliton solutions and $N$th-order solitons
%in RHP~\ref{rhp:N-soliton-jump-form} and~\ref{rhp:N-order-soliton:jump-form}.
It is easy to prove that
RHP~\ref{rhp:soliton-gas:inftyDSG_jump_form} \myblue{generates} a unique solution of MBEs~\eqref{e:mbe} with boundary condition
$D \to D_-$ as $t\to-\infty$.
The proof is similar to Section~\ref{s:proof-reconstruction-formula}
and is omitted for brevity.
\end{remark}
We would also like to obtain properties of the soliton gas solution,
such as the localization property,
by mimicking the proofs for Theorem~\ref{thm:N-DSG} and Corollary~\ref{e:thm:N-order-soliton:localization}.
However,
it turns out that certain steps require additional knowledge of the function
$\omega(\lambda)$.
%For example,
%one needs to know the possible zeros of its Cauchy transform
%$\sC_\omega(\lambda)$,
%defined in Equation~\eqref{e:Cauchy_Transform}.
The soliton gas solutions exhibit diverse properties depending on the choice of
$\omega(\lambda)$,
and deserve own studies.
Thus,
we leave the analysis of RHP~\ref{rhp:soliton-gas:inftyDSG_jump_form} for future studies.

%%%%%%%%%%%%%%%%%%%%%%%%%%%%%%%%%%%%%%%%%%%%%%%%%%%%%%%%%%%%%%%%%%%%%%%%%%%%%%%%
\subsection{Results for the focusing NLS and complex mKdV equations}

We demonstrate here how to easily generalize obtained results for MBEs~\eqref{e:mbe} to other integrable systems. For simplicity, we only consider two systems in the same non-self-adjoint Zakharov-Shabat hierarchy, the focusing NLS equation
\begin{equation}
\label{e:nls}
iq_t(x,t) + q_{xx}(x,t) + 2|q(x,t)|^2q(x,t) = 0\,,
\end{equation}
with its Lax pair
\begin{equation}
\label{e:nls-lax-pair}
\begin{aligned}
\bphi_x & = (\ii\lambda\sigma_3 + \Q)\bphi\,,\\
\bphi_t & = [-2\ii\lambda^2\sigma_3 - 2\lambda \Q + \ii\sigma_3(\Q_x - \Q^2)]\bphi\,,
\end{aligned}
\end{equation}
and the complex mKdV equation
\begin{equation}
\label{e:mkdv}
q_t(x,t) + q_{xxx}(x,t) + 6|q(x,t)|^2q_x(x,t) = 0\,,
\end{equation}
with its Lax pair
\begin{equation}
\label{e:mkdv-lax-pair}
\begin{aligned}
\bphi_x & = (\ii\lambda\sigma_3 + \Q)\bphi\,,\\
\bphi_t & = [4\ii\lambda^3\sigma_3 + 4\lambda^2\Q + 2\ii\lambda(\Q^2 + \Q_x)\sigma_3 + 2\Q^3 + \Q_x\Q - \Q\Q_x - \Q_{xx}]\bphi\,.
\end{aligned}
\end{equation}
In both Lax pairs~\eqref{e:nls-lax-pair} and~\eqref{e:mkdv-lax-pair}, the matrix $\Q(x,t)$ is identical to that in Equation~\eqref{e:laxpair}, but with $q = q(x,t)$. Clearly, all three systems [MBEs~\eqref{e:mbe}, NLS~\eqref{e:nls} and mKdV~\eqref{e:mkdv}] have identical scattering problem.

Upon reviewing all soliton results for the MBEs in previous subsections, one notices that there are only two system-specific components, the structure of the $N$-DSG eigenvalue set $\Lambda_j$ from Definition~\ref{def:LambdaOmega} and the dispersion relation $\theta(\lambda;t,z)$ given in Equation~\eqref{e:theta-def}. Let us rewrite them
\begin{equation}
\theta_{\mbe}(\lambda;t,z) \coloneqq \lambda t - \frac{D_-}{2\lambda}z\,,\qquad
\Lambda_{\mbe,j} \coloneqq \set{\lambda_{j,k}}{|\lambda_{j,k}| = \mathrm{const}}\,.
\end{equation}
Note that eigenvalues with identical soliton velocities are determined by the equation $\Im(\theta) = 0$ for all $t\in\Real$ with $x = V t$ and $V$ being the fixed velocity (cf. the remark below Theorem~\ref{thm:N-DSG}). The same condition holds true for other integrable systems, so one gets
\begin{equation}
\begin{aligned}
\theta_\nls(\lambda;x,t) & \coloneqq \lambda x - 2\lambda^2 t\,,\quad&
\Lambda_{\nls,j} & \coloneqq \set{\lambda_{j,k}}{\Re(\lambda_{j,k}) = \mathrm{const}}\,,\\
\theta_\mkdv(\lambda;x,t) & \coloneqq \lambda x + 4\lambda^3 t\,,\quad&
\Lambda_{\mkdv,j} & \coloneqq \set{\lambda_{j,k}}{3\Re(\lambda_{j,k})^2 - \Im(\lambda_{j,k})^2 = \mathrm{const}}\,.
\end{aligned}
\end{equation}
Consequently, the DSG velocities are given by
\begin{equation}
V_\nls \coloneqq 4\Re(\lambda_{j,k})\,,\qquad
V_\mkdv \coloneqq 4\Im(\lambda_{j,k})^2 - 12\Re(\lambda_{j,k})^2\,.
\end{equation}
Also, the eigenvalue groups $\Lambda_j$ are ordered according to DSG velocities in the increasing order.

Now, simply substituting $\theta_\nls$ and $\Lambda_{\nls,j}$ into all results for the MBEs, one gets the corresponding soliton result for the focusing NLS equation. For example, replacing $\theta_\mbe$ by $\theta_\nls$ in Theorem~\ref{thm:N-soliton-formula}, one obtains the general $N$-soliton solution formula for the focusing NLS equation. Replacing $\Lambda_\mbe$ by $\Lambda_\nls$ in Theorem~\ref{thm:N-DSG} and substituting $\theta_\nls$ into Equation~\eqref{e:DSG-center-equation} yield the localization and center formula of $N$-DSGs for the NLS equation.
Same is true for the complex mKdV equation as well. The results are
\begin{equation}
\label{e:nls-mkdv-center}
\begin{aligned}
x_{\nls,\c}(t)
 & \coloneqq V_\nls t + x_{\d}\,,\\
x_{\mkdv,\c}(t)
 & \coloneqq V_\mkdv t + x_{\d}\,,\\
x_{\d}
 & \coloneqq -\frac{1}{2\sum_{k=1}^{N_1}\Im(\lambda_{1,k})}\ln\bigg(\prod_{k=1}^{N_1} |\omega_{1,k}|\cdot \frac{\prod_{k = 2}^{N_1}\prod_{l = 1}^{k-1}|\lambda_{1,k} - \lambda_{1,l}|^2}{\prod_{k=1}^{N_1}\prod_{l=1}^{N_1}|\conj{\lambda_{1,k}} - \lambda_{1,l}|}\bigg)\,.
\end{aligned}
\end{equation}
Note that the displacement $x_\d$ for both systems are formally identical.

One can obtain the general soliton asymptotics for the NLS equation as well,
by replacing the eigenvalue sets in Theorem~\ref{thm:soliton-asymptotics} and Corollary~\ref{thm:asymptotic-shifts}. Again, the same procedure yields soliton asymptotics for the complex mKdV equation. The final asymptotics for these two systems are
\begin{equation}
\label{e:nls-mkdv-asymptotics}
\begin{aligned}
q_\nls(x,t;\Lambda,\Omega)
 & = \sum_{j = 1}^J q_\nls(x,t;\Lambda_j,\Omega_j^{(\pm)}) + \myblue{\O(\ee^{-\aleph|t|})}\,,\qquad t\to\pm\infty\,,\\
q_\mkdv(x,t;\Lambda,\Omega)
 & = \sum_{j = 1}^J q_\mkdv(x,t;\Lambda_j,\Omega_j^{(\pm)}) + \myblue{\O(\ee^{-\aleph|t|})}\,,\qquad t\to\pm\infty\,,
\end{aligned}
\end{equation}
where \myblue{$\aleph$ is a positive constant},
and for both systems the modified norming constants are given by
\begin{equation}
\omega_{j,k}^{(+)}
 \coloneqq \omega_{j,k}\delta_j(\lambda_{j,k})^2\delta_1(\lambda_{j,k})^{-2}\,,\qquad
\omega_{j,k}^{(-)}
 \coloneqq \omega_{j,k}\delta_{j+1}(\lambda_{j,k})^{-2}\,,\qquad
 k = 1,2,\dots, N_j\,.
\end{equation}
\begin{remark}
It is important to point out that the results for the focusing NLS equation and the mKdV equation are formally identical, but are slightly different from the ones for the MBEs as shown in Theorems~\ref{thm:N-DSG} and~\ref{thm:soliton-asymptotics}. The reason is that the spatial and temporal variables are swapped between the MBEs and NLS/mKdV equations.
\end{remark}
One is able to obtain the $N$th-order solitons and soliton gases for the focusing NLS equation and the complex mKdV equation, again, by replacing $\theta$ and $\Lambda$ in Theorem~\ref{thm:N-order-soliton:solution-formula} and RHP~\ref{rhp:soliton-gas:inftyDSG_jump_form}. The explicit results are omitted for brevity.

Finally, for demonstrative purposes, we present figures of exact solutions and our theoretical results for the focusing NLS and mKdV equations. Figure~\ref{f:nls} contains results for the focusing NLS equation~\eqref{e:nls}: (Left) an exact $3$-DSG solution by replacing $\theta_\mbe$ with $\theta_\nls$ and proper $\Lambda$ in Theorem~\ref{thm:N-soliton-formula}, with its center shown as the red dashed line from Equation~\eqref{e:nls-mkdv-center}; (Center) an exact $5$-soliton solution with three DSGs by replacing $\theta_\mbe$ with $\theta_\nls$ and proper $\Lambda$ in Theorem~\ref{thm:N-soliton-formula}, with the asymptotic center shown as the red dashed lines from Equations~\eqref{e:nls-mkdv-center} and~\eqref{e:nls-mkdv-asymptotics}; (Right) an exact $4$th-order soliton solution by replacing $\theta_\mbe$ with $\theta_\nls$ in Theorem~\ref{thm:N-order-soliton:solution-formula}. Similarly, Figure~\ref{f:mkdv} contains results for the complex mKdV equation~\eqref{e:mkdv}: (Left) an exact $3$-DSG solution by replacing $\theta_\mbe$ with $\theta_\mkdv$ and proper $\Lambda$ in Theorem~\ref{thm:N-soliton-formula}, with its center shown as the red dashed line from Equation~\eqref{e:nls-mkdv-center}; (Center) an exact $4$-soliton solution with two DSGs by replacing $\theta_\mbe$ with $\theta_\mkdv$ and proper $\Lambda$ in Theorem~\ref{thm:N-soliton-formula}, with the asymptotic center shown as the red dashed lines from Equations~\eqref{e:nls-mkdv-center} and~\eqref{e:nls-mkdv-asymptotics}; (Right) an exact $4$th-order soliton by substituting $\theta_\mkdv$ in Theorem~\ref{thm:N-order-soliton:solution-formula}.

\begin{figure}[tp]
\centering
\includegraphics[width=0.31\textwidth]{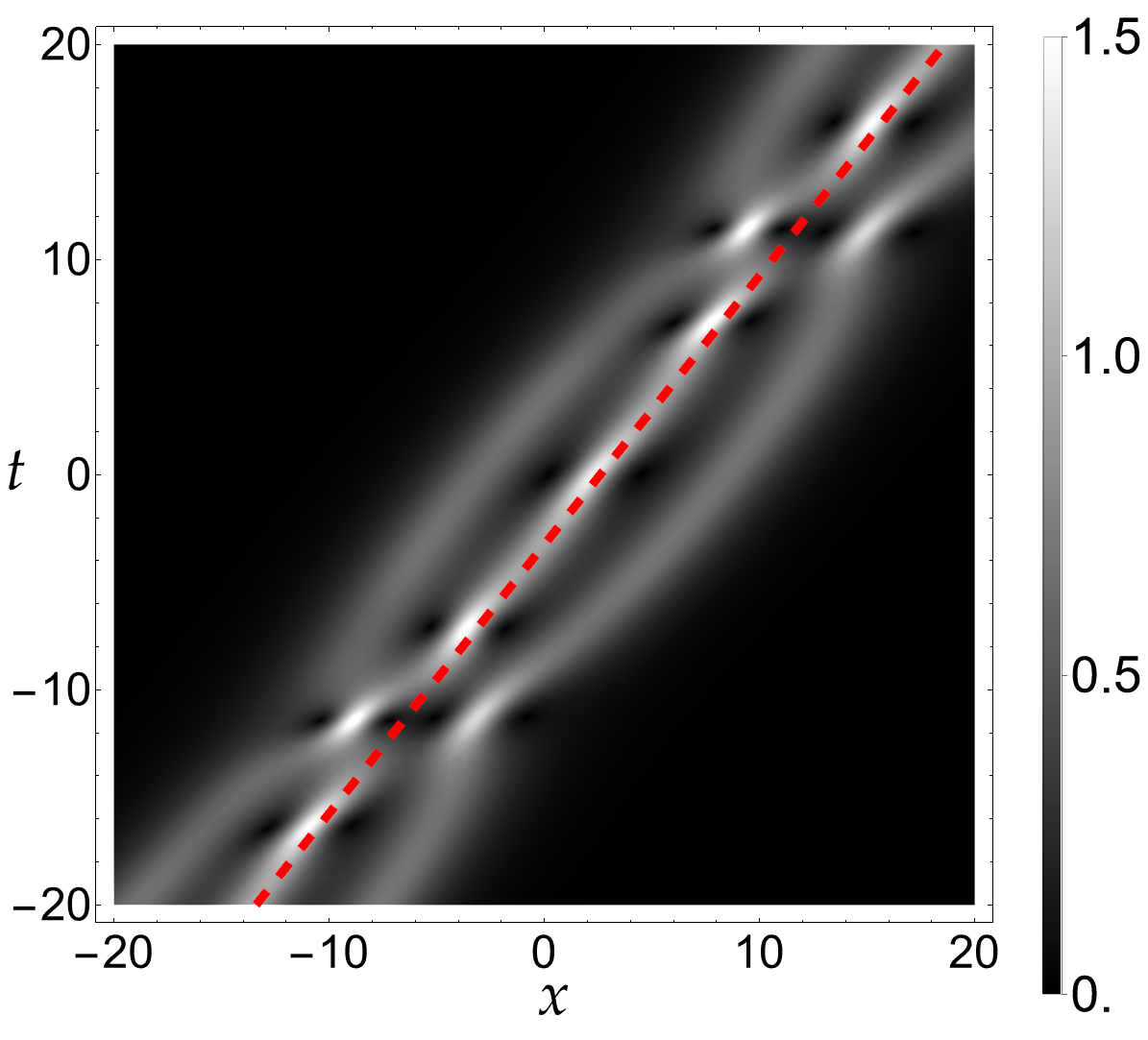}\quad
\includegraphics[width=0.31\textwidth]{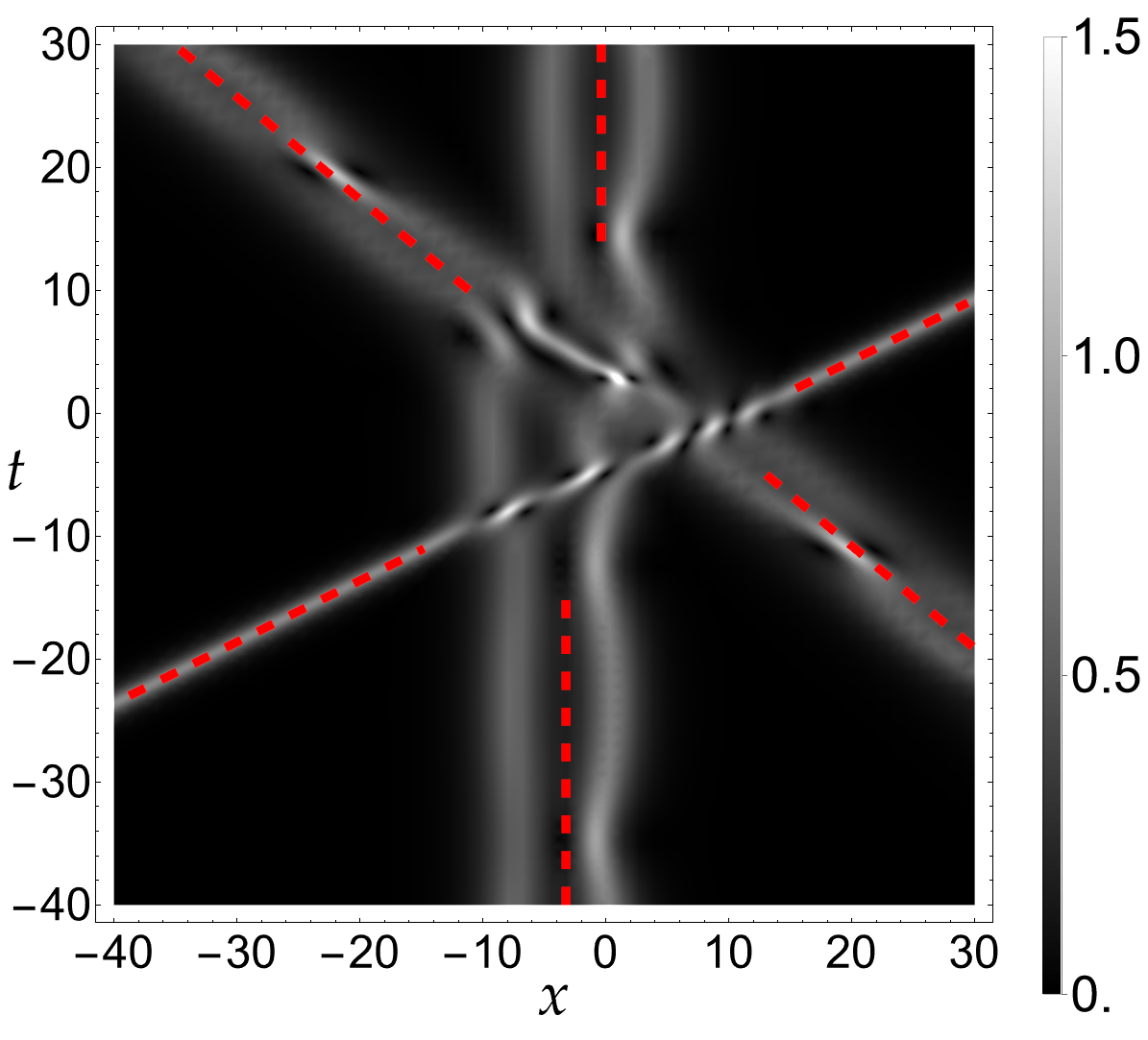}\quad
\includegraphics[width=0.3\textwidth]{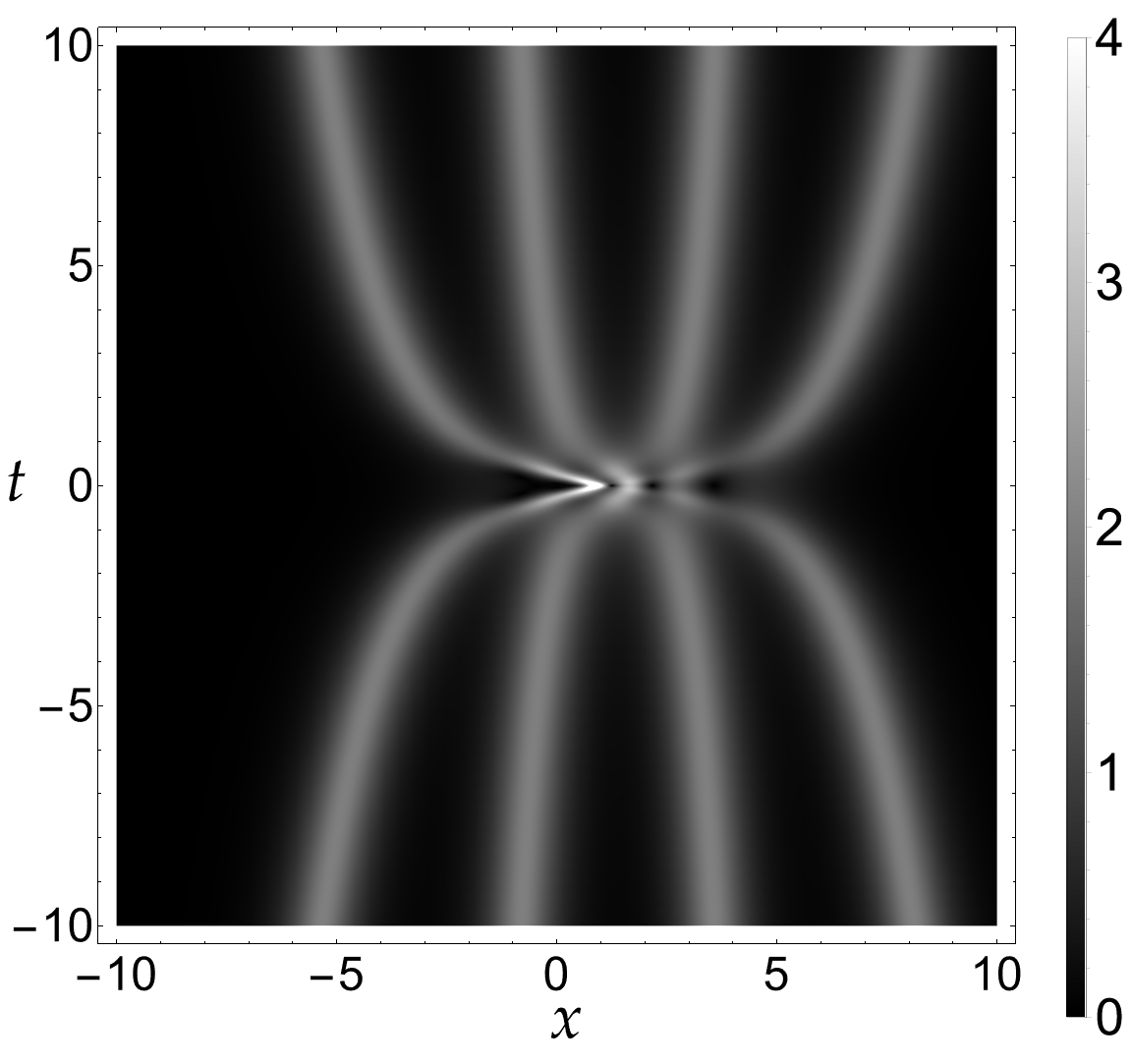}
\caption{
    Density plots of exact solutions $|q(x,t)|$ of the focusing NLS equation~\eqref{e:nls} from Theorems~\ref{thm:N-soliton-formula} and~\ref{thm:N-order-soliton:solution-formula}.
    Left: A $3$-DSG with parameters $\lambda_{1,1} = 0.2+0.6\ii$, $\lambda_{1,2} = 0.2+0.4\ii$, $\lambda_{1,3} = 0.2+0.3\ii$, $\omega_{1,1} = -1$, and $\omega_{1,2} = \lambda_{1,3} = \ee$. The red dashed line is the DSG center from Equation~\eqref{e:nls-mkdv-center}.
    Center: A $5$-soliton solution with three DSGs and parameters $\lambda_{1,1} = -0.3+0.2\ii$, $\lambda_{1,2} = -0.3+0.4\ii$, $\lambda_{2,1} = 0.3\ii$, $\lambda_{2,2} = 0.4\ii$, $\lambda_{3,1} = 0.5+0.4\ii$, $\omega_{1,1} = \omega_{1,2} = 1$, $\omega_{2,1} = \omega_{2,2} = \ee^5$, and $\omega_{3,1} = \ee^{-6}$. The red dashed lines are the asymptotic DSG centers from Equations~\eqref{e:nls-mkdv-center} and~\eqref{e:nls-mkdv-asymptotics}.
    Right: A $4$th-order soliton solution with $\lambda_\circ = \ii$,  $\omega_{\circ,0} = 1$, and $\omega_{\circ,k} = 0$ for $1\le k\le 3$.
}
\label{f:nls}
%\end{figure}
\vspace*{2ex}
%\begin{figure}[tp]
\centering
\includegraphics[width=0.31\textwidth]{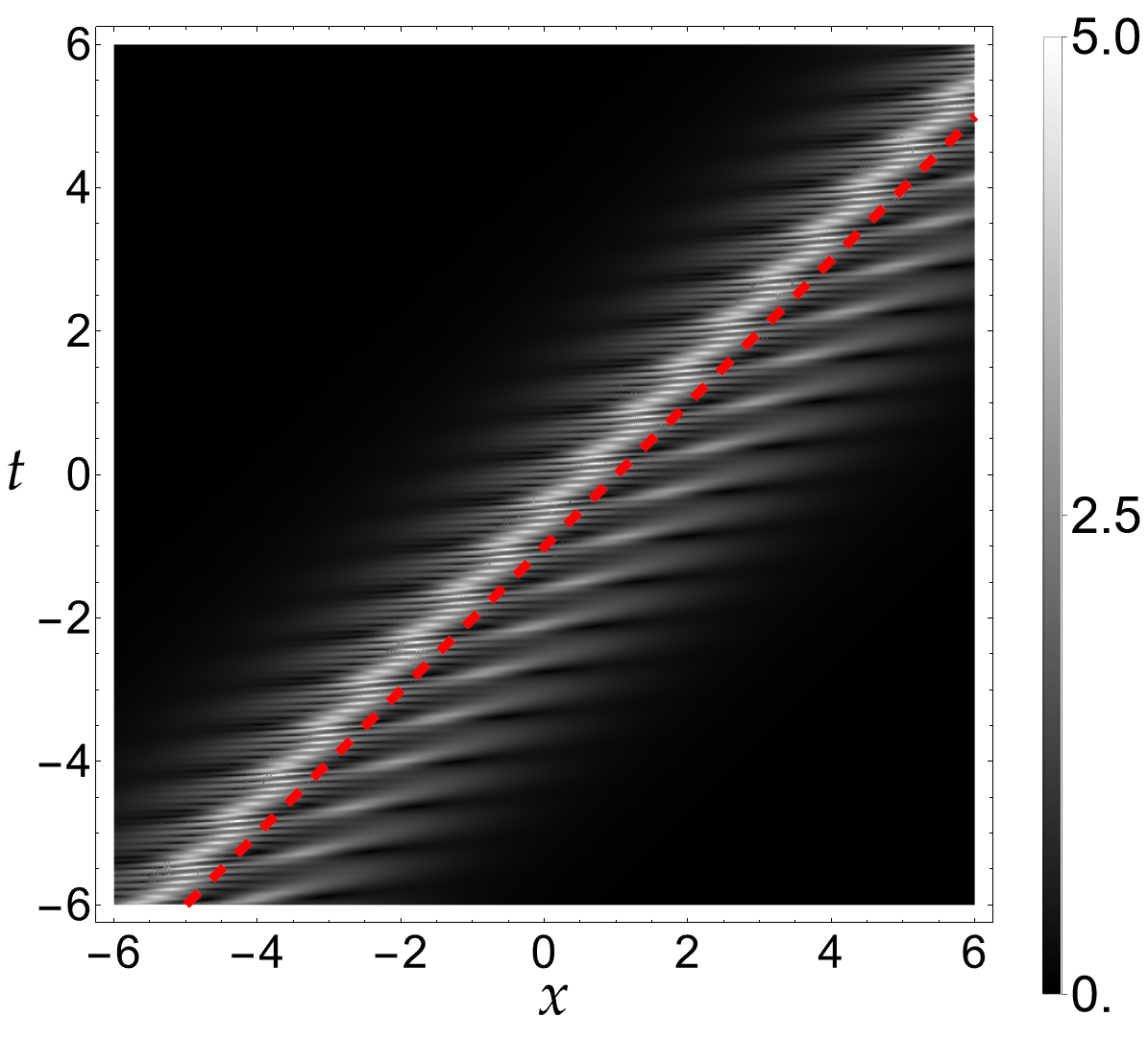}\quad
\includegraphics[width=0.3\textwidth]{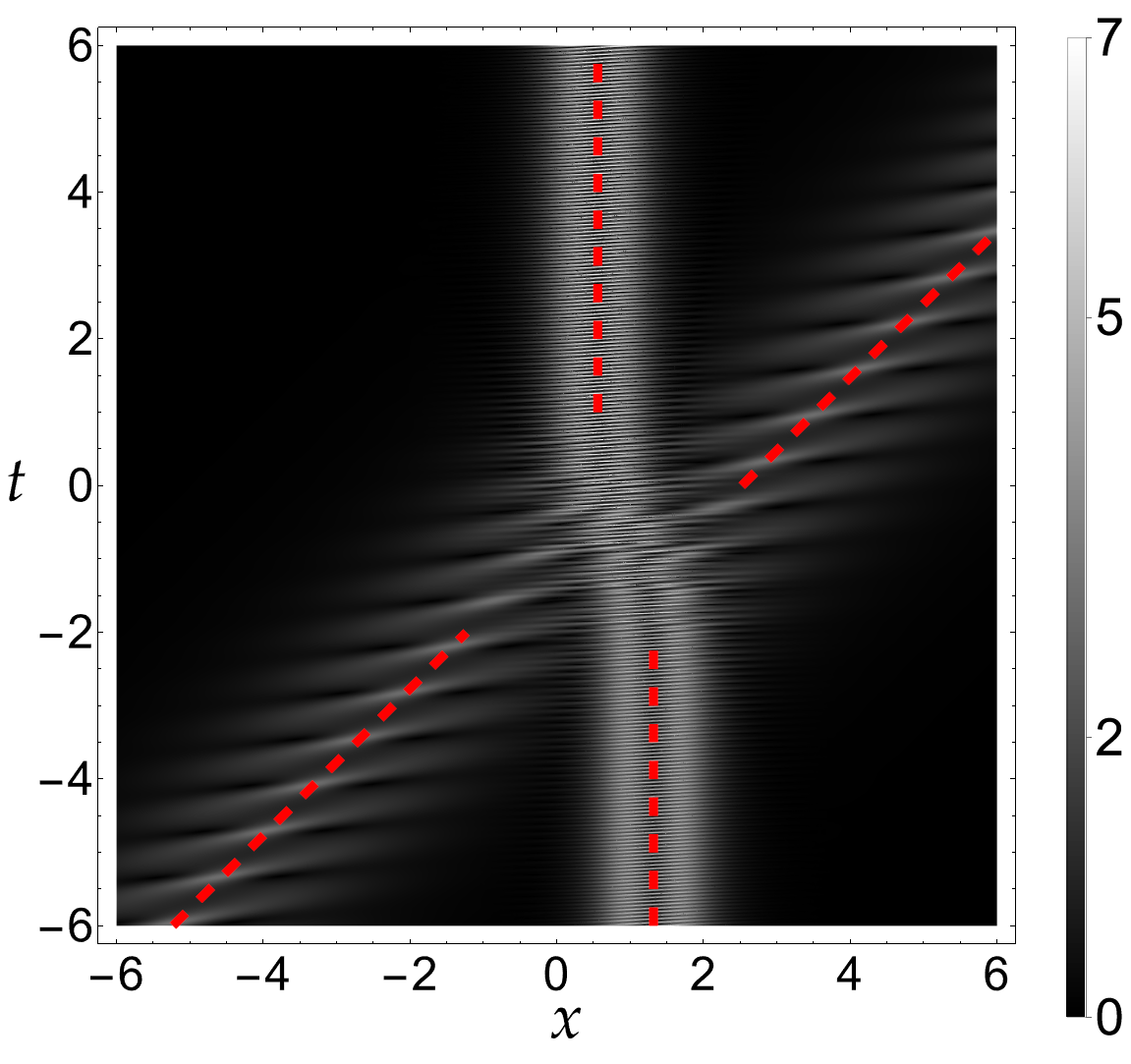}\quad
\includegraphics[width=0.3\textwidth]{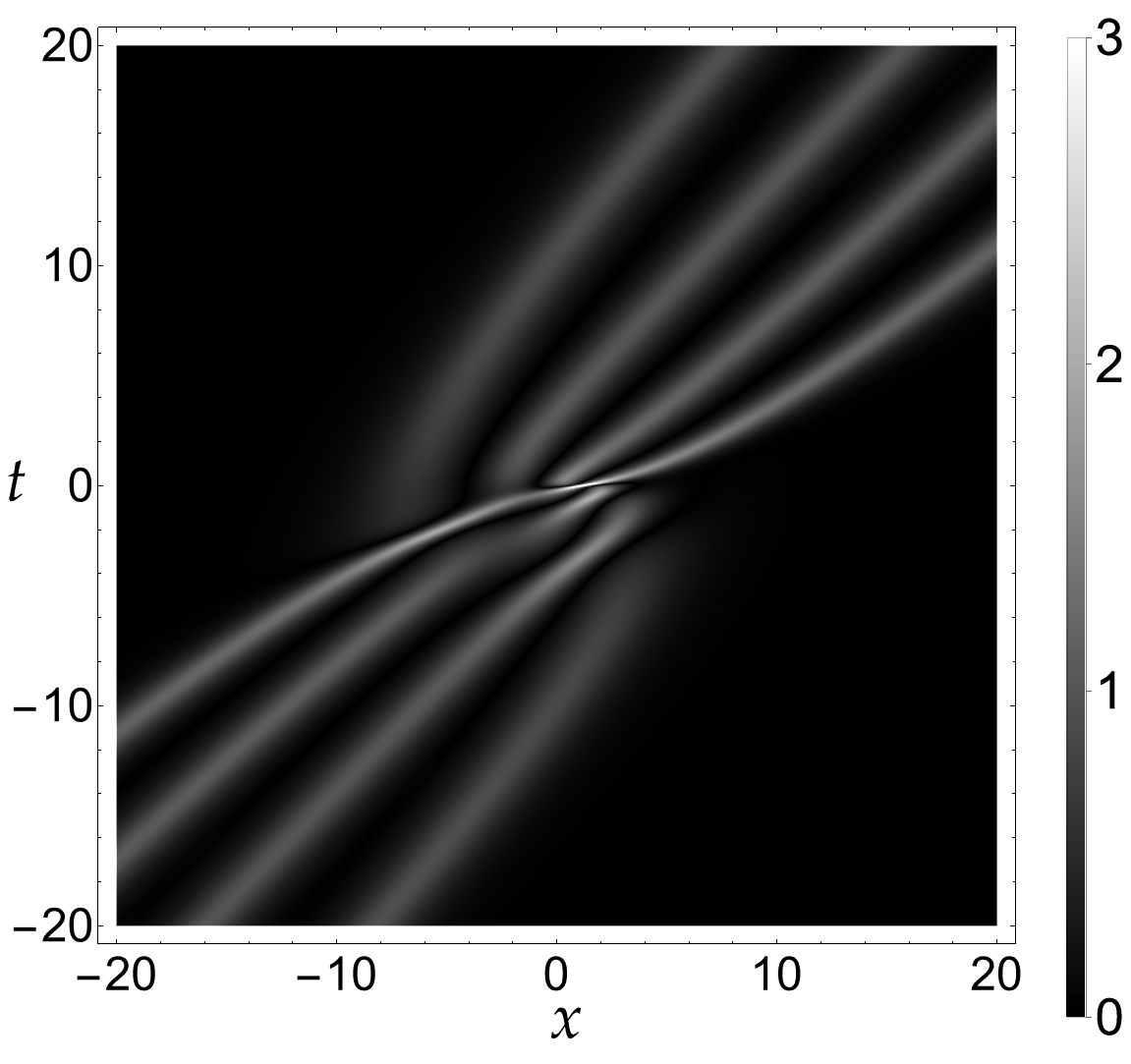}
\caption{
    Density plots of exact solutions $|q(x,t)|$ of the complex mKdV equation~\eqref{e:mkdv} from Theorems~\ref{thm:N-soliton-formula} and~\ref{thm:N-order-soliton:solution-formula}.
    Left: A $3$-DSG with parameters $\lambda_{1,1} = \ii/2$, $\lambda_{1,2} = 1/2+\ii$, $\lambda_{1,3} = 1 + \ii\sqrt{13}/2$, and $\omega_{1,1} = \omega_{1,2} = \omega_{1,3} = 0$. The red dashed line denotes the DSG center from Equation~\eqref{e:nls-mkdv-center}.
    Center: A $4$-soliton solution with two DSGs and parameters $\lambda_{1,1} = -1 + \ii\sqrt3$, $\lambda_{1,2} = 1 + \ii\sqrt3$, $\lambda_{2,1} = -1/2 + \ii$, $\lambda_{2,2} = \ii/2$, and $\omega_{j,k} = 2$ for all $j,k = 1,2$. The red dashed lines denote the asymptotic DSG centers from Equations~\eqref{e:nls-mkdv-center} and~\eqref{e:nls-mkdv-asymptotics}.
    Right: A $4$th-order soliton solution with $\lambda_\circ = \ii/2$, $\omega_{\circ,0} = 1$, and $\omega_{\circ,k} = 0$ for $1\le k\le 3$.
}
\label{f:mkdv}
\end{figure}

\begin{remark}
It is demonstrated that one is able to expand the discussion in this section without much effort, and obtain exact solutions and asymptotic results to even higher order systems in the focusing Zakharov-Shabat AKNS hierarchy as far as one wishes. The generalization to other hierarchies on DSGs/soliton asymptotics/high-order solitons/soliton gases can also be achieved following the framework.
\end{remark}

%%%%%%%%%%%%%%%%%%%%%%%%%%%%%%%%%%%%%%%%%%%%%%%%%%%%%%%%%%%%%%%%%%%%%%%%%%%%%%%%
\section{Reconstruction and soliton solution {formul\ae}}
\label{s:reconstruction-soiton-solution-formula}

%%%%%%%%%%%%%%%%%%%%%%%%%%%%%%%%%%%%%%%%%%%%%%%%%%%%%%%%%%%%%%%%%%%%%%%%%%%%%%%%
\subsection{Proof of Lemma~\ref{thm:reconstruction}}
\label{s:proof-reconstruction-formula}

We prove Lemma~\ref{thm:reconstruction} in three steps:
\begin{enumerate}
\item
We first derive the reconstruction formula from RHP~\ref{rhp:N-soliton-jump-form} via the dressing method.
\item
We then show the equivalence between RHPs~\ref{rhp:N-soliton-jump-form} and~\ref{rhp:N-soliton-residue-form}, i.e., both RHPs generate identical solutions of MBEs.
\item
We finally show that the MBEs solutions obtained from RHP~\ref{rhp:N-soliton-jump-form} exhibit the correct boundary condition $q(t,z)\to0$, $P(t,z)\to0$ and $D(t,z)\to D_-$ as $t\to-\infty$.
\end{enumerate}

For statement (1):
given RHP~\ref{rhp:N-soliton-jump-form},
one defines a new matrix function as
$\N(\lambda;t,z) \coloneqq \M(\lambda;t,z)\ee^{\ii\theta(\lambda;t,z)\sigma_3}$,
which is analytic for
\myblue{
$\lambda \in \Complex \setminus (\{0\}\bigcup\{\partial D_{\lambda_{j,k}}^\epsilon\bigcup\partial D_{\conj{\lambda_{j,k}}}^\epsilon\}_{\lambda_{j,k}\in\Lambda})$,
}
and which has the asymptotic behavior that
\myblue{
$\N(\lambda;t,z)\ee^{-\ii\theta(\lambda;t,z) \sigma_3} = \I + \lambda^{-1}\M_1(t,z) + o(\lambda^{-1})$
}
as $\lambda\to\infty$ with
\myblue{
$\M(\lambda;t,z) = \I + \lambda^{-1}\M_1(t,z) + \O(\lambda^{-2})$.
}
It is easy to check that the jump condition for $\N(\lambda;t,z)$ on all small circles are independent of $(t,z)\in\Real^2$.
Therefore, the matrix function $\N_t(\lambda;t,z)\N(\lambda;t,z)^{-1}$
is analytic in the whole complex plane with the possible exception at the origin.
One easily checks that the singularity at zero is removable.
Using the asymptotic behavior as $\lambda\to\infty$,
\myblue{Liouville's theorem}
shows that $\N_t(\lambda;t,z)\N(\lambda;t,z)^{-1}$
is a linear function of the form
\myblue{
$\ii \lambda\sigma_3 + \ii[\M_1(t,z),\sigma_3]$.
Defining a matrix
$\Q(t,z) \coloneqq \lim_{\lambda\to\infty}\ii\lambda[\M(\lambda;t,z),\sigma_3]$,
one obtains the first relation
$\N_t(\lambda;t,z) = (\ii\lambda\sigma_3 + \Q(t,z))\N(\lambda;t,z)$.
}
Similarly,
the matrix function $\N_z(\lambda;t,z)\N(\lambda;t,z)^{-1}$
is analytic except for a simple pole at the origin,
and it vanishes as $\lambda\to\infty$.
Therefore,
\myblue{Liouville's theorem}
yields that the product has the form
\myblue{
$-\ii D_-\M(0;t,z)\sigma_3\M(0;t,z)^{-1}/(2\lambda)$.
Defining a matrix $\brho(t,z) \coloneqq D_-\M(0;t,z)\sigma_3\M(0;t,z)^{-1}$,
one obtains the second relation
$\N_z(\lambda;t,z) = -\ii\brho(t,z)/(2\lambda)\N(\lambda;t,z)$.
Note that the overdetermined system of both relations is nothing else but the Lax pair in Equation~\eqref{e:laxpair}.
The compatibility condition $\N_{tz}(\lambda;t,z) = \N_{zt}(\lambda;t,z)$ requires that
$\Q(t,z)$ and $\brho(t,z)$ solve the matrix system
$\Q_z(t,z) = \frac12[\brho(t,z),\sigma_3]$ and
$\brho_t(t,z) = [\Q(t,z),\brho(t,z)]$,
whose component form is identically the MBEs~\eqref{e:mbe}.
Thus, we have shown that
the solution $\M(\lambda;t,z)$ of a given RHP~\ref{rhp:N-soliton-jump-form}
can produce a solution to MBEs~\eqref{e:mbe}.
}

For statement (2): Let $\M(\lambda;t,z)$ be the solution of RHP~\ref{rhp:N-soliton-jump-form}. One can define a new matrix function as \begin{equation}
\label{e:reconstruction-soiton-solution-formula:equivlent-RHPs-definition}
\N(\lambda;t,z) \coloneqq \begin{cases}
\M(\lambda;t,z)\V_{j,k}(\lambda;t,z)\,, & \lambda\in D_{\lambda_{j,k}}^\epsilon\,,\\
\M(\lambda;t,z)\V_{j,k}(\conj\lambda;t,z)^{-\dagger}\,, & \lambda\in D_{\conj{\lambda_{j,k}}}^\epsilon\,,\\
\M(\lambda;t,z)\,, & \mbox{otherwise}\,.
\end{cases}
\end{equation}
It is easily checked that $\N(\lambda;t,z)$ does not admit any jumps, and has simple poles at each eigenvalue point and its complex conjugate. The pole condition of $\N(\lambda;t,z)$ is \myblue{exactly} the ones in RHP~\ref{rhp:N-soliton-residue-form}. Therefore, $\N(\lambda;t,z)$ is a solution of RHP~\ref{rhp:N-soliton-residue-form}. Similarly, by taking a solution RHP~\ref{rhp:N-soliton-residue-form} and reverse the definition of Equation~\eqref{e:reconstruction-soiton-solution-formula:equivlent-RHPs-definition}, one gets a solution of RHP~\ref{rhp:N-soliton-jump-form}. In other words, the solutions of the two RHPs have one-to-one correspondence.
The uniqueness of solutions of RHP~\ref{rhp:N-soliton-jump-form} yields the uniqueness of RHP~\ref{rhp:N-soliton-residue-form}. Because the two RHP solutions are identical for $|\lambda|\gg1$ and for $|\lambda|\ll1$ in the definition~\eqref{e:reconstruction-soiton-solution-formula:equivlent-RHPs-definition} with sufficiently small $\epsilon > 0$, both yield identical solutions to MBEs~\eqref{e:mbe} via the reconstruction formula from Lemma~\ref{thm:reconstruction}. Thus, RHPs~\ref{rhp:N-soliton-residue-form} and~\ref{rhp:N-soliton-jump-form} are equivalent.

For statement (3):
We apply the nonlinear steepest descent method to RHP~\ref{rhp:N-soliton-jump-form} as
$t\to-\infty$
to calculate the asymptotics for the solutions of MBEs for all $z\ge0$.
The inequality $t < 0$ yields $\Re(\ii\lambda t) = -\Im(\lambda)t > 0$ for $\lambda$ in the upper half plane.
\myblue{
Consequently,
one obtains that
$\ee^{-2\ii\theta(\lambda;t,z)} = \O(\ee^{-\aleph |t|}) \to 0$
uniformly as $t\to-\infty$ with a positive constant $\aleph$,
for all $\lambda\in\partial D_{\lambda_{j,k}}^\epsilon$.
Therefore,
all jumps in the upper half plane of RHP~\ref{rhp:N-soliton-jump-form}
admit uniform limit $\|\V_{j,k}(\lambda;t,z)^{-1}-\I\| = \O(\ee^{-\aleph |t|})\to0$ as $t\to-\infty$.
Similar argument shows that all jumps in the lower half plane admit uniform limit
$\|\V_{j,k}(\conj\lambda;t,z)^\dagger-\I\| = \O(\ee^{-\aleph |t|})\to0$
as $t\to-\infty$ as well.
As RHP~\ref{rhp:N-soliton-jump-form} becoming a small-norm problem,
one concludes that $\M(\lambda;t,z) = \I + \O(\ee^{-\aleph |t|}) \to0$ as $t\to-\infty$,
}
yielding the desired boundary conditions in the Cauchy problem~\eqref{e:mbe} by the reconstruction formula from Lemma~\ref{thm:reconstruction}.

%%%%%%%%%%%%%%%%%%%%%%%%%%%%%%%%%%%%%%%%%%%%%%%%%%%%%%%%%%%%%%%%%%%%%%%%%%%%%%%%
\subsection{Proof of Theorem~\ref{thm:N-soliton-formula}}
\label{s:derivation-Nsoliton}

In this section, we suppress the $(t,z)$ parametric dependence in all relative quantities for simplicity. In particular, we write $\M(\lambda;t,z) = \M(\lambda)$.
We start from RHP~\ref{rhp:N-soliton-residue-form}.
The solution can be written as
\begin{equation}
\M(\lambda)
 = \I + \sum_{\lambda_{j,k}\in\Lambda}\frac{\Res_{\lambda = \lambda_{j,k}}\M(\lambda)}{\lambda-\lambda_{j,k}} + \sum_{\lambda_{j,k}\in\Lambda}\frac{\Res_{\lambda = \conj{\lambda_{j,k}}}\M(\lambda)}{\lambda-\conj{\lambda_{j,k}}}\,.
\end{equation}
Again, using the residue conditions in RHP~\ref{rhp:N-soliton-residue-form}, the above equation can be rewritten as
\begin{equation}
\M(\lambda)
 = \I +\sum_{\lambda_{j,k}\in\Lambda}\frac{1}{\lambda-\lambda_{j,k}}\lim_{\lambda\to\lambda_{j,k}}\M(\lambda)\bpm 0 & 0 \\ \omega_{j,k}\ee^{-2\ii\theta(\lambda)} & 0 \epm
+ \sum_{\lambda_{j,k}\in\Lambda}\frac{1}{\lambda - \conj{\lambda_{j,k}}}\lim_{\lambda\to\conj{\lambda_{j,k}}}\M(\lambda)\bpm 0 & - \conj{\omega_{j,k}}\ee^{2\ii\theta(\conj{\lambda})t} \\ 0 & 0 \epm\,.
\end{equation}
Looking at the first and second columns and denoting them as $\M_1(\lambda)$ and $\M_2(\lambda)$, respectively, one has
\begin{equation}
\label{e:M-sol}
\begin{aligned}
\M_1(\lambda)
 = \bpm 1 \\ 0 \epm + \sum_{\lambda_{j,k}\in\Lambda} \frac{\omega_{j,k}\ee^{-2\ii\theta_{j,k}}}{\lambda-\lambda_{j,k}}\M_2(\lambda_{j,k})\,,\qquad
\M_2(\lambda)
 = \bpm 0 \\ 1 \epm - \sum_{\lambda_{j,k}\in\Lambda} \frac{\conj{\omega_{j,k}}\ee^{2\ii\conj{\theta_{j,k}}}}{\lambda - \conj{\lambda_{j,k}}}\M_1(\conj{\lambda_{j,k}})\,,
\end{aligned}
\end{equation}
where we recall the definition $\theta_{j,k}\coloneqq \theta(\lambda_{j,k})$ from Theorem~\ref{thm:N-soliton-formula}.
Recall the \myblue{discussion} in Remark~\ref{thm:M-symmetry} on the symmetries of entries of $\M(\lambda)$. Hence, it is sufficient to look at the first row of Equation~\eqref{e:M-sol}. Moreover, by substituting $\lambda = \conj{\lambda_{j,k}}$ in $M_{1,1}(\lambda)$ and $\lambda = \lambda_{j,k}$ in $M_{1,2}(\lambda)$, respectively, for $k = 1,2,\dots,N_j$ and $j = 1,2,\dots,J$, one obtains the following $N\times N$ linear system,
\begin{equation}
\label{e:M1112-system}
\begin{aligned}
M_{1,1}(\conj{\lambda_{j,k}})
 & = 1 + \sum_{\lambda_{j',k'}\in\Lambda}c_{j',k'}(\conj{\lambda_{j,k}})M_{1,2}(\lambda_{j',k'})\,,\\
M_{1,2}(\lambda_{j,k})
 & = - \sum_{\lambda_{j',k'}\in\Lambda}c_{j',k'}^*(\lambda_{j,k})M_{1,1}(\conj{\lambda_{j',k'}})\,,
\end{aligned}
\end{equation}
where
\myblue{
the quantity $c_{j,k}(\lambda)$ is defined in Equation~\eqref{e:C-Gamma-definition},
}
and $k = 1,2,\dots,N_j$ and $j = 1,2,\dots,J$.
In order to rewrite the above linear system in matrix form, one defines vectors of unknowns
\begin{equation}
\begin{aligned}
\overrightarrow{M_{1,1}(\conj{\lambda_{j,k}})}
 & \coloneqq \bpm M_{1,1}(\conj{\lambda_{1,1}}) & \cdots & M_{1,1}(\conj{\lambda_{1,N_1}}) & \cdots & M_{1,1}(\conj{\lambda_{J,1}}) & \cdots & M_{1,1}(\conj{\lambda_{J,N_J}}) \epm^\top\,,\\
\overrightarrow{M_{1,2}(\lambda_{j,k})}
 & \coloneqq \bpm M_{1,2}(\lambda_{1,1}) & \cdots & M_{1,2}(\lambda_{1,N_1}) & \cdots & M_{1,2}(\lambda_{J,1}) & \cdots & M_{1,2}(\lambda_{J,N_J}) \epm^\top\,.
\end{aligned}
\end{equation}
As a result, the linear system~\eqref{e:M1112-system} becomes
\begin{equation}
\overrightarrow{M_{1,1}(\conj{\lambda_{j,k}})} = \vbone + \bGamma \overrightarrow{M_{1,2}(\lambda_{j,k})}\,,\qquad
\overrightarrow{M_{1,2}(\lambda_{j,k})} = -\conj{\bGamma} \overrightarrow{M_{1,1}(\conj{\lambda_{j,k}})}\,,
\end{equation}
where the vector $\vbone$ and the matrix $\bGamma$ are defined in Equation~\eqref{e:C-Gamma-definition}.
Cramer's rule immediately yields the solutions to the above system
\begin{equation}
M_{1,1}(\conj{\lambda_{j,k}})
 = \frac{\det(\I + \bGamma\conj{\bGamma})_{j,k}}{\det(\I + \bGamma\conj{\bGamma})}\,,\qquad
M_{1,2}(\lambda_{j,k})
 = -\frac{\det(\conj{\bGamma^{-1}} + \bGamma)_{j,k}}{\det(\conj{\bGamma^{-1}} + \bGamma)}\,,
\end{equation}
where $\det(\A)_{j,k}$ denotes the determinant of the matrix that is obtained by replacing the $(N_1+\dots+N_{j-1}+k)$-th column of the matrix $\A$ by $\vbone$.

Combining the above quantities with the first row of Equation~\eqref{e:M-sol}, one has
\begin{equation}
\label{e:M-sol1}
\begin{aligned}
M_{1,1}(\lambda;t,z)
 & = 1 + \frac{\det\bpm 0 & \C \\ \vbone & \conj{\bGamma^{-1}} + \bGamma \epm}{\det(\conj{\bGamma^{-1}} + \bGamma)}\\
 & = 1 + \frac{\det(\conj{\bGamma^{-1}} + \bGamma - \myblue{\vbone\C}) - \det(\conj{\bGamma^{-1}} + \bGamma)}{\det(\conj{\bGamma^{-1}} + \bGamma)}\\
 & = \frac{\det(\conj{\bGamma^{-1}} + \bGamma - \myblue{\vbone\C})}{\det(\conj{\bGamma^{-1}} + \bGamma)}\\
 & = \frac{\det(\I - \myblue{\vbone\C\conj{\bGamma}} + \bGamma\conj{\bGamma})}{\det(\I + \bGamma\conj{\bGamma})} \\
M_{1,2}(\lambda;t,z)
 & = \frac{\det\bpm 0 & \conj{\C} \\ \vbone & \I + \bGamma\conj{\bGamma}\epm}{\det(\I + \bGamma\conj{\bGamma})}
 = \frac{\det(\I - \myblue{\vbone\conj{\C}} + \bGamma\conj{\bGamma})}{\det(\I + \bGamma\conj{\bGamma})} - 1\,,
\end{aligned}
\end{equation}
where $\C$ is defined in Theorem~\ref{thm:N-soliton-formula}.
The reconstruction formula~\eqref{e:reconstructon-matrix} yields the explicit {formul\ae} for the MBEs solutions presented in Theorem~\ref{thm:N-soliton-formula}.

\myblue{
We next prove $\det(\I + \bGamma\conj{\bGamma})\ne0$ for all $(t,z)\in\Real^2$,
following a similar calculation for the NLS equation but for high-order solitons~\cite{lsg2024}.
First, we factor the matrix $\bGamma$,
\begin{equation}
\everymath{\displaystyle}
\begin{aligned}
\bGamma
 & = \bGamma^{(1)}\bGamma^{(2)}(t,z)\,,\\
\bGamma^{(1)}
 & \coloneqq \bpm
    \frac{1}{\conj{\lambda_{1,1}} - \lambda_{1,1}} & \cdots & \frac{1}{\conj{\lambda_{1,1}} - \lambda_{1,N_1}} & \dots & \frac{1}{\conj{\lambda_{1,1}} - \lambda_{J,1}} & \dots & \frac{1}{\conj{\lambda_{1,1}} - \lambda_{J,N_J}} \\
    \vdots & & \vdots & & \vdots & & \vdots \\
    \frac{1}{\conj{\lambda_{J,N_J}} - \lambda_{1,1}} & \cdots & \frac{1}{\conj{\lambda_{J,N_J}} - \lambda_{1,N_1}} & \dots & \frac{1}{\conj{\lambda_{J,N_J}} - \lambda_{1,N_J}} & \dots & \frac{1}{\conj{\lambda_{J,N_J}} - \lambda_{J,N_J}} \epm\,,\\
\bGamma^{(2)}(t,z)
 & \coloneqq \diag(
    \omega_{1,1}\ee^{-2\ii\theta_{1,1}},
    \cdots,
    \omega_{1,N_1}\ee^{-2\ii\theta_{1,N_1}},
    \cdots,
    \omega_{J,1}\ee^{-2\ii\theta_{J,1}},
    \cdots,
    \omega_{J,N_J}\ee^{-2\ii\theta_{J,N_J}}
    )\,.
\end{aligned}
\end{equation}
where we recall $\theta_{j,k} = \theta(\lambda_{j,k};t,z)$.
\begin{lemma}
\label{thm:bGamma1-positive-definite}
The constant matrix $-\ii\bGamma^{(1)}$ is Hermitian positive definite.
\end{lemma}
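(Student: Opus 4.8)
The plan is to identify $-\ii\bGamma^{(1)}$ as a Gram matrix and thereby reduce positive definiteness to the linear independence of a family of exponentials. First I would flatten the double index $(j,k)$ into a single running index, relabeling the $N$ upper-half-plane eigenvalues as $\mu_1,\dots,\mu_N$, which are pairwise distinct with $\Im(\mu_n)>0$. In this notation $\bGamma^{(1)}$ is the Cauchy-type matrix whose $(m,n)$ entry is $1/(\conj{\mu_m}-\mu_n)$. A one-line computation gives $\conj{\bGamma^{(1)}_{n,m}} = 1/(\mu_n-\conj{\mu_m}) = -\bGamma^{(1)}_{m,n}$, so $(\bGamma^{(1)})^\dagger = -\bGamma^{(1)}$; consequently $(-\ii\bGamma^{(1)})^\dagger = \ii(\bGamma^{(1)})^\dagger = -\ii\bGamma^{(1)}$, which establishes that $-\ii\bGamma^{(1)}$ is Hermitian.

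The key step is an integral representation that exposes the Gram structure. Because each $\mu_n$ lies in the open upper half plane, one has $\Im(\mu_n)+\Im(\mu_m)>0$, so the integral $\int_0^\infty \ee^{\ii(\mu_n-\conj{\mu_m})s}\dd s$ converges and equals $\ii/(\mu_n-\conj{\mu_m})$, which is exactly $(-\ii\bGamma^{(1)})_{m,n}$. Setting $f_n(s)\coloneqq \ee^{\ii\mu_n s}$, each $f_n$ lies in $L_2(0,\infty)$ since $|f_n(s)|=\ee^{-\Im(\mu_n)s}$, and the representation becomes
\begin{equation}
(-\ii\bGamma^{(1)})_{m,n} = \int_0^\infty \conj{f_m(s)}\,f_n(s)\dd s = \langle f_m, f_n\rangle_{L_2(0,\infty)}\,,
\end{equation}
so that $-\ii\bGamma^{(1)}$ is precisely the Gram matrix of $\{f_1,\dots,f_N\}$.

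Positive definiteness then follows at once: for any $\mathbf{v}=(v_1,\dots,v_N)^\top\in\Complex^N$ one has $\mathbf{v}^\dagger(-\ii\bGamma^{(1)})\mathbf{v} = \int_0^\infty \big|\sum_{n=1}^N v_n f_n(s)\big|^2\dd s \ge 0$, with equality only if $\sum_n v_n f_n$ vanishes identically on $(0,\infty)$. Since the $\mu_n$ are pairwise distinct (guaranteed by Definition~\ref{def:LambdaOmega}), the exponentials $\{\ee^{\ii\mu_n s}\}$ are linearly independent on any interval, forcing all $v_n=0$. Hence $-\ii\bGamma^{(1)}$ is Hermitian positive definite. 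I expect the only genuine subtlety to be the \emph{strict} (as opposed to merely semi-definite) positivity, which rests entirely on the distinctness of the eigenvalues; verifying the integral identity together with its convergence is routine once the upper-half-plane location of the spectrum is invoked.
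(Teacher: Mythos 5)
Your proposal is correct and follows essentially the same route as the paper's own proof: both realize $-\ii\bGamma^{(1)}$ as the Gram matrix of the exponentials $\ee^{\ii\lambda_{j,k}x}$ in $L_2([0,+\infty))$ (your relabeled $f_n(s)=\ee^{\ii\mu_n s}$), with the same integral identity for the inner products, and both deduce strict positivity from the linear independence of these exponentials, guaranteed by the distinctness of the eigenvalues. Your added explicit checks (skew-Hermiticity of $\bGamma^{(1)}$ and the quadratic-form computation) merely spell out details the paper leaves implicit.
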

\begin{proof}
It can be directly verified that $(-\ii\bGamma^{(1)})^\dagger = -\ii\bGamma^{(1)}$,
so it is Hermitian.
We next prove that it is positive semidefinite.
Let us consider the functions
\begin{equation}
f_{j,k}(x) \coloneqq \ee^{\ii\lambda_{j,k}x}\,,\qquad
x\in[0,+\infty)\,,\qquad
1\le k\le N_j\,,\qquad
1\le j\le J\,.
\end{equation}
Recall that $\Im(\lambda_{j,k}) > 0$,
so
$\int_0^{+\infty}|f_{j,k}(x)|^2\dd x
 = \int_0^{+\infty} \ee^{-2 \Im(\lambda_{j,k})x}\dd x
 = 1/(2\Im(\lambda_{j,k})) < \infty$,
implying that all functions $f_{j,k}(x)$ are square integrable on $[0,+\infty)$.
Let us consider the inner product $<\cdot,\cdot >$ on $L^2([0,+\infty))$ as follows,
\begin{equation}
\label{e:inner-product}
<f_{j,k}(x),f_{j',k'}(x)>
 \coloneqq \int_0^{+\infty} \conj{f_{j,k}(x)}f_{j',k'}(x)\dd x
% = \int_0^{+\infty}\ee^{\ii(\lambda_{j',k'}-\conj{\lambda_{j,k}})x}\dd x
 = -\frac{\ii}{\conj{\lambda_{j,k}} - \lambda_{j',k'}}\,.
\end{equation}
Thus, the matrix $-\ii\bGamma^{(1)}$ can be realized as
\begin{equation}
-\ii\bGamma^{(1)}
 = \bpm
    <f_{1,1}, f_{1,1}> & \cdots & <f_{1,1}, f_{1,N_1}> & \cdots & <f_{1,1}, f_{J,1}> & \cdots & <f_{1,1}, f_{J,N_J}> \\
    \vdots & & \vdots & & \vdots & & \vdots  \\
    <f_{J,N_J}, f_{1,1}> & \cdots & <f_{J,N_J}, f_{1,N_1}> & \cdots & <f_{J,N_J}, f_{J,1}> & \cdots & <f_{J,N_J}, f_{J,N_J}>
    \epm\,.
\end{equation}
So, $-\ii\bGamma^{(1)}$ is a Gram matrix, which is well-known to be positive semidefinite.
Moreover, because $\{\lambda_{j,k}\}$ are distinct,
all functions $\{f_{j,k}(x)\}$ are linear independent from Equation~\eqref{e:inner-product}.
Thus, $-\ii\bGamma^{(1)}$ is positive definite.
\end{proof}
Lemma~\ref{thm:bGamma1-positive-definite} immediately yields:
(i) there exists a Hermitian positive definite square root matrix denoted by $(-\ii\bGamma^{(1)})^{\frac12}$,
which is of course invertible;
(ii) $\conj{(-\ii\bGamma^{(1)})}$ is also Hermitian positive definite.
Because the diagonal matrix $\bGamma^{(2)}(t,z)$ is invertible for all $(t,z)\in\Real^2$,
one concludes that
$\bGamma^{(2)}(t,z)\conj{(-\ii\bGamma^{(1)})}\conj{\bGamma^{(2)}(t,z)}
 = \bGamma^{(2)}(t,z)\conj{(-\ii\bGamma^{(1)})}\bGamma^{(2)}(t,z)^\dagger$ is also positive definite for all $(t,z)\in\Real^2$.
Note that
\begin{equation}
\label{e:det-I+GG}
\begin{aligned}
\det(\I + \bGamma\conj{\bGamma})
 & = \det\big(\I + (-\ii\bGamma^{(1)})\bGamma^{(2)}(t,z)\conj{(-\ii\bGamma^{(1)})\bGamma^{(2)}(t,z)}\big) \\
 & = \det\big(\I + (-\ii\bGamma^{(1)})^{\frac12}\bGamma^{(2)}(t,z)\conj{(-\ii\bGamma^{(1)})}\bGamma^{(2)}(t,z)^\dagger(-\ii\bGamma^{(1)})^{\frac12}\big) \,.
\end{aligned}
\end{equation}
We recall well-known results of two positive definite matrices $\A$ and $\B$:
(i) $\A\B\A$ is also positive definite;
(ii) $\det(\A+\B)\ge\det(\A) + \det(\B)$.
Then, the second matrix in the determinant in Equation~\eqref{e:det-I+GG} is positive definite,
implying
$\det(\I+\bGamma\conj{\bGamma}) > \det\I = 1$ for all $(t,z)\in\Real^2$.
}

%%%%%%%%%%%%%%%%%%%%%%%%%%%%%%%%%%%%%%%%%%%%%%%%%%%%%%%%%%%%%%%%%%%%%%%%%%%%%%%%
\section{Degenerate $N$-soliton group}
\label{s:N-DSG}

This section proves results in Theorem~\ref{thm:N-DSG}, so one focuses on the degenerate $N$-DSGs, with $J = 1$ and $N = N_1$. We recall $|\lambda_{1,k}| = r_1 > 0$ for $k = 1,\dots, N_1$. Recall that all eigenvalues and $\alpha_{1,k}$ are distinct.

%%%%%%%%%%%%%%%%%%%%%%%%%%%%%%%%%%%%%%%%%%%%%%%%%%%%%%%%%%%%%%%%%%%%%%%%%%%%%%%%
\subsection{Localization of $N$-DSG}
\label{s:N-DSG-localization}

\begin{figure}[tp]
\centering
\includegraphics[width=0.32\textwidth]{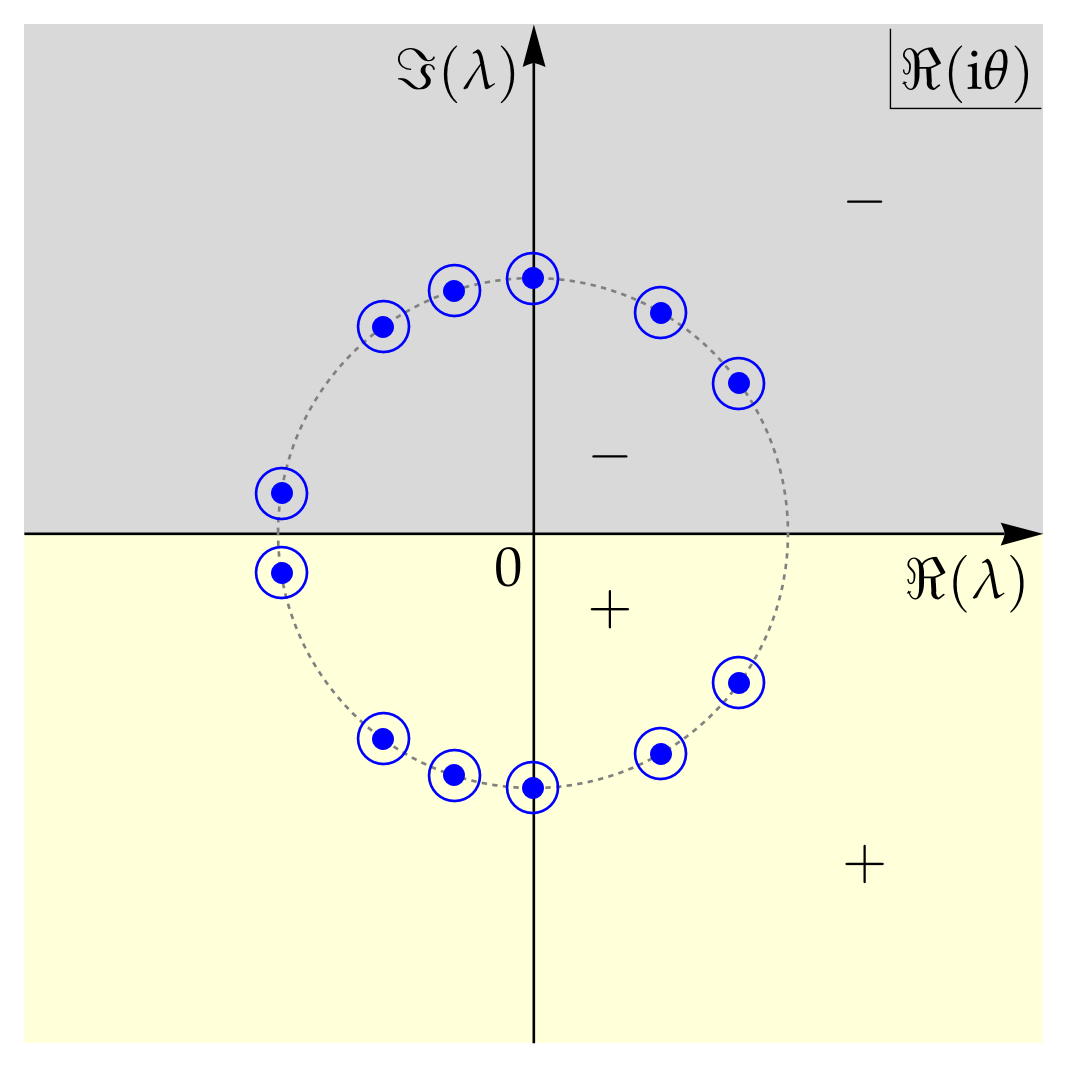}\qquad
\includegraphics[width=0.32\textwidth]{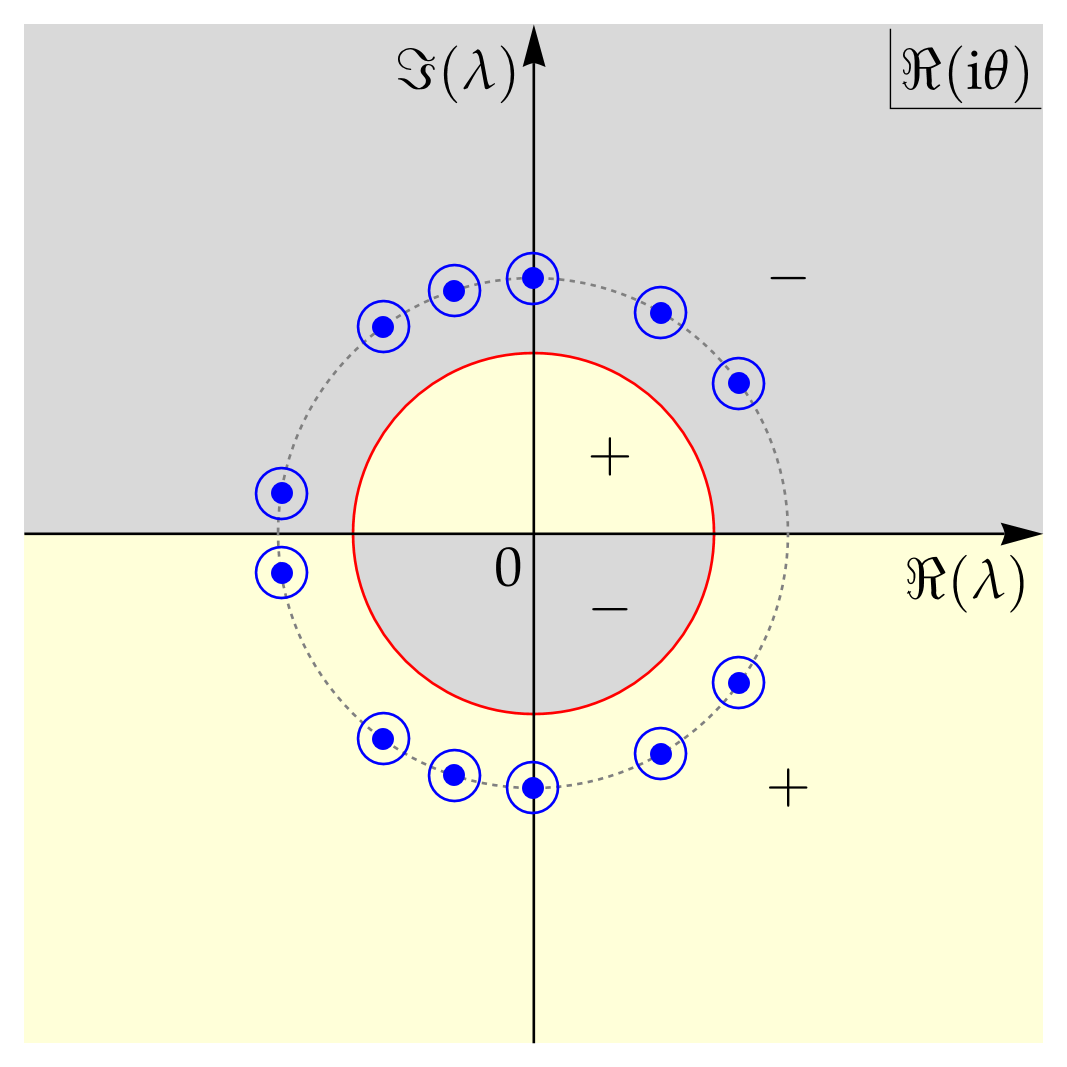}\\[2ex]
\includegraphics[width=0.32\textwidth]{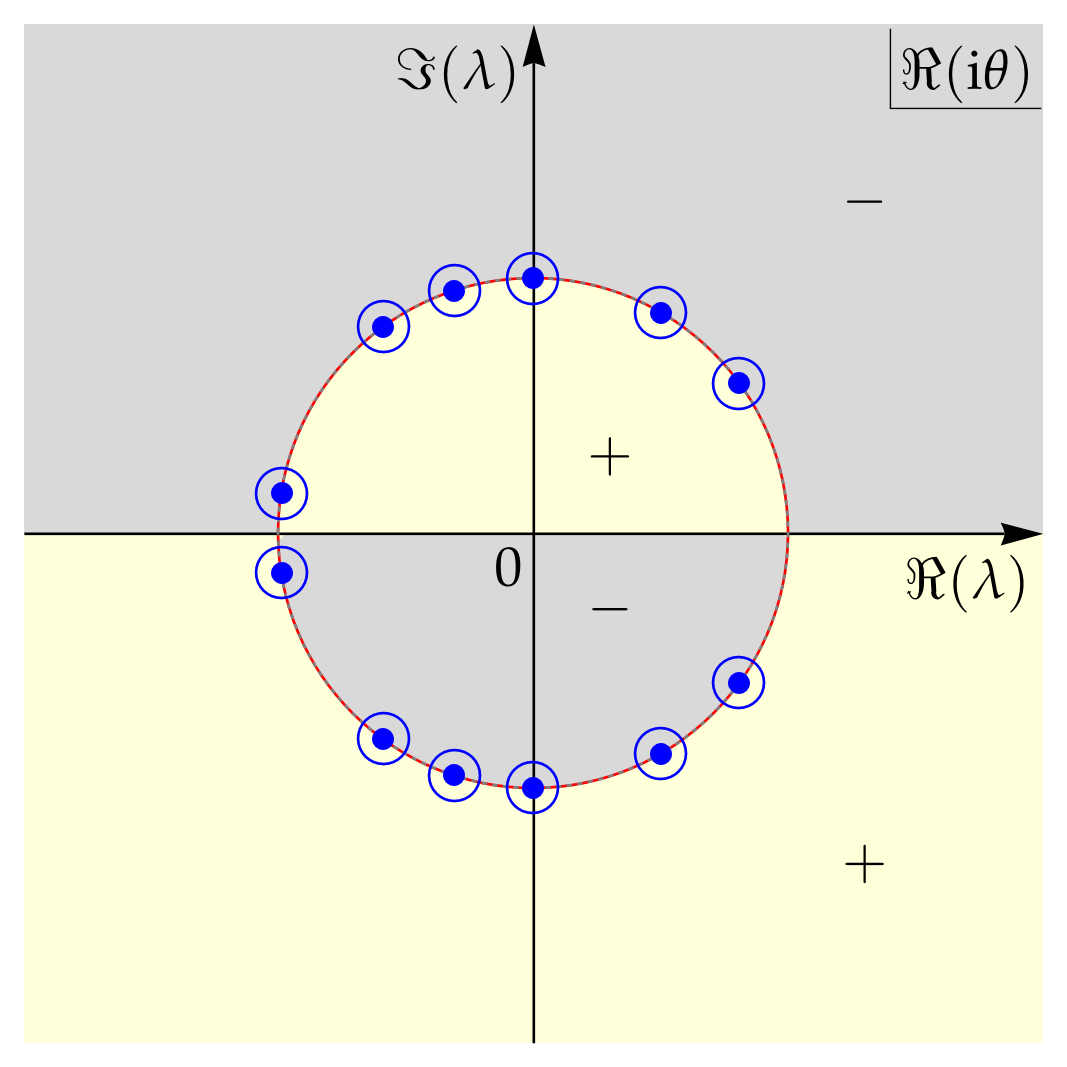}\qquad
\includegraphics[width=0.32\textwidth]{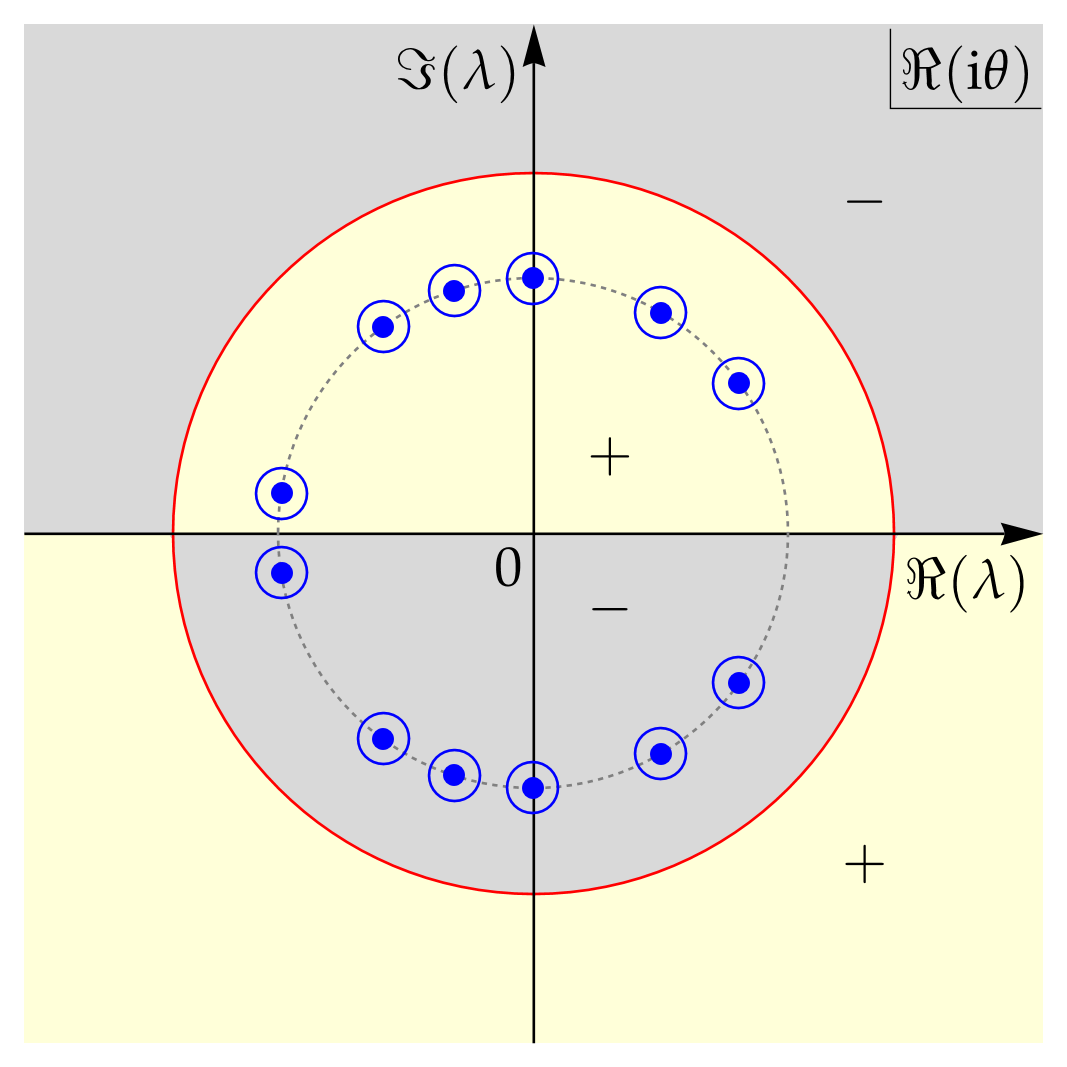}
\caption{
The sign structure of $\Re(\ii\theta(\lambda;t,z))$ with $z = \xi t$ in an initially stable medium $D_- = -1$. The blue dots represent the $N$ pairs of eigenvalues $\Lambda$ of the $N$-DSG.
The blue circles surrounding the eigenvalues are the jumps from RHP~\ref{rhp:N-soliton-jump-form}.
Gray regions denote $\Re(\ii\theta(\lambda;t,z)) < 0$, and the yellow regions denote $\Re(\ii\theta(\lambda;t,z)) > 0$.
Top left: $\xi \le 0 < V$.
Top right: $0 < \xi < V$.
Bottom left: $\xi = V$.
Bottom right: $\xi > V$.
}
\label{f:N-DSG-stable-configuration}
\end{figure}

\myblue{
In this section,
we calculate the asymptotics of the $N$-DSG $q(t,z)$ and $\brho(t,z)$ as
$t\to\pm\infty$ in a few steps.
First, we assume $z = \xi t$ with $\xi\in\Real$,
and we recall the quantity $V$ defined in Theorem~\ref{thm:N-DSG}.
We show that as long as $\xi\ne V$,
it is $q(t,z) = \O(\ee^{-\aleph|t|})$ and $\brho(t,z) = D_-\sigma_3 + \O(\ee^{-\aleph|t|})$
as $t\to\pm\infty$,
and only if $\xi = V$, $q(t,z)$ does not decay in the limits.
Hence, the $N$-DSG is traveling in the direct $z = Vt$,
and this is why $V$ is called soliton velocity.
We then compute its center using a more precise ansatz $z = Vt + z_\d $,
and determine the displacement constant $z_\d$.
}

As always, it is important to analyze the sign structure of $\Re(\ii\theta(\lambda;t,z))$ when calculating the asymptotics using Deift-Zhou's nonlinear steepest descent method. Simple calculations show that
$\Re(\ii\theta(\lambda;t,z))
% = - y\bigg[1 + \frac{D_-\xi}{2(x^2 + y^2)}\bigg]t
 = -D_-yt[\xi + 2D_-(x^2+y^2)]/[2(x^2+y^2)]$,
with $x = \Re(\lambda)$, $y = \Im(\lambda) > 0$, and $z = \xi t$.
Looking at each eigenvalue $\lambda_{1,k}$, the equation yields
\begin{equation}
\label{e:reitheta-1k}
\Re(\ii\theta(\lambda_{1,k};t,z))
% = -\frac{D_- y_{1,k}}{2r_1^2}(\xi + 2D_-r_1^2)t
 = -\frac{D_- y_{1,k}}{2r_1^2}(\xi - V)t\,,\qquad
k = 1,2,\dots, N_1\,,
\end{equation}
where one uses $V = -2D_-r_1^2$, $r_1 = |\lambda_{1,k}|$, and $\lambda_{1,k} = x_{1,k} + \ii y_{1,k}$. Obviously, $\Re(\ii\theta(\lambda_{1,k};t,z))$ has a single root $\xi = V$, meaning that $\Re(\ii\theta) = 0$ if and only if $\xi = V$. Hence, the sign of $\Re(\ii\theta(\lambda_{1,k};t,z))$ is determined by: (i) the sign of $\xi - V$; (ii) the sign of $D_-$; and (iii) the sign of $t$.

For (i), one should recall that in RHP~\ref{rhp:N-soliton-jump-form} all the jumps are on the small circles centered at eigenvalues with radii $\epsilon\ll1$, so $\Re(\ii\theta(\lambda;t,z))$ on the jump surrounding $\lambda_{1,k}$ has the same sign as $\Re(\ii\theta(\lambda_{1,k};t,z))$, provided $\xi\ne V$. Therefore, the exponential functions $\ee^{\pm2\ii\theta(\lambda;t,z)}$ inside the jump matrices have the same growing/decaying properties as $\ee^{\pm2\ii\theta(\lambda_{1,k};t,z)}$, respectively, with $\xi\ne V$.

For (ii), because the initial state of the medium $D_-$ affects $\Re(\ii\theta(\lambda_{1,k};t,z))$, it is necessary to  discuss the two cases (stable/unstable) separately. We start with the more important and realistic case---initially stable medium ($D_- = -1$). After the stable case is properly and rigorously addressed, the unstable case can be easily calculated by slight modification of the stable one.

For (iii), one needs to discuss the asymptotics as $t\to\pm\infty$, corresponding to the situation $t \gl 0$. However, it turns out the discussion of $t\to-\infty$ is almost identical to the case of $t\to+\infty$, so \textit{we only consider the case of $t\to+\infty$ in this section}, and the case of $t\to-\infty$ is omitted for brevity.

To summarize, in an initially stable medium with $t > 0$, Equation~\eqref{e:reitheta-1k} dictates that $\Re(\ii\theta(\lambda_{1,k};t,z)) > 0$ if and only if $\xi > V$, and $\Re(\ii\theta(\lambda_{1,k};t,z)) < 0$ if and only if $\xi < V$. This sign structure of $\Re(\ii\theta)$ is shown in Figure~\ref{f:N-DSG-stable-configuration}, for the cases $\xi < V$, $\xi = V$ and $\xi > V$. The blue dots represent $N$ pairs of eigenvalues with the same radius $r$. In an initially unstable medium with $t > 0$, Equation~\eqref{e:reitheta-1k} dictates that $\Re(\ii\theta(\lambda_{1,k};t,z)) > 0$ if and only if $\xi < V$ and $\Re(\ii\theta(\lambda_{1,k};t,z)) < 0$ if and only if $\xi > V$.

%%%%%%%%%%%%%%%%%%%%%%%%%%%%%%%%%%%%%%%%%%%%%%%%%%%%%%%%%%%%%%%%%%%%%%%%%%%%%%%%
\subsubsection{The case of $\xi < V$ with stable \myblue{media}}
\label{s:N-DSG-localization-stable-xi<V}

First of all, there are two subcases which are $\xi \le 0 < V$ and $0 < \xi <V$, shown in the top row in Figure~\ref{f:N-DSG-stable-configuration}, corresponding to the physical situations outside of and inside the light cone, respectively. However, it turns out that the mathematical treatments are identical, so we analyze them together in this subsection.

According to previous discussions and Figure~\ref{f:N-DSG-stable-configuration}(left), one knows that $\Re(\ii\theta(\lambda_{1,k};t,z)) \to -\infty$ as $t\to+\infty$, which proves that all the jumps in RHP~\ref{rhp:N-soliton-jump-form} are growing as $t\to+\infty$. Hence, one needs to change the jumps according to the Deift-Zhou's nonlinear steepest descent method. In particular, changing the exponential functions so that they decay as $t\to+\infty$, one needs to define new functions and deform the jumps from RHP~\ref{rhp:N-soliton-jump-form}. To achieve this, one first defines the following useful functions
\begin{equation}
\label{e:deltap-DSG-def}
\delta_\dsg(\lambda)
 \coloneqq \frac{p_\dsg^*(\lambda)}{p_\dsg(\lambda)}\,,\qquad
p_\dsg(\lambda)\coloneqq \prod_{k=1}^{N_1} (\lambda - \lambda_{1,k})\,,
\end{equation}
where we recall $p_\dsg^*(\lambda)$ is the Schwarz reflection of $p_\dsg(\lambda)$.
Obviously, $\delta_\dsg(\lambda)\to1$ as $\lambda\to\infty$.

Then, one defines a new matrix as follows,
\begin{equation}
\label{e:Ps-def}
\M_\dsg(\lambda;t,z)
 \coloneqq
  \begin{cases}
   \M(\lambda;t,z)\delta_\dsg(\lambda)^{-\sigma_3}\,, & \quad \lambda\in\Complex\setminus\bigcup\limits_{k = 1}^{N_1}\big(D_{\lambda_{1,k}}^\epsilon\bigcup D_{\conj{\lambda_{1,k}}}^\epsilon\big)\,,\\
   \M(\lambda;t,z)(\lambda - \conj{\lambda_{1,k}})^{\sigma_3}\S_{k}(\lambda;t,z)^{-1}\,, &\quad \lambda\in D_{\lambda_{1,k}}^\epsilon\,,\quad k = 1,2,\dots,N_1\,,\\
   \M(\lambda;t,z)(\lambda - \lambda_{1,k})^{-\sigma_3}\S_{k}(\conj\lambda;t,z)^\dagger\,, &\quad \lambda\in D_{\conj{\lambda_{1,k}}}^\epsilon\,,\quad k = 1,2,\dots,N_1\,,
  \end{cases}
\end{equation}
where for $k = 1,2,\dots,N_1$,
\begin{equation}
\everymath{\displaystyle}
\S_k(\lambda;t,z)
 \coloneqq
  \bpm
   \frac{\lambda - \conj{\lambda_{1,k}}}{p_\dsg^*(\lambda)}\frac{p_\dsg^*(\lambda)^2 - p_\dsg^*(\lambda_{1,k})^2}{p_\dsg(\lambda)} &
   \omega_{1,k}^{-1}\frac{\lambda - \lambda_{1,k}}{\lambda - \conj{\lambda_{1,k}}}\frac{p_\dsg^*(\lambda_{1,k})^2}{p_\dsg(\lambda)p_\dsg^*(\lambda)}\ee^{2\ii\theta(\lambda;t,z)} \\
   -\omega_{1,k}\frac{p_\dsg(\lambda)}{p_\dsg^*(\lambda)}\frac{\lambda - \conj{\lambda_{1,k}}}{\lambda - \lambda_{1,k}}\ee^{-2\ii\theta(\lambda;t,z)} &
   \frac{1}{\lambda - \conj{\lambda_{1,k}}}\frac{p_\dsg(\lambda)}{p_\dsg^*(\lambda)}
  \epm\,.
\end{equation}
It is worth pointing out that the matrices $\S_k(\lambda;t,z)$ are analytic in
$D_{\lambda_{1,k}}^\epsilon$,
and the matrices $\S_k(\conj\lambda;t,z)^\dagger$ are analytic in
$D_{\conj{\lambda_{1,k}}}^\epsilon$,
for $k = 1,2,\dots,N_1$.
Therefore,
the newly defined matrix $\M_\dsg(\lambda;t,z)$ is still analytic in
$D_{\lambda_{1,k}}^\epsilon\bigcup D_{\conj{\lambda_{1,k}}}^\epsilon$ for all integer $k$.
One can verify that $\M_\dsg(\lambda;t,z)$ solves the following RHP.
\begin{rhp}
\label{rhp:MDSG}
\myblue{
Seek a $2\times2$
matrix function $\lambda\mapsto\M_\dsg(\lambda;t,z)$ analytic on $\Complex\setminus \partial D^\epsilon$ with continuous boundary values,
where $D^\epsilon\coloneqq \bigcup_{k=1}^{N_1}\big(D_{\lambda_{1,k}}^\epsilon\bigcup D_{\conj{\lambda_{1,k}}}^\epsilon\big)$.
It has the asymptotics
$\M_\dsg(\lambda;t,z)\to\I$ as $\lambda\to\infty$ and jumps for $1\le k\le N_1$,
}
\begin{equation}
\begin{aligned}
\M_\dsg^{+}(\lambda;t,z)
 & = \M_\dsg^{-}(\lambda;t,z)\V_{\dsg,k}(\lambda;t,z)\,,\quad && \lambda\in\partial D_{\lambda_{1,k}}^\epsilon\,,\\
\M_\dsg^{+}(\lambda;t,z)
 & = \M_\dsg^{-}(\lambda;t,z)\V_{\dsg,k}(\conj\lambda;t,z)^{-\dagger}\,,\quad && \lambda\in\partial D_{\conj{\lambda_{1,k}}}^\epsilon\,,
\end{aligned}
\end{equation}
where the jump matrices are given by
\begin{equation}
\everymath{\displaystyle}
\V_{\dsg,k}(\lambda;t,z)
 =
  \bpm
   1 &
   -\frac{\lambda - \lambda_{1,k}}{\omega_{1,k}}\frac{p_\dsg^*(\lambda_{1,k})^2}{p_\dsg(\lambda)^2}\ee^{2\ii\theta(\lambda;t,z)} \\
   0 &
   1
  \epm\,,\qquad k = 1,2,\dots,N_1\,.
\end{equation}
\end{rhp}
Now,
one sees that for each $\lambda = \lambda_{1,k}$,
the surrounding jump $\V_{\dsg,k}(\lambda;t,z)$ contains exponential
$\ee^{2\ii\theta(\lambda;t,z)}$ instead of $\ee^{-2\ii\theta(\lambda;t,z)}$
from the original jump matrix $\V_{1,k}(\lambda;t,z)$,
so the growing jumps become decaying ones.
Such switches happen in the lower half plane simultaneously.
\myblue{
Equation~\eqref{e:reitheta-1k} yields that there exists a constant $\aleph > 0$
such that $|\ee^{2\ii\theta(\lambda;t,z)}| \le \ee^{-\aleph t}$ in all upper-half-plane jumps,
and similarly $|\ee^{-2\ii\theta(\lambda;t,z)}| \le \ee^{-\aleph t}$ in all lower-half-plane jumps,
with $t$ sufficiently large.
The boundedness of all jump contours implies that
$\|\V_{\dsg,k}(\lambda;t,z) - \I \| = \O(\ee^{-\aleph t})$ with $\lambda\in\partial D_{\lambda_{1,k}}^\epsilon$
and
$\|\V_{\dsg,k}(\conj\lambda;t,z)^{-\dagger} - \I\| = \O(\ee^{-\aleph t})$ with $\lambda\in\partial D_{\conj{\lambda_{1,k}}}^\epsilon$ uniformly,
as $t\to+\infty$.
Therefore,
RHP~\ref{rhp:MDSG} is a classic small-norm problem with solutions
}
\begin{equation}
\myblue{
\M_\dsg(\lambda;t,z) = \I + \O(\ee^{-\aleph t})\,,\qquad t\to+\infty\,.
}
\end{equation}
Then, reversing the definition~\eqref{e:Ps-def}
yields the asymptotics for $\M(\lambda;t,z)$ as $t\to+\infty$.
Note that the solution $q(t,z)$ is reconstructed from $\M(\lambda;t,z)$
with $|\lambda|\gg1$
and $\brho(t,z)$ is reconstructed from $\M(0;t,z)$,
from Lemma~\ref{thm:reconstruction}.
Since all eigenvalues are finite and off the real line
(cf. Definition~\ref{def:LambdaOmega}),
one concludes
\begin{equation}
\myblue{
\M(\lambda;t,z)
 = \M_\dsg(\lambda;t,z)\delta_\dsg^{\sigma_3}
 = (\I + \O(\ee^{-\aleph t}))\delta_\dsg(\lambda)^{\sigma_3}\,,\qquad
 t\to+\infty\,,
}
\end{equation}
with $|\lambda|\gg1$ or $|\lambda| \ll 1$.
The reconstruction formula from Lemma~\ref{thm:reconstruction} yields
\begin{equation}
\label{e:N-DSG-localization-stable-xi<V-qbrho}
\myblue{
q(t,z) = \O(\ee^{-\aleph t})\,,\qquad
\brho(t,z) = D_-\sigma_3(\I + \O(\ee^{-\aleph t}))\,,\qquad
t\to+\infty\,.
}
\end{equation}

%%%%%%%%%%%%%%%%%%%%%%%%%%%%%%%%%%%%%%%%%%%%%%%%%%%%%%%%%%%%%%%%%%%%%%%%%%%%%%%%
\subsubsection{The case of $\xi = V$ with stable \myblue{media}}
\label{s:N-DSG-localization-stable-xi=V}

Recall that $\Re(\ii\theta(\lambda_{1,k};t,z) = 0$
for
$k = 1,2,\dots,N_1$
from the discussion at the beginning of Section~\ref{s:N-DSG-localization}.
In terms of the jumps surrounding the eigenvalues,
one part of each circular jump is growing while the rest is decaying as
$t\to+\infty$
[cf, Figure~\ref{f:N-DSG-stable-configuration}(bottom left)].
As a result,
the jumps $\V_{1,k}(\lambda;t,z)^{-1}$ on $\partial D_{\lambda_{1,k}}^\epsilon$
and
$\V_{1,k}(\conj\lambda;t,z)^\dagger$ on $\partial D_{\conj{\lambda_{1,k}}}^\epsilon$
in RHP~\ref{rhp:N-soliton-jump-form} do not decay nor grow uniformly as $t\to+\infty$.
So,
the asymptotics cannot be easily calculated from RHP~\ref{rhp:N-soliton-jump-form}.
Instead, one can directly solve the equivalent RHP~\ref{rhp:N-soliton-residue-form},
and obtains the explicit $N$-DSG given via Theorem~\ref{thm:N-soliton-formula}.
Then,
one observes that all the exponential functions
$\ee^{\pm2\ii\theta_{1,k}}$ for $k = 1,2,\dots, N_1$ with $J = 1$
from the solution formula in Theorem~\ref{thm:N-soliton-formula}
are purely oscillatory,
because $\Re(\ii\theta_{1,k}) = 0$.
\myblue{
Moreover, Theorem~\ref{thm:reconstruction} guarantees that the $N$-DSG is non-singular for all $(t,z)\in\Real^2$.
}
Hence,
one concludes that $q(t,z)$ and $\brho(t,z)$ do not decay nor grow in the limit
$t\to+\infty$.

%%%%%%%%%%%%%%%%%%%%%%%%%%%%%%%%%%%%%%%%%%%%%%%%%%%%%%%%%%%%%%%%%%%%%%%%%%%%%%%%
\subsubsection{The case of $\xi > V$ with stable \myblue{media}}
\label{s:N-DSG-localization-stable-xi>V}

Similarly to previous subsections, one recalls the discussion at the beginning of Section~\ref{s:N-DSG-localization} [cf. Figure~\ref{f:N-DSG-stable-configuration}(Bottom right)], to see that all the jumps in RHP~\ref{rhp:N-soliton-jump-form} decay to the identity matrix uniformly as $t\to+\infty$.
\myblue{
Similarly to RHP~\ref{rhp:MDSG},
one can directly write $\M(\lambda;t,z) = \I + \O(\ee^{-\aleph t})$,
as $t\to+\infty$,
where $\aleph > 0$ is a constant.
}
Lemma~\ref{thm:reconstruction} yields
\begin{equation}
\label{e:N-DSG-localization-stable-xi>V-qbrho}
\myblue{
q(t,z) = \O(\ee^{-\aleph t})\,,\qquad
\brho(t,z) = D_-\sigma_3 + \O(\ee^{-\aleph t})\,,\qquad
t\to+\infty\,.
}
\end{equation}

Combining all cases,
one concludes that the $N$-DSG is traveling along a line in the form of
$z = Vt + z_\d$ with an undetermined constant $z_\d$.
When observed in a different direction $z = \xi t$ where $\xi\ne V$,
the $N$-DSG rapidly vanishes to the ZBG.
In fact,
it has proved that the solution decays exponentially.
Therefore,
one obtains the localization of the $N$-DSG,
and knows that the $N$-DSG travels with the speed $V$.
The localization [results (1) and (2)] of $N$-DSG in Theorem~\ref{thm:N-DSG}
in a stable medium is proved.

%%%%%%%%%%%%%%%%%%%%%%%%%%%%%%%%%%%%%%%%%%%%%%%%%%%%%%%%%%%%%%%%%%%%%%%%%%%%%%%%
\subsubsection{Localization of $N$-DSG with unstable \myblue{media}}
\label{s:N-DSG-localization-unstable}

It is time to discuss $N$-DSG in an unstable medium, i.e., $D_- = 1$. Recall Equation~\eqref{e:reitheta-1k} and the discussion afterwards. It is also worth recalling the velocity $V_\u \coloneqq -2r_1^2 < 0$ in the unstable medium. Instead of calculating the asymptotics of RHP~\ref{rhp:N-soliton-jump-form} step-by-step, we make the following connections between the unstable cases and the previously discussed stable cases.
\begin{enumerate}
\item
In the case of $\xi < V_\u$,
Equation~\eqref{e:reitheta-1k} implies that
$\Re(\ii\theta(\lambda_{1,k};t,z)) > 0$,
yielding that all the jumps in RHP~\ref{rhp:N-soliton-jump-form}
\myblue{are decaying to the identity matrix exponentially as $t\to+\infty$.}
Similarly to Section~\ref{s:N-DSG-localization-stable-xi>V},
one immediately obtains the same result~\eqref{e:N-DSG-localization-stable-xi>V-qbrho}.
\item
In the case of $\xi = V_\u$,
Equation~\eqref{e:reitheta-1k} implies that
$\Re(\ii\theta(\lambda_{1,k};t,z)) = 0$.
Recall the discussion in Section~\ref{s:N-DSG-localization-stable-xi=V}.
The MBEs solutions can be solved exactly via RHP~\ref{rhp:N-soliton-residue-form}
and can be shown to not decay nor grow in the limit $t\to+\infty$.
\item
In the case of $V_\u < \xi$,
Equation~\eqref{e:reitheta-1k} implies that
$\Re(\ii\theta(\lambda_{1,k};t,z)) < 0$,
so all jumps in RHP~\ref{rhp:N-soliton-jump-form} are growing as $t\to+\infty$.
One can follow the calculations in Section~\ref{s:N-DSG-localization-stable-xi<V},
apply similar treatments to RHP~\ref{rhp:N-soliton-jump-form},
and obtain Equation~\eqref{e:N-DSG-localization-stable-xi<V-qbrho}.
\end{enumerate}
Finally, one proves the results~(1) and~(2) in Theorem~\ref{thm:N-DSG} in the unstable case.

%%%%%%%%%%%%%%%%%%%%%%%%%%%%%%%%%%%%%%%%%%%%%%%%%%%%%%%%%%%%%%%%%%%%%%%%%%%%%%%%
\subsection{Center of $N$-DSG}

The previous subsections prove that the $N$-DSG is traveling in the direction $z = Vt$. Hence, the center of $N$-DSG also moves with the same velocity $V$ as $t$ changes. One defines the center $z_\c(t) \coloneqq Vt + z_\d$, where the displacement $z_\d\in\Real$ is to be determined.

Our next task is to \myblue{look for} $z_\d$.
We perform two pre-treatments before the actual calculation.
Firstly,
\myblue{let} us substitute $z = z_\c(t)$ into the quantity $\theta(\lambda_{1,k};t,z)$ yielding
\begin{equation}
\label{e:N-DSG-center-reitheta}
\myblue{
\Re(\ii\theta(\lambda_{1,k};t,z_\c(t)))
 = -\frac{D_-\Im(\lambda_{1,k})}{2r_1^2}z_\d\,,\qquad
k = 1,2,\dots, N_1\,.
}
\end{equation}
Note that all $\Im(\lambda_{1,k}) > 0$.
\myblue{
Secondly, we factor the matrix $\bGamma$ identically to Equation~\eqref{e:det-I+GG},
but with $J = 1$,
\begin{equation}
\everymath{\displaystyle}
\begin{aligned}
\bGamma
 & = \bGamma^{(1)}\bGamma^{(2)}\,,\\
\bGamma^{(1)}
 & \coloneqq \bpm \frac{1}{\conj{\lambda_{1,1}} - \lambda_{1,1}} & \frac{1}{\conj{\lambda_{1,1}} - \lambda_{1,2}} & \cdots & \frac{1}{\conj{\lambda_{1,1}} - \lambda_{1,N_1}} \\
\vdots & \vdots & & \vdots\\
\frac{1}{\conj{\lambda_{1,N_1}} - \lambda_{1,1}} & \frac{1}{\conj{\lambda_{1,N_1}} - \lambda_{1,2}} & \cdots & \frac{1}{\conj{\lambda_{1,N_1}} - \lambda_{1,N_1}} \epm\,,\\
\bGamma^{(2)}
 & \coloneqq \diag(\omega_{1,1}\ee^{-2\ii\theta_{1,1}}, \omega_{1,2}\ee^{-2\ii\theta_{1,2}},\cdots,\omega_{1,N_1}\ee^{-2\ii\theta_{1,N_1}})\,.
\end{aligned}
\end{equation}
Recall that $\theta_{1,k} = \theta(\lambda_{1,k};t,z_\c(t))$ from Theorem~\ref{thm:reconstruction},
with substitution $z = z_\c(t)$.
}
It is worth pointing out that $\bGamma^{(1)}$ is independent of $\theta$,
so it always remains constant no matter $z_\d\to\pm\infty$.
On the other hand,
$\bGamma^{(2)}$ grows to infinity or decays to the identity matrix,
depending on the state of the medium and whether $z_\d\to-\infty$ or $z_\d\to+\infty$.

As demonstrated in~\cite{lb2018},
an effective way to find the center of a DSG is
to match the asymptotic behavior of solutions as $z_\d\to\pm\infty$.
This follows from a simple observation:
\myblue{the behavior of the $N$-DSG is considerably complicated} when the observation point $z$ is close to its center $z_\c$; but whenever the observation point $z$ moves away from the center $z_\c$,
the solution decays rapidly (the proved localization property in Section~\ref{s:N-DSG-localization}), yielding simple behavior of $q(t,z)$. Because $z_\c$ is the center, the solution to its right and to its left should behave similarly further away. Thus, calculating the asymptotics of the solution $q(t,z)$ as $z_\d\to\pm\infty$ and matching them should give rise to a proper value of $z_\d$, describing the displacement. We next discuss the asymptotics.

%%%%%%%%%%%%%%%%%%%%%%%%%%%%%%%%%%%%%%%%%%%%%%%%%%%%%%%%%%%%%%%%%%%%%%%%%%%%%%%%
\subsubsection{Stable medium ($D_- = -1$)}

In this case, $z_\mathrm{d}\to-\infty$ yields $\ee^{-2\ii\theta_{1,k}}\to+\infty$ from Equation~\eqref{e:N-DSG-center-reitheta} for all $k = 1,2,\dots, N_1$.
Consequently, $\bGamma^{(2)}\to\infty$ implies that $\bGamma\to\infty$ as well. Examining the solution $q(t,z)$ from Equation~\eqref{e:Nsoliton-formula} yields that
as $z_\mathrm{d}\to-\infty$
\begin{equation}
\label{e:N-DSG-center-expand}
q(t,z)
 = 2\ii - 2\ii \frac{\det(\bGamma\conj{\bGamma})\det(\conj{\bGamma}^{-1}\bGamma^{-1}(\I - \myblue{\vbone\conj{\C_\infty}}) + \I)}{\det(\bGamma\conj{\bGamma})\det(\conj{\bGamma}^{-1}\bGamma^{-1} + \I)}
 = 2\ii - 2\ii\frac{1 + o(1)}{1 + o(1)} = o(1)\,.
\end{equation}
As expected, the solution decay to zero as one moves away from its center. In fact, it can be seen from the above equation that $q(t,z)$ decays exponentially. The exact decay rate is necessary for later comparison. However, the current form is inconvenient for calculating the leading-order term. So, we rewrite the asymptotics differently. The above calculation shows that the numerator and the denominator inside the fraction in $q(t,z)$ behave similarly as $z_\d\to-\infty$. (Note that the fraction of Equation~\eqref{e:N-DSG-center-expand} tends to one, not zero.)
One can extract the leading-order term alone from the denominator, as follows,
\begin{equation}
\label{e:NDSG-zd-leading1}
\det(\I + \bGamma\conj{\bGamma})
 = \det(\bGamma\conj{\bGamma})\det(\conj{\bGamma}^{-1}\bGamma^{-1} + \I)
 = \det(\bGamma\conj{\bGamma})(1 + o(1))\,,\qquad
z_\d\to-\infty\,.
\end{equation}

The other limit $z_\d\to+\infty$ implies that $\ee^{-2\ii\theta_{1,k}}\to0$ from Equation~\eqref{e:N-DSG-center-reitheta} for $k = 1,2,\dots,N_1$. Correspondingly, $\bGamma^{(2)}\to0$ and $\bGamma\to0$.
Clearly, one sees that
\begin{equation}
q(t,z)
 = 2\ii - 2\ii\frac{\det(\I - \myblue{\vbone\conj{\C_\infty}} + \bGamma\conj{\bGamma})}{\det(\I + \bGamma\conj{\bGamma})}
 = 2\ii - 2\ii\frac{1 + o(1)}{1 + o(1)} = o(1)\,.
\end{equation}
Again, $q(t,z)$ decays to zero as one moves away from the soliton group center as $z_\d\to+\infty$. As before, one looks at the denominator of $q(t,z)$, but the direct calculation of $\det(\I + \bGamma\conj{\bGamma})$ yields the leading-order term as $1$. To extract useful information, we rewrite the fraction so that both the numerator and the denominator are growing instead of decaying as $z_\d\to+\infty$,
\begin{equation}
\frac{\det(\I - \myblue{\vbone\conj{\C_\infty}} + \bGamma\conj{\bGamma})}{\det(\I + \bGamma\conj{\bGamma})}
 = \frac{\det(\conj{\bGamma}^{-1}\bGamma^{-1}(\I - \myblue{\vbone\conj{\C_\infty}}) + \I)}{\det(\conj{\bGamma}^{-1}\bGamma^{-1} + \I)}\,.
\end{equation}
The denominator yields the leading-order term as $z_\d\to+\infty$
\begin{equation}
\label{e:NDSG-zd-leading2}
\det(\conj{\bGamma}^{-1}\bGamma^{-1} + \I)
 = \det(\conj{\bGamma}^{-1}\bGamma^{-1})\det(\I + \bGamma\conj{\bGamma})
 = \det(\conj{\bGamma}^{-1}\bGamma^{-1})(1 + o(1))\,.
\end{equation}

Matching the two asymptotics from Equations~\eqref{e:NDSG-zd-leading1} and~\eqref{e:NDSG-zd-leading2} implies $\det(\bGamma\conj{\bGamma}) = \det(\conj{\bGamma}^{-1}\bGamma^{-1})$. The center $z_\d$ is therefore determined by the equation $\det(\bGamma\conj{\bGamma})^2 = 1$, which can be calculated as
\begin{equation}
\label{e:DSG-center-equation}
1 = |\det(\bGamma)|^4
 = |\det(\bGamma^{(1)})|^4|\det(\bGamma^{(2)})|^4
 = |\det(\bGamma^{(1)})|^4| \bigg|\prod_{k=1}^{N_1}\omega_{1,k}\bigg|^4\bigg|\ee^{-2\ii\sum_{k = 1}^{N_1}\theta_{1,k}}\bigg|^4\,.
\end{equation}
Note that $\bGamma^{(1)}$ is a Cauchy matrix, whose determinant can be readily computed,
\begin{equation}
\det\big(\bGamma^{(1)}\big)
 = \frac{\prod_{k = 2}^{N_1}\prod_{l = 1}^{k-1}|\lambda_{1,k} - \lambda_{1,l}|^2}{\prod_{k=1}^{N_1}\prod_{l=1}^{N_1}(\conj{\lambda_{1,k}} - \lambda_{1,l})}\,.
\end{equation}
Finally, one obtains the center of the soliton group $z_\c(t) = Vt + z_\d$ with $z_\d$ given in Theorem~\ref{thm:N-DSG}. This proves the result (3) in the theorem in a stable medium.

%%%%%%%%%%%%%%%%%%%%%%%%%%%%%%%%%%%%%%%%%%%%%%%%%%%%%%%%%%%%%%%%%%%%%%%%%%%%%%%%
\subsubsection{Unstable medium ($D_- = 1$)}

Here, one employs similar arguments as in Section~\ref{s:N-DSG-localization}, by comparing and relating the stable and unstable case. Opposite to the previous stable case, Equation~\eqref{e:N-DSG-center-reitheta} implies the following equivalence
\begin{enumerate}
\item
The case of an unstable medium with $z_\d\to-\infty$ implies that
$\Re(\ii\theta_{1,k})\to-\infty$,
equivalent to the calculation of the case of a stable medium with
$z_\d\to+\infty$.
\item
The case of an unstable medium with $z_\d\to+\infty$ implies that
$\Re(\ii\theta_{1,k})\to+\infty$,
which is equivalent to the calculation of the case of a stable medium with
$z_\d\to-\infty$.
\end{enumerate}
Hence, the two asymptotics switch between the stable and unstable cases.
Nonetheless,
matching the two leading-order terms in the asymptotics in either case yields identical result,
i.e.,
Equation~\eqref{e:DSG-center-equation}.
Thus,
the formula for the center of $N$-DSG in an unstable case in Theorem~\ref{thm:N-DSG} is proved.

%%%%%%%%%%%%%%%%%%%%%%%%%%%%%%%%%%%%%%%%%%%%%%%%%%%%%%%%%%%%%%%%%%%%%%%%%%%%%%%%
\subsection{Boundedness of $\M(\lambda;t,z)$}
\label{s:M-boundedness}

Finally,
we would like to prove that with $J = 1$,
the solution $\M(\lambda;t,z)$ to RHPs~\ref{rhp:N-soliton-residue-form} and~\ref{rhp:N-soliton-jump-form} are always bounded as $t\to\pm\infty$ in the direction $z = V t$. This result is used in Section~\ref{s:N-soliton-asymptotics} when calculating the soliton asymptotics.
Of course, there are four subcases resulting from all combinations of $t\to\pm\infty$ and stable/unstable medium. We first discuss the case of $t\to+\infty$ in a stable medium. The other three can be easily analyzed afterwards.
\begin{enumerate}[align = left]
\item[\bf Case I] Boundedness as $t\to+\infty$ in a stable medium.
%If $\xi < V$, then it has been shown that $\M(\lambda;t,z) = (\I + o(1))\delta_\dsg(\lambda)^{\sigma_3}$ as $t\to+\infty$ in Section~\ref{s:N-DSG-localization-stable-xi<V}, where $\delta_\dsg(\lambda)$ is independent of $t$. If $\xi > V$, then it has been shown that $\M(\lambda;t,z) = \I + o(1)$ as $t\to+\infty$ in Section~\ref{s:N-DSG-localization-stable-xi>V}. Finally, if $\xi = V$,
One needs to look at RHP~\ref{rhp:N-soliton-residue-form}.
The first row of its solution $\M(\lambda;t,z)$
has been solved in Equation~\eqref{e:M-sol1}.
Using Equation~\eqref{e:reitheta-1k} and $z = V t$,
one can see that
all the $(t,z)$ dependence in Equation~\eqref{e:M-sol1}
are in the exponential functions $\ee^{\pm2\ii\theta_{1,k}}$
(cf. Theorem~\ref{thm:N-DSG}),
which are oscillatory.
\myblue{
Moreover, it is proved that the denominator can never be zero.
}
Similar phenomenon happens to the second row of $\M(\lambda;t,z)$
whose explicit expression is omitted for brevity.
Hence, the matrix $\M(\lambda;t,z)$ is bounded as $t\to+\infty$.
\item[\bf Case II]
Boundedness as $t\to-\infty$ in a stable medium.
%Repeating similar analysis below Equation~\eqref{e:reitheta-1k}, one can see that the sign of $\Re(\ii\theta(\lambda_{1,k};t,z))$ switches from $t > 0$ to $t < 0$. Hence, all growing/decaying jumps $\V_{1,k}(\lambda;t,z)$ in RHP~\ref{rhp:N-soliton-jump-form} as $t\to+\infty$ becomes decaying/growing ones as $t\to-\infty$, respectively, with $\xi \ne V$.
The exponentials $\ee^{\pm2\ii\theta_{1,k}}$ are still oscillatory.
Hence, repeating similar arguments in \textbf{Case~I} yields that $\M(\lambda;t,z)$ is bounded as $t\to-\infty$, as well.
\item[\bf Case III]
Boundedness as $t\to+\infty$ in an unstable medium. Using the analysis in Section~\ref{s:N-DSG-localization-unstable} and repeating similar arguments of \textbf{Case~I} yields desired results.
\item[\bf Case IV] Boundedness as $t\to-\infty$ in an unstable medium. One simply repeat similar argument of \textbf{Case~II} and obtains the boundedness of $\M(\lambda;t,z)$.
\end{enumerate}
Hence, we have proved the boundedness of $\M(\lambda;t,z)$ of RHPs~\ref{rhp:N-soliton-residue-form} and~\ref{rhp:N-soliton-jump-form} as $t\to\pm\infty$ in both stable and unstable \myblue{media}, with $J = 1$ and $z = V t$.

%%%%%%%%%%%%%%%%%%%%%%%%%%%%%%%%%%%%%%%%%%%%%%%%%%%%%%%%%%%%%%%%%%%%%%%%%%%%%%%%
\section{Soliton asymptotics}
\label{s:N-soliton-asymptotics}

We prove Theorem~\ref{thm:soliton-asymptotics} in this section. Several cases are necessary to be considered.
First,
\myblue{we need to discuss whether the medium is initially in its stable or unstable state}. Also, it is necessary to calculate the asymptotics as $t\to\pm\infty$. Finally, the direction $\xi$ from $z = \xi t$ affects the application of the Deift-Zhou's nonlinear steepest descent method. In order to simplify our discussion regarding all the cases, we mainly focus on the case of the stable medium, then briefly discuss what happens in an unstable medium.

%%%%%%%%%%%%%%%%%%%%%%%%%%%%%%%%%%%%%%%%%%%%%%%%%%%%%%%%%%%%%%%%%%%%%%%%%%%%%%%%
\subsection{Basic analysis of soliton asymptotics in a stable medium}
\label{s:N-soliton-asymptotics-stable}

\begin{figure}[tp]
\centering
\includegraphics[width=0.32\textwidth]{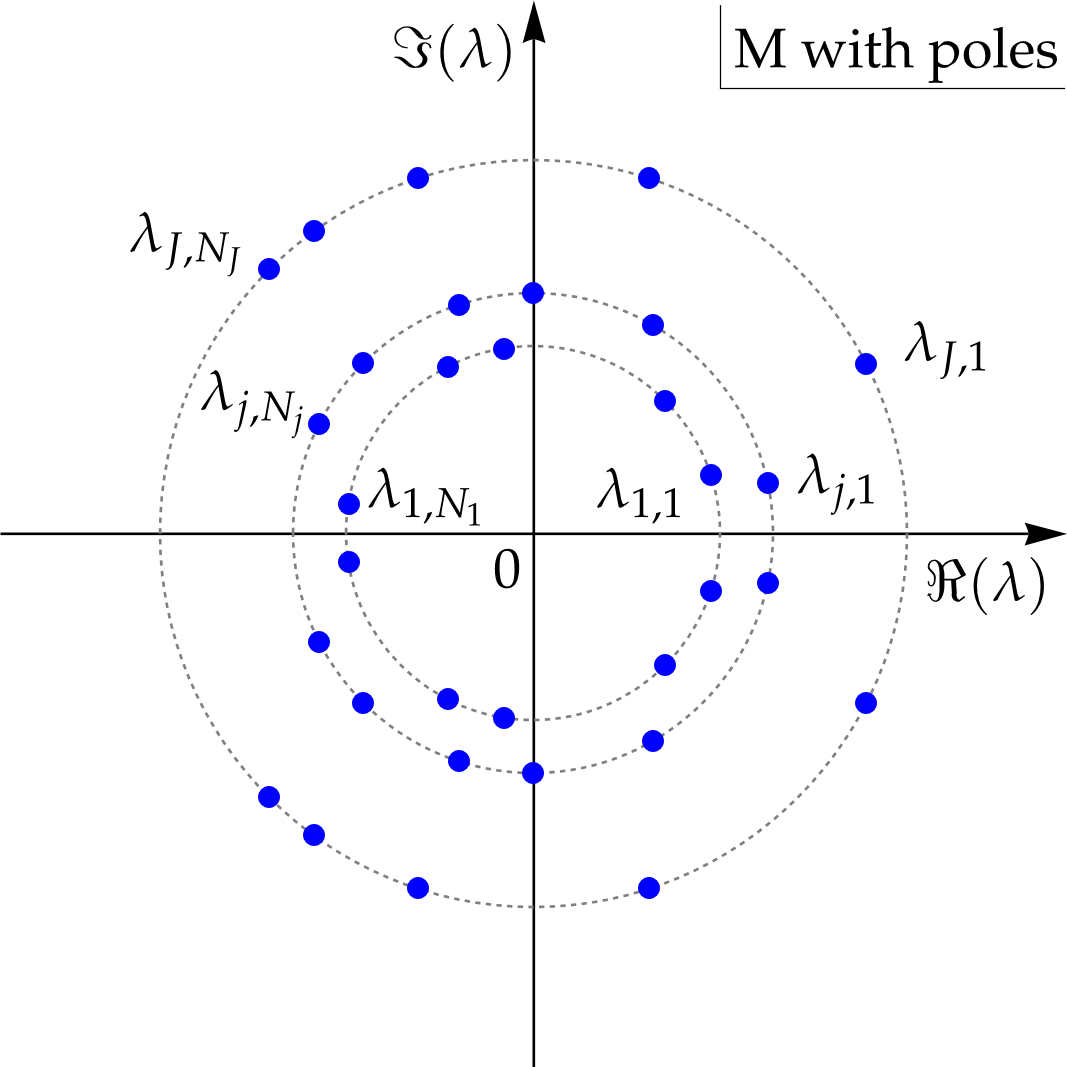}\qquad\qquad
\includegraphics[width=0.32\textwidth]{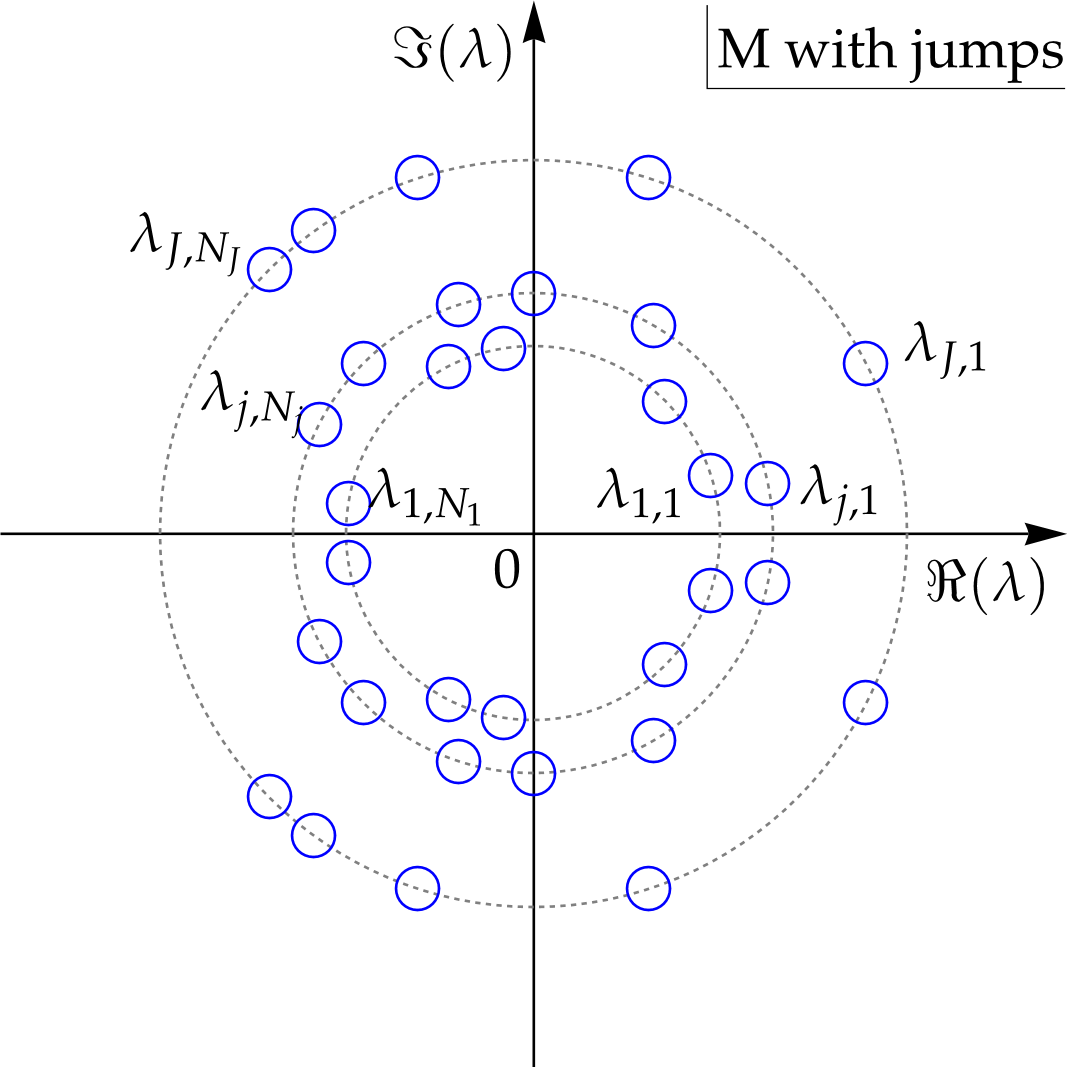}
\caption{Left: An illustration of distributions of $\Lambda$ with $1\le j \le J$ and $J\ge2$ in the complex plane.
Right: The equivalent RHP with jumps transformed from poles.
}
\label{f:N-soliton-configuration1}
\end{figure}
We now start proving Theorem~\ref{thm:soliton-asymptotics} in the case of a stable medium
($D_- = -1$).
Recall the assumption that $J \ge 2$,
so the configuration of the spectra is shown in Figure~\ref{f:N-soliton-configuration1}.
\myblue{
As such, the quantities $V_j = -2D_-r_j^2$ defined in Equation~\eqref{e:Vj-def} are ordered as
\begin{equation}
0 < V_1 < V_2 < \dots < V_J\,.
\end{equation}
As the direction $\xi$ from $z = \xi t$ varies from $-\infty$ to $+\infty$,
there are several cases summarized in Table~\ref{tab:soliton-asymptotics-stable-cases}.
}
\begin{table}[t]
\caption{Four cases in the long-time asymptotics in a stable medium.}
\begin{tabular}{|c|c|c|}
\hline
Stable Cases & $\vphantom{\Big|}$\quad Relations between $\xi$ and $V_j$ \quad & \hspace{6em} Description \hspace{6em} \\
\hline\hline
Case 1 & $\vphantom{\Big|}\xi < V_1 < \dots < V_J$ & \myblue{$\xi$ is the smallest} \\
\hline
Case 2 & $\vphantom{\Big|}V_1 < \dots < V_J < \xi$ & \myblue{$\xi$ is the largest} \\
\hline
Case 3 & $\vphantom{\Big|}V_j < \xi < V_{j+1}$ with $j = 1,\dots,J-1$ & \myblue{$\xi$ is in-between two quantities $V_j$ and $V_{j+1}$} \\
\hline
Case 4 & $\vphantom{\Big|}V_{j-1} < \xi = V_j < V_{j+1}$ with $j = 1,\dots,J$ & \myblue{$\xi$ coincides with some $V_j$} \\
\hline
\end{tabular}
\label{tab:soliton-asymptotics-stable-cases}
\end{table}

\myblue{
The outline of the proof is described here.
The first three cases yield trivial results in the asymptotics $t\to\pm\infty$,
where the leading-order terms are the zero background solutions $q(t,z) = 0$, $D(t,z) = D_-$ and $P(t,z) = 0$.
The nontrivial result arises from case 4,
which is shown as a DSG corresponding to eigenvalue set $\Lambda_j$ as the leading-order term in the asymptotics.
Also, we recall the useful quantities $\delta_j(\lambda)$ and $p_j(\lambda)$ defined in Equation~\eqref{e:deltan-def}.
Note that $\delta_j(\lambda)\to1$ as $\lambda\to\infty$ for each $j = 1,2,\dots, J$.
Note also that $\delta_1(\lambda)$ with $J = 1$ is equivalent to $\delta_\dsg(\lambda)$ defined in Equation~\eqref{e:deltap-DSG-def}.
}

%%%%%%%%%%%%%%%%%%%%%%%%%%%%%%%%%%%%%%%%%%%%%%%%%%%%%%%%%%%%%%%%%%%%%%%%%%%%%%%%
\subsection{Soliton asymptotics as $t\to+\infty$ in a stable medium}
\label{s:N-soliton-asymptotics-stable-tpos}

We first present the asymptotic calculations of all cases in Table~\ref{tab:soliton-asymptotics-stable-cases} as $t\to+\infty$.

%%%%%%%%%%%%%%%%%%%%%%%%%%%%%%%%%%%%%%%%%%%%%%%%%%%%%%%%%%%%%%%%%%%%%%%%%%%%%%%%
\subsubsection{The case of $\xi < V_1$}
\label{s:N-soliton-asymptotics-stable-tpos-case1}

\begin{figure}[tp]
\centering
\includegraphics[width=0.32\textwidth]{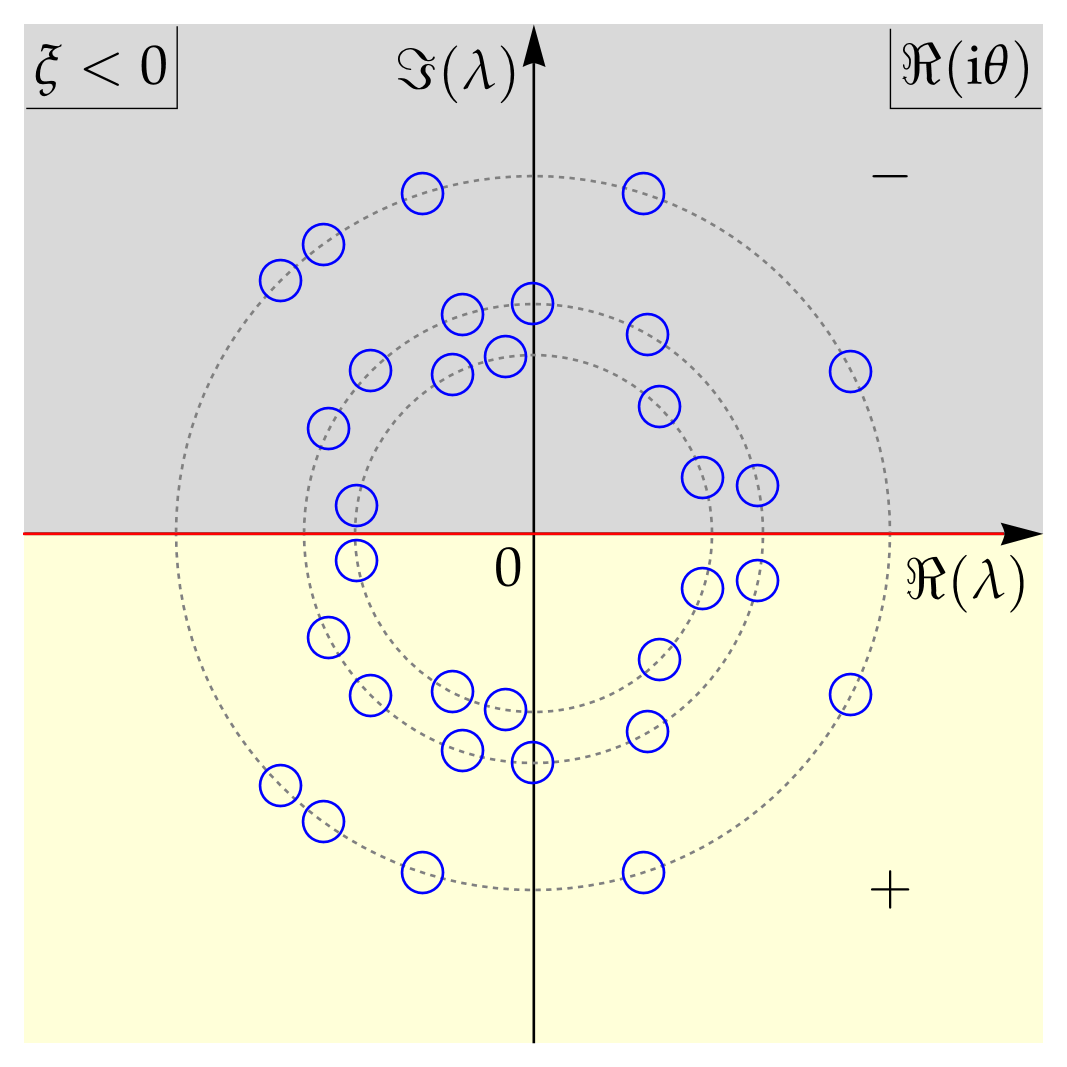}\qquad\qquad
\includegraphics[width=0.32\textwidth]{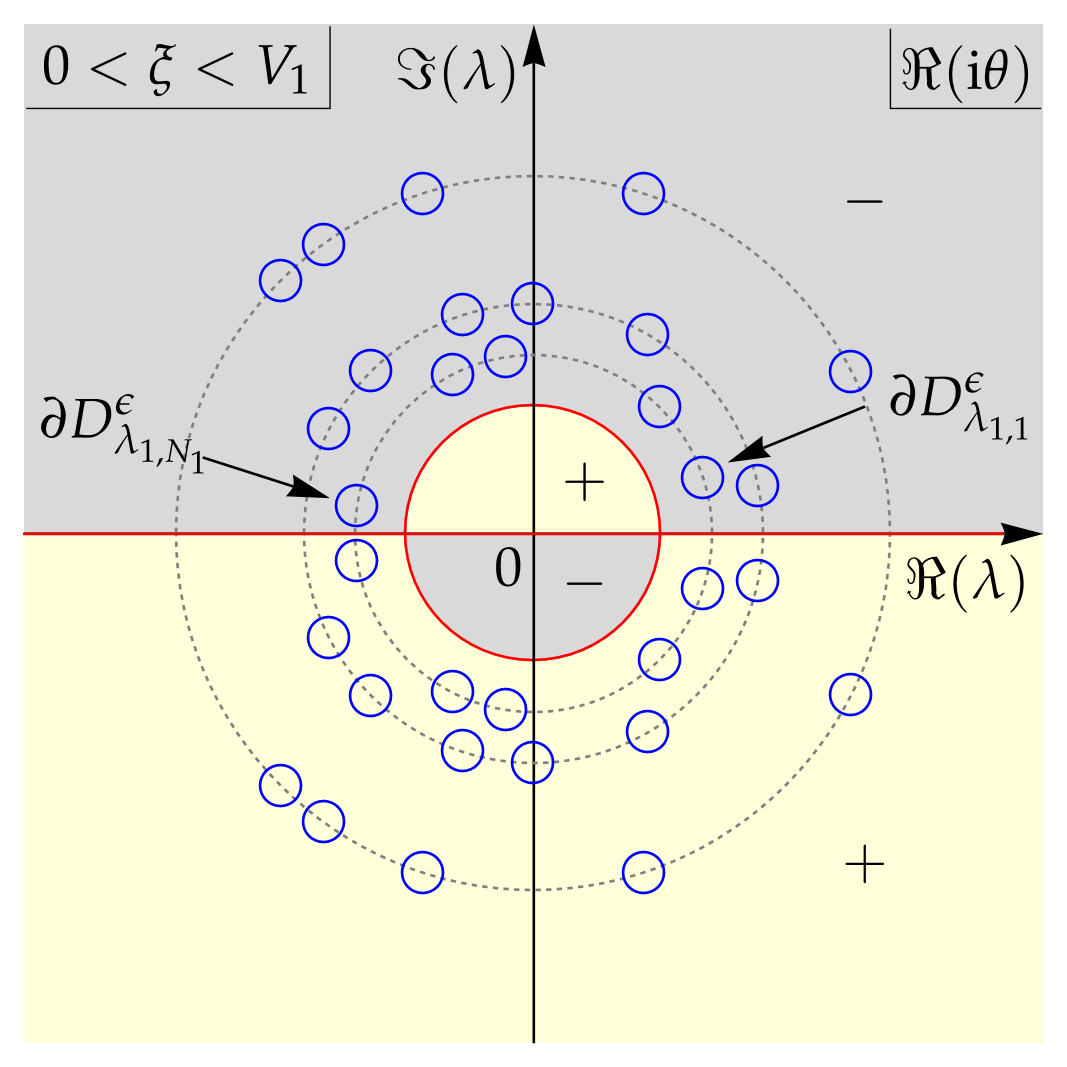}
\caption{Representative plots of Case 1 from Table~\ref{tab:soliton-asymptotics-stable-cases} with $t\to+\infty$, where the direction of the observation line $\xi$ is smaller than all velocities.
Left: $\xi \le 0 < V_1$.
Right: $0 < \xi < V_1$
The blue circles represent all the jump contours in RHP~\ref{rhp:N-soliton-jump-form}.
Yellow or gray region corresponds to $\Re(\ii\theta) > 0$ or $\Re(\ii\theta) < 0$, respectively.
The red contours are the separation between the regions.}
\label{f:soliton-asymptotics-stable-tpos-case1}
\end{figure}
There are two subcases, which are $\xi \le 0$ and $0 < \xi < V_1$. The first one discusses the situation when looking at the solution outside and on the boundary of the light cone, whereas the latter one discusses what happens inside the light cone. The sign structures of $\Re(\ii\theta(\lambda;t,z))$ differ slightly, as illustrated in Figure~\ref{f:soliton-asymptotics-stable-tpos-case1}, which are calculated by substituting $z = \xi t$ into $\theta(\lambda;t,z)$. Nonetheless, in both subcases all jumps (with small enough radii $\epsilon$) in RHP~\ref{rhp:N-soliton-jump-form} contain growing exponentially, specifically, $\exp(-2\ii\theta(\lambda;t,z))$ in the upper half plane, and $\exp(2\ii\theta(\lambda;t,z))$ in the lower half plane. Therefore, it is unnecessary to distinguish the two subcases mathematically, and we treat them simultaneously below.

One needs to modify all the growing jumps in order to perform the nonlinear steepest descent. It turns out that the process is \textit{quite similar} to what have been done in Section~\ref{s:N-DSG-localization} in the case of $\xi < V$. We only outline the process here in order to avoid repetitions. Step (i), one utilizes the rational function $\delta_1(\lambda)$ from Equation~\eqref{e:deltan-def}, similarly to the usage of $\delta_\dsg(\lambda)$ in Section~\ref{s:N-DSG-localization-stable-xi<V}. Step (ii), one defines a new matrix function mimicking Equation~\eqref{e:Ps-def},
\begin{equation}
\label{e:P1-def}
\M^{(1)}(\lambda;t,z)
 \coloneqq
  \begin{cases}
   \M(\lambda;t,z)\delta_1(\lambda)^{-\sigma_3}\,, & \qquad \lambda\in\Complex\setminus\bigcup\limits_{j = 1}^J\bigcup\limits_{k = 1}^{N_j}\big(D_{\lambda_{j,k}}^\epsilon\bigcup D_{\conj{\lambda_{j,k}}}^\epsilon\big)\,,\\
   \M(\lambda;t,z)(\lambda - \conj{\lambda_{j,k}})^{\sigma_3}\Y_{j,k}^{(1)}(\lambda;t,z)^{-1}\,, &\qquad \lambda\in D_{\lambda_{j,k}}^\epsilon\,,\quad \lambda_{j,k}\in\Lambda\,,\\
   \M(\lambda;t,z)(\lambda - \lambda_{j,k})^{-\sigma_3}\Y_{j,k}^{(1)}(\conj\lambda;t,z)^\dagger\,, &\qquad \lambda\in D_{\conj{\lambda_{j,k}}}^\epsilon\,,\quad \lambda_{j,k}\in\Lambda\,,
  \end{cases}
\end{equation}
where
\begin{equation}
\label{e:Yjk1-def}
\everymath{\displaystyle}
\Y_{j,k}^{(1)}(\lambda;t,z)
 \coloneqq
  \bpm
   \frac{\lambda - \conj{\lambda_{j,k}}}{p_1^*(\lambda)}\frac{p_1^*(\lambda)^2 - p_1^*(\lambda_{j,k})^2}{p_1(\lambda)} &
   \omega_{j,k}^{-1}\frac{\lambda - \lambda_{j,k}}{\lambda - \conj{\lambda_{j,k}}}\frac{p_1^*(\lambda_{j,k})^2}{p_1(\lambda)p_1^*(\lambda)}\ee^{2\ii\theta(\lambda;t,z)} \\
   -\omega_{j,k}\frac{p_1(\lambda)}{p_1^*(\lambda)}\frac{\lambda - \conj{\lambda_{j,k}}}{\lambda - \lambda_{j,k}}\ee^{-2\ii\theta(\lambda;t,z)} &
   \frac{1}{\lambda - \conj{\lambda_{j,k}}}\frac{p_1(\lambda)}{p_1^*(\lambda)}
  \epm\,.
\end{equation}
\myblue{
Recall that $p_1(\lambda)$ is defined in Equation~\eqref{e:deltan-def}.
}
Consequently, one can verify that $\M^{(1)}(\lambda;t,z)$ satisfies the following RHP.
\begin{rhp}
\label{rhp:M1}
\myblue{
Seek a $2\times2$
matrix function $\M^{(1)}(\lambda;t,z)$ analytic on $\Complex\setminus\partial D^\epsilon$ with continuous boundary values,
where
$D^\epsilon\coloneqq \bigcup_{\lambda_{j,k}\in\Lambda}\big(D_{\lambda_{j,k}}^\epsilon\bigcup D_{\conj{\lambda_{j,k}}}^\epsilon\big)$.
It has the asymptotics $\M^{(1)}(\lambda;t,z)\to\I$ as $\lambda\to\infty$ and jumps
}
\begin{equation}
\begin{aligned}
\M^{(1)+}(\lambda;t,z)
 & = \M^{(1)-}(\lambda;t,z)\V_{j,k}^{(1)}(\lambda;t,z)\,,\qquad && \lambda\in\partial D_{\lambda_{j,k}}^\epsilon\,,\\
\M^{(1)+}(\lambda;t,z)
 & = \M^{(1)-}(\lambda;t,z)\V_{j,k}^{(1)}(\conj\lambda;t,z)^{-\dagger}\,,\qquad && \lambda\in\partial D_{\conj{\lambda_{j,k}}}^\epsilon\,,
\end{aligned}
\end{equation}
where $\lambda_{j,k}\in\Lambda$ and the jump matrices are given by
\begin{equation}
\everymath{\displaystyle}
\V_{j,k}^{(1)}(\lambda;t,z)
 \coloneqq \bpm
    1 &
    -\frac{\lambda - \lambda_{j,k}}{\omega_{j,k}}\frac{p_1^*(\lambda_{j,k})^2}{p_1(\lambda)^2}\ee^{2\ii\theta(\lambda;t,z)} \\
    0 &
    1
    \epm\,.
\end{equation}
\end{rhp}
Step (iii),
it can be easily shown that
the solution to the new RHP~\ref{rhp:M1} tends to the identity matrix
exponentially fast as $t\to+\infty$,
just like what happens to RHP~\ref{rhp:MDSG}.
One obtains the final asymptotics
\myblue{
\begin{equation}
q(t,z) = \O(\ee^{-\aleph t})\,,\qquad
\brho(t,z) = D_-\sigma_3(\I + \O(\ee^{-\aleph t}))\,,\qquad t\to+\infty\,,
\end{equation}
with some positive constant $\aleph$.
}

%%%%%%%%%%%%%%%%%%%%%%%%%%%%%%%%%%%%%%%%%%%%%%%%%%%%%%%%%%%%%%%%%%%%%%%%%%%%%%%%
\subsubsection{The case of $\xi > V_J$}
\label{s:N-soliton-asymptotics-stable-tpos-case2}

\begin{figure}[tp]
\centering
\includegraphics[width=0.31\textwidth]{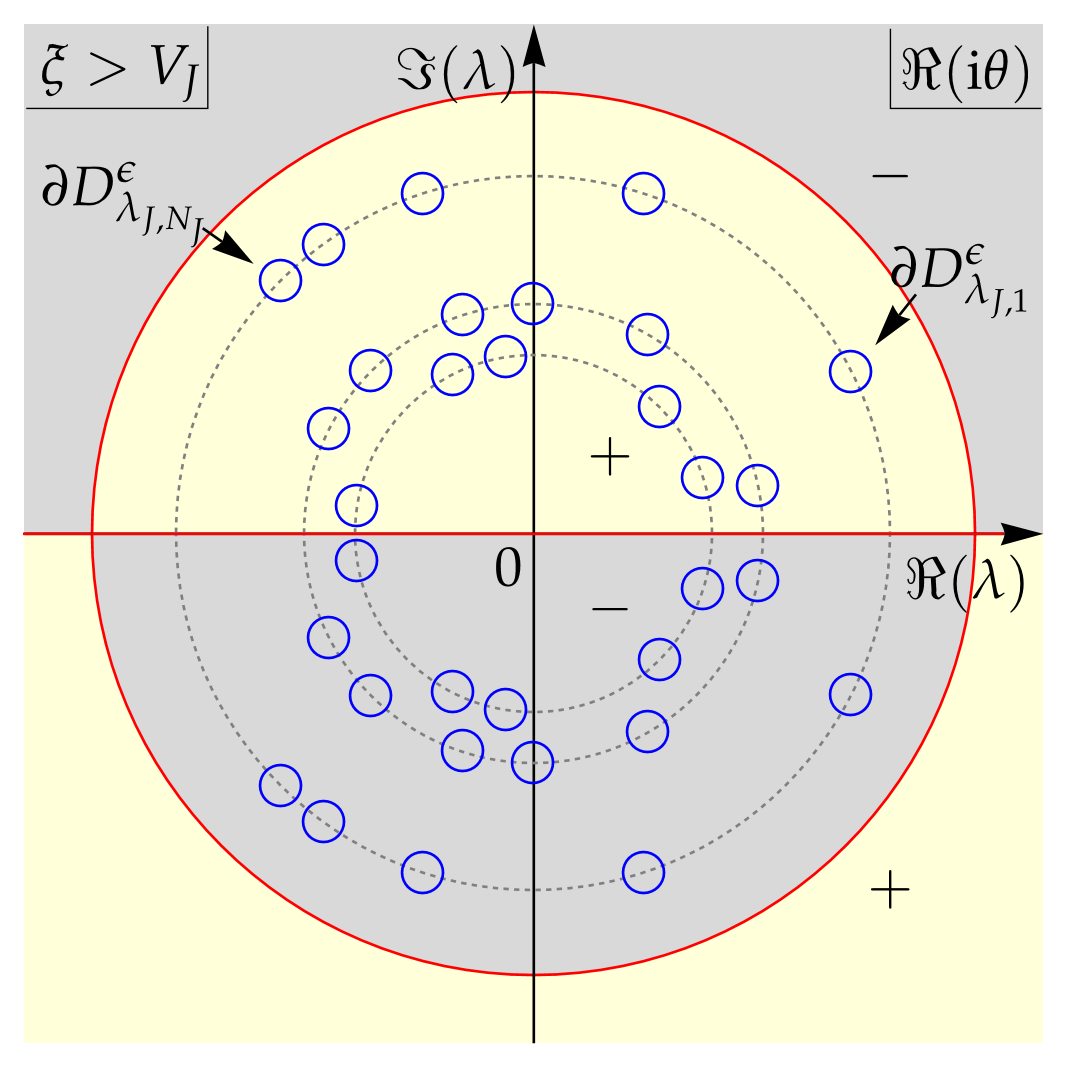}\quad
\includegraphics[width=0.31\textwidth]{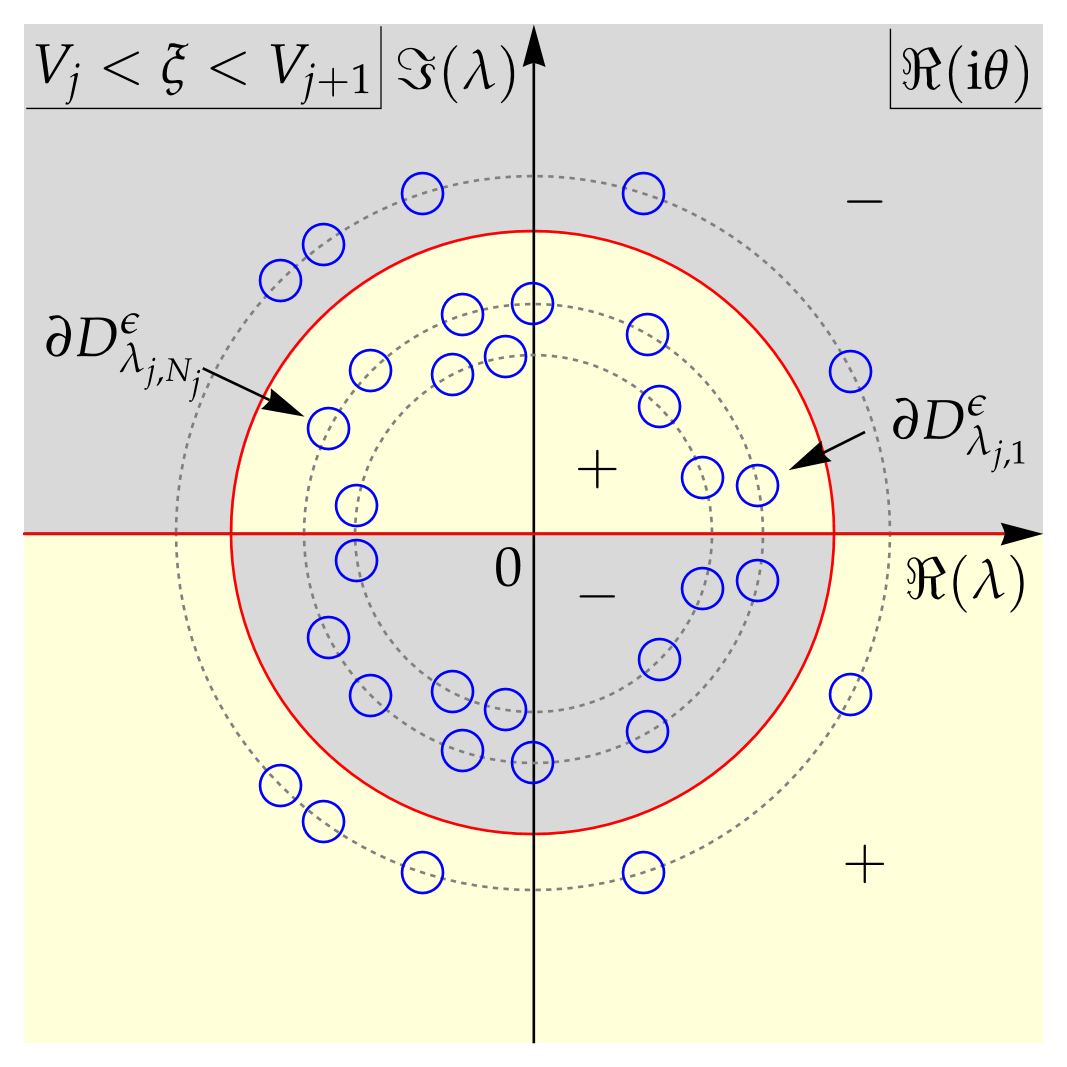}\quad
\includegraphics[width=0.31\textwidth]{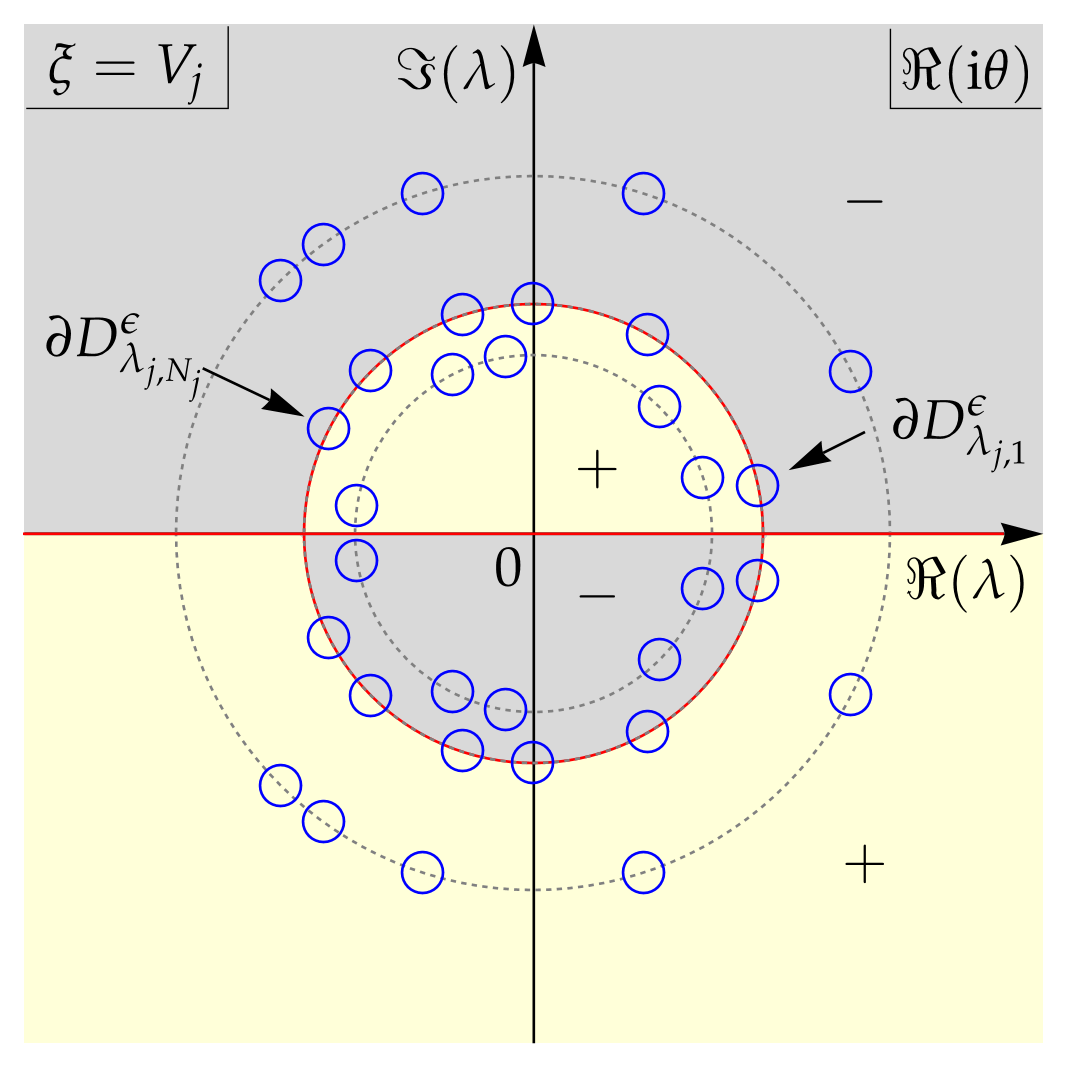}
\caption{Representative plots of jump configuration of RHP~\ref{rhp:N-soliton-jump-form} similarly to Figure~\ref{f:soliton-asymptotics-stable-tpos-case1} but for Cases 2--4 (from left to right) in Table~\ref{tab:soliton-asymptotics-stable-cases} with $t > 0$.}
\label{f:soliton-asymptotics-stable-tpos-case234}
\end{figure}
As shown in Figure~\ref{f:soliton-asymptotics-stable-tpos-case234}(left),
and as calculated by plugging $z = \xi t$ into $\Re(\ii\theta(\lambda;t,z))$,
with small enough $\epsilon$ in RHP~\ref{rhp:N-soliton-jump-form},
all the jump matrices $\V_{j,k}(\lambda;t,z)^{-1}$ in the upper half plane and
$\V_{j,k}(\conj\lambda;t,z)^\dagger$ in the lower half plane contain decaying exponentials,
\myblue{similarly to RHP~\ref{rhp:MDSG}}.
Taking the limit $t\to+\infty$, one directly arrives at the following result
\begin{equation}
\myblue{
q(t,z) = \O(\ee^{-\aleph t})\,,\qquad
\brho(t,z) = D_-\sigma_3(\I+\O(\ee^{-\aleph t}))\,,\qquad t\to+\infty\,.
}
\end{equation}
Recall that this also happens in Section~\ref{s:N-DSG-localization} in the case of $\xi > V$.

%%%%%%%%%%%%%%%%%%%%%%%%%%%%%%%%%%%%%%%%%%%%%%%%%%%%%%%%%%%%%%%%%%%%%%%%%%%%%%%%
\subsubsection{The case of $V_j < \xi < V_{j+1}$}
\label{s:N-soliton-asymptotics-stable-tpos-case3}

Now, let us discuss a novel case (comparing to Section~\ref{s:N-DSG-localization}) where in RHP~\ref{rhp:N-soliton-jump-form} some jumps contain growing exponentials and others contain decaying exponentials. The configuration can be seen in Figure~\ref{f:soliton-asymptotics-stable-tpos-case234}(center).

In particular,
let us fix the value of $j$ with $1\le j\le J - 1$ in this section.
Then,
as can be calculated easily,
the jump matrices
$\V_{l,k}(\lambda;t,z)^{-1}$ and
$\V_{l,k}(\conj\lambda;t,z)^\dagger$ with
$1\le l\le j$ and $1 \le k \le N_l$ contain decaying exponentials,
whereas matrices
$\V_{l,k}(\lambda;t,z)^{-1}$ and
$\V_{l,k}(\conj\lambda;t,z)^\dagger$ with
$j < l \le J$ and $1 \le k \le N_l$ contain growing exponentials,
as $t\to+\infty$ in RHP~\ref{rhp:N-soliton-jump-form}.
Thus,
one can ignore the decaying jumps for now,
and focus on the growing jump.
Recall $\delta_j(\lambda)$ defined in Equation~\eqref{e:deltan-def}.
Then, one defines a new matrix function
\begin{equation}
\label{e:Mj2-def}
\M_j^{(2)}(\lambda;t,z)
 \coloneqq \begin{cases}
\M(\lambda;t,z) \delta_{j+1}(\lambda)^{-\sigma_3}\,, &\qquad \lambda\in\Complex\setminus\bigcup\limits_{l = j+1}^J\bigcup\limits_{k = 1}^{N_l}\big(D_{\lambda_{l,k}}^\epsilon\bigcup D_{\conj{\lambda_{l,k}}}^\epsilon\big)\,,\\
\M(\lambda;t,z)(\lambda - \conj{\lambda_{l,k}})^{\sigma_3}\,, &\qquad \lambda\in D_{\lambda_{l,k}}^\epsilon\,,\quad l = j+1,\dots,J\,,\quad k = 1,\dots, N_l\,,\\
\M(\lambda;t,z)(\lambda - \lambda_{l,k})^{-\sigma_3}\,, &\qquad \lambda\in D_{\conj{\lambda_{l,k}}}^\epsilon\,,\quad l = j+1,\dots,J\,,\quad k = 1,\dots, N_l\,.
 \end{cases}
\end{equation}
Hence, one can check that $\M_j^{(2)}(\lambda;t,z)$ solves the following RHP.
\begin{rhp}
\label{e:Mj2}
\myblue{
Seek a $2\times2$
matrix function $\M_j^{(2)}(\lambda;t,z)$ analytic on $\Complex\setminus\partial D^\epsilon$ with continuous boundary values,
where
$D^\epsilon\coloneqq \bigcup_{\lambda_{j,k}\in\Lambda}\big(D_{\lambda_{j,k}}^\epsilon\bigcup D_{\conj{\lambda_{j,k}}}^\epsilon\big)$.
It has the asymptotics $\M_j^{(2)}(\lambda;t,z)\to\I$ as $\lambda\to\infty$ and jumps,
for $1\le l\le j$ and $1\le k\le N_l$,
}
\begin{equation}
\begin{aligned}
\M_j^{(2)+}(\lambda;t,z)
 & = \M_j^{(2)-}(\lambda;t,z) \delta_{j+1}(\lambda)^{\sigma_3}\V_{l,k}(\lambda;t,z)^{-1}\delta_{j+1}(\lambda)^{-\sigma_3}\,,\quad && \lambda\in\partial D_{\lambda_{l,k}}^\epsilon\,,\\
\M_j^{(2)+}(\lambda;t,z)
 & = \M_j^{(2)-}(\lambda;t,z) \delta_{j+1}(\lambda)^{\sigma_3}\V_{l,k}(\conj\lambda;t,z)^\dagger\delta_{j+1}(\lambda)^{-\sigma_3}\,,\quad && \lambda\in\partial D_{\conj{\lambda_{l,k}}}^\epsilon\,,
\end{aligned}
\end{equation}
where $\V_{l,k}(\lambda;t,z)$ is given in RHP~\ref{rhp:N-soliton-jump-form}, and for $j+1\le l\le J$ and $1\le k\le N_l$,
\begin{equation}
\begin{aligned}
\M_j^{(2)+}(\lambda;t,z)
 & = \M_j^{(2)-}(\lambda;t,z)\W_{j,l,k}(\lambda;t,z)\,,\qquad &&
 \lambda\in \partial D_{\lambda_{l,k}}^\epsilon\,,\\
\M_j^{(2)+}(\lambda;t,z)
 & = \M_j^{(2)-}(\lambda;t,z)\W_{j,l,k}(\conj\lambda;t,z)^{-\dagger}\,,\qquad &&
 \lambda\in \partial D_{\conj{\lambda_{l,k}}}^\epsilon\,,
\end{aligned}
\end{equation}
with
\begin{equation}
\everymath{\displaystyle}
\W_{j,l,k}(\lambda;t,z)
 \coloneqq \delta_{j+1}(\lambda)^{\sigma_3}\V_{l,k}(\lambda;t,z)^{-1}(\lambda - \conj{\lambda_{l,k}})^{\sigma_3}
 = \bpm
    \delta_{j+1}(\lambda) (\lambda - \conj{\lambda_{l,k}}) &
    0 \\
    -\frac{\lambda - \conj{\lambda_{l,k}}}{\lambda - \lambda_{l,k}}\frac{\omega_{l,k}\ee^{-2\ii\theta(\lambda;t,z)}}{\delta_{j+1}(\lambda)} &
    \frac{1}{\delta_{j+1}(\lambda)(\lambda - \conj{\lambda_{l,k}})}
    \epm\,.
\end{equation}
\end{rhp}
Let us define another matrix $\M_j^{(3)}(\lambda;t,z)$ as follows,
\begin{equation}
\label{e:Mj3-def}
\M_j^{(3)}(\lambda;t,z)
 \coloneqq \begin{cases}
  \M_j^{(2)}(\lambda;t,z)\,, & \quad \lambda\in\Complex\setminus\bigcup\limits_{l = j+1}^J\bigcup\limits_{k = 1}^{N_l}\big(D_{\lambda_{l,k}}^\epsilon\bigcup D_{\conj{\lambda_{l,k}}}^\epsilon\big)\,,\\
  \M_j^{(2)}(\lambda;t,z)\Z_{j,l,k}(\lambda;t,z)^{-1}\,, &\quad \lambda\in D_{\lambda_{l,k}}^\epsilon\,,\quad j+1\le l\le J\,,\quad 1\le k\le N_l\,,\\
  \M_j^{(2)}(\lambda;t,z)\Z_{j,l,k}(\conj\lambda;t,z)^\dagger\,, &\quad \lambda\in D_{\conj{\lambda_{l,k}}}^\epsilon\,,\quad j+1\le l\le J\,,\quad 1\le k\le N_l\,,
\end{cases}
\end{equation}
where for $j+1\le l\le J$ and $1\le k\le N_l$, and one defines
\begin{equation}
\everymath{\displaystyle}
\Z_{j,l,k}(\lambda;t,z)
 \coloneqq
 \bpm
  \frac{\lambda - \conj{\lambda_{l,k}}}{p_{j+1}^*(\lambda)}\frac{p_{j+1}^*(\lambda)^2 - p_{j+1}^*(\lambda_{l,k})^2}{p_{j+1}(\lambda)} &
  \omega_{l,k}^{-1}\frac{\lambda - \lambda_{l,k}}{\lambda - \conj{\lambda_{l,k}}}\frac{p_{j+1}^*(\lambda_{l,k})^2}{p_{j+1}(\lambda)p_{j+1}^*(\lambda)}\ee^{2\ii\theta(\lambda;t,z)} \\
  -\omega_{l,k}\frac{p_{j+1}(\lambda)}{p_{j+1}^*(\lambda)}\frac{\lambda - \conj{\lambda_{l,k}}}{\lambda - \lambda_{l,k}}\ee^{-2\ii\theta(\lambda;t,z)} &
  \frac{1}{\lambda - \conj{\lambda_{l,k}}}\frac{p_{j+1}(\lambda)}{p_{j+1}^*(\lambda)}
 \epm\,.
\end{equation}
Similarly to Section~\ref{s:N-DSG-localization},
it is important to notice that the matrices
$\Z_{j,l,k}(\lambda;t,z)$ are analytic in $D_{\lambda_{l,k}}^\epsilon$,
and the matrices $\Z_{j,l,k}(\conj\lambda;t,z)^\dagger$ are analytic in $D_{\conj{\lambda_{l,k}}}^\epsilon$.
Therefore,
the new matrix $\M_j^{(3)}(\lambda;t,z)$ are still analytic in
$D_{\lambda_{l,k}}^\epsilon\bigcup D_{\conj{\lambda_{l,k}}}^\epsilon$
for all $1\le l\le J$ and $1\le k\le N_l$.
One can then verify that $\M_j^{(3)}(\lambda;t,z)$ solves the following RHP.
\begin{rhp}
\label{rhp:Mj3}
\myblue{
Seek a $2\times2$
matrix function $\M_j^{(3)}(\lambda;t,z)$ analytic on
$\Complex\setminus\partial D^\epsilon$ with continuous boundary values,
where $D^\epsilon$ is defined in RHP~\ref{e:Mj2}.
It has the asymptotics $\M_j^{(3)}(\lambda;t,z)\to\I$ as $\lambda\to\infty$ and jumps,
for $1\le l\le j$ and $1\le k\le N_l$
}
\begin{equation}
\begin{aligned}
\M_j^{(3)+}(\lambda;t,z)
 & = \M_j^{(3)-}(\lambda;t,z) \delta_{j+1}(\lambda)^{\sigma_3}\V_{l,k}(\lambda;t,z)^{-1}\delta_{j+1}(\lambda)^{-\sigma_3}\,,\quad && \lambda\in\partial D_{\lambda_{l,k}}^\epsilon\,,\\
\M_j^{(3)+}(\lambda;t,z)
 & = \M_j^{(3)-}(\lambda;t,z) \delta_{j+1}(\lambda)^{\sigma_3}\V_{l,k}(\conj\lambda;t,z)^\dagger\delta_{j+1}(\lambda)^{-\sigma_3}\,,\quad && \lambda\in\partial D_{\conj{\lambda_{l,k}}}^\epsilon\,,
\end{aligned}
\end{equation}
where the matrices $\V_{l,k}(\lambda;t,z)$ are given in RHP~\ref{rhp:N-soliton-jump-form}, and for $j+1\le l\le J$ and $1\le k\le N_l$
\begin{equation}
\begin{aligned}
\M_j^{(3)+}(\lambda;t,z)
 & = \M_j^{(3)-}(\lambda;t,z)\W_{j,l,k}^{(1)}(\lambda;t,z)\,,\qquad && \lambda\in\partial D_{\lambda_{l,k}}^\epsilon\,,\\
\M_j^{(3)+}(\lambda;t,z)
 & = \M_j^{(3)-}(\lambda;t,z)\W_{j,l,k}^{(1)}(\conj\lambda;t,z)^{-\dagger}\,,\qquad && \lambda\in\partial D_{\conj{\lambda_{l,k}}}^\epsilon\,,
\end{aligned}
\end{equation}
where $\W_{j,l,k}^{(1)}(\lambda;t,z)$ are given by
\begin{equation}
\everymath{\displaystyle}
\W_{j,l,k}^{(1)}(\lambda;t,z)
    \coloneqq \W_{j,l,k}(\lambda;t,z)\Z_{j,l,k}(\lambda;t,z)^{-1}
    = \bpm
        1 & -\frac{\lambda - \lambda_{l,k}}{\omega_{l,k}}\frac{p_{j+1}^*(\lambda_{l,k})^2}{p_{j+1}(\lambda)^2}\ee^{2\ii\theta(\lambda;t,z)} \\ 0 & 1
    \epm\,.
\end{equation}
\end{rhp}
It is easy to check that
all the jumps in RHP~\ref{rhp:Mj3} are decaying exponentially to the identity matrix $\I$
as $t\to+\infty$,
\myblue{similarly to RHP~\ref{rhp:MDSG}}.
Therefore,
the solution $\M_j^{(3)}(\lambda;t,z)$ to RHP~\ref{rhp:Mj3} can be approximated by
\begin{equation}
\label{e:Mj3-asymptotics}
\myblue{
\M_j^{(3)}(\lambda;t,z) = \I + \O(\ee^{-\aleph t})\,,\qquad t\to+\infty\,,
}
\end{equation}
where $\aleph$ is a positive constant.
Then,
tracing back all the deformations, one concludes that as $t\to+\infty$
and with $|\lambda|\gg1$ or $|\lambda| \ll 1$
\begin{equation}
\label{e:M-asymptotics-stable-tpos-case3}
\myblue{
\M(\lambda;t,z)
 = \M_j^{(2)}(\lambda;t,z)\delta_{j+1}^{\sigma_3}
 = \M_j^{(3)}(\lambda;t,z)\delta_{j+1}^{\sigma_3}
 = \delta_{j+1}(\lambda)^{\sigma_3}(\I + \O(\ee^{-\aleph t}))\,.
}
\end{equation}
Recall that $\delta_{j+1}(\lambda)\to1$ as $\lambda\to\infty$, and $\delta_{j+1}(\lambda)^{\sigma_3}$ commutes with $\sigma_3$.
The reconstruction formula from Lemma~\ref{thm:reconstruction} yields
\begin{equation}
\myblue{
q(t,z) = \O(\ee^{-\aleph t})\,,\qquad
\brho(t,z) = D_-\sigma_3(\I + \O(\ee^{-\aleph t}))\,,\qquad
t\to+\infty\,.
}
\end{equation}
Notice that the calculations in the current subsection are valid for
a fixed $j = 1,2,\dots ,J-1$,
so the Case 3 in Table~\ref{tab:soliton-asymptotics-stable-cases} is finished.
%, i.e., we have proved that when $\xi$ is between any two adjacent velocities, the solution as $t\to+\infty$ decays to the ZBG exponentially.

%%%%%%%%%%%%%%%%%%%%%%%%%%%%%%%%%%%%%%%%%%%%%%%%%%%%%%%%%%%%%%%%%%%%%%%%%%%%%%%%
\subsubsection{The case of $\xi = V_j$}
\label{s:N-soliton-asymptotics-stable-tpos-case4}

Fixing the value of $j = 1,2,\dots, J$, we analyze Case~4 in Table~\ref{tab:soliton-asymptotics-stable-cases} in this subsection. The configuration is shown in Figure~\ref{f:soliton-asymptotics-stable-tpos-case234}(right). The difference between the current case and Case~3 in Section~\ref{s:N-soliton-asymptotics-stable-tpos-case3} is that the jumps on the contours $\partial D_{\lambda_{j,k}}^\epsilon$ and $\partial D_{\conj{\lambda_{j,k}}}^\epsilon$ for all $1\le k\le N_j$ do not decay to the identity matrix uniformly. Other than this, the two cases are quite alike, in the sense that, all jumps enclosing eigenvalues $\lambda_{l,k}$ and $\conj{\lambda_{l,k}}$ with $1 \le l\le j-1$ decay to the identity matrix uniformly, and the ones enclosing $\lambda_{l,k}$ and $\conj{\lambda_{l,k}}$ with $l > j$ grow uniformly, as $t\to+\infty$. As a result, both deformations~\eqref{e:Mj2-def} and~\eqref{e:Mj3-def} apply to the current case. Thus, we start the discussion of Case~4 with RHP~\ref{rhp:Mj3}. However, unlike the previous case, one cannot take the limit $t\to+\infty$ of RHP~\ref{rhp:Mj3} and arrive at Equation~\eqref{e:Mj3-asymptotics}, because the jumps surrounding eigenvalues $\lambda_{j,k}$ are oscillatory as $t$ changes. In fact, RHP~\ref{rhp:Mj3} do not have a limit in Case~4.

To continue our calculation, let us first build a parametrix for RHP~\ref{rhp:Mj3} by ignoring all decaying jumps.
\begin{rhp}[Jump form]
\label{rhp:Ps+j-jump-form}
\myblue{
Seek a $2\times2$
matrix function $\P_{\s+,j}(\lambda;t,z)$ analytic on
$\Complex\setminus\partial\bigcup_{k=1}^{N_j}\big(D_{\lambda_{j,k}}^\epsilon\bigcup D_{\conj{\lambda_{j,k}}}^\epsilon\big)$
with continuous boundary values.
It has the asymptotics $\P_{\s+,j}(\lambda;t,z)\to\I$ as $\lambda\to\infty$ and satisfies the following jumps,
}
\begin{equation}
\begin{aligned}
\P_{\s+,j}^{+}(\lambda;t,z)
 & = \P_{\s+,j}^{-}(\lambda;t,z) \delta_{j+1}(\lambda)^{\sigma_3}\V_{j,k}(\lambda;t,z)^{-1}\delta_{j+1}(\lambda)^{-\sigma_3}\,,\quad && \lambda\in\partial D_{\lambda_{j,k}}^\epsilon\,,\quad 1\le k\le N_j\,,\\
\P_{\s+,j}^{+}(\lambda;t,z)
 & = \P_{\s+,j}^{-}(\lambda;t,z) \delta_{j+1}(\lambda)^{\sigma_3}\V_{j,k}(\conj\lambda;t,z)^\dagger\delta_{j+1}(\lambda)^{-\sigma_3}\,,\quad && \lambda\in\partial D_{\conj{\lambda_{j,k}}}^\epsilon\,,\quad 1\le k\le N_j\,,
\end{aligned}
\end{equation}
where $\V_{j,k}(\lambda;t,z)$ is still given in RHP~\ref{rhp:N-soliton-jump-form}.
\end{rhp}
The subscript ``$\s+$" denotes the case in a stable medium with asymptotics $t\to+\infty$. For convenience, we explicitly express the jump matrices of RHP~\ref{rhp:Ps+j-jump-form} below
\begin{equation}
\everymath{\displaystyle}
\begin{aligned}
\delta_{j+1}(\lambda)^{\sigma_3}\V_{j,k}(\lambda;t,z)^{-1}\delta_{j+1}(\lambda)^{-\sigma_3}
    & = \bpm
        1 & 0 \\ -\frac{\omega_{j,k}\delta_{j+1}(\lambda)^{-2}}{\lambda - \lambda_{j,k}}\ee^{-2\ii\theta(\lambda;t,z)} & 1
    \epm\,,\\
\delta_{j+1}(\lambda)^{\sigma_3}\V_{j,k}(\conj\lambda;t,z)^\dagger\delta_{j+1}(\lambda)^{-\sigma_3}
    & = \bpm
        1 & \frac{\conj{\omega_{j,k}}\delta_{j+1}(\lambda)^2}{\lambda - \conj{\lambda_{j,k}}}\ee^{2\ii\theta(\lambda;t,z)} \\ 0 & 1
    \epm\,.
\end{aligned}
\end{equation}
Note that the $(2,1)$ entry of
$\delta_{j+1}(\lambda)^{\sigma_3}\V_{j,k}(\lambda;t,z)^{-1}\delta_{j+1}(\lambda)^{-\sigma_3}$
and the $(1,2)$ entry of
$\delta_{j+1}(\lambda)^{\sigma_3}\V_{j,k}(\conj\lambda;t,z)^\dagger\delta_{j+1}(\lambda)^{-\sigma_3}$
have simple poles at $\lambda = \lambda_{j,k}$ and $\lambda = \conj{\lambda_{j,k}}$,
respectively.
Hence,
one can convert the above RHP with jumps into an equivalent RHP with residue conditions,
just like the equivalence between RHPs~\ref{rhp:N-soliton-residue-form} and~\ref{rhp:N-soliton-jump-form}.
Comparing the residue form of RHP~\ref{rhp:Ps+j-jump-form} with RHP~\ref{rhp:N-soliton-residue-form}, one immediately realizes that RHP~\ref{rhp:Ps+j-jump-form} describes a DSG containing $N_j$ solitons. The eigenvalues are still the same $\Lambda_j$ in the upper half plane, but the corresponding norming constants are modified as the set $\Omega_j^{(+)}$ containing $\omega_{j,k}^{(+)}$. Therefore, the parametrix $\P_{\s+,j}(\lambda;t,z)$ is solvable, and is bounded as $t\to+\infty$ via Theorem~\ref{thm:N-DSG} with $z = V_j t$.

One looks at the difference between the original matrix function $\M^{(3)}(\lambda;t,z)$ from RHP~\ref{rhp:Mj3} and its parametrix $\P_{\s+,j}(\lambda;t,z)$ from RHP~\ref{rhp:Ps+j-jump-form}, by defining an error function
\begin{equation}
\label{e:Es+-def}
\E_{\s+,j}(\lambda;t,z) \coloneqq \M_j^{(3)}(\lambda;t,z) \P_{\s+,j}(\lambda;t,z)^{-1}\,,\qquad
\mbox{for a fixed } j = 1,2,\dots, J\,.
\end{equation}
Then, RHPs~\ref{rhp:Mj3} and~\ref{rhp:Ps+j-jump-form} yield a RHP for the error function $\E_{\s+,j}(\lambda;t,z)$.
\begin{rhp}
\label{rhp:Es+j}
\myblue{
Seek a $2\times2$
matrix function $\E_{\s+,j}(\lambda;t,z)$ analytic on
$\Complex\setminus\partial D^\epsilon$ with continuous boundary values,
where
$D^\epsilon\coloneqq \bigcup_{l\ne j}\bigcup_{k=1}^{N_l}\big(D_{\lambda_{l,k}}^\epsilon\bigcup D_{\conj{\lambda_{l,k}}}^\epsilon\big)$.
It has asymptotics $\E_{\s+,j}(\lambda;t,z)\to\I$ as $\lambda\to\infty$ and satisfies jumps,
for $1\le l< j$ and $1\le k\le N_l$
}
\begin{equation}
\begin{aligned}
\E_{\s+,j}^{+}(\lambda;t,z)
 & = \E_{\s+,j}^{-}(\lambda;t,z)\P_{\s+,j} \delta_{j+1}(\lambda)^{\sigma_3}\V_{l,k}(\lambda;t,z)^{-1}\delta_{j+1}(\lambda)^{-\sigma_3}\P_{\s+,j}^{-1}\,,\quad && \lambda\in\partial D_{\lambda_{l,k}}^\epsilon\,,\\
\E_{\s+,j}^{+}(\lambda;t,z)
 & = \E_{\s+,j}^{-}(\lambda;t,z)\P_{\s+,j} \delta_{j+1}(\lambda)^{\sigma_3}\V_{l,k}(\conj\lambda;t,z)^\dagger\delta_{j+1}(\lambda)^{-\sigma_3}\P_{\s+,j}^{-1}\,,\quad && \lambda\in\partial D_{\conj{\lambda_{l,k}}}^\epsilon\,,
\end{aligned}
\end{equation}
and for $j+1\le l\le J$ and $1\le k\le N_l$
\begin{equation}
\begin{aligned}
\E_{\s+,j}^{+}(\lambda;t,z)
 & = \E_{\s+,j}^{-}(\lambda;t,z)\P_{\s+,j}\W_{j,l,k}^{(1)}(\lambda;t,z)\P_{\s+,j}^{-1}\,,\qquad && \lambda\in\partial D_{\lambda_{l,k}}^\epsilon\,,\\
\E_{\s+,j}^{+}(\lambda;t,z)
 & = \E_{\s+,j}^{-}(\lambda;t,z)\P_{\s+,j}\W_{j,l,k}^{(1)}(\conj\lambda;t,z)^{-\dagger}\P_{\s+,j}^{-1}\,,\qquad && \lambda\in\partial D_{\conj{\lambda_{l,k}}}^\epsilon\,.
\end{aligned}
\end{equation}
\end{rhp}
Note that $\P_{\s+,j}(\lambda;t,z)$ is bounded as $t\to+\infty$.
Thus,
all jump matrices for $\E_{\s+,j}(\lambda;t,z)$ in RHP~\ref{rhp:Es+j}
decay to the identity matrix exponentially and uniformly as $t\to+\infty$,
\myblue{just like what happens to RHP~\ref{rhp:MDSG}}.
Consequently,
one writes
\begin{equation}
\myblue{
\E_{\s+,j}(\lambda;t,z) = \I + \\O(\ee^{-\aleph t})\,,\qquad t\to+\infty\,,
}
\end{equation}
for every fixed $j = 1,2,\dots,J$ and with a positive constant $\aleph$.
Because of the boundedness of $\P_{\s+,j}(\lambda;t,z)$, the definition~\eqref{e:Es+-def} yields
\begin{equation}
\myblue{
\M_j^{(3)}(\lambda;t,z) = (\I + \O(\ee^{-\aleph t}))\P_{\s+,j}(\lambda;t,z)\,,\qquad t\to+\infty\,.
}
\end{equation}
Therefore, the solution to RHP~\ref{rhp:N-soliton-jump-form} can be reconstructed similarly to Equation~\eqref{e:M-asymptotics-stable-tpos-case3}, as follows,
\begin{equation}
\label{e:M-asymptotics-stable-tpos-case4}
\myblue{
\M(\lambda;t,z)
 = (\I + \O(\ee^{-\aleph t}))\P_{\s+,j}(\lambda;t,z)\delta_{j+1}(\lambda)^{\sigma_3}\,,\qquad
  |\lambda|\gg1\mbox{ or } |\lambda| \ll 1\,.
}
\end{equation}
Recall that $\delta_{j+1}(\lambda)\to1$ as $\lambda\to\infty$, and $\delta_{j+1}(\lambda)^{\sigma_3}$ commutes with $\sigma_3$. So, by Lemma~\ref{thm:reconstruction} and Theorem~\ref{thm:N-DSG} one concludes
\begin{equation}
\myblue{
\begin{aligned}
q(t,z;\Lambda,\Omega) & = q(t,z;\Lambda_j,\Omega_j^{(+)}) + \O(\ee^{-\aleph t})\,,\\
D(t,z;\Lambda,\Omega) & = D(t,z;\Lambda_j,\Omega_j^{(+)}) + \O(\ee^{-\aleph t})\,,\\
P(t,z;\Lambda,\Omega) & = P(t,z;\Lambda_j,\Omega_j^{(+)}) + \O(\ee^{-\aleph t})\,,
\end{aligned}
}
\end{equation}
where the leading-order term
\myblue{
$\{q(t,z;\Lambda_j,\Omega_j^{(+)}), D(t,z;\Lambda_j,\Omega_j^{(+)}), P(t,z;\Lambda_j,\Omega_j^{(+)})\}$
}
is an $N_j$-DSG derived from RHP~\ref{rhp:Ps+j-jump-form} with eigenvalues $\Lambda_j$
and modified norming constants $\Omega_j^{(+)}$
\myblue{
via Theorem~\ref{thm:N-DSG},
and the quantity $V_j$ is nothing else but the velocity of the DSG.
}

%%%%%%%%%%%%%%%%%%%%%%%%%%%%%%%%%%%%%%%%%%%%%%%%%%%%%%%%%%%%%%%%%%%%%%%%%%%%%%%%
\subsection{Soliton asymptotics as $t\to-\infty$ in a stable medium}
\label{s:N-soliton-asymptotics-stable-tneg}

\begin{figure}[tp]
\centering
\includegraphics[width=0.31\textwidth]{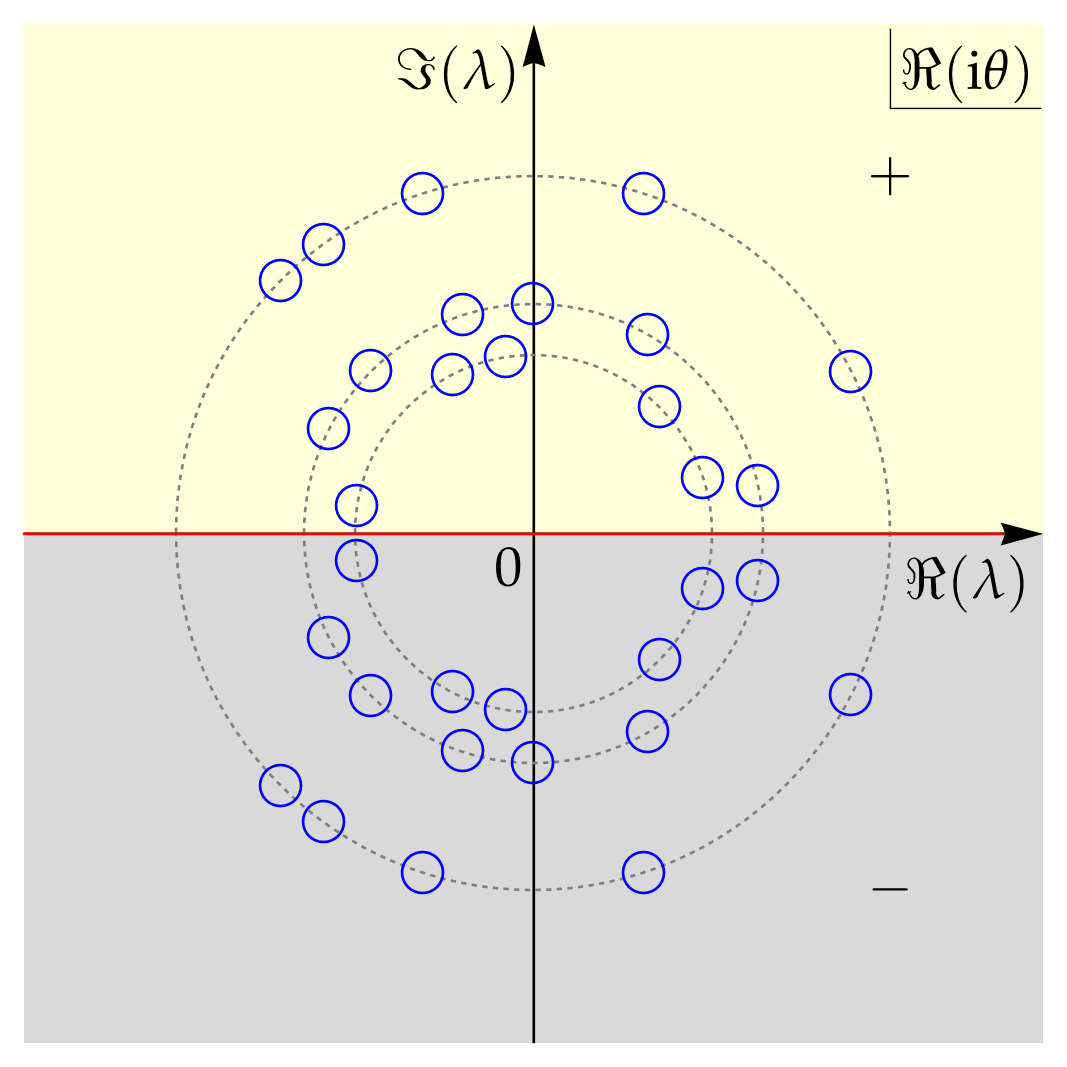}\qquad
\includegraphics[width=0.31\textwidth]{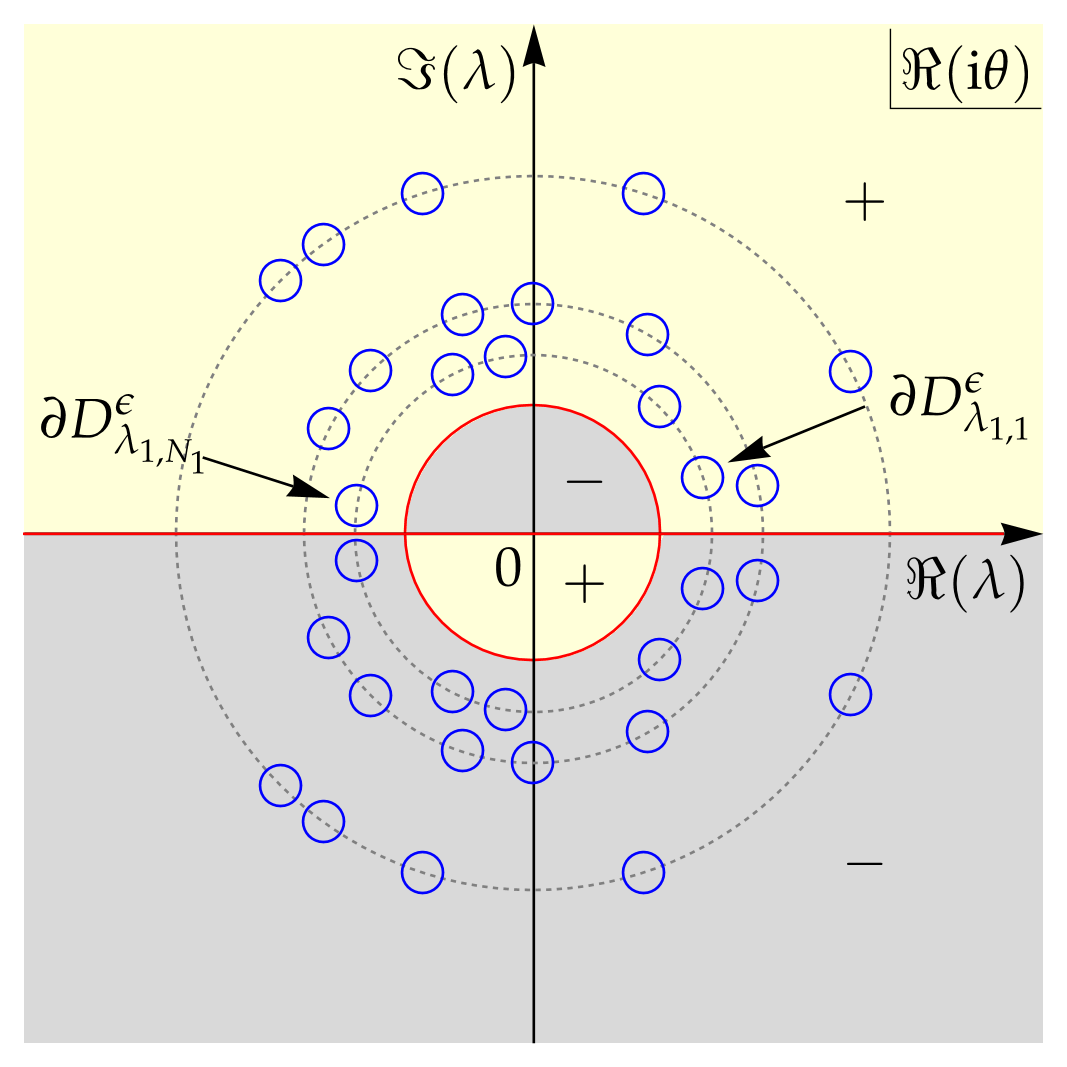}\\[1ex]
\includegraphics[width=0.31\textwidth]{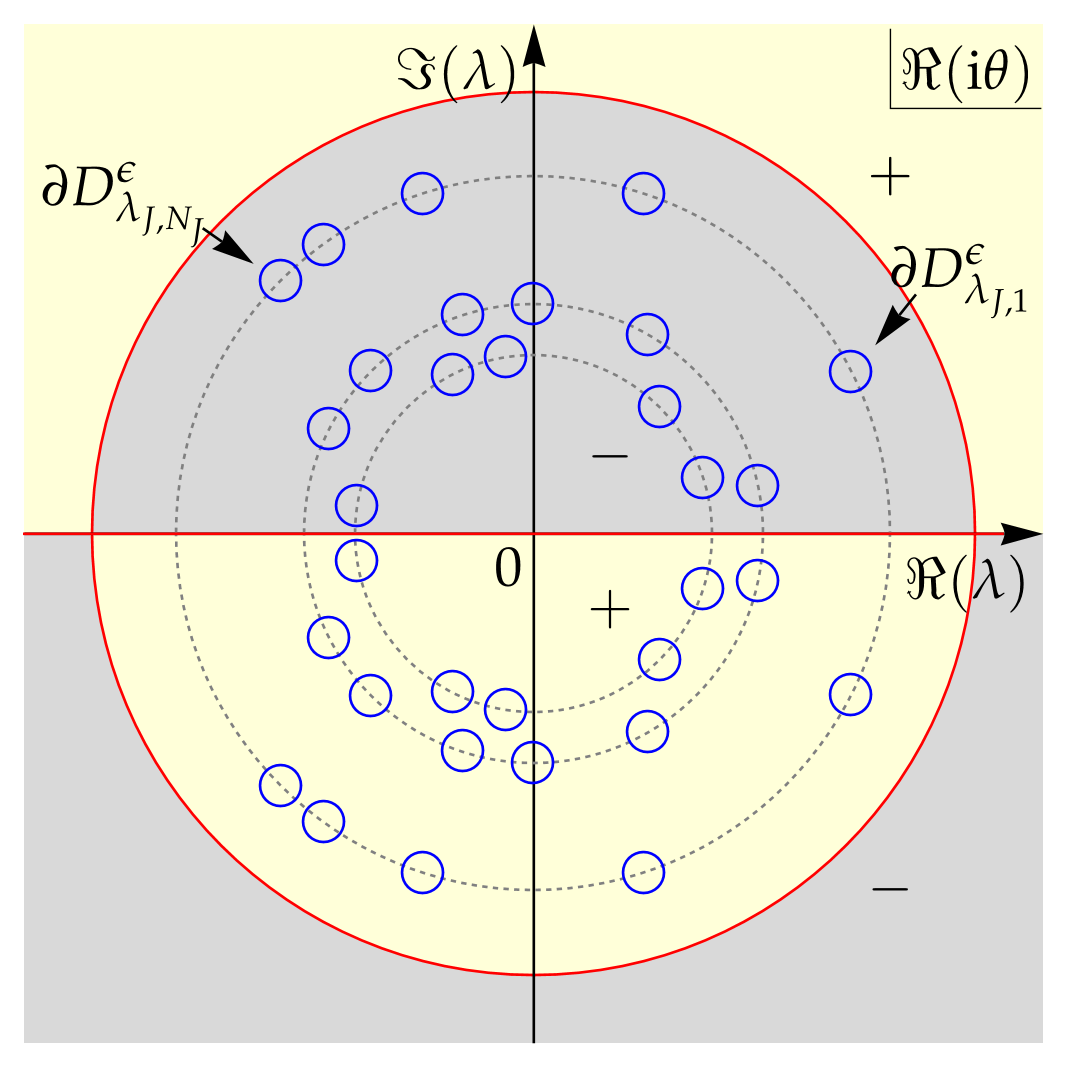}\quad
\includegraphics[width=0.31\textwidth]{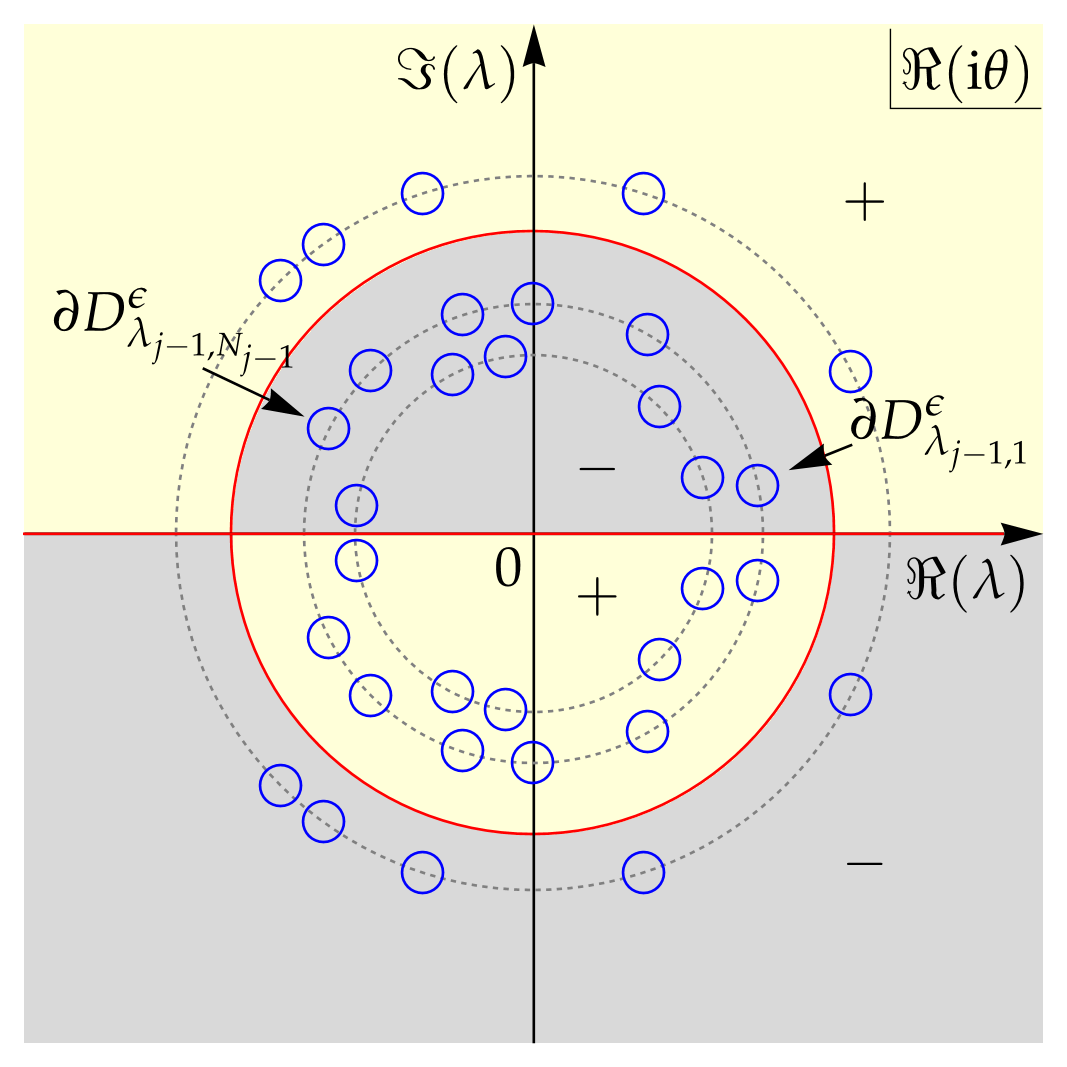}\quad
\includegraphics[width=0.31\textwidth]{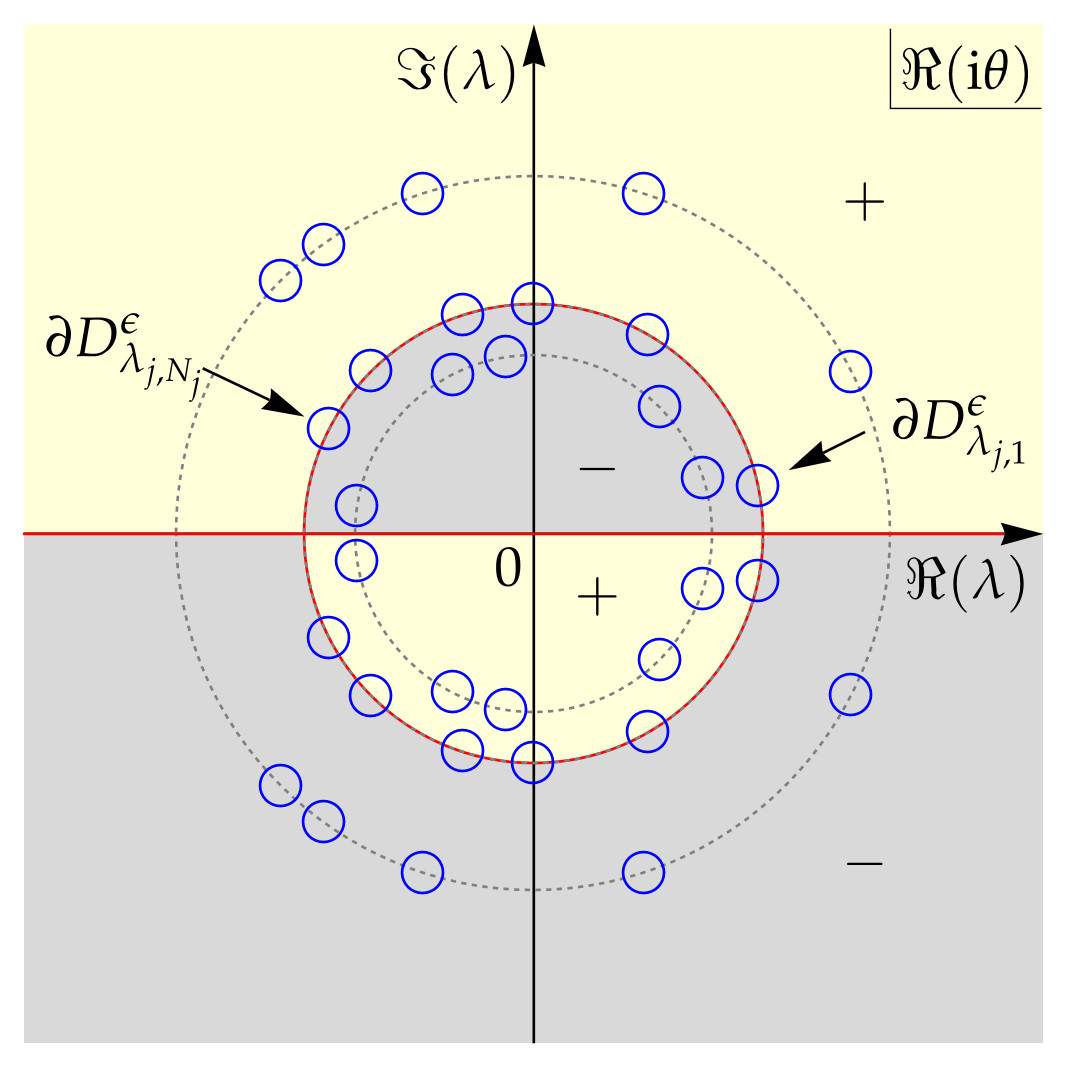}
\caption{Similarly to Figures~\ref{f:soliton-asymptotics-stable-tpos-case1} and~\ref{f:soliton-asymptotics-stable-tpos-case234}, but with $t < 0$ representing the four cases in Table~\ref{tab:soliton-asymptotics-stable-cases} .
Top row: situations $\xi \le 0 < V_1$ (left) and $0 < \xi <V_1$ (right) corresponding to Case 1.
Bottom left: $\xi > V_J$ corresponding to Case 2.
Bottom center: $V_{j-1} < \xi < V_j$ slightly different but equivalent to Case 3.
Bottom right: $\xi = V_j$ corresponding to Case 4.
}
\label{f:soliton-asymptotics-stable-tneg-case1234}
\end{figure}
One needs to consider the four cases in Table~\ref{tab:soliton-asymptotics-stable-cases},
but with the limit $t\to-\infty$.
Because $t < 0$ instead of $t > 0$ from Section~\ref{s:N-soliton-asymptotics-stable-tpos},
and because of $\theta(\lambda; t,z) = \lambda t + \xi t/(2\lambda)$,
the quantity $\Re(\ii\theta(t,z))$ flips its sign comparing to
the previous $t\to+\infty$ discussion.

The illustrative plots are given in Figure~\ref{f:soliton-asymptotics-stable-tneg-case1234},
which should be compared with
Figures~\ref{f:soliton-asymptotics-stable-tpos-case1} and~\ref{f:soliton-asymptotics-stable-tpos-case234}.
It is then necessary to notice the similarities and differences
between the situations $t\to-\infty$ and $t\to+\infty$.
Built on the detailed discussion on the case of $t\to+\infty$
in Section~\ref{s:N-soliton-asymptotics-stable-tpos},
the results of the $t\to-\infty$ case can be easily obtained,
in similar treatments.
As such, most details will be omitted in order to avoid repetitions.
\myblue{
In fact, we omit the calculations of Cases 1--3 completely,
and only show essential steps in Case 4
from Table~\ref{tab:soliton-asymptotics-stable-cases}.
}
%It is important to point out that we have chosen $V_{j-1} < \xi < V_j$ in Figure~\ref{f:soliton-asymptotics-stable-tneg-case1234}(bottom center), instead of $V_j < \xi < V_{j+1}$ presented in Table~\ref{tab:soliton-asymptotics-stable-cases}. We shift the index in order to simplify calculations in Section~\ref{s:N-soliton-asymptotics-stable-tneg-case3}, as will be discussed in detail there. The shifted case is equivalent to Case~3 in Table~\ref{tab:soliton-asymptotics-stable-cases}. One just needs to write $j = 2,3,\dots, J$ instead of $j = 1,2,\dots, J-1$ in Section~\ref{s:N-soliton-asymptotics-stable-tpos-case3}.

%%%%%%%%%%%%%%%%%%%%%%%%%%%%%%%%%%%%%%%%%%%%%%%%%%%%%%%%%%%%%%%%%%%%%%%%%%%%%%%%
\subsubsection{The case of $\xi \ne V_j$ for all $1\le j\le J$}

\myblue{
Mimicking similarly calculations from
Sections~\ref{s:N-soliton-asymptotics-stable-tpos-case1}--\ref{s:N-soliton-asymptotics-stable-tpos-case3},
one obtains
\begin{equation}
q(t,z) = \O(\ee^{-\aleph |t|})\,,\qquad
\brho(t,z) = D_-\sigma_3 + \O(\ee^{-\aleph |t|})\,,\qquad
t\to-\infty\,,
\end{equation}
with a positive constant $\aleph$.
}

\subsubsection{The case of $\xi = V_j$}
\label{s:N-soliton-asymptotics-stable-tneg-case4}

First, let us fix the value of $j = 1,2,\dots, J$.
According to Figure~\ref{f:soliton-asymptotics-stable-tneg-case1234}(bottom right),
this case is analogous to the one in Section~\ref{s:N-soliton-asymptotics-stable-tpos-case4},
in the sense that all the jumps of RHP~\ref{rhp:N-soliton-jump-form} with indices $l < j$ are growing as $t\to-\infty$
and the ones with $l > j$ are decaying to the identity matrix in the asymptotics.
\myblue{
With similar treatments for the growing jumps in Section~\ref{s:N-soliton-asymptotics-stable-tpos-case4},
%Different from the previous case,
%here the jumps with $l = j$ stay bounded.
%The steps to treat the growing jumps is identical to the previous case,
%so we omit the calculations and start with RHP~\ref{rhp:Mj6},
%which is similar to what happens in Section~\ref{s:N-soliton-asymptotics-stable-tpos-case4}.
one arrives at a RHP with all jumps indexed $l \ne j$ decaying exponentially to the identity matrix as $t\to-\infty$.
Ignoring all decaying jumps yields the following leading-order problem.
}
\begin{rhp}[Jump form]
\label{rhp:Ps-j-jump}
\myblue{
Let $j = 1,2,\dots, J$ be fixed.
Seek a $2\times2$
matrix function $\P_{\s-,j}(\lambda;t,z)$ analytic on
$\Complex\setminus\partial \bigcup_{k=1}^{N_j}\big(D_{\lambda_{j,k}}^\epsilon\bigcup D_{\conj{\lambda_{j,k}}}^\epsilon\big)$,
with continuous boundary values.
It has the asymptotics $\P_{\s-,j}(\lambda;t,z)\to\I$ as $\lambda\to\infty$,
and satisfies jumps
}
\begin{equation}
\everymath{\displaystyle}
\begin{aligned}
\P_{\s-,j}^{+}
 & = \P_{\s-,j}^{-}\delta_1^{\sigma_3}\delta_j^{-\sigma_3}\V_{j,k}(\lambda;t,z)^{-1}\delta_j^{\sigma_3}\delta_1^{-\sigma_3}
 = \P_{\s-,j}^{-}\bpm 1 & 0 \\ -\frac{\omega_{j,k}}{\lambda - \lambda_{j,k}}\frac{\delta_j(\lambda)^{2}}{\delta_1(\lambda)^{2}}\ee^{-2\ii\theta} & 1 \epm \,,\qquad
 \lambda\in\partial D_{\lambda_{j,k}}^\epsilon\,,\\
\P_{\s-,j}^{+}
 & = \P_{\s-,j}^{-}\delta_1^{\sigma_3}\delta_j^{-\sigma_3}\V_{j,k}(\conj\lambda;t,z)^\dagger\delta_j^{\sigma_3}\delta_1^{-\sigma_3}
 = \P_{\s-,j}^{-}\bpm 1 & \frac{\conj{\omega_{j,k}}}{\lambda - \conj{\lambda_{j,k}}}\frac{\delta_1(\lambda)^2}{\delta_j(\lambda)^2}\ee^{2\ii\theta} \\ 0 & 1 \epm\,,\qquad
 \lambda\in\partial D_{\conj{\lambda_{j,k}}}^\epsilon\,,\\
\end{aligned}
\end{equation}
\end{rhp}
The subscript ``$\s-$" denote the asymptotics in a stable medium as $t\to-\infty$.
If needed, the above RHP can be turned in to a residue form.
%
\iffalse
According to Theorem~\ref{thm:N-DSG},
solution $\P_{\s-,j}(\lambda;t,z)$ is bounded as $t\to-\infty$.
One then considers the following error function,
\begin{equation}
\label{e:Es--def}
\E_{\s-,j}(\lambda;t,z) \coloneqq \M_j^{(6)}(\lambda;t,z)\P_{\s-,j}(\lambda;t,z)^{-1}\,,\qquad
\mbox{ for a fixed } j = 1,2,\dots, J\,,
\end{equation}
Similarly to the matrix $\E_{\s+,j}(\lambda;t,z)$ from Equation~\eqref{e:Es+-def},
\myblue{
it is straightforward to verify that $\E_{\s-,j}(\lambda;t,z)$
satisfies a RHP whose jump matrices decay uniformly and exponentially
to the identity matrix as $t\to-\infty$.
Similarly to what happens to RHP~\ref{rhp:MDSG},
one writes $\E_{\s-,j}(\lambda;t,z) = \I + \O(\ee^{-\aleph |t|})$,
as $t\to-\infty$ with a positive constant $\aleph$.
Consequently,
it is $\M_j^{(6)}(\lambda;t,z) = (\I + \O(\ee^{-\aleph |t|}))\P_{\s-,j}(\lambda;t,z)$,
as $t\to-\infty$.
Tracing back all the deformations in Section~\ref{s:N-soliton-asymptotics-stable-tneg-case3},
one finally obtains as $t\to-\infty$,
$\M(\lambda;t,z) = (\I + \O(\ee^{-\aleph |t|}))\P_{\s-,j}(\lambda;t,z)\delta_1(\lambda)^{\sigma_3}\delta_j(\lambda)^{-\sigma_3}$,
with $|\lambda|\gg1\mbox{ or } |\lambda| \ll 1$.
}
Recall $\delta_1(\lambda) \to 1$ and $\delta_j(\lambda)\to1$ as $\lambda\to\infty$,
and $\delta_1(\lambda)^{\sigma_3}\delta_j(\lambda)^{-\sigma_3}$ commutes with $\sigma_3$,
\fi
\myblue{
Again,
omitting details,
RHP~\ref{rhp:Ps-j-jump} together with
Lemma~\ref{thm:reconstruction} and Theorem~\ref{thm:N-DSG} yields
}
\begin{equation}
\myblue{
\begin{aligned}
q(t,z;\Lambda,\Omega) & = q(t,z;\Lambda_j,\Omega_j^{(-)}) + \O(\ee^{-\aleph |t|})\,,\\
D(t,z;\Lambda,\Omega) & = D(t,z;\Lambda_j,\Omega_j^{(-)}) + \O(\ee^{-\aleph |t|})\,,\\
P(t,z;\Lambda,\Omega) & = P(t,z;\Lambda_j,\Omega_j^{(-)}) + \O(\ee^{-\aleph |t|})\,,
\end{aligned}
}
\end{equation}
where
\myblue{the leading-order term
$\{q(t,z;\Lambda_j,\Omega_j^{(-)}), D(t,z;\Lambda_j,\Omega_j^{(-)}), P(t,z;\Lambda_j,\Omega_j^{(-)})\}$ is an $N_j$-DSG derived from RHP~\ref{rhp:Ps-j-jump} with eigenvalues $\Lambda_j$ and modified norming constants $\Omega_j^{(-)} \coloneqq \{\omega_{j,k}^{(-)}\}$
}
defined in Theorem~\ref{thm:soliton-asymptotics}.
\myblue{The quantity $V_j$ is the DSG velocity.}

%%%%%%%%%%%%%%%%%%%%%%%%%%%%%%%%%%%%%%%%%%%%%%%%%%%%%%%%%%%%%%%%%%%%%%%%%%%%%%%%
\subsection{Soliton asymptotics in an unstable medium}

It is time to discuss the soliton asymptotics in an unstable medium with $J \ge 2$.
\myblue{
We do so by comparison between the stable medium case $D_- = -1$ and unstable medium case $D_-=1$.
To distinguish similar quantities,
let us denote $V_{\s, j} = 2r_j^2$ in a stable medium and $V_{\u,j} = -2r_j^2 = -V_{\s,j}$ in an unstable medium,
as the two realizations of $V_j$ defined in Equation~\eqref{e:Vj-def}.
}
The crucial component in the Deift-Zhou's nonlinear steepest descent method is the sign of $\Re(\ii\theta(\lambda_{j,k};t,z))$, which are given in the stable and unstable case, respectively, as follows,

\begin{equation}
\begin{aligned}
\Re(\ii\theta_{\s,j,k}(\xi,t))
 = \frac{y_{j,k}}{2r_j^2}(\xi - V_{\s,j})t\,,
\qquad
\Re(\ii\theta_{\u,j,k}(\xi,t))
 = -\frac{y_{j,k}}{2r_j^2}(\xi - V_{\u,j})t\,,
\end{aligned}
\end{equation}
where $y_{j,k} = \Im(\lambda_{j,k}) > 0$ and $z = \xi t$.
As a result, it can be seen
\begin{equation}
\label{e:reitheta-unstable-stable-relation}
\Re(\ii\theta_{\u,j,k}(\xi,t))
 = \Re(\ii\theta_{\s,j,k}(-\xi,t))\,.
\end{equation}
Therefore, the asymptotics in an unstable medium along the line $z = \xi t$ can be regarded as asymptotics in a stable medium along a different line $z = -\xi t$. This relation allows us to compute the asymptotics in the unstable case easily from known results of the stable case.

Clearly, the quantities $V_{\u,j} = -2r_j^2$ defined in Theorem~\ref{thm:N-DSG} are ordered below,
\begin{equation}
V_J < V_{J-1} < \dots < V_1 < 0\,.
\end{equation}
Similarly to what happens in a stable case,
there are four fundamental cases given in Table~\ref{tab:soliton-asymptotics-unstable-cases}.
\begin{table}[t]
\caption{Four cases in the long-time asymptotics in an unstable medium.}
\begin{tabular}{|c|c|c|}
\hline
Unstable Cases & $\vphantom{\Big|}$\quad Relations between $\xi$ and $V_{\u,j}$ \quad & \hspace{5em} Description \hspace{5em} \\
\hline\hline
Case 1 & $\vphantom{\Big|}\xi < V_{\u,J} < \dots < V_{\u,1}$ & \myblue{$\xi$ is the smallest} \\
\hline
Case 2 & $\vphantom{\Big|}V_{\u,J} < \dots < V_{\u,1} < \xi$ & \myblue{$\xi$ is the largest} \\
\hline
Case 3 & $\vphantom{\Big|}V_{\u,j+1} < \xi < V_{\u,j}$ with $j = 1,\dots,J-1$ & \myblue{$\xi$ is in-between $V_{\u,j+1}$ and $V_{\u,j}$} \\
\hline
Case 4 & $\vphantom{\Big|}V_{\u,j+1} < \xi = V_{\u,j} < V_{\u,j-1}$ with $j = 1,\dots,J$ & \myblue{$\xi$ coincides with some $V_{\u,j}$} \\
\hline
\end{tabular}
\label{tab:soliton-asymptotics-unstable-cases}
\end{table}

In the asymptotics $t\to+\infty$, one can relate the cases between stable medium (cf. Table~\ref{tab:soliton-asymptotics-stable-cases}) and unstable medium (cf. Table~\ref{tab:soliton-asymptotics-unstable-cases}) below.
\begin{enumerate}[align=left]
\item [\bf Unstable Case 1]
Equation~\eqref{e:reitheta-unstable-stable-relation} implies that
the unstable Case 1 ($\xi < V_{\u,J} < \dots < V_{\u,1}$)
can be calculated equivalently to the stable Case 2 ($V_{\s, 1} < \dots < V_{\s,J} < -\xi$)
in the same asymptotics $t\to+\infty$,
where $z = \xi t$ with $\xi < 0$.
Hence,
\myblue{
one arrives at $q(t,z) = \O(\ee^{-\aleph t})$ and $\brho(t,z) = D_-\sigma_3(\I+\O(\ee^{-\aleph t}))$,
with a positive constant $\aleph$,
}
following Section~\ref{s:N-soliton-asymptotics-stable-tpos-case2}.
\item [\bf Unstable Case 2]
Equation~\eqref{e:reitheta-unstable-stable-relation} implies that
the unstable Case 2 ($V_{\u,J} < \dots < V_{\u,1} < \xi$)
is equivalent to the stable Case 1 ($-\xi < V_{\s,1} < \dots < V_{\s, J}$).
Therefore,
one arrives at the same asymptotic result
\myblue{$q(t,z) = \O(\ee^{-\aleph t})$ and $\brho(t,z) = D_-\sigma_3(\I+\O(\ee^{-\aleph t}))$}
as $t\to+\infty$ following Section~\ref{s:N-soliton-asymptotics-stable-tpos-case1}.
\item [\bf Unstable Case 3]
Equation~\eqref{e:reitheta-unstable-stable-relation} implies that
the unstable Case 3 ($V_{\u,j+1} < \xi < V_{\u,j}$)
is equivalent to the stable Case 3 ($V_{\s,j} < -\xi < V_{\s,j+1}$).
Therefore,
one arrives at the same asymptotic result
\myblue{$q(t,z) = \O(\ee^{-\aleph t})$ and $\brho(t,z) = D_-\sigma_3(\I+\O(\ee^{-\aleph t}))$}
as $t\to+\infty$ following Section~\ref{s:N-soliton-asymptotics-stable-tpos-case3}.
\item [\bf Unstable Case 4]
Equation~\eqref{e:reitheta-unstable-stable-relation} implies that the unstable
Case 4 ($\xi = V_{\u,j}$) is equivalent to the stable Case 4 ($-\xi = V_{\s,j}$). Therefore, by repeating similar calculations as in Section~\ref{s:N-soliton-asymptotics-stable-tpos-case4}, one arrives at the formula for the asymptotic solution with $\Lambda_j$ and $\Omega_j^{(+)}$ in Theorem~\ref{thm:soliton-asymptotics}.
\end{enumerate}
The other asymptotics $t\to-\infty$ between the stable and unstable cases can also be related. The details are omitted for brevity. The asymptotic DSG is given in Theorem~\ref{thm:soliton-asymptotics}.
Finally, this completes the proof of Theorem~\ref{thm:soliton-asymptotics} for all cases.

%%%%%%%%%%%%%%%%%%%%%%%%%%%%%%%%%%%%%%%%%%%%%%%%%%%%%%%%%%%%%%%%%%%%%%%%%%%%%%%%
\section{High-order solitons}
\label{s:N-order-soliton}

In this section, we discuss various aspects about the high-order solitons. First, we show how to fuse $N\ge2$ simple poles of RHP~\ref{rhp:N-soliton-residue-form} in order to get an $N$th order pole. As a result, the corresponding $N$-soliton solution merge into a single $N$th-order soliton.

%%%%%%%%%%%%%%%%%%%%%%%%%%%%%%%%%%%%%%%%%%%%%%%%%%%%%%%%%%%%%%%%%%%%%%%%%%%%%%%%
\subsection{Derivation of an $N$th order soliton from fusion}
\label{s:N-order-soliton:fusion}

Suppose RHP~\ref{rhp:N-order-soliton:jump-form} with the eigenvalue $\lambda_\circ$ and norming constants $\{\omega_{\circ,k}\}_{k=0}^{N-1}$ given. We would like to find an explicit limiting procedure, such that when merging all eigenvalues $\Lambda$ from RHP~\ref{rhp:N-soliton-jump-form}, the $N$ simple poles $\omega_{j,k}/(\lambda-\lambda_{j,k})$ fuse into the $N$th-order pole $p_\circ(\lambda)/(\lambda - \lambda_\circ)^N$ appearing in RHP~\ref{rhp:N-order-soliton:jump-form}.

Therefore, we start with RHP~\ref{rhp:N-soliton-jump-form} for the $N$-soliton solutions. To better take limits, we rewrite this RHP in an equivalent form, in which the jump contours $\partial D_{\lambda_{j,k}}^\epsilon$ are modified into $\partial D_{\Lambda}^\epsilon$, which is a single contour surrounding all eigenvalues in the upper half plane. Moreover, we would like to relabel all eigenvalues as $\lambda_{j,k}\mapsto\lambda_s$ for $s = 1,2,\dots,N$, and same for the norming constants $\omega_{j,k}\mapsto\omega_s$. As a result, the new RHP is given by below.
\begin{rhp}[Equivalent jump-form of RHP~\ref{rhp:N-soliton-jump-form}]
\label{rhp:N-soliton-jump-form-bigcircle}
\myblue{
Seek a $2\times2$ matrix function $\M(\lambda;t,z)$ analytic on
$\Complex\setminus\partial\big(D_{\Lambda}^\epsilon\bigcup D_{\conj{\Lambda}}^\epsilon\big)$
with continuous boundary values.
It has the asymptotics $\M(\lambda;t,z)\to\I$ as $\lambda\to\infty$ and jumps
}
\begin{equation}
\begin{aligned}
\M^+(\lambda;t,z) & = \M^-(\lambda;t,z)\V(\lambda;t,z)^{-1}\,,\qquad & \partial D_\Lambda^\epsilon\,,\\
\M^+(\lambda;t,z) & = \M^-(\lambda;t,z)\V(\conj\lambda;t,z)^\dagger\,,\qquad & \partial D_{\conj{\Lambda}}^\epsilon\,,
\end{aligned}
\end{equation}
where
\begin{equation}
\everymath{\displaystyle}
\myblue{
\V(\lambda;t,z)
\coloneqq \bpm 1 & 0 \\ \sum\limits_{s=1}^N \frac{\omega_s}{\lambda - \lambda_s}\ee^{-2\ii\theta(\lambda;t,z)} & 1 \epm\,.
}
\end{equation}
\end{rhp}
Recall that we would like to take limits $\lambda_s\to\lambda_\circ$, so it is natural to write
\begin{equation}
\lambda_s = \lambda_\circ + \epsilon_s\,,
\end{equation}
where the target eigenvalue $\lambda_\circ$ is a fixed complex number in the upper half plane.
\myblue{
The distinct eigenvalues $\{\lambda_s\}$ require that the complex quantities $\{\epsilon_s\}$ are distinct in the following calculations.
}
As a result, the relevant quantity in the above RHP becomes
\begin{equation}
\label{e:N-order-soliton:pole-sum}
\sum_{s = 1}^N \frac{\omega_s}{\lambda - \lambda_s}
 = \sum_{s = 1}^N \frac{\omega_s}{\lambda - \lambda_\circ - \epsilon_s}\,.
\end{equation}

We next show that the fusion of eigenvalues must be done in a special way. Otherwise, the result is quite trivial.
\begin{remark}[A naive limit]
\label{thm:N-order-soliton:naive-limit}
We demonstrate here that taking the limit $\epsilon_s\to0$ for all $s = 1,2,\dots,N$ directly will yield a trivial result. Note that this naive limit does not rescale the norming constants $\omega_s$. Then,
\begin{equation}
\sum_s \frac{\omega_s}{\lambda - \lambda_s}
    = \sum_s \frac{\omega_s}{\lambda - \lambda_\circ - \epsilon_s}
    \to \sum_s\frac{\omega_s}{\lambda - \lambda_\circ}
     = \frac{\sum_s \omega_s}{\lambda - \lambda_\circ}\,,
\end{equation}
as $\epsilon_s\to0$ for all $s$.
So, the $N$ simple poles $\frac{\omega_s}{\lambda - \lambda_s}$ fuse into one simple pole at $\lambda = \lambda_\circ$. Correspondingly, the $N$-soliton solution becomes an one-soiton solution, with its eigenvalue $\lambda = \lambda_\circ$ and norming constant $\sum_s\omega_s$.
Thus, in order to get the $N$th-order soliton, one has to rescale the norming constants $\omega_s$, i.e., assuming $\omega_s = \omega_s(\epsilon_1,\dots\epsilon_N)$. It remains to show that such rescaling exists.
\end{remark}
Let us calculate the sum from Equation~\eqref{e:N-order-soliton:pole-sum} more carefully
\begin{equation}
\label{e:N-order-soliton:pole-sum1}
\sum_{s=1}^N \frac{\omega_s}{\lambda - \lambda_s}
    = \frac{\sum_{s=1}^N\omega_s \myblue{g_s(\lambda)}}{\prod_{s=1}^N(\lambda - \lambda_s)}\,,\quad \mbox{where we define}\quad
\myblue{g_s(\lambda)} \coloneqq \frac{\prod_{l=1}^N ( \lambda - \lambda_l)}{\lambda - \lambda_s}\,.
\end{equation}
Note that the quantity \myblue{$g_s(\lambda)$} is a polynomial of $\lambda$ of degree $N - 1$.
Thus,
\myblue{$\sum_s \omega_s g_s(\lambda)$} is also a polynomial of degree at most $N - 1$.
The leading term of \myblue{$g_s(\lambda)$} is simply $\lambda^{N-1}$,
and the leading term of \myblue{$\sum_s \omega_s g_s(\lambda)$} is
$\lambda^{N-1}\sum_s \omega_s$.
For now,
let us assume $\sum_s\omega_s \in\Complex\setminus\{0\}$,
so that the degree of \myblue{$\sum_s \omega_s g_s(\lambda)$} is exactly $N-1$.
\begin{lemma}
\label{thm:N-order-soliton:p-sol}
For the given polynomial $p_\circ(\lambda)$ from \myblue{Equation~\eqref{e:p-circ-def}},
the following equation for unknowns $\{\omega_s\}_{s=1}^N$
\myblue{
\begin{equation}
\label{e:N-order-soliton:omega-solutions}
\sum_{s=1}^N\omega_s g_s(\lambda)
    = p_\circ(\lambda)
\end{equation}
}
with $\lambda\in\Complex$ has a unique solution.
\end{lemma}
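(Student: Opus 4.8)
The plan is to recognize that the polynomials $g_s(\lambda) = \prod_{l=1}^N(\lambda-\lambda_l)/(\lambda-\lambda_s) = \prod_{l\ne s}(\lambda-\lambda_l)$ are, up to nonzero scalar factors, precisely the Lagrange interpolation basis polynomials associated with the distinct nodes $\lambda_1,\dots,\lambda_N$. Each $g_s$ has degree exactly $N-1$, and the target $p_\circ$ has degree at most $N-1$ by Equation~\eqref{e:p-circ-def}, so both sides of Equation~\eqref{e:N-order-soliton:omega-solutions} live in the $N$-dimensional vector space of polynomials in $\lambda$ of degree at most $N-1$. Thus the statement reduces to showing that $\{g_s\}_{s=1}^N$ is a basis of this space, in which case $p_\circ$ admits a unique expansion in that basis and the expansion coefficients are exactly the sought $\{\omega_s\}$.

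First I would exploit the vanishing structure $g_s(\lambda_j) = 0$ for $j\ne s$, together with $g_s(\lambda_s) = \prod_{l\ne s}(\lambda_s-\lambda_l) \ne 0$, the latter being nonzero precisely because the eigenvalues are distinct. Evaluating Equation~\eqref{e:N-order-soliton:omega-solutions} successively at $\lambda=\lambda_j$ for $j=1,\dots,N$ then collapses the sum to a single surviving term and yields the explicit formula $\omega_j = p_\circ(\lambda_j)/\prod_{l\ne j}(\lambda_j-\lambda_l)$. This already establishes uniqueness, since any solution is forced to satisfy these $N$ evaluation identities.

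For existence, I would invoke the uniqueness of polynomial interpolation: with the $\omega_j$ just computed, the left-hand side $\sum_s \omega_s g_s(\lambda)$ is precisely the Lagrange interpolant of degree at most $N-1$ agreeing with $p_\circ$ at the $N$ distinct nodes $\lambda_1,\dots,\lambda_N$; since $p_\circ$ is itself a polynomial of degree at most $N-1$ agreeing with itself at these nodes, two polynomials of degree at most $N-1$ coincide at $N$ points and are therefore identical. Equivalently and more abstractly, the linear map $(\omega_1,\dots,\omega_N)\mapsto\sum_s\omega_s g_s$ from $\Complex^N$ into the space of polynomials of degree at most $N-1$ has trivial kernel by the evaluation argument, hence is an isomorphism between spaces of equal finite dimension, so every $p_\circ$ has a unique preimage.

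I do not anticipate a genuine obstacle here: the result is the classical fact that $N$ distinct nodes determine a unique interpolating polynomial of degree below $N$. The only point deserving care is the degree bookkeeping, namely confirming that $p_\circ$ lies in the degree-at-most-$(N-1)$ space even when its leading coefficient $\omega_{\circ,N-1}$ happens to vanish, so that $\sum_s\omega_s=0$ is permitted; this is harmless, since the basis property of $\{g_s\}$ does not rely on $p_\circ$ attaining full degree $N-1$.
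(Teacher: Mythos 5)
Your proposal is correct and takes essentially the same route as the paper: the paper's proof likewise exploits $g_s(\lambda_j)=0$ for $j\ne s$, substitutes $\lambda=\lambda_s$ to collapse the sum, and reads off $\omega_s = p_\circ(\lambda_s)/g_s(\lambda_s) = \bigl(\sum_{k=0}^{N-1}\omega_{\circ,k}\epsilon_s^k\bigr)/\prod_{k\ne s}(\epsilon_s-\epsilon_k)$. If anything, your write-up is slightly more complete: the paper stops at the explicit formula (which strictly gives only uniqueness), whereas you also close the existence direction via the standard interpolation argument --- two polynomials of degree at most $N-1$ agreeing at $N$ distinct nodes coincide --- and you correctly observe that the paper's side assumption $\sum_s\omega_s\ne0$ is unnecessary for the lemma, since $\sum_s\omega_s=\omega_{\circ,N-1}$ may vanish without affecting the basis property of $\{g_s\}_{s=1}^N$.
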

\begin{proof}
Recall that the given polynomial $p_\circ(\lambda)$ from RHP~\ref{rhp:N-order-soliton:jump-form}
has the coefficients $\{\omega_{\circ,k}\}_{k=0}^{N-1}$ with
$\omega_{\circ,k}\in\Complex$ and $\omega_{\circ,0}\ne0$.
We show that the unknowns $\{\omega_l\}_{l = 1}^N$
can be solved from Equation~\eqref{e:N-order-soliton:omega-solutions} explicitly.
It is important to notice that the polynomial \myblue{$g_s(\lambda)$} of degree $N-1$
has roots $\{\lambda_1,\dots,\lambda_{s-1},\lambda_{s +1},\dots,\lambda_N\}$.
Thus,
substituting $\lambda = \lambda_s$ into Equation~\eqref{e:N-order-soliton:omega-solutions}
yields \myblue{$\omega_s g_s(\lambda_s) = p_\circ(\lambda_s)$}, which implies
\myblue{
\begin{equation}
\label{e:N-order-soliton:omega-rescaling}
\omega_s
 = \frac{p_\circ(\lambda_s)}{g_s(\lambda_s)}
 = \frac{\sum_{k = 0}^{N-1}\omega_{\circ,k}\epsilon_s^k}{\prod_{k\ne s}(\epsilon_s - \epsilon_k)}\,,\qquad
s = 1,2,\dots, N-1\,.
\end{equation}
}
Hence, the unknowns $\{\omega_1,\dots, \omega_N\}$ are solved explicitly and uniquely.
\end{proof}
Lemma~\ref{thm:N-order-soliton:p-sol} implies that the rescaling
$\omega_s = \omega_s(\epsilon_1,\dots,\epsilon_N)$ exists.
\myblue{
Note that the limit of each $\omega_s$ from Equation~\eqref{e:N-order-soliton:omega-rescaling}
may not exist as $\epsilon_k\to0$ for all $k$.
However, what appears in RHP~\ref{rhp:N-soliton-jump-form-bigcircle}
is the sum $\sum_{s=1}^N\frac{\omega_s}{\lambda-\lambda_s}$,
whose limit exists.
}
We next show that the $N$-soliton solution can be transformed into an \myblue{$N$th-order} soliton.
Lemma~\ref{thm:N-order-soliton:p-sol} implies that Equation~\eqref{e:N-order-soliton:pole-sum1} becomes
$\sum_{s=1}^N \frac{\omega_s}{\lambda - \lambda_s} = p_\circ(\lambda)/\prod_{s=1}^N(\lambda - \lambda_s)$.
Then, taking limits $\lambda_s\to\lambda_\circ$ for all $s = 1,2,\dots,N$ with the particular $\omega_s$ from Equation~\eqref{e:N-order-soliton:omega-rescaling}, the sum becomes
$\sum_{s=1}^N \frac{\omega_s}{\lambda - \lambda_s} \to p_\circ(\lambda)/(\lambda - \lambda_\circ)^N$.
One can treat the poles in the lower half plane in a similar way, what we omit here.
Hence, RHP~\ref{rhp:N-soliton-jump-form-bigcircle} immediately becomes RHP~\ref{rhp:N-order-soliton:jump-form}, where the contour $\partial D_\Lambda^\epsilon$ becomes $\partial D_{\lambda_\circ}^\epsilon$ in the upper half plane.

Finally, we conclude that by taking the limit $\lambda_s\to\lambda_\circ$ for all $s = 1,2,\dots,N$ of the $N$-soliton solution of MBEs, with particular rescaling of the norming constants $\{\omega_s\}_{s=1}^N$ described in Equation~\eqref{e:N-order-soliton:omega-rescaling}, the $N$-soliton solution becomes an $N$th-order soliton, described by RHP~\ref{rhp:N-order-soliton:jump-form}.

%%%%%%%%%%%%%%%%%%%%%%%%%%%%%%%%%%%%%%%%%%%%%%%%%%%%%%%%%%%%%%%%%%%%%%%%%%%%%%%%
\subsection{Derivation of the pole form of RHP~\ref{rhp:N-order-soliton:jump-form}}
\label{s:N-order-soliton:pole-form}

In this section, we rewrite RHP~\ref{rhp:N-order-soliton:jump-form} for the $N$th-order soliton in an equivalent form, where the jump conditions become appropriate pole conditions, mimicking the residue conditions in RHP~\ref{rhp:N-soliton-residue-form} for the $N$-soliton solutions. To start, let use define a new matrix function
\begin{equation}
\N_\circ(\lambda;t,z)
    \coloneqq \begin{cases}
        \M_\circ(\lambda;t,z)\,,& \lambda\in\Complex\setminus\{D_{\lambda_\circ}^\epsilon, D_{\conj{\lambda_\circ}}^\epsilon\}\,,\\
        \M_\circ(\lambda;t,z)\V_\circ(\lambda;t,z)\,, & \lambda\in D_{\lambda_\circ}^\epsilon\,,\\
        \M_\circ(\lambda;t,z) \V_\circ(\conj\lambda;t,z)^{-\dagger}\,, & \lambda \in D_{\conj{\lambda_\circ}}^\epsilon\,,
\end{cases}
\end{equation}
where $\M_\circ(\lambda;t,z)$ and $\V_\circ(\lambda;t,z)$
are defined in RHP~\ref{rhp:N-order-soliton:jump-form}.
It is easy to verify that:
\myblue{$\N_\circ(\lambda;t,z)\to\I$} as $\lambda\to\infty$;
$\N_\circ(\lambda;t,z)$ does not admit any jumps on
$\partial D_{\lambda_\circ}^\epsilon$ and $\partial D_{\conj{\lambda_\circ}}^\epsilon$;
and $\N_\circ(\lambda;t,z)$ is meromorphic on $\Complex$,
with poles at $\lambda = \lambda_\circ$ and $\lambda = \conj{\lambda_\circ}$.
The only thing left is to derive the pole conditions for $\N_\circ(\lambda;t,z)$.
Recall the pole operator $\sP{n}$ from Definition~\ref{def:pole-operator},
which can be used to compute the Laurent coefficient of $\N_\circ(\lambda;t,z)$ at $\lambda = \lambda_\circ$ and $\lambda = \conj{\lambda_\circ}$. Here, we show detailed steps for the case of $\lambda = \lambda_\circ$. The other point can be addressed similarly.
\begin{equation}
\begin{aligned}
\sP{-n}_{\lambda = \lambda_\circ}\N_\circ(\lambda;t,z)
    = \sP{-n}_{\lambda = \lambda_\circ}\bpm \M_{\circ,1} + \frac{p_\circ(\lambda)}{(\lambda - \lambda_\circ)^N}\ee^{-2\ii\theta}\M_{\circ,2} & \M_{\circ,2}\epm\,,
\end{aligned}
\end{equation}
where the subscripts $1$ and $2$ denote the first and second columns of the matrix, respectively, and $n = 1,2,\dots, N$. Also, we suppress the variable dependence of $\M_\circ$ in the above and below equations for simplicity. Recall that $\M_\circ$ is analytic at $\lambda = \lambda_\circ$, so by distributing $\sP{-n}$ in each column, one gets
\begin{equation}
\label{e:N-order-soliton:N-circ-pole-condition}
\begin{aligned}
\sP{-n}_{\lambda = \lambda_\circ}\N_\circ(\lambda;t,z)
    & = \sP{-n}_{\lambda = \lambda_\circ}\M_\circ\bpm 0 & 0 \\ \frac{p_0(\lambda)}{(\lambda - \lambda_\circ)^N}\ee^{-2\ii\theta} & 0 \epm\,,\qquad
n = 1,2,\dots, N\,.
\end{aligned}
\end{equation}
It is necessary to calculate $\sP{-n}\limits_{\lambda = \lambda_\circ}\M_\circ \frac{p_\circ(\lambda)}{(\lambda - \lambda_\circ)^N}\ee^{-2\ii\theta}$. The analyticity of $p_0(\lambda)\ee^{-2\ii\theta(\lambda;t,z)}\M_\circ(\lambda;t,z)$ at $\lambda = \lambda_\circ$ yields
$\sP{-n}\limits_{\lambda = \lambda_\circ}\frac{p_\circ(\lambda)\ee^{-2\ii\theta}}{(\lambda - \lambda_\circ)^N}\M_\circ = \sP{N-n}\limits_{\lambda=\lambda_\circ}p_\circ(\lambda)\ee^{-2\ii\theta}\M_\circ$, for $n=1,2,\dots,N$.
Using the expression for $p_\circ(\lambda)$ in Equation~\eqref{e:p-circ-def}, the computation continues
\begin{equation}
\begin{aligned}
\sP{-n}_{\lambda = \lambda_\circ}\frac{p_\circ(\lambda)\ee^{-2\ii\theta}}{(\lambda - \lambda_\circ)^N}\M_\circ
    & = \sum_{i = 0}^{N-n}\sP{N-n}_{\lambda = \lambda_\circ} \omega_{\circ,i}(\lambda - \lambda_\circ)^i\ee^{-2\ii\theta}\M_\circ + \sum_{i = N-n+1}^{N-1}\sP{N-n}_{\lambda = \lambda_\circ} \omega_{\circ,i}(\lambda - \lambda_\circ)^i\ee^{-2\ii\theta}\M_\circ\\
\end{aligned}
\end{equation}
Clearly, terms in the second summation are all zeros. Thus, the equation continues
\begin{equation}
\label{e:N-order-soliton:N-circ-pole}
\begin{aligned}
\sP{-n}_{\lambda = \lambda_\circ}\frac{p_\circ(\lambda)\ee^{-2\ii\theta}}{(\lambda - \lambda_\circ)^N}\M_\circ
    = \sum_{i = 0}^{N-n}\sP{N-n}_{\lambda = \lambda_\circ} \omega_{\circ,i}(\lambda - \lambda_\circ)^i\ee^{-2\ii\theta}\M_\circ
    = \sum_{i = 0}^{N-n}\omega_{\circ,i}\sP{N-n-i}_{\lambda = \lambda_\circ}\ee^{-2\ii\theta}\M_\circ\,,
\end{aligned}
\end{equation}
for $n = 1,2,\dots,N$. Due to analyticity, the Taylor expansion of the product $\ee^{-2\ii\theta}\M_\circ$ at $\lambda = \lambda_\circ$ is
\begin{equation}
\ee^{-2\ii\theta}\M_\circ
    = \sum_{k = 0}^\infty \frac{(\lambda - \lambda_\circ)^k}{k!}\frac{\partial^k(\ee^{-2\ii\theta}\M_\circ)}{\partial \lambda^k}(\lambda_\circ)\,.
\end{equation}
Thus, each Laurent coefficient in Equation~\eqref{e:N-order-soliton:N-circ-pole} can be calculated as (with help of the product rule)
\begin{equation}
\sP{N-n-i}_{\lambda = \lambda_\circ}\ee^{-2\ii\theta}\M_\circ
    = \frac{1}{(N-n-i)!}\sum_{k=0}^{N-n-i}\binom{N-n-i}{k}\frac{\partial^{N-n-i-k}\ee^{-2\ii\theta}}{\partial\lambda^{N-n-i-k}}(\lambda_\circ)\frac{\partial^k\M_\circ}{\partial\lambda^k}(\lambda_\circ)\,.
\end{equation}
Therefore, one can continue the calculation of the pole condition from Equation~\eqref{e:N-order-soliton:N-circ-pole-condition}
\begin{equation}
\label{e:N-order-soliton:N-circ-pole-condition1}
\begin{aligned}
\sP{-n}_{\lambda = \lambda_\circ}\N_\circ
    & = \sP{-n}_{\lambda=\lambda_\circ}\M_\circ \frac{p_\circ(\lambda)}{(\lambda - \lambda_\circ)^N}\ee^{-2\ii\theta}\bpm 0 & 0 \\ 1 & 0 \epm\\
    & = \sum_{i=0}^{N-n}\omega_{\circ,i}\sP{N-n-i}_{\lambda = \lambda_\circ}\ee^{-2\ii\theta}\M_\circ\bpm 0 & 0 \\ 1 & 0 \epm\\
    & = \sum_{i=0}^{N-n}\frac{\omega_{\circ,i}}{(N-n-i)!}\sum_{k=0}^{N-n-i}\binom{N-n-i}{k}\frac{\partial^{N-n-i-k}\ee^{-2\ii\theta}}{\partial\lambda^{N-n-i-k}}(\lambda_\circ)\frac{\partial^k\M_\circ}{\partial\lambda^k}(\lambda_\circ)\bpm 0 & 0 \\ 1 & 0 \epm\,.
\end{aligned}
\end{equation}
Next, we need to convert the derivatives of $\M_\circ$ back to the ones of $\N_\circ$ on the right-hand side of the above equation. Recall that near $\lambda = \lambda_\circ$, it is $\M_\circ = \N_\circ\V_\circ^{-1}$, so
\begin{equation}
\frac{\partial^k\M_\circ}{\partial\lambda^k}
    = \frac{\partial^k(\N_\circ\V_\circ^{-1})}{\partial\lambda^k}
    = \sum_{j=0}^k\binom{k}{j}\frac{\partial^j\N_\circ}{\partial\lambda^j}\frac{\partial^{k-j}(\V_\circ^{-1})}{\partial\lambda^{k-j}}\,,\qquad \lambda\in D_{\lambda_\circ}^\epsilon\setminus\{\lambda_\circ\}\,.
\end{equation}
Recall also that $\V_\circ^{-1} = \bpm 1 & 0 \\ -\frac{p_\circ\ee^{-2\ii\theta}}{(\lambda - \lambda_\circ)^N} & 1 \epm$, so that $\frac{\partial^{k-j}(\V_\circ^{-1})}{\partial\lambda^{k-j}} = \bpm 0 & 0 \\ -\frac{\partial^{k-j}}{\partial\lambda^{k-j}}\big(\frac{p_\circ\ee^{-2\ii\theta}}{(\lambda - \lambda_\circ)^N}\big) & 0 \epm$ when $k-j > 0$. Such lower triangular matrices with zero diagonals when multiplying by $\bpm 0 & 0 \\ 1 & 0 \epm$ yields the zero matrix. As a result, the product in Equation~\eqref{e:N-order-soliton:N-circ-pole-condition1} becomes
\begin{equation}
\label{e:N-order-soliton:N-circ-pole-condition2}
\begin{aligned}
\frac{\partial^k\M_\circ}{\partial\lambda^k}\bpm 0 & 0 \\ 1 & 0 \epm
    & = \frac{\partial^k\N_\circ}{\partial\lambda^k}\bpm 0 & 0 \\ 1 & 0 \epm\,,\qquad \lambda\in D_{\lambda_\circ}^\epsilon\setminus\{\lambda_\circ\}\,.
\end{aligned}
\end{equation}
Then, the combination of Equations~\eqref{e:N-order-soliton:N-circ-pole-condition1} and~\eqref{e:N-order-soliton:N-circ-pole-condition2} yields
\begin{equation}
\begin{aligned}
\sP{-n}_{\lambda = \lambda_\circ}\N_\circ
    & = \lim_{\lambda\to\lambda_\circ}\sum_{k=0}^{N-n}\sum_{i=0}^{N-n-k}\frac{\omega_{\circ,i}}{(N-n-i-k)!\,k!}\frac{\partial^{N-n-i-k}\ee^{-2\ii\theta}}{\partial\lambda^{N-n-i-k}}\frac{\partial^k\N_\circ}{\partial\lambda^k}\bpm 0 & 0 \\ 1 & 0 \epm\\
\end{aligned}
\end{equation}
Similarly, one can derive the pole conditions for $N_\circ(\lambda;t,z)$ at $\lambda = \conj{\lambda_\circ}$
\begin{equation}
\sP{-n}_{\lambda = \conj{\lambda_\circ}}\N_\circ
    = -\lim_{\lambda\to\lambda_\circ}\sum_{k=0}^{N-n}\sum_{i=0}^{N-n-k}\frac{\conj{\omega_{\circ,i}}}{(N-n-i-k)!\,k!}\frac{\partial^{N-n-i-k}\ee^{2\ii\theta}}{\partial\lambda^{N-n-i-k}}\frac{\partial^k\N_\circ}{\partial\lambda^k}\bpm 0 & 1 \\ 0 & 0 \epm\,.
\end{equation}
Finally, let us rename the matrix $\N_\circ(\lambda;t,z)$ back to $\M_\circ(\lambda;t,z)$. Then, RHP~\ref{rhp:N-order-soliton:pole-form} is obtained.

%%%%%%%%%%%%%%%%%%%%%%%%%%%%%%%%%%%%%%%%%%%%%%%%%%%%%%%%%%%%%%%%%%%%%%%%%%%%%%%%
\subsection{$N$th-order soliton solution formula}
\label{s:N-order-soliton:solution-formula}

In this section, we show how to derive the $N$th-order soliton solution formula from RHP~\ref{rhp:N-order-soliton:pole-form}. For brevity, we suppress the variable dependence $(\lambda;t,z)$ in this section. Due to the particular form of the RHP, we assume the following ansatz for the matrix $\M_\circ(\lambda;t,z)$
\begin{equation}
\label{e:N-order-soliton:M-circ-solution0}
\M_\circ = \I + \sum_{n=1}^N\frac{1}{(\lambda - \lambda_\circ)^n}\sP{-n}\limits_{\lambda = \lambda_\circ}\M_\circ + \sum_{n=1}^N\frac{1}{(\lambda - \conj{\lambda_\circ})^n}\sP{-n}_{\lambda = \conj{\lambda_\circ}}\M_\circ\,.
\end{equation}
As discussed in Remark~\ref{thm:M-symmetry}, one only needs to reconstruct the top row, so let us denote
\begin{equation}
X \coloneqq M_{\circ,1,1}\,,\qquad Y \coloneqq M_{\circ,1,2}\,.
\end{equation}
Thus, the top row of Equation~\eqref{e:N-order-soliton:M-circ-solution0} can be expressed explicitly as
\begin{equation}
\bpm X & Y \epm = \bpm 1 & 0 \epm + \sum_{n=1}^N\frac{1}{(\lambda - \lambda_\circ)^n}\sP{-n}_{\lambda = \lambda_\circ}\bpm X & Y \epm + \sum_{n=1}^N\frac{1}{(\lambda - \conj{\lambda_\circ})^n}\sP{-n}_{\lambda = \conj{\lambda_\circ}}\bpm X & Y \epm\,.
\end{equation}
Substituting both pole conditions into the above equation, one gets
\begin{equation}
\begin{aligned}
X
    & = 1 + \sum_{n=1}^N\frac{1}{(\lambda - \lambda_\circ)^n}\lim_{\lambda\to\lambda_\circ}\sum_{k=0}^{N-n}\sum_{i=0}^{N-n-k}\frac{\omega_{\circ,i}}{(N-n-i-k)!\,k!}\frac{\partial^{N-n-i-k}\ee^{-2\ii\theta}}{\partial\lambda^{N-n-i-k}}\frac{\partial^k Y}{\partial\lambda^k}\,,\\
Y
    & = -\sum_{n=1}^N\frac{1}{(\lambda - \conj{\lambda_\circ})^n}\lim_{\lambda\to\conj{\lambda_\circ}}\sum_{k=0}^{N-n}\sum_{i=0}^{N-n-k}\frac{\conj{\omega_{\circ,i}}}{(N-n-i-k)!\,k!}\frac{\partial^{N-n-i-k}\ee^{2\ii\theta}}{\partial\lambda^{N-n-i-k}}\frac{\partial^k X}{\partial\lambda^k}\,.
\end{aligned}
\end{equation}
Clearly, $X$ is analytic at $\lambda = \conj{\lambda_\circ}$ and $Y$ is analytic at $\lambda = \lambda_\circ$, so the limits can be computed as
\begin{equation}
\begin{aligned}
X
    & = 1 + \sum_{n=1}^N\frac{1}{(\lambda - \lambda_\circ)^n}\sum_{k=0}^{N-n}\sum_{i=0}^{N-n-k}\frac{\omega_{\circ,i}}{(N-n-i-k)!\,k!}\frac{\partial^{N-n-i-k} \ee^{-2\ii\theta}}{\partial\lambda^{N-n-i-k}}(\lambda_\circ)\frac{\partial^k Y}{\partial\lambda^k}(\lambda_\circ)\,,\\
Y
    & = -\sum_{n=1}^N\frac{1}{(\lambda - \conj{\lambda_\circ})^n}\sum_{k=0}^{N-n}\sum_{i=0}^{N-n-k}\frac{\conj{\omega_{\circ,i}}}{(N-n-i-k)!\,k!}\frac{\partial^{N-n-i-k} \ee^{2\ii\theta}}{\partial\lambda^k}(\conj{\lambda_\circ})\frac{\partial^k X}{\partial\lambda^k}(\conj{\lambda_\circ})\,.
\end{aligned}
\end{equation}
We next change the order of sums as $\sum_{n=1}^{N}\sum_{k=0}^{N-n} = \sum_{k=0}^{N-1}\sum_{n=1}^{N-k}$ and use the quantity $\gamma_k(\lambda)$ defined in Theorem~\ref{thm:N-order-soliton:solution-formula} to simplify the equations,
\begin{equation}
\label{e:N-order-soliton:XY-system}
X(\lambda) = 1 + \sum_{k=0}^{N-1}\gamma_k(\lambda)\frac{\partial^k Y}{\partial\lambda^k}(\lambda_\circ)\,,\qquad
Y(\lambda) = - \sum_{k=0}^{N-1}\gamma_k^*(\lambda)\frac{\partial^k X}{\partial\lambda^k}(\conj{\lambda_\circ})\,.
\end{equation}
Let us define more quantities in order to solve the above system
\begin{equation}
X_s \coloneqq X^{(s)}(\conj{\lambda_\circ})\,,\qquad
Y_s \coloneqq Y^{(s)}(\lambda_\circ)\,,\qquad
s = 0,1,\dots,N-1\,.
\end{equation}
Substituting $\lambda = \lambda_\circ$ and $\lambda = \conj{\lambda_\circ}$ in Equation~\eqref{e:N-order-soliton:XY-system}, one gets
\begin{equation}
\label{e:N-order-soliton:XY-system1}
X_0 = 1 + \sum_{k=0}^{N-1}\gamma_k(\conj{\lambda_\circ})Y_k\,,\qquad
Y_0 = -\sum_{k=0}^{N-1}\gamma_k^*(\lambda_\circ)X_k\,.
\end{equation}
Moreover, by taking derivatives of the system~\eqref{e:N-order-soliton:XY-system} $s$ times and substituting $\lambda = \lambda_\circ$ and $\lambda = \conj{\lambda_\circ}$, one gets more equations
\begin{equation}
\label{e:N-order-soliton:XY-system2}
X_s = \sum_{k=0}^{N-1}\gamma_k^{(s)}(\conj{\lambda_\circ})Y_k\,,\qquad
Y_s = -\sum_{k=0}^{N-1}\big(\gamma_k^{(s)}\big)^*(\lambda_\circ)X_k\,,\qquad
s = 1,2,\dots,N-1\,.
\end{equation}
Equations~\eqref{e:N-order-soliton:XY-system1} and~\eqref{e:N-order-soliton:XY-system2} form a linear system for $\{X_s,Y_s\}_{s=0}^{N-1}$, which can be solved in the same manner as in Section~\ref{s:derivation-Nsoliton}. The final solution formula is given in Theorem~\ref{thm:N-order-soliton:solution-formula}.

%%%%%%%%%%%%%%%%%%%%%%%%%%%%%%%%%%%%%%%%%%%%%%%%%%%%%%%%%%%%%%%%%%%%%%%%%%%%%%%%
\section*{Acknowledgments}

\myblue{
We are grateful to the referees for their valuable comments and suggestions.
}
This project was partially supported by
the National Natural Science Foundation of China (No.~12201526),
the Fundamental Research Funds for the Central Universities (No.~20720220040),
and the Natural Science Foundation of Fujian Province of China (No.~2022J01032).

%%%%%%%%%%%%%%%%%%%%%%%%%%%%%%%%%%%%%%%%%%%%%%%%%%%%%%%%%%%%%%%%%%%%%%%%%%%%%%%%
\section*{Data availability}
Data sharing not applicable to this article as no datasets were generated or analyzed during
the current study.

%%%%%%%%%%%%%%%%%%%%%%%%%%%%%%%%%%%%%%%%%%%%%%%%%%%%%%%%%%%%%%%%%%%%%%%%%%%%%%%%
\begin{appendices}
%%%%%%%%%%%%%%%%%%%%%%%%%%%%%%%%%%%%%%%%%%%%%%%%%%%%%%%%%%%%%%%%%%%%%%%%%%%%%%%%
\section{Soliton gas}
\label{s:soliton-gas}

\myblue{
In this appendix,
we explain how one can take the formal limit $N\to+\infty$ of an $N$-DSG and derive an
$\infty$-DSG,
i.e., \myblue{a bound state soliton gas}.
The derivation is similar to~\cite{ggjm2021}.
It is easy to verify that RHP~\ref{rhp:N-soliton-jump-form} with $J = 1$ can be equivalently turned into the following one with an enlarged jump contour.
}
\begin{rhp}
\label{rhp:N-DSG-jumpcombined}
\myblue{
Suppose $J=1$ and $N \ge 1$,
and the sets $\Lambda$,
$\Omega$ are given according to Definition~\ref{def:LambdaOmega}.
Seek a
$2\times2$ matrix function $\lambda\mapsto\M^{(4)}(\lambda;t,z)$ analytic on
$\Complex\setminus\partial\big(D_{\Lambda}^\epsilon\bigcup D_{\conj{\Lambda}}^\epsilon\big)$ with continuous boundary values.
It has the asymptotics $\M^{(4)}(\lambda;t,z)\to\I$ as $\lambda\to\infty$ and satisfies jumps
}
\begin{equation}
\begin{aligned}
\M^{(4)+}(\lambda;t,z)
 & = \M^{(4)-}(\lambda;t,z)\V^{(2)}(\lambda;t,z)^{-1}\,,\qquad&& \lambda\in\partial D_{\Lambda}^\epsilon\,,\\
\M^{(4)+}(\lambda;t,z)
 & = \M^{(4)-}(\lambda;t,z)\V^{(2)}(\conj\lambda;t,z)^\dagger\,,\qquad&& \lambda\in\partial D_{\conj{\Lambda}}^\epsilon\,,
\end{aligned}
\end{equation}
where the jump matrix $\V^{(2)}(\lambda;t,z)$ is given by
\begin{equation}
\V^{(2)}(\lambda;t,z)
 \coloneqq \bpm1 & 0 \\ \ee^{-2\ii\theta(\lambda;t,z)}\sum_{k=1}^N\frac{\omega_{1,k}}{\lambda-\lambda_{1,k}} & 1 \epm\,,
\end{equation}
and the jump contours $\partial D_\Lambda^\epsilon$ and $\partial D_{\conj{\Lambda}}^\epsilon$ are illustrated in Figure~\ref{f:N-DSG-jumpcombined}(center) with counterclockwise orientation.
\end{rhp}
The region $D_{\Lambda}^\epsilon\subset\Complex^+$
encloses all the discrete eigenvalues $\Lambda$ in the upper half plane,
and similarly,
$D_{\conj{\Lambda}}^\epsilon\subset\Complex^-$ encloses $\conj{\Lambda}$
in the lower half plane.
Here, $\conj{\Lambda}$ denote the complex conjugate of the set $\Lambda$,
not its closure.
Therefore,
the two jump contours $\partial D_{\Lambda}^\epsilon$ and
$\partial D_{\conj{\Lambda}}^\epsilon$ never intersect.
Furthermore,
we assume that $\partial D_\Lambda^\epsilon$ is sufficiently close to $\Lambda$
and contains an arc of the circle with radius $r_1$,
as illustrated in Figure~\ref{f:N-DSG-jumpcombined}(left and center).
\begin{figure}[tp]
\centering
\includegraphics[width=0.62\textwidth]{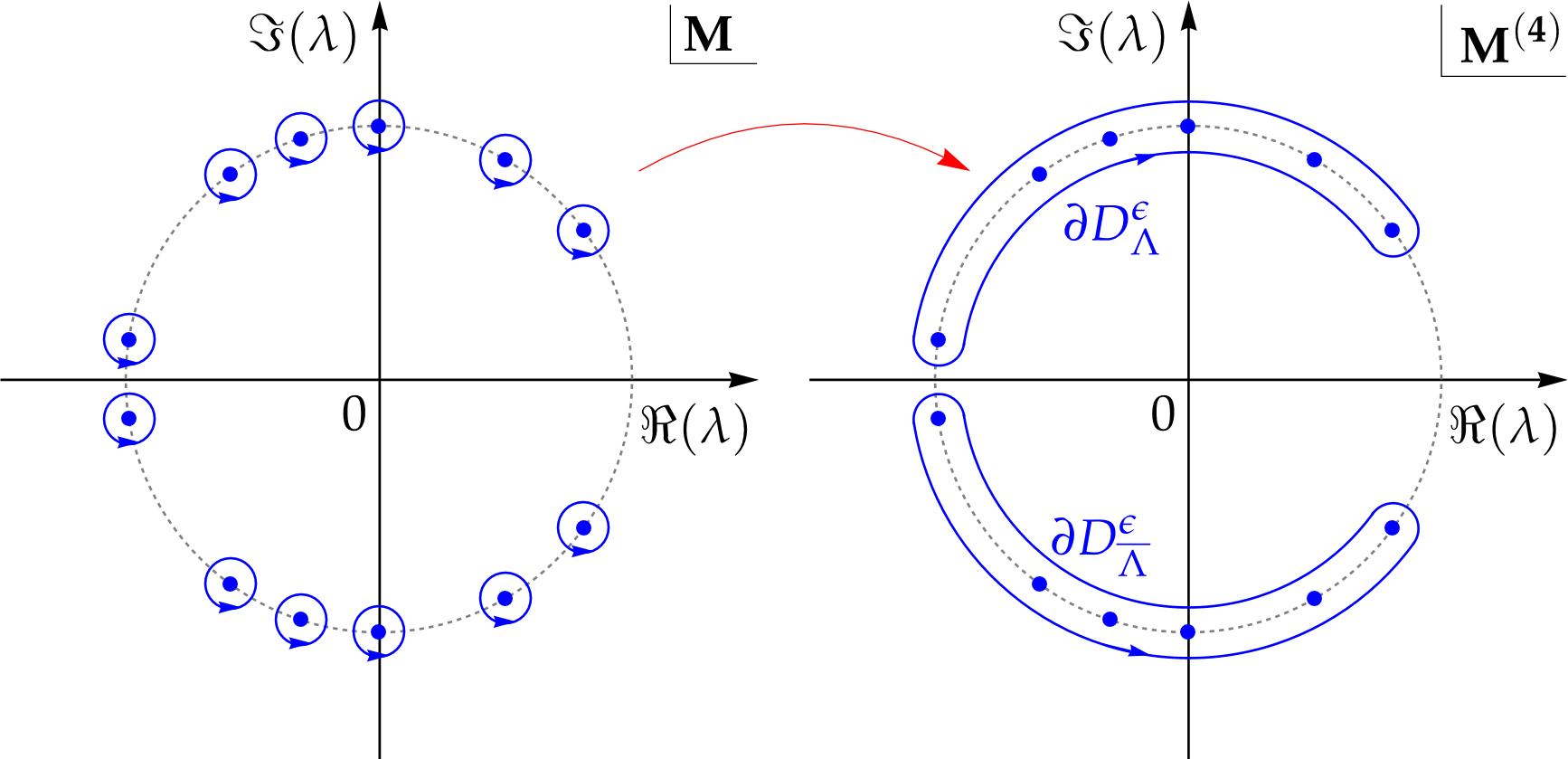}\quad
\includegraphics[width=0.3\textwidth]{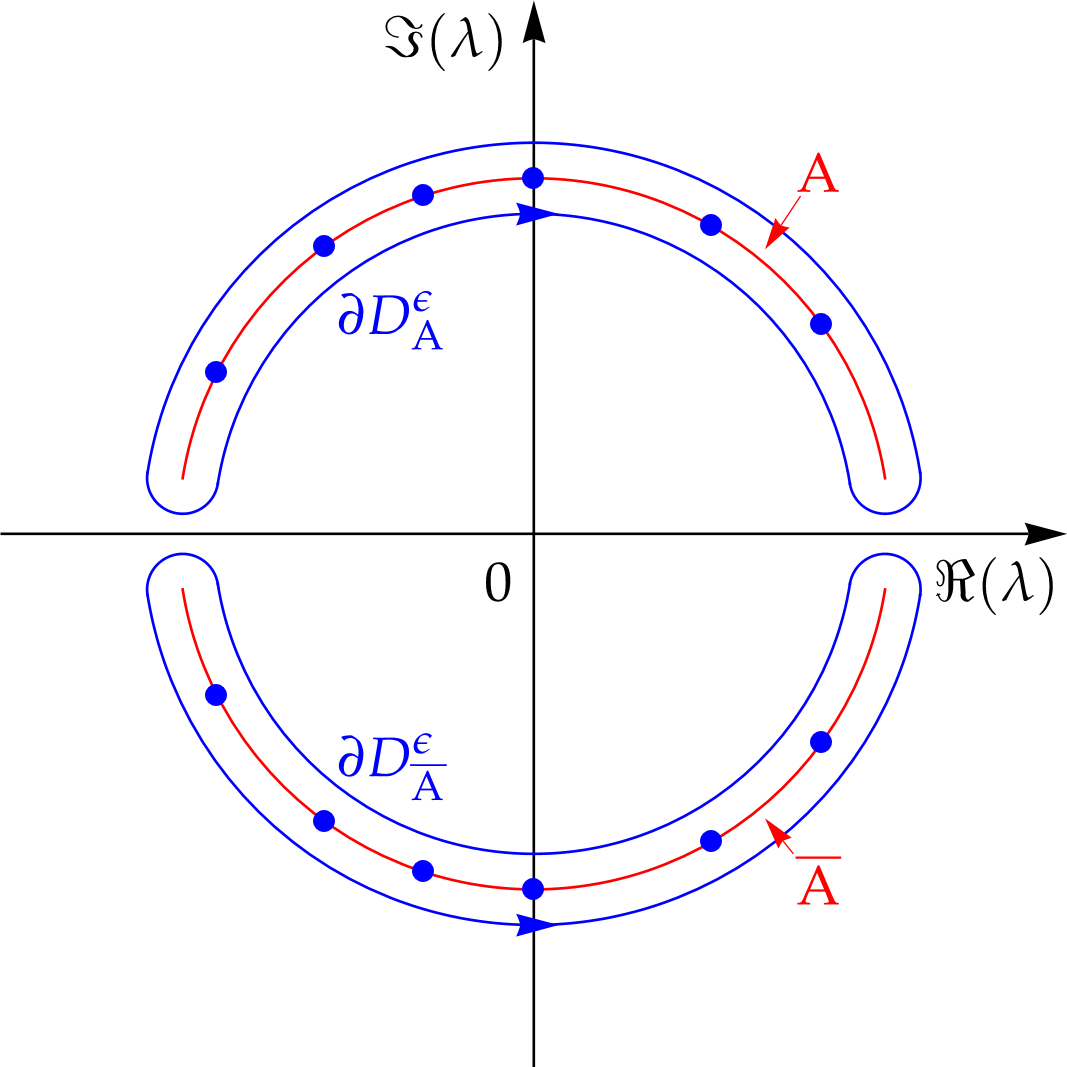}
\caption{
Left and center: The transformation from the $N$-DSG RHP~\ref{rhp:N-soliton-jump-form} with function $\M(\lambda;t,z)$ and $J = 1$ to an equivalent RHP~\ref{rhp:N-DSG-jumpcombined} with combined jump contours.
Right:
An illustration of:
(i) the contour $\rA$ and $\conj{\rA}$ from Equation~\eqref{e:soliton-gas:A_def} shown in red;
(ii) the regions $D_\rA^\epsilon$ and $D_{\conj{\rA}}^\epsilon$ with boundaries shown in blue,
oriented counterclockwise.
The blue dot are representations of discrete eigenvalues $\Lambda$ and $\conj{\Lambda}$.
}
\label{f:N-DSG-jumpcombined}
\end{figure}

\myblue{
We would like to take the formal limit of RHP~\ref{rhp:N-DSG-jumpcombined}
as $N\to+\infty$ in a controllable way,
in the sense that all discrete eigenvalues in $\Lambda_1 = \Lambda$
are accumulating on an arc centered at the origin.
Recall that the arc $\rA$ is defined in Definition~\ref{def:generalized-eigenvalue-norming-constant},
so we know that $\Lambda\subset\rA$.
Correspondingly,
one defines a new region
$D_{\rA}^\epsilon\subset\Complex^+$
which contains the whole arc $\rA$ and does not intersects with the real axis.
Please see an illustration of the contour $\rA$
and corresponding region $D_{\rA}^\epsilon$
shown in Figure~\ref{f:N-DSG-jumpcombined}(right).
}

\myblue{
The poles in RHP~\ref{rhp:N-DSG-jumpcombined} are encoded
in the jump matrix $\V^{(2)}(\lambda;t,z)$ via the sum in the $(1,2)$ component.
For convenience,
we rewrite the relevant function as follows,
}
\begin{equation}
\label{e:soliton-gas:pole_sum1}
\sum_{k=1}^N\frac{\omega_{1,k}}{\lambda-\lambda_{1,k}}
 = -\int_{\rA}\omega_1(s;N)\frac{\dd s}{s - \lambda}\,,\qquad
\omega_1(\lambda;N) \coloneqq \sum_{k = 1}^N \omega_{1,k}\delta(\lambda - \lambda_{1,k})\,,\qquad \lambda\not\in\rA\,,
\end{equation}
where $\delta(\cdot)$ is the Dirac delta distribution defined on $\rA$,
oriented left-to-right, i.e.,
$\int_{\rA}f(s)\delta(s-s_0)\dd s = f(s_0)$,
provided that $f(z)$ is continuous at every interior point $z = s_0\in \rA$.
In Equation~\eqref{e:soliton-gas:pole_sum1},
the norming constants $\omega_{1,k}$ can be recognized as weights of each Dirac delta,
and the sum is \textit{not} normalized,
$\int_{\rA} \omega_1(s;N)\dd s \neq 1$.
Hence,
the function $\omega_1(s;N)$ can be considered as an
(unnormalized) complex distribution on the arc $\rA$.
\myblue{
Assuming $\omega_1(\lambda;N)$ interpolates a continuous
(unnormalized) complex distribution $\omega(\lambda)/(2\pi\ii)$ on $\rA$,
with $\omega(\lambda)$ square integrable,
one formally gets
\begin{equation}
\label{e:soliton-gas:pole_sum_to_integral}
\begin{aligned}
\lim_{N\to+\infty}\sum_{k=1}^N\frac{\omega_{1,k}}{\lambda-\lambda_{1,k}}
 & = -\lim_{N\to+\infty}\int_\rA\omega_1(s;N)\frac{\dd s}{s- \lambda}
 = -\frac{1}{2\pi\ii}\int_{\rA}\frac{\omega(s)\dd s}{s - \lambda}\,, &&\quad
\lambda\not\in\rA\,,\\
\lim_{N\to+\infty}\sum_{k=1}^N\frac{\conj{\omega_{1,k}}}{\lambda - \conj{\lambda_{1,k}}}
 & = -\lim_{N\to+\infty}\int_{\conj{\rA}} \omega_1^*(s;N)\frac{\dd s}{s- \lambda}
 = -\frac{1}{2\pi\ii}\int_{\conj{\rA}}\frac{\omega^*(s)\dd s}{s - \lambda}\,, &&\quad
\lambda\not\in\conj{\rA}\,.
\end{aligned}
\end{equation}
where the integration contour is oriented left-to-right.
Please see Figure~\ref{f:soliton-gas:soliton_gas_limit} for demonstration.
We point out that
Equation~\eqref{e:soliton-gas:pole_sum_to_integral}
can be realized using the definition of Riemann integrals,
assuming special forms of the norming constants~\cite{ggjm2021,ggjmm2023}.
}

\begin{figure}[tp]
\centering
\includegraphics[width=0.95\textwidth]{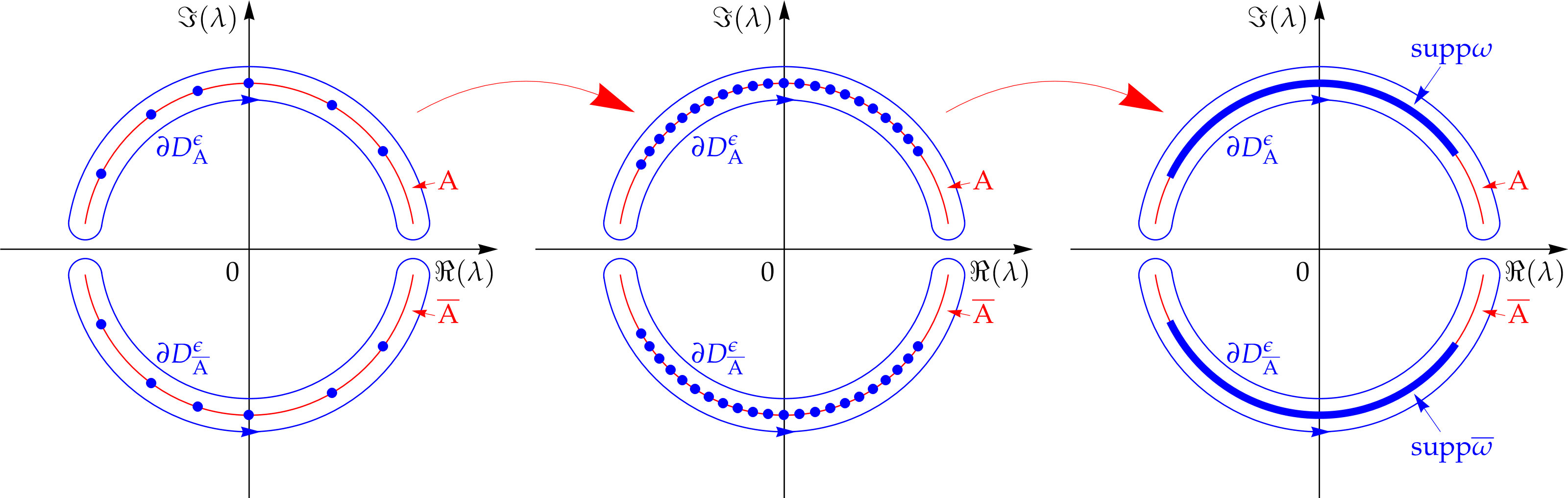}
\caption{
An illustration of the eigenvalue distribution in the limit $N\to+\infty$. In the right panel, the thick blue arcs represent $\supp\omega\subset\rA$ and $\supp\conj{\omega}\subset\conj{\rA}$.
}
\label{f:soliton-gas:soliton_gas_limit}
\end{figure}
%

%Therefore,
%taking the limit $N\to+\infty$ of the left-hand side of Equation~\eqref{e:soliton-gas:pole_sum1} is equivalent to calculating $\lim_{N\to+\infty}\omega_1(\lambda;N)$.
%Formally speaking, the latter limit describes infinitely many Dirac deltas accumulating on the arc $\rA$ with their own complex-valued weights, which can be considered as an interpolation of an (unnormalized) continuous distribution. So, the formal result can be written as
%%
%\begin{equation}
%\omega(\lambda) \coloneqq 2\pi\ii\lim_{N\to+\infty}\omega_1(\lambda;N)\,,\qquad \lambda\in\rA_1\,,
%\end{equation}
%where the function $\omega(\lambda)$ is a complex-valued function.
%The factor $2\pi\ii$ in the above equation is for convenience for later usages.
%Also, we assume $\omega\in L_2(\rA)$ for convenience.
%It is important to point out that
%the support of $\omega(\lambda)$ is an arbitrary subset of $\rA$.
%
%
%Hence, the limit produces

\myblue{
To simplify the above integrals, one defines the Cauchy transforms
\begin{equation}
\label{e:Cauchy_Transform}
\sC_\omega(\lambda) \coloneqq \frac{1}{2\pi\ii}\int_{\rA}\frac{\omega(s)\dd s}{s - \lambda}.
\end{equation}
Then, the above limits can be rewritten as
$\lim\limits_{N\to+\infty}\sum_{k=1}^N\frac{\omega_{1,k}}{\lambda-\lambda_{1,k}}
 = -\sC_\omega(\lambda)$, and
$\lim\limits_{N\to+\infty}\sum_{k=1}^N\frac{\conj{\omega_{1,k}}}{\lambda - \conj{\lambda_{1,k}}}
 = -\sC_{\conj{\omega}}(\lambda)$.
It is easy to verify that
$\sC_\omega^*(\lambda) = - \sC_{\conj\omega}(\lambda)$.
Consequently, the jumps in RHP~\ref{rhp:N-DSG-jumpcombined} becomes
\begin{equation}
\begin{aligned}
\lim_{N\to+\infty}\V^{(2)}(\lambda;t,z)^{-1}
% & = \lim_{N\to+\infty}\bpm1 & 0 \\ -\ee^{-2\ii\theta(\lambda;t,z)}\sum_{k=1}^N\frac{\omega_{1,k}}{\lambda-\lambda_{1,k}} & 1 \epm
 = \V^{(3)}(\lambda;t,z)^{-1}\,,\quad
% & = \bpm 1 & 0 \\ \ee^{-2\ii\theta(\lambda;t,z)}\sC_\omega(\lambda) & 1 \epm\,,\\
\lim_{N\to+\infty}\V^{(2)}(\conj\lambda;t,z)^\dagger
 = \V^{(3)}(\conj{\lambda};t,z)^\dagger\,,
% & = \lim_{N\to+\infty}\bpm 1 & \ee^{2\ii\theta(\lambda;t,z)}\sum_{k=1}^N\frac{\conj{\omega_{1,k}}}{\lambda - \conj{\lambda_{1,k}}} \\ 0 & 1 \epm
% & = \bpm 1 & -\ee^{2\ii\theta(\lambda;t,z)}\sC_{\conj\omega}(\lambda) \\ 0 & 1 \epm\,.
\end{aligned}
\end{equation}
with
\begin{equation}
\label{e:V3-def}
\V^{(3)}(\lambda;t,z)
 \coloneqq \bpm1 & 0 \\ -\ee^{-2\ii\theta(\lambda;t,z)}\sC_\omega(\lambda) & 1 \epm\,,
\end{equation}
Following these calculations,
RHP~\ref{rhp:N-DSG-jumpcombined} as $N\to+\infty$ becomes the following one.
}
\begin{rhp}[Cauchy-form of soliton gas]
\label{rhp:soliton-gas:inftyDSG_cauchy_form}
\myblue{
Let an arc $\rA$ and a function $\omega(\lambda)$ be given according to Definition~\ref{def:generalized-eigenvalue-norming-constant}.
Seek a $2\times2$
matrix function $\M^{(5)}(\lambda;t,z)$ analytic on
$\Complex\setminus\partial\big(D_\rA^\epsilon\bigcup D_{\conj{\rA}}^\epsilon\big)$
with continuous boundary values.
It has the asymptotics $\M^{(5)}(\lambda;t,z)\to\I$ as $\lambda\to\infty$ and satisfies jumps
}
\begin{equation}
\begin{aligned}
\M^{(5)+}(\lambda;t,z)
 & = \M^{(5)-}(\lambda;t,z)\V^{(3)}(\lambda;t,z)^{-1}\,,\qquad&& \lambda\in\partial D_{\rA}^\epsilon\,,\\
\M^{(5)+}(\lambda;t,z)
 & = \M^{(5)-}(\lambda;t,z)\V^{(3)}(\conj\lambda;t,z)^{\dagger}\,,\qquad&& \lambda\in\partial D_{\conj{\rA}}^\epsilon\,,
\end{aligned}
\end{equation}
where the jump matrix $\V^{(3)}(\lambda;t,z)$ is given in Equation~\eqref{e:V3-def},
and the jump contours $\partial D_\rA^\epsilon$ and $\partial D_{\conj{\rA}}^\epsilon$ are illustrated in
Figure~\ref{f:soliton-gas:soliton_gas_limit}(right) with counterclockwise orientation.
\end{rhp}
\myblue{
Finally, one can convert RHP~\ref{rhp:soliton-gas:inftyDSG_cauchy_form}
to RHP~\ref{rhp:soliton-gas:inftyDSG_jump_form},
by shrinking the jump contour $\partial D_\rA^\epsilon$ onto the arc $\rA$ and
$\partial D_{\conj{\rA}}^\epsilon$ onto $\conj{\rA}$, respectively.
}

\end{appendices}

%%%%%%%%%%%%%%%%%%%%%%%%%%%%%%%%%%%%%%%%%%%%%%%%%%%%%%%%%%%%%%%%%%%%%%%%%%%%%%%%

\end{document}